\documentclass[10pt]{article}

\usepackage{mathpazo}
\usepackage{bbm}
\usepackage{multirow}
\usepackage{amsmath,amssymb,amsthm,amsfonts,latexsym,bbm,xspace,graphicx,float,mathtools,dsfont}
\usepackage[backref,colorlinks,citecolor=blue,bookmarks=true]{hyperref}
\usepackage[dvips, paper=letterpaper, top=1in, bottom=.75in, left=1in, right=1in, nohead, includefoot, footskip=.25in]{geometry}
\usepackage{color}
\usepackage{makecell}
\usepackage{comment}
\usepackage{algorithm}
\usepackage{algorithmic}

\usepackage{tweaklist}
\usepackage{accents}
\usepackage{appendix}
\usepackage{cleveref}
\usepackage{graphicx}

\usepackage{accents}

\newtheorem{theorem}{Theorem}
\newtheorem{corollary}{Corollary}
\newtheorem{lemma}{Lemma}

\newtheorem{claim}{Claim}

\newtheorem{definition}{Definition}

\newcommand{\norm}[1]{\left\lVert#1\right\rVert}
\newcommand{\round}[1]{\left\lfloor#1\right\rfloor}

\newcommand{\rev}{\textsc{Rev}}
\newcommand{\revT}{\textsc{Rev}_T}

\newenvironment{prevproof}[2]{\noindent {\em {Proof of {#1}~\ref{#2}:}}}{$\Box$\vskip \belowdisplayskip}



\newcommand{\poly}{{\rm poly}}
\newcommand{\opt}{\text{OPT}}
\newcommand{\todo}[1]{{\bf \color{red} TODO: #1}}
\newcommand{\yangnote}[1]{{\color{blue}{#1}}}

\newcommand{\notshow}[1]{{}}

\DeclareMathAccent{\wtilde}{\mathord}{largesymbols}{"65}
\DeclareMathOperator{\Prob}{Pr}
\DeclareMathOperator{\E}{E}

\def \hM {{\widehat{M}}}

\def\supp{\text{supp}}
\def \hDD  {{\mathcal{\widehat{D}}}}
\def \DD  {{\mathcal{D}}}
\def \CC  {{\mathcal{C}}}
\def \uDD{{\overline{\mathcal{D}}}}
\def \lDD{{\underline{\mathcal{D}}}}
\def \hlDD{{\underline{\mathcal{{\widehat{D}}}}}}
\def \utDD{\widetilde{\mathcal{D}}}
\def \ltDD{\underaccent{\wtilde}{\mathcal{D}}}

\def \hFF  {{\mathcal{\widehat{F}}}}
\def \rfun{r^{(\ell,\delta)}}
\def \hb  {{{\hat{b}}}}
\def \uFF{{\overline{\mathcal{F}}}}
\def \Mli{{M_1^{(\ell)}}}
\def \Mlii{{M_2^{(\ell)}}}
\def \hMl{{\widehat{M}^{(\ell)}}}
\def \el{\varepsilon^{(\ell)}}

\def \CC {{\mathcal{C}}}

\def \LL  {{\mathcal{L}}}

\def \FF  {{\mathcal{F}}}


\def \E  {{\mathbb{E}}}

\def \L{{\mathcal{L}}}

\DeclareMathOperator{\argmax}{argmax}


\definecolor{MyGray}{rgb}{0.8,0.8,0.8}

\setlength{\parskip}{0in}

\title{From Robustness to Learnability: a General Approach to the Sample Complexity in Multi-item Auctions}

\title{Learnability and Robustness of Multi-Item Mechanisms without Item-Independence}

\title{Multi-Item Mechanisms without Item-Independence: Robustness implies Learnability}

\title{Multi-Item Mechanisms without Item-Independence: Learnability via Robustness}

{\author{Johannes Brustle\footnote{Supported by the NSERC Discovery Award RGPIN-2015-06127 and FRQNT Award 2017-NC-198956.}\\
McGill University, Canada\\ johannes.brustle@mail.mcgill.ca
 \and Yang Cai\footnote{Supported by the NSF Award CCF-1942583 (CAREER) and a Sloan Foundation Research Fellowship. Part of Cai's work was done under the support of the NSERC Discovery Award RGPIN-2015-06127 and FRQNT Award 2017-NC-198956.}\\
Yale University, USA\\
yang.cai@yale.edu \and Constantinos Daskalakis\footnote{Supported by NSF Awards IIS-1741137, CCF-1617730 and CCF-1901292, by a Simons Investigator Award, by the DOE PhILMs project (No. DE-AC05-76RL01830), by the DARPA award HR00111990021, by a Google Faculty award, by the MIT Frank Quick Faculty Research and Innovation Fellowship, and by the MIT-IBM Watson AI Lab.}\\ MIT, USA\\ costis@csail.mit.edu}}

\begin{document}
\maketitle
 \begin{abstract}
We study the sample complexity of learning  revenue-optimal multi-item auctions. We obtain the first set of positive results that go beyond the standard but unrealistic setting of item-independence. In particular, we consider settings where bidders' valuations are drawn from correlated distributions that can be captured by Markov Random Fields or Bayesian Networks --  two of the most prominent graphical models. We establish parametrized sample complexity bounds for learning an up-to-$\varepsilon$ optimal mechanism in both models, which scale polynomially in the size of the model, i.e.~the number of items and bidders, and only exponential in the natural complexity measure of the model, namely either the largest in-degree (for Bayesian Networks) or the size of the largest hyper-edge (for Markov Random Fields).

We obtain our learnability results through a novel and modular framework that involves first proving a robustness theorem. We show that, given only ``approximate distributions'' for bidder valuations, we can learn a mechanism whose revenue is nearly optimal simultaneously for all ``true distributions'' that are close to the ones we were given in Prokhorov distance. Thus, to learn a good mechanism, it suffices to learn approximate distributions. When item values are independent, learning in Prokhorov distance is immediate, hence our framework directly implies the main result of Gonczarowski and Weinberg~\cite{GonczarowskiW18}.
When item values are sampled from more general graphical models, we combine our robustness theorem with novel sample complexity results for learning Markov Random Fields or Bayesian Networks in Prokhorov distance, which may be of independent interest. Finally, in the single-item case, our robustness result can be strengthened to hold under an even weaker distribution distance, the L\'evy distance.
\end{abstract}
\thispagestyle{empty}
\addtocounter{page}{-1}
\newpage

\section{Introduction}
A central problem in Economics and Computer Science is the design of revenue-optimal auctions.  The problem involves a seller who wants to sell one or more items to one or more strategic bidders. As bidders' valuation functions are private, no meaningful revenue guarantee can be achieved without any information about these functions. To remove this impossibility, it is standard to make a {\em Bayesian assumption,} whereby a joint distribution from which bidders' valuations are drawn is assumed common knowledge, and the goal is to design an auction that maximizes  expected revenue with respect to this distribution. 

In the {\em single-item setting}, a celebrated result by Myerson characterizes the optimal auction when bidder values are independent~\cite{Myerson81}. The quest for optimal {\em multi-item auctions} has been quite more challenging. It has been recognized that revenue-optimal multi-item auctions can be really complex and may exhibit counter-intuitive properties~\cite{HartN13, HartR12, BriestCKW10, DaskalakisDT13, DaskalakisDT14}. As such, it is doubtful that there is a clean characterization similar to Myerson's for the optimal multi-item auction. On the other hand, there has been significant recent progress in efficient computation of revenue-optimal auctions~\cite{ChawlaHK07,ChawlaHMS10,Alaei11,CaiD11b,AlaeiFHHM12,CaiDW12a,CaiDW12b,CaiH13,CaiDW13b,AlaeiFHH13,BhalgatGM13,DaskalakisDW15}. This progress has enabled the identification of {\em simple auctions} (mostly variants of sequential posted pricing mechanisms) that achieve constant factor approximations to the optimum revenue~\cite{BabaioffILW14,Yao15,CaiDW16,ChawlaM16, CaiZ17}, under {\em item-independence} assumptions.\footnote{Intuitively, item independence means that each bidder's value for each item is independently distributed, and this definition has been suitably generalized to set value functions such as submodular or subadditive functions~\cite{RubinsteinW15}.} 

Making Bayesian assumptions in the study of revenue-optimal auctions is both crucial and fruitful. However, to apply the theory to practice, we would need to know the underlying distributions. Where does such knowledge come from? A common answer is that we estimate the distributions through market research or observation of  bidder behavior in previously run auctions. Unavoidably, errors will creep in to the estimation, and a priori it seems possible that the performance of our mechanisms may be fragile to such errors. This has motivated a quest for optimal or approximately optimal mechanisms under imperfect knowledge of the underlying distributions.

This problem has received lots of attention from Theory of Computation recently. The focus has been on whether  optimal or  approximately optimal mechanisms are learnable given sample access to the true distributions. In single-item settings, where Myerson's characterization result applies, it is possible to learn 
up-to-$\varepsilon$ optimal auctions~\cite{Elkind07,ColeR14,MohriM14,HuangMR15,MorgensternR15,DevanurHP16,RoughgardenS16,GonczarowskiN16}.\footnote{The term ``up-to-$\varepsilon$ optimal'' introduced in~\cite{GonczarowskiW18} means an additive $\varepsilon\cdot H$ approximation for distributions supported on $[0,H]$. Under tail assumption on the distribution, it is also possible to obtain $(1-\varepsilon)$-multiplicative approximations.} A recent paper by Guo et al.~\cite{GuoHZ19} provides upper and lower bounds on the sample complexity, which are tight up to logarithmic factors, thereby rendering a nearly complete picture for the single-item case. 

In multi-item settings, largely due to the lack of simple characterizations of optimal mechanisms, results have been sparser. Recent work~\cite{MorgensternR15,GoldnerK16,CaiD17,Syrgkanis17} has shown how to learn simple mechanisms which attain a constant factor of the optimum revenue using polynomially many samples in the number of bidders and items. Last year, a surprising result by Gonczarowski and Weinberg~\cite{GonczarowskiW18} shows that the sample complexity of learning an up-to-$\varepsilon$ optimal mechanism is also polynomial.\footnote{In particular, they learn a mechanism that is $O(\varepsilon)$-truthful and has up-to-$\varepsilon$ optimal revenue.} However, all these results  rely on the \emph{item-independence} assumption mentioned earlier, which limits their applicability. A main goal of our work is the following:

\bigskip \noindent \hspace{0.7cm}\begin{minipage}{0.92\textwidth}
\begin{enumerate}
\item[{\bf Goal I:}] Push the boundary of learning (approximately) optimal multi-item auctions  to the important setting of \textbf{item dependence}.
\end{enumerate}
\end{minipage}

\bigskip Unfortunately, it is impossible to learn approximately optimal auctions from polynomially many samples under general item dependence. Indeed, an exponential sample complexity lower bound has been established by Dughmi et al.~\cite{DughmiLN14} for even a single unit-demand buyer. Arguably, however, in auction settings, as well as virtually any high-dimensional setting, the distributions that arise are not arbitrary. Arbitrary high-dimensional distributions cannot be represented efficiently, and are known to require exponentially many samples to learn or even perform the most basic statistical tests on them; see e.g.~\cite{Daskalakis18} for a discussion. Accordingly a large focus of Statistics and Machine Learning has been on identifying structural properties of high-dimensional distributions, which enable succinct representation, efficient learning, and efficient statistical inference. 
In line with this literature, 
%
%
we propose {\em learning multi-item auctions under the assumption that item values are jointly sampled from a high-dimensional distribution with structure.} 


There are several widely-studied probabilistic frameworks which allow modeling structure in a high-dimensional distribution. In this work we consider two of the most prominent ones: {\em Markov Random Fields (MRFs)} and {\em Bayesian Networks, a.k.a.~Bayesnets,} which are the two most common types of {\em graphical models}. Both MRFs and Bayesnets have been studied in Machine Learning and Statistics  for decades. Both frameworks can be used to express arbitrary high-dimensional distributions. Their advantage, however, is that they are associated with natural complexity parameters which allow tuning the dependence structure in the distributions they model, from product measures all the way up to arbitrary distributions. {In Figure~\ref{fig:costas}, we show a very simple example illustrating how naturally these models  express dependence structure in a distribution. The figure shows a Bayesnet, which samples the values of a buyer for four items. The structure of the Bayesnet implies (see Definition~\ref{def:Bayesnet}) that these values are sampled conditionally independently, conditioning on the value of the variable at the root of the Bayesnet which is the state of the buyer's residence. The node is shaded because we assume that the corresponding variable is not observable.}
The pertinent question  is how we might exploit the structure of the distribution, as captured by the natural complexity parameter of an MRF or a Bayesnet, to efficiently learn a good mechanism. At a high level, there are two components to the problem of learning approximately optimal auctions. One is \emph{inference from samples}, i.e.~extracting  information about the distribution using samples. The other is \emph{mechanism design}, i.e.~constructing a good mechanism using the information extracted. A main goal of our work is:
%

\bigskip \noindent \hspace{0.7cm}\begin{minipage}{0.92\textwidth}
\begin{enumerate}
\item[{\bf Goal II:}] Provide a modular approach for learning multi-item auctions which decouples the Inference and Mechanism Design components, so that one may leverage all techniques from Machine Learning and Statistics to tackle the first and, independently, leverage all techniques from Mechanism Design to address the second.
\end{enumerate}
\end{minipage}

\bigskip Unfortunately, the Statistical and  Mechanism design components are complexly intertwined in  prior work on learning multi-item auctions.  Specifically,~\cite{MorgensternR16,CaiD17,Syrgkanis17, GonczarowskiW18} are PAC-learning  approaches, which require a fine balance between (i) selecting a class of mechanisms that is rich enough to contain an approximately optimal one for a class of distributions; and (ii) having small enough statistical complexity so that the performance of all mechanisms in the class on a small sample is representative of their performance with respect to the whole distribution, and so that a small sample suffices to select a good mechanism in the class. See the related work section for a detailed discussion of these works and their natural limitations. Our goal in this work is to avoid a joint consideration of (i) and (ii). Rather we want to obtain a learning framework that separates Mechanism Design from Statistical Inference, based on the following:
\begin{enumerate}
\item[(i)'] find an algorithm ${\mathcal {M}}$, which given a distribution $F$ in some family of distributions $\mathcal{F}$, computes an (approximately) optimal mechanism ${\mathcal{M}}(F)$ when bidders' valuations are drawn from $F$;

\item[(ii)'] find an algorithm ${\cal L}$, which given sample access to a distribution $F$ from the family of distributions ${\cal F}$ learns a distribution ${\cal L}(F)$ that is close to $F$ in some distribution distance $d$.
\end{enumerate}
Achieving (i)' and (ii)' is of course not enough, unless we also guarantee the following:
\begin{enumerate}
\item[(iii)'] 
given an (approximately) optimal mechanism $M$ for some $F$ there is a way to transform $M$ to some $M'$ that is approximately optimal for \emph{any distribution} $F'$ that is close to $F$ under distribution~distance~$d$. 
\end{enumerate}
Given (i)'--(iii)', the learnability of (approximately) optimal mechanisms for a family of distributions ${\cal F}$ can be established as follows: {(a)} Given sample access to some distribution $F \in {\cal F}$ we use ${\cal L}$ to learn some distribution $F'$ that is close to $F$ under $d$;~ {(b)} we then use $\cal M$ to compute an (approximately) optimal mechanism $M'$ for $F'$; and {(c)} finally, we use (iii)' to argue that $M'$ can be converted to a mechanism $M$ that is (approximately) optimal for $F$ because $M$ is (approximately) optimal for any distribution that~is~close~to~$F'$. 

Clearly, (iii)' is important for decoupling (i)'---i.e.~computing (approximately) optimal mechanisms for a family of distributions ${\cal F}$, and (ii)'---i.e.~learning distributions in ${\cal F}$. At the same time, it is important in its own right:

\bigskip \noindent \hspace{0.7cm}\begin{minipage}{0.92\textwidth}
\begin{enumerate}
\item[{\bf Goal III:}] Develop robust mechanism design tools, allowing to transform a mechanism $M$ designed for some distribution $F$ into a mechanism $M_{\rm robust}$ which attains similar performance simultaneously for all distributions that are close to $F$ in some distribution distance of interest.
\end{enumerate}
\end{minipage}

\bigskip \noindent The reason Goal III is interesting in its own right is that oftentimes we actually have no sample access to the underlying distribution over valuations. It is possible that we estimate that distribution through market research or econometric analysis in related settings, so we only know some approximate distribution. In other settings, we may have sample access to the true distribution but there might be errors in measuring or recording those samples. In both cases, we would know some approximate distribution $F$ that is close to the true distribution under some distribution distance, and we would want to use $F$ to identify a good mechanism for~the~unknown distribution that is close to $F$. Clearly, outputting a mechanism $M$ that attains good performance under $F$ might be a terrible idea as this mechanism might very well overfit the details of $F$. So we need to ``robustify'' $M$. A similar goal was pursued in the work of Bergemann and Schlag~\cite{BergemannS11}, for single-item and single-bidder settings, and in the work of Cai and Daskalakis~\cite{CaiD17}, for robustifying a specific class of mechanisms under item-independence. Our goal here is to provide a very general robustification result.



\subsection{Our Results}  \label{sec:results}

We discuss our contributions in the setting of additive bidders, whose values for the items are not necessarily independent. Our results hold for quite more general valuations, including constrained additive and any family of Lipschitz valuations (Definition~\ref{def:Lipschitz}), but we do not discuss these here to avoid overloading our notation. We will denote by $n$ the number of bidders, and by $m$ the number of items. We will also assume that the bidders' values for the items lie in some bounded interval $[0,H]$.
 

\paragraph{Our Robustness Results (cf.~Goal III above).}  The setting we consider is the following. We are given a collection of model distributions $\DD=\{\DD_i\}_{i\in[n]}$, one for each bidder $i=1,\ldots,n$. We do not know the true distributions $\hDD=\{\hDD_i\}_i$ sampling the valuations of the bidders, and the only information we have about each $\hDD_i$ is that $d(\DD_i,\hDD_i)<\varepsilon$, under some distribution distance $d(\cdot,\cdot)$---we will discuss distances shortly.

Our goal is to design a mechanism 
that performs well under any possible collection of true  distributions $\{\hDD_i\}_{i}$ that are close to their corresponding distributions $\{\DD_i\}_i$ under $d$. 
We show that there are robustification algorithms, which transform a mechanism $M$ 
into a robust mechanism $\hM$ that attains similar revenue to that of $M$ under $\DD$, except that $\hM$'s revenue  guarantee holds {\em simultaneously} for any collection $\hDD$ that is close to $\DD$. Applying our robustification algorithm to the optimum mechanism for $\DD$ allows us to obtain the results reported in the first three columns of Table~\ref{table 1}. DSIC and BIC refer to the standard properties of Dominant Strategy and Bayesian Incentive Compatibility of mechanisms, IR refers to the standard notion of Individual Rationality, and $\eta$-BIC is the standard notion of approximate Bayesien Incentive Compatibility. For completeness these notions are reviewed in Appendix~\ref{sec:add_prelim}. 

Some remarks are in order. First, in multi-item settings, it is unavoidable that our robustified mechanism is only approximately BIC, as we do not know the true distributions. In single-item settings, the optimal mechanism is DSIC, and we can indeed robustify it into a mechanism $\hM$ that is  DSIC. In the multi-item case, however, it is known that DSIC mechanisms sometimes can extract at most a constant fraction of the optimal  revenue~\cite{Yao17a}, so  it is necessary to consider BIC mechanisms and the BIC property is fragile to errors in the distributions. 

Second, we consider several natural distribution distances. In multi-item settings, we consider both the Prokhorov  and the Total Variation distance. In single-item settings, we consider both the L\'evy  and the Kolmogorov distance. Please see Section~\ref{sec:prelim} for formal definitions of these distances and a discussion of their relationships, and their relationship to other standard distribution distances. We note that the L\'evy distance for single-dimensional distributions, and the Prokhorov distance for multi-dimensional distributions are quite permissive notions of distribution distance. This makes our robustness results for these distances stronger, automatically implying robustness results under several other common distribution distances.

Finally, en route to proving our robustness results, we show a result of independent interest, namely that {\em the optimal revenue is continuous with respect to the distribution distances that we consider}. Our continuity results are summarized in the last column of Table~\ref{table 1}. {Note that the continuity results are substantially easier to establish than the robustness results, please see Section~\ref{sec:roadmap} for details.}



\begin{table}[t]
	\centering
\resizebox{0.9\textwidth}{!}{\begin{minipage}{\textwidth}	\centering
		\hspace*{-8pt}\makebox[\linewidth][c]{
		\begin{tabular}{c || c | c| c| }
		\hline \hline
		Setting & Distance $d$ & Robustness& Continuity\\
\hline
\multirow{2}{*}{\rotatebox[origin=c]{0}{\parbox[c]{1.3cm}{\centering Single Item}}}
 & Kolmogrov& \thead{ $ \rev\left (\hM,\hDD \right ) \geq \opt\left (\hDD \right )-O\left(nH\varepsilon\right)$ \\ $\hM$ is IR and DSIC \\(Theorem~\ref{thm:single-kolmogorov})} & \thead{$ \left \lvert \opt \left(\hDD\right )-\opt\left (\DD\right ) \right \rvert \leq O(nH\varepsilon)$\\ (Theorem~\ref{thm:single-kolmogorov})} \\  \cline{2-4}

		&L\'evy& \thead{$ \rev\left (\hM,\hDD \right ) \geq \opt\left (\hDD \right )-O\left(nH\varepsilon\right)$ \\ $\hM$ is IR and DSIC  \\(Theorem~\ref{thm:single-levy})}  & \thead{$ \left | \opt \left(\hDD\right )-\opt\left (\DD\right ) \right | \leq O(nH\varepsilon)$\\ (Corollary~\ref{cor:single continuity})}\\  \cline{2-4}\hline\hline
\multirow{2}{*}{\rotatebox[origin=c]{0}{\parbox[c]{1.3cm}{\centering Multiple Items}}} 

& TV&\thead{$ \rev \left (\hM,\hDD \right ) \geq \opt_{\eta}\left (\hDD \right )-O\left(n^2mH\varepsilon+nmH\sqrt{n\varepsilon}\right)$\\ $\hM$ is IR and $\eta$- BIC w.r.t. $\hDD$, where $\eta = O\left( n^2mH\varepsilon \right)$ \\(Theorem~\ref{thm:approximation preserving})}  &  \thead{$ \left \lvert \opt \left(\hDD\right )-\opt\left (\DD\right ) \right \rvert \leq O\left(n^2mH\varepsilon+nmH\sqrt{n\varepsilon}\right)$\\ (Theorem~\ref{thm:continuous of OPT under Prokhorov})}\\   \cline{2-4}
		& Prokhorov& \thead{$\rev\left(\hM,\hDD\right)\geq \opt_{\eta}\left(\hDD\right)-O\left({ n\eta}+n\sqrt{{m H\eta}}\right)$\\
		$\hM$ is IR and $\eta$- BIC w.r.t. $\hDD$, where $\eta=O\left(n m  H \varepsilon+  m  \sqrt{nH\varepsilon}\right)$ \\
		(Theorem~\ref{thm:approximation preserving})} &  \thead{$\left \lvert\opt(\DD)-\opt(\hDD)\right\rvert\leq O\left(n\xi+n\sqrt{mH\xi}\right)$ \\
		where $\xi=O\left(n m  H \varepsilon+  m  \sqrt{nH\varepsilon}\right)$\\
		(Theorem~\ref{thm:continuous of OPT under Prokhorov})}
\\ \cline{2-4}
\hline
		\end{tabular}}
				\end{minipage}}
				\caption{\fontsize{7}{9}\selectfont Summary of Our Robustness and Revenue Continuity Results. Recall that the true bidder distributions $\hDD$ are unknown, and that $\hM$ is the robustified mechanism returned by our algorithm given an optimal mechanism $M$ for a collection of bidder distributions $\DD$ that are $\varepsilon$-close to $\hDD$ under distribution distance $d$. $\rev(\hM,\hDD)$ denotes the revenue of $\hM$ when the bidder distributions are $\hDD$. For a collection of bidder distributions $\cal F$, $\opt({\cal F})$ is the optimal revenue attainable by any BIC and IR mechanism under distributions $\cal F$, and $\opt_\eta(\FF)$ denotes the optimum revenue attainable by any $\eta$-BIC and IR mechanism under $\FF$. Not included in the table are approximation preserving robustification results under TV and Prokhorov closeness. We show that we can transform any $c$-approximation $M$ w.r.t. $\DD$ to a robust mechanism $\hM$, so that $\hM$ is almost a $c$-approximation w.r.t.~$\hDD$. The results included in the table are  corollaries of this more general result when $c=1$. See our theorem statements for the complete details. Moreover, if there is only a single bidder, we can strengthen our robustness results in multi-item settings so that $\hM$ is IC instead of $\eta$-IC (see Theorem~\ref{thm:single approximation preserving}). Our continuity results hold for any $\DD$ and $\hDD$ as long as $d(\DD_i,\hDD_i)\leq \varepsilon$ for each~bidder~$i$.}
				\label{table 1}
					\vspace{-.15in}
				\end{table}

\paragraph{Learning Multi-Item Auctions Under Item Dependence (cf.~Goal I above).}
In view of our robustness results, presented above, the challenge of learning near-optimal auctions given sample access to the bidders' valuation distributions, becomes a matter of estimating these distributions in the required distribution distance, depending on which robustification result we want to apply. 

When the item values are independent, learning bidders' type distributions in our desired distribution distances is immediate. So we easily recover the guarantees of the main theorem of~\cite{GonczarowskiW18}. These guarantees are summarized in the second row of Table~\ref{tab:sample-based results}, and are expanded upon in Theorem~\ref{thm:multi-item auction sample}.

But a main goal of our work (namely Goal I from earlier) is to push the learnability of auctions well beyond item-independence. As stated earlier, it is impossible to attain learnability from polynomially many samples for arbitrary joint distributions over item values so we consider the well-studied  frameworks of MRFs and Bayesnets. These frameworks are flexible and can model {\em any} distribution, but they have a tunable complexity parameter whose value controls the dependence structure. This  parameter is the maximum clique size of an MRF and  maximum in-degree of a Bayesnet. We will denote this complexity parameter $d$ in both cases. Recall that  we also used $d(\cdot,\cdot)$ to denote distribution distances. To disambiguate, whenever we study MRFs or Bayesnets, we make sure to use $d(\cdot,\cdot)$, {\em with  parentheses}, to denote distribution distances. {Note that a small value of the complexity parameter $d$ {\em does not mean} that the corresponding MRF or Bayesnet does not have correlations among every pair of item values. Many natural MRF structures, with $d=2$, and Bayesnet structures, with $d=1$, permit distributions where all the variables are correlated, and indeed any pair of variables remain correlated even after conditioning on the values of all the other variables. In Figure~\ref{fig:costas}, we show a simple such example where the values of a bidder on four items are sampled from a Naive Bayes Model, which is a very simple type of Bayesnet with $d=1$. While even small values of $d$  allow all pairs of variables to be correlated even conditioning on everything else, the complexity parameter $d$ forbids \emph{arbitrary} dependence structures. Indeed, this is the reason why MRFs and Bayesnets are so prevalent. They allow rich dependent structures but not arbitrary ones, unless their complexity parameter  $d$ is tuned up to its maximum possible value, i.e.~equal to the total number of variables, in which case they can express any dependence structure. In particular, a model of complexity $d$ can express arbitrary dependence on subsets of $d$ (for MRFs) or $d+1$ (for Bayesnets) variables, and it allows some dependence structures on larger subsets of variables depending on the graphical structure of the model.
\begin{figure}[htbp]
\begin{center}
\includegraphics[scale=0.36]{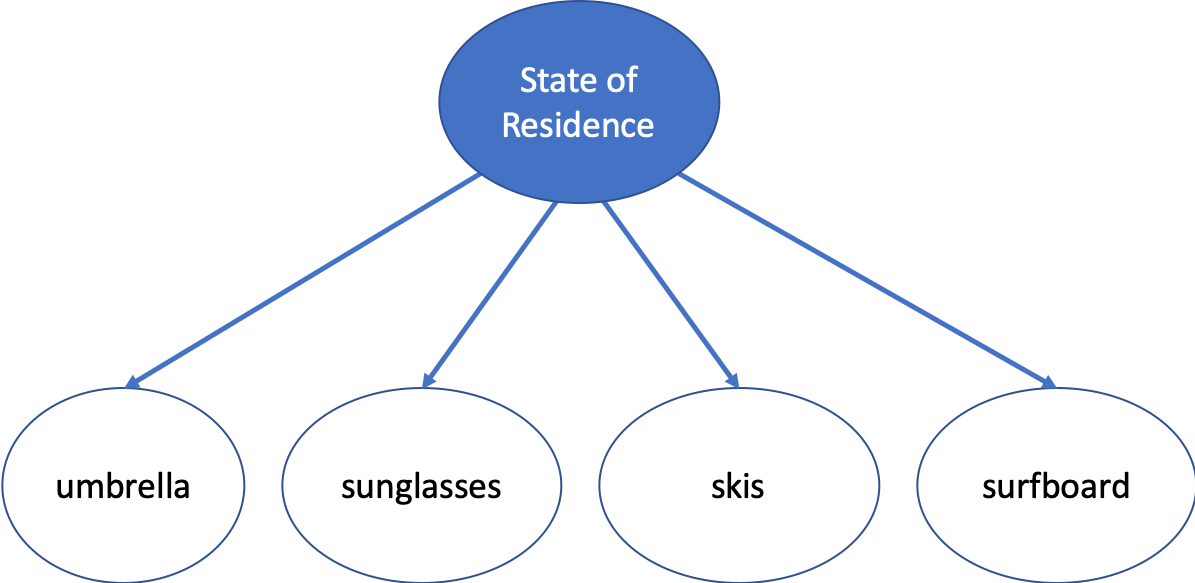}
\caption{\fontsize{7}{9}\selectfont The values of a buyer for an umbrella, a  pair of sunglasses, a pair of skis, and a surfboard are sampled from a Naive Bayes model. These values are sampled conditionally independently conditioning on the value of the variable at the root of the network, which is the state of the buyer's residence. This variable is latent, i.e.~non-observable, and this is why the corresponding node of the network is shaded blue. The distribution over $(v_{\rm umbrella}, v_{\rm sunglasses}, v_{\rm skis}, v_{\rm surfboard})$ has the property that any pair of values remain correlated even conditioning on all the other values, unless the conditional distributions in the Bayesnet have special structure.}
\label{fig:costas}
\end{center}
\vspace{-.2in}
\end{figure}

}

Now, in order to learn near-optimal mechanisms when item values for each bidder are sampled from an MRF or a Bayesnet of certain complexity $d$, our robustness results reassure us that it suffices to learn MRFs and Bayesnets under Total Variation or Prokhorov distance, depending on which multi-item robustenss theorem we seek to apply. So we need an upper bound on the sample complexity necessary to learn MRFs and Bayesnets. One of the contributions of our paper is to provide very general sample complexity bounds for learning these distributions, as summarized in Theorems~\ref{thm:sample complexity MRF finite alphabet} and~\ref{thm:sample complexity Bayesnets finite alphabet} for MRFs and Bayesnets respectively. In both theorems, $V$ is the set of variables, $d$ is the complexity measure of the underlying distribution, and~$\varepsilon$ is the distance within which we are seeking to learn the distribution. Each theorem has a version when the variables take values in a finite alphabet $\Sigma$, and a version when the variables take values in some interval $\Sigma=[0,H]$. In the first case, we provide bounds for learning in the stronger notion of Total Variation distance. In the second case, since we are learning from finitely many samples, we need to settle for the weaker notion of Prokhorov distance. For the same reason, we need to make some Lipschitzness assumption on the density, so our sample bounds depend on the Lipschitzness $\cal C$ of the MRF's potential functions and the Bayesnet's conditional distributions. 

The sample bounds we obtain for learning MRFs and Bayesnets are directly reflected in the sample bounds we obtain for learning multi-item auctions when the item-values are sampled from an MRF or a Bayesnet respectively, as summarized in the last two rows of Table~\ref{tab:sample-based results}. Indeed, the sample complexity for learning auctions is entirely due to the sample complexity needed to learn the underlying item-distribution. In all cases we consider, the complexity is polynomial in number of variables $n=|V|$ and only depends exponentially in $d$, the complexity of the distribution, and~this~is~unavoidable.~\footnote{Note that the example by Dughmi et al.~\cite{DughmiLN14} can be captured by an MRF or Bayesnet with $d=O(m)$, and it is shown in~\cite{DughmiLN14} that the sample complexity for learning a mechanism that is a constant factor approximation to the optimal revenue in this example is at least $2^{\Omega(m)}$.}



%
%

\begin{table}[h]
	\centering
\resizebox{0.9\textwidth}{!}{\begin{minipage}{\textwidth}	\centering
			\hspace*{-8pt}\makebox[\linewidth][c]{
	\begin{tabular}{c || c | c| c| }
			
			\hline\hline
			Setting & \ Revenue Guarantee and Sample Complexity  & Prior Result  & Technique \\
						\hline
												&&&\\

								\thead{Item \\Independence} &  \thead{up-to-$\varepsilon$ optimal, $\eta$-BIC~(Theorem~\ref{thm:multi-item auction sample})\\
									$\poly\left(n,m,H,1/\varepsilon, 1/\eta, \log (1/\delta)\right)$}  & \thead{recovers main\\ result of~\cite{GonczarowskiW18}} & \thead{Prokhorov Robustness + \\
									Learnability of Product Dist. (Folklore)}$$ \\
									&&&\\				\hline
									
									&&&\\

		\thead{MRF} &   \thead{up-to-$\varepsilon$ optimal, $\eta$-BIC (Theorem~\ref{thm:auction MRF}) \\
		$\poly\left(n,m^d,|\Sigma|^d, H, 1/\eta, 1/\varepsilon, \log (1/\delta)\right)$ (Finite $\Sigma$)\\ 
		$\poly\left(n,m^{d^2}, ({\CC H\over \varepsilon})^d, 1/\eta,\log (1/\delta)\right)$ ($\Sigma=[0,H]$)
		}& \thead{unknown} & \thead{Prokhorov Robustness +\\ Learnability of MRFs (Theorem~\ref{thm:sample complexity MRF finite alphabet}) }\\
			 
				\hline

						&&&\\

			\thead{Bayesnet} &   \thead{up-to-$\varepsilon$ optimal, $\eta$-BIC (Theorem~\ref{thm:auction Bayesnet}) \\
			 $\poly\left(n,d, m, |\Sigma|^{d+1}, H, 1/\eta, 1/\varepsilon, \log (1/\delta) \right)$ (Finite $\Sigma$)\\
			 $\poly\left( n, d^{d+1}, m^{d+1}, ({\CC H\over \varepsilon})^{d+1}, 1/\eta, \log (1/\delta)\right)$ ($\Sigma=[0,H]$)
			 } & \thead{unknown} & \thead{Prokhorov Robustness +\\
			Learnability of Bayesnets (Theorem~\ref{thm:sample complexity Bayesnets finite alphabet})} \\
			 				\hline
			 		\end{tabular}}
						\end{minipage}}

		
	\caption{\fontsize{7}{8}\selectfont Summary of Our Sample-based Results. We denote by $\Sigma$ the support of each item-marginal, taken to equal the interval $[0,H]$ in the continuous case, we use $\delta$ for the failure probability, and use $d$ to denote the standard complexity measure of the graphical model used to model item dependence, namely the size of the maximum hyperedge in MRFs and the largest in-degree in Bayesnets. For both MRFs and Bayesnets we allow latent variables and we also do not need to know the underlying graphical structure. Moreover, for continuous distributions, our results require Lipschitzness of potential functions in MRFs and conditional distributions in Bayesnets, which we denote with $\CC$. Finally, if there is only a single bidder, the mechanism we learnt is strengthened to be IC instead of $\eta$-IC. See our theorem statements for~our~complete~results.}
	\label{tab:sample-based results}
\end{table}

Our sample bounds improve if the underlying graph of the MRF or Bayesnet are known and, importantly, without any essential modifications {\em our sample bounds hold even when there are latent, i.e.~unobserved, variables in the distribution}. This makes both our auction and our distribution learning results much more richly applicable. As a simple example of the modeling power of latent variables, situations can be captured where an unobserved random variable determines the type of a bidder, and conditioning on this type the observable values of the bidder for~different~items~are~sampled.  

Finally, it is worth noting that our sample bounds for learning MRFs (i.e.~Theorem~\ref{thm:sample complexity MRF finite alphabet}) provide broad generalizations of the bounds for learning Ising models and Gaussian MRFs presented in recent work of Devroye et al~\cite{devroye2018minimax}. Their bounds are obtained by bounding the VC complexity of the Yatracos class induced by the distributions of interest, while our bounds are obtained by constructing $\varepsilon$-nets of the distributions of interest, and running a tournament-style hypothesis selection algorithm~\cite{devroye2012combinatorial,DaskalakisK14,AcharyaJOS14} to select one distribution from the net. Since the distribution families we consider are non-parametric, our main technical contribution is to bound the size of an $\varepsilon$-net sufficient to cover the distributions of interest. Interestingly, we use properties of linear programs to argue through a sequence of transformations that the net size can be upper bounded in terms of the bit complexity of solutions to a linear program that we construct.

\subsection{Roadmap and Technical Ideas}\label{sec:roadmap}
In this section, we provide a roadmap to the paper and survey some of our technical ideas.
 
\paragraph{Single-item Robustness (Appendix~\ref{sec:single})} We consider first the setting where the model distribution $\DD$ is $\varepsilon$-close to the true, but unknown distribution $\hDD$ in {Kolmogorov distance}. In this case, we argue directly that Myerson's optimal mechanism~\cite{Myerson81} for $\DD$ is approximately optimal for any distribution that is in the $\varepsilon$-Kolmogorov-ball around $\DD$, which includes $\hDD$ (Theorem~\ref{thm:single-kolmogorov}). The idea is that the revenue of the optimal mechanism can be written as an  integral over probabilities of events of the form: does $v_i$ lie in a certain interval $[a,b]$? Since $\DD$ and $\hDD$ are $\varepsilon$-close in  Kolmogorov distance, the probabilities of all such events are within $\varepsilon$ of each other, which implies that the revenues under $\DD$ and $\hDD$ are also close. Finally, note that Myerson's optimal mechanism is DSIC and IR, so it is truthful and IR w.r.t.~any distribution.

Unfortunately, the same idea fails for L\'evy distance, as the difference in the probabilities of the event that a certain $v_i$ lies in some interval $[a,b]$ under $\DD$ and $\hDD$ can be as large as $1$ even when $\DD$ and $\hDD$ are $\varepsilon$-close in  L\'evy distance. (Indeed, consider two single point distributions: a point mass at $A$ and a point mass at $A-\varepsilon$; their probabilities of falling in the interval $[A-\varepsilon/2,A+\varepsilon/2]$ are respectively $1$ and $0$.)  We thus prove our robustness result for  L\'evy distance via a different route. Given any model distribution $\DD$, we first construct the ``worst'' distribution $\lDD$ and the ``best'' distribution $\uDD$ %
 in the $\varepsilon$-L\'evy ball around $\DD$: this means that, for any  $\hDD$ that lies in the $\varepsilon$-L\'evy ball around $\DD$, $\hDD$ first-order stochastically dominates $\lDD$ and is dominated by $\uDD$ (see Definition~\ref{def:best-worst-distributions}). We choose our robust mechanism $\hM$ to be  Myerson's optimal mechanism for $\lDD$. It is not hard to argue that $\hM$'s revenue under $\hDD$ is at least $\opt(\lDD)$, the optimal revenue under the ``worst'' distribution (Lemma~\ref{lem:LB for the revenue of our mechanism}), due to the revenue monotonicity lemma (Lemma~\ref{lem:rev-monotonicity}) shown in~\cite{DevanurHP16}. The statement provides a lower bound of $\hM$'s revenue under the unknown true distribution $\hDD$. To complete the argument, we need to argue that $\opt(\hDD)$ cannot be too much larger than $\opt(\lDD)$. Indeed, we relax $\opt(\hDD)$ to $\opt(\uDD)$, and show that even the optimal revenue under the ``best'' distribution $\opt(\uDD)\approx \opt(\lDD)$. To do so, we construct two auxiliary distributions $P$ and $Q$, such that (i) $\opt(P)\approx \opt(Q)$; and (ii) $P$ and $\lDD$ are $\varepsilon$-close in  {\em Kolmogorov} distance, and $Q$ and $\uDD$ are $\varepsilon$-close also in {\em Kolmogorov} distance. Our robustness theorem under Kolmogorov distance (Theorem~\ref{thm:single-kolmogorov}) implies then that $\opt(P)\approx\opt(\lDD)$ and  $\opt(Q)\approx\opt(\uDD)$. Hence,  $\opt(\uDD)\approx\opt(\lDD)$, which completes our proof.
 \paragraph{Multi-item Robustness  (Section~\ref{sec:prokhorov multi})}
 We first discuss our result for total variation distance. Unfortunately, our approach for L\'evy distance---of simply choosing the optimal mechanism for the ``worst,'' in the first-order stochastic dominance sense, distribution in the $\varepsilon$-TV-ball around $\DD$ to be our robust mechanism---no longer applies. Indeed, it is known that the optimal revenue in multi-item auctions may be non-monotone with respect to first-order stochastic dominance~\cite{HartR12}, i.e.~a distribution may be stochastically dominated by another but result in higher revenue.
%
However, if $\DD$ and $\hDD$ are $\varepsilon$-close in total variation distance, this means that there is a coupling between $\DD$ and $\hDD$ under which the valuation profiles are almost always sampled the same. If we take the optimal mechanism $M$ for $\DD$, and apply to bidders from $\hDD$, it will  produce almost the same revenue under $\hDD$, and vice versa. Indeed, the only event under which $M$ may generate different revenue under the two distributions is when the coupling samples different profiles, but this happens with small probability. Similarly, the BIC and IR properties of $M$ under $\DD$ become slightly approximate under~$\hDD$. We claim that we can massage $M$, in a way \emph{oblivious to $\hDD$}, to produce a $\left(\poly(n,m,H)\cdot\varepsilon\right)$-truthful and {\em exactly} IR mechanism $\hM$ for $\hDD$, which achieves an up-to-$\left(\poly(n,m,H)\cdot\varepsilon\right)$ revenue (Theorem~\ref{lem:transformation under TV}).

 The main challenge is when $\DD$ and $\hDD$ are only $\varepsilon$-close in Prokhorov distance. Note that two distributions within Prokhorov distance $\varepsilon$ may have total variation distance $1$. Just imagine two point masses: one at $A$ and another at $A-\varepsilon$. So   Prokhorov robustness~is not directly implied by~TV~robustness. 
 
 \paragraph{\textbf{Why Standard Discretization Arguments are Insufficient?}} Unlike standard algorithmic problems, discretization is subtle in mechanism design. Due to the presence of incentives, a small change in the bidders' value distributions may change the distribution of outcomes of the mechanism dramatically. To perform discretization in mechanism design, a standard procedure goes as follows~\cite{BabaioffGN17, GonczarowskiW18, Kothari0MSW19}: let $\hDD$ be the true distribution, and $\DD$ be the distribution after discretization; design the optimal mechanism $M$ for $\DD$; to run $M$ on a bid vector $b$ from $\hDD$, discretize it to $\gamma(b)=({\gamma}_1(b_1),\ldots, {\gamma}_n(b_n))$ and apply mechanism $M$ on $\gamma(b)$. This procedure can be generalized to any pair of distributions $\DD$ and $\hDD$ as long as, we are given a coupling $\gamma(\cdot)$ between $\DD$ and $\hDD$ that maps any bid vector $b$ in the support of distribution $\hDD$ to a bit vector $\gamma(b)$ in the support of $\DD$. If for every bidder $i$, $b_i$ and ${\gamma}_i(b_i)$ are close with all but small probability, we can apply similar arguments as in the total variation robustness result to massage the mechanism above to be nearly-truthful and exactly IR for $\hDD$, and argue it is approximately revenue optimal. Clearly, in the context of discretization, $b_i$ and $\gamma_i(b_i)$ are guaranteed to be close if the discretization is sufficiently fine.


At first glance, this procedure may seem applicable to our problem. A characterization of Prokhorov distance due to Strassen (Theorem~\ref{thm:prokhorov characterization}) shows that: two distributions $P$ and $Q$ are $\varepsilon$-close in Prokhorov distance if and only if there exists a (potentially randomized) coupling $\gamma$ such that if random variable $s$ is distributed according to $P$, then $\gamma(s)$ is distributed according to $Q$ and $\Pr\left[ \norm{s-\gamma(s)}_{1}> \varepsilon \right]\leq \varepsilon$. If $M$ is the optimal mechanism for the model distribution $\DD$, and $\hDD$ is the true distribution that is $\varepsilon$-close to $\DD$, why can't we combine the procedure above with the coupling $\gamma$ to establish our Prokhorov robustness result?

Unfortunately, this approach is insufficient due to the following two issues: (i) The procedure relies on knowing the coupling  ${\gamma}$. As we do not even know $\hDD$, how can we know the coupling? 
(ii) Even if we can identify the coupling $\gamma$ between $\DD$ and a specific $\hDD$, the procedure above constructs a mechanism that depends on the coupling $\gamma$. However, ${\gamma}$ may change for every different $\hDD$ in the $\varepsilon$-Prokhorov-ball around $\DD$, so the procedure generates a different mechanism for every possible true distribution.~\footnote{It is worth noting that the procedure can indeed be employed to prove the Prokhorov continuity, as the the pure existence of a good coupling $\gamma$ between $\DD$ and $\hDD$ suffices.}
 
 To satisfy our requirement for a robust mechanism in Goal III, we need to construct a \emph
{single mechanism} that is nearly truthful, IR, and near-optimal simultaneously for every distribution in the $\varepsilon$-Prokhorov-ball around $\DD$. Our proof relies on a novel way to \emph{``simultaneously couple''} $\DD$ with every distribution $\hDD$ in the $\varepsilon$-Prokhorov-ball around $\DD$. If we round both $\DD$ and any $\hDD$ to a random grid $G$ with width $\sqrt{\varepsilon}$, we can argue that the \emph{expected total variation distance} (over the randomness of the grid) between the two rounded distributions $\DD_G$ and $\hDD_G$ is $O(\sqrt{\varepsilon})$ (Lemma~\ref{lem:prokhorov to TV}). Now consider the following mechanism: choose a random grid $G$, round the bids to the random grid, apply the optimal mechanism $M_G$ that is designed for $\DD_G$. Our robustness result under the total variation distance implies that for every realization of the random grid $G$, $M_G$ is $O\left(\poly(n,m,H)\cdot\norm{\DD_G-\hDD_G}_{TV}\right)$-truthful and up-to-$O\left(\poly(n,m,H)\cdot\norm{\DD_G-\hDD_G}_{TV}\right)$ revenue optimal for any $\hDD_G$. 
 Since the expected value (over the randomness of the grid) of $\norm{\DD_G-\hDD_G}_{TV}$ is $O(\sqrt{\varepsilon})$ for any $\hDD$ in the $\varepsilon$-Prokhorov-ball of $\DD$, our randomized mechanism is simultaneously $O\left(\poly(n,m,H)\cdot\sqrt{\varepsilon}\right)$-truthful and up-to-$O\left(\poly(n,m,H)\cdot\sqrt{\varepsilon}\right)$ revenue optimal for all distributions in the $\varepsilon$-Prokhorov-ball around $\DD$.\footnote{Since we round the bids to a random grid, we will also need to accommodate the rounding error. Please see Theorem~\ref{thm:multi-item p-robustness} for details.}

 \paragraph{Sample Complexity Results}In Section~\ref{sec:old sampling}, we apply our robustness theorem to obtain sample bounds for learning multi-item auctions under the item-independence assumption (Theorem~\ref{thm:multi-item auction sample}). Our result provides an alternative proof of the main result of~\cite{GonczarowskiW18}. In Section~\ref{sec:new sampling}, we combine our robustness theorem with our sample bounds for learning Markov Random Fields and Bayesian Networks discussed earlier to derive new polynomial sample complexity results for learning multi-item auctions when the distributions have structured correlation over the items. Theorem~\ref{thm:auction MRF} summarizes our results when  item values are generated by an MRF, and Theorem~\ref{thm:auction Bayesnet} our results when  item values are generated by a Bayesenet.
\section{Preliminaries}\label{sec:prelim}
We first define a series statistical distances that we will use in the paper and discuss their relationships.
\begin{definition}[Statistical Distance]\label{def:statistical distances}
	Let $P$ and $Q$ be two probability measures.
We use $\norm{P-Q}_{TV}$, $\norm{P-Q}_K$, and $\norm{P-Q}_L$ to denote the \textbf{total variational distance}, the \textbf{Kolmogorov distance}, and the \textbf{L\'evy distance} between $P$ and $Q$, respectively. See Appendix~\ref{sec:add_prelim} for more details. \textbf{Prokhorov Distance} is a generalization of the L\'evy Distance to high dimensional distributions. Let $(U,d)$ be a metric space and $\cal{B}$ be a $\sigma$-algebra on $U$. For $A \in \cal{B}$, let $A^{\varepsilon} = \{x : \exists y \in A \ \  s.t \ \ d(x,y)<\varepsilon \}$.
Then two measures $P$ and $Q$ on $\cal{B}$ have Prokhorov distance 
$$ \inf \left \{\varepsilon>0 : P(A) \leq Q(A^{\varepsilon}) + \varepsilon, \ Q(A)\leq P(A^{\varepsilon})+\varepsilon  \ \forall A \in \cal{B}\right \}$$
We consider distributions supported on $\mathbb{R}^k$ for some $k \in \mathbb{N}$, so $U$ will be the $k$-dimensional Euclidean Space, and we choose \textbf{$d$ to be the $\ell_1$-distance}. We denote the Prokhorov distance between distributions $\FF$, $\widehat{\FF}$ by $\norm{\FF-\widehat{\FF}}_P$.
\end{definition}

\vspace{-.1in}
\paragraph{Relationships between the Statistical Distances:} Among the four metrics, the L\'evy distance and the Kolmogorov distance are only defined for single dimensional distributions, while the Prokhorov distance and the total variation distance are defined for general distributions. In the single dimensional case, the L\'evy distance is a very liberal metric. In particular, for any two single dimensional distributions $P$ and $Q$,  $$\norm{P-Q}_L\leq \norm{P-Q}_K\leq \norm{P-Q}_{TV}.$$ Note that a robustness result for a more liberal metric is more general. For example, the robustness result for single-item auctions under the L\'evy metric implies the robustness under the total variation and Kolmogorov metric, because the $\varepsilon$-ball in L\'evy distance contains the $\varepsilon$-ball in total variation and Kolmogorov distance. An astute reader may wonder whether one can find a more liberal metric in the single dimensional case. Interestingly, for the most common metrics studied probability theory, including the Wasserstein distance, the Hellinger distance, and the relative entropy, the L\'evy distance is the most liberal up to a polynomial factor. That is, if the  L\'evy distance is $\varepsilon$, the distance under any of these metrics is at least $\poly(\varepsilon)$. Indeed, the polynomial is simply the identity function or the quadratic function $\varepsilon^2$ in most cases. Please see the survey by Gibbs and Su~\cite{GibbsS02} and the references therein for more details.

The Prokhorov distance, also known as L\'evy-Prokhorov Distance, is the generalization of the L\'evy distance to multi-dimensional distributions. It is also the standard metric in robust statistical decision theory, see Huber~\cite{Huber11} and Hampel et al.~\cite{HampelRRS11}. The Prokhorov distance is almost as liberal as the L\'evy distance.~\footnote{Note that for single dimensional distributions, the Prokhorov distance is not equivalent to L\'evy distance. In particular, $\norm{P-Q}_L\leq \norm{P-Q}_P$ for any single dimensional distributions $P$ and $Q$.}  First, for any two distributions $P$ and $Q$, $$\norm{P-Q}_P\leq \norm{P-Q}_{TV}.$$ Second, if we consider other well studied metrics such as the Wasserstein distance, the Hellinger distance, and the relative entropy, the Prokhorov distance is again the most liberal up to a polynomial factor. 

\vspace{-.1in}
\paragraph{Multi-item Auctions:} We focus on revenue maximization in the combinatorial auction with \textbf{$n$ bidders} and \textbf{$m$ heterogenous items}. We use $X$ to denote the set of possible allocations, and each bidder $i\in [n]$ has a valuation function/type  $v_i(\cdot): X \mapsto \mathbb{R}_{\geq 0}$. In this paper, we assume the function $v_i(\cdot)$ is parametrized by $(v_{i,1},\ldots,v_{i,m})$, where $v_{i,j}$ is bidder $i$'s value for item $j$. We assume that \emph{bidder's types are distributed independently}. Throughout this paper, we assume all bidders types lie in $[0,H]^m$.  We adopt the valuation model in Gonczarowski and Weinberg~\cite{GonczarowskiW18} and consider valuations  that satisfy the following Lipschitz property. 

\begin{definition}[Lipschitz Valuations]\label{def:Lipschitz}
    There exists an absolute constant $\L$ such that if type $\bold{v_i}=(v_{i,1},\ldots,v_{i,m})$ and $\bold{v'_i}=(v'_{i,1},\ldots,v'_{i,m})$ are within $\ell_1$ distance $\varepsilon$, then for the corresponding valuations $v_i(\cdot)$ and $v'_i(\cdot)$, $|v_i(x)-v'_i(x)| \leq \L\cdot \varepsilon$ for all $x \in X$. 
\end{definition}
This for example includes common settings such as additive and unit demand with Lipschitz constant $\L=1$. More generally, $\L=1$ holds for constrained additive valuations~\footnote{$v_i(\cdot)$ is constrained additive if  $v_i(X)=\max_{R\subseteq S, R\in \mathcal{I}}\sum_{j\in R} v_{i,j}$, for some downward closed set system ${\cal I} \subseteq 2^{[m]}$~and~$S=\{j: x_{i,j}=1\}$.} and even in some settings with complementarities. Please see~\cite{GonczarowskiW18} for further discussion. 

A mechanism $M$ consists of an allocation rule $x(\cdot)$ and a payment rule $p(\cdot)$. For any input bids $b=(b_1,\ldots,b_n)$, the allocation rule outputs a distribution over allocations $x(b)\in \Delta(X)$ and payments $p(b) = \left(p_1(b),\ldots, p_n(b) \right)$. 
If bidder $i$'s type is $v_i$, her utility under input $b$ is $u_i\left(v_i, M(b)\right)=\E\left[v_i\left(x(b)\right)-p_i(b)\right]$. 

\vspace{-.05in}
\paragraph{Truthfulness and Revenue:}
We use the standard notion $\varepsilon$-BIC and IR (see Appendix~\ref{sec:add_prelim} for details). If $M$ is a $\varepsilon$-BIC mechanism w.r.t. some distribution $\DD$, we use $\revT(M,\DD)$ to denote the revenue of mechanism $M$ under distribution $\DD$ assuming bidders are bidding truthfully. Clearly, $\revT(M,\DD)=\rev(M,\DD)$ when $M$ is BIC w.r.t. $\DD$. We denote the optimal revenue achievable by any $\varepsilon$-BIC (or BIC) mechanism by $\text{OPT}_{\varepsilon}(\DD)$ (or $\opt(\DD)$). Although it is conceivable that permitting mechanisms to be $\varepsilon$-BIC allows for much greater expected revenue than if they were restricted to be BIC, past results show that this is not the case.


\begin{lemma}~\cite{DaskalakisW12,RubinsteinW15}\label{lem:eps-BIC to BIC}
In any $n$-bidder $m$-item auction, let $\DD$ be any joint distribution over arbitrary $\LL$-Lipschitz valuations, where the valuations of different bidders are independent. The maximum revenue attainable by any IR and $\varepsilon$-BIC auction for a given product distribution is at most $2n\sqrt{m\L H\varepsilon}$ greater than the maximum revenue attainable by any IR and BIC auction for that distribution.
\end{lemma}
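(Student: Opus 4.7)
The plan is to invoke the black-box $\varepsilon$-BIC to BIC reduction developed in~\cite{DaskalakisW12,RubinsteinW15}. Given any IR and $\varepsilon$-BIC mechanism $M$ attaining revenue $\opt_\varepsilon(\DD)$, I will produce a mechanism $M'$ that is exactly BIC and IR with respect to the same product distribution $\DD=\prod_i\DD_i$ and whose expected revenue is at least $\opt_\varepsilon(\DD)-2n\sqrt{m\L H\varepsilon}$; applying the construction to the $\varepsilon$-BIC revenue-optimal mechanism then yields the lemma.

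The construction is a \emph{replica-surrogate matching}. For each bidder $i$, the mechanism privately samples $k$ surrogate types $s_i^1,\ldots,s_i^k \sim \DD_i$ i.i.d., and upon receiving $i$'s report $\tilde v_i$ draws $k-1$ additional replicas $r_i^2,\ldots,r_i^k\sim \DD_i$, setting $r_i^1:=\tilde v_i$. It then computes a maximum-weight bipartite matching between replicas and surrogates under edge weights $U_i^M(r\to s):=\E_{v_{-i}}[r(x(s,v_{-i}))-p_i(s,v_{-i})]$, and runs $M$ on the profile of surrogates matched to the replicas $r_i^1$. Bidder $i$ is charged the interim payment under $M$ of her matched surrogate plus the VCG externality from the matching step. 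Because the matching step is a VCG auction on the replica side, truthful reporting is a dominant strategy there; combined with a small downward payment adjustment to absorb the $\varepsilon$-BIC slack of $M$, the resulting $M'$ is exactly BIC and IR.

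For the revenue bound I would proceed in two stages. First, discretize each bidder's type space onto a grid of width $\delta$; by $\L$-Lipschitzness of valuations this costs at most $O(\L m \delta)$ per bidder in revenue. Second, on the discretized type space, exchangeability of replicas and surrogates implies that the identity matching has expected edge weight equal to the interim utility of truthful play in $M$, while the max-weight matching can exceed this by at most $k\varepsilon$ per bidder (by $\varepsilon$-BIC of $M$). A pigeonhole argument over grid cells then converts this slack into an expected $\ell_1$-gap bound $\E[\|v_i-\sigma(v_i)\|_1]=O\bigl(m\delta+\sqrt{mH\varepsilon/\delta}\bigr)$ between a bidder's report and her matched surrogate, which by $\L$-Lipschitzness translates to a per-bidder revenue loss of $O(\L m\delta+\L\sqrt{mH\varepsilon/\delta})$. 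Optimizing over $\delta$ balances the two error sources and yields a total loss of $2n\sqrt{m\L H\varepsilon}$.

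The main obstacle is the matching-distance bound, which requires carefully combining three ingredients: the $\varepsilon$-BIC slack of $M$ (which controls how much the max-weight matching can improve on the identity matching), the exchangeability of the replicas and surrogates (which makes the identity matching a natural benchmark), and the Lipschitz continuity of valuations (which converts $\ell_1$-distances in type space into revenue changes). Balancing the discretization width $\delta$ against the matching slack is what produces the clean $\sqrt{m\L H\varepsilon}$ rate, and verifying that the payment adjustments preserve IR while only losing a negligible amount of revenue completes the argument.
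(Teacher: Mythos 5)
The paper itself does not prove this lemma---it is imported verbatim from~\cite{DaskalakisW12,RubinsteinW15}---so the relevant comparison is with the replica-surrogate construction in those works, which is exactly the construction you describe. Your incentive argument is essentially correct: the matching stage is a VCG auction in which a report induces a valuation over surrogates, so $\hM$ is exactly BIC and (keeping only non-negative-weight edges, with an outside option of being unmatched) IR, \emph{regardless} of the $\varepsilon$-BIC slack of $M$; no ``downward payment adjustment'' is needed for incentives, which suggests a misreading of what the slack is used for.

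The genuine gap is in the revenue analysis. The proof in the cited works hinges on discounting the matched surrogate's payment by a factor $(1-\eta)$: by exchangeability, the revenue from bidder $i$ is $(1-\eta)$ times $M$'s revenue from $\DD_i$ minus the expected payment mass of \emph{unmatched} surrogates, and the discount is precisely what makes a high-payment surrogate attractive to replicas in its own cell (its same-cell edge weight is at least $\eta\,\E[p_i(s,\cdot)]-\LL m\delta$ by IR of $M$), so a charging argument using $\varepsilon$-BIC bounds the unmatched payment mass by roughly $(\varepsilon+\LL m\delta)/\eta$ per surrogate. The per-bidder loss is then $\eta\cdot m\LL H+\varepsilon/\eta+O(\LL m\delta)$, where $\delta$ is taken arbitrarily small (it only inflates $k$, not the revenue loss), and optimizing $\eta=\sqrt{\varepsilon/(m\LL H)}$ gives the stated $2\sqrt{m\LL H\varepsilon}$ per bidder. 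Your sketch omits the discount entirely and instead balances $\delta$ against $\varepsilon$, and this breaks in two places. First, the step ``$\E[\|v_i-\sigma(v_i)\|_1]$ small $\Rightarrow$ small revenue loss by $\LL$-Lipschitzness'' is invalid: payments, hence revenue, are not Lipschitz in the type (two $\ell_1$-close types can have interim payments differing by $\Theta(m\LL H)$); the correct accounting goes through the i.i.d.-ness of the surrogates and the unmatched-surrogate payment mass, not through type proximity, and without the discount $\varepsilon$-BIC gives no reason the max-weight matching keeps high-payment surrogates matched, since their same-cell edges have weight close to $0$ when IR nearly binds---so your pigeonhole bound does not control the quantity that matters. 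Second, even granting your intermediate bound $O\bigl(m\delta+\sqrt{mH\varepsilon/\delta}\bigr)$, optimizing over $\delta$ gives $\delta=(H\varepsilon/m)^{1/3}$ and a per-bidder loss of order $\LL m^{2/3}(H\varepsilon)^{1/3}$, an $\varepsilon^{1/3}$ rate; the claimed conclusion $2n\sqrt{m\LL H\varepsilon}$ does not follow from your own estimates.
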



%

\vspace{-.15in}
\paragraph{Notations:} We allow the bidders to submit a special type $\perp$, which represents not participating the auction. If anyone submits $\perp$, the mechanism {terminates immediately, and does not allocate any item to any bidder or charge any bidder.} 
A bidder's utility for submitting type $\perp$ is $0$. We will sometimes refer to $\perp$ as the \textbf{IR type}.Throughout the paper, we use $\hDD=\bigtimes_{i=1}^n \hDD_i$ to denote the true type distributions of the bidders. We use $\DD=\bigtimes_{i=1}^n \DD_i$ to denote the model type distributions or our learned type distributions from samples. We use $\DD_i \mid \bigtimes_{j=1}^m [w_{ij},w_{ij}+\delta)$ to denote the distribution induced by $\DD_i$ conditioned on being in the $m$-dimensional cube $\bigtimes_{j=1}^m [w_{ij},w_{ij}+\delta)$, and \textbf{$\supp(\FF)$} to denote the support of distribution $\FF$.

\notshow{
\paragraph{Support of Mechanisms.}
At times we will want to run mechanisms on bidders whose values the mechanism potentially does not support because it is not defined on all of $[0,H]^{n\cdot m}$. In these cases we would still like to guarantee every bidder the option of essentially not participating in the auction and hence not having to make any payment. Denote this type by $\perp$ i.e a bidder of this type draws a value from their distribution and submits it to the mechanism, but instead of getting the usual allocation, always receives no items and makes no payment.
Also, when a distribution is only supported on a finite set because it's values were previously rounded down, we may recover the original distribution while insuring we stay close to the sample of the rounded distribution. For some bidder $i\in [n]$, some small $\delta>0$ and distribution $\DD_i$, let $\round{\DD_i}$ be a distribution that was rounded down with respect to some grid of size $\delta$. Let $w_i \sim \round{\DD_i}$. Then  $\DD_i \mid \bigtimes_{j=1}^m [w_{ij},w_{ij}+\delta)$ is the distribution $\DD_i$ conditioned on being in $\bigtimes_{j=1}^m [w_{ij},w_{ij}+\delta)$.
For such rounded distributions and also others which are not supported on $[0,H]^{n\cdot m}$, we will denote by supp$(\cdot)$ the support of a distribution.
}




\section{L\'evy-Robustness for Single-Item Auctions}\label{sec:single}


In this section, we show the robustness result under  the L\'evy distance in the single-item setting.  If we are given a model distribution $\DD_i$ that is $\varepsilon$-close to the true distribution $\hDD_i$, in L\'{e}vy distance, for every bidder $i\in [n]$, we show how to design a mechanism $M^*$ only based on $\DD=\bigtimes_{i=1}^n \DD_i $ and extracts revenue that is at most $O(nH\cdot \varepsilon)$ less than the optimal revenue under any possible true distribution $\hDD = \bigtimes_{i=1}^n \hDD_i$.

\begin{theorem}[L\'evy-Robustness for Single-item Auctions]\label{thm:single-levy}
	Given $\DD=\bigtimes_{i=1}^n \DD_i $, where $\DD_i$ is an arbitrary distributions supported on $[0,H]$ for all $i\in[n]$. We can design a DSIC and IR mechanism $M^*$ based on $\DD$ such that for any product distribution $\hDD = \bigtimes_{i=1}^n \hDD_i$ satisfying $\norm{\DD_i-\hDD_i}_L\leq \varepsilon$ for all $i\in[n]$, we have:	\begin{equation*}
		\rev(M^*,\hDD)\geq \opt(\hDD)-O(nH\cdot \varepsilon).
	\end{equation*}
\end{theorem}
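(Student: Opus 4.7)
The plan is to follow the roadmap sketched in Section~\ref{sec:roadmap}. Given each $\DD_i$ with CDF $F_i$, I will first construct a pointwise ``worst'' distribution $\lDD_i$ and pointwise ``best'' distribution $\uDD_i$ inside the $\varepsilon$-L\'evy ball around $\DD_i$, built by shifting $F_i$ horizontally and vertically by $\varepsilon$ so that they become the upper and lower CDF envelopes of the ball. By definition of L\'evy distance, any true $\hDD_i$ with $\norm{\DD_i-\hDD_i}_L\leq\varepsilon$ first-order stochastically dominates $\lDD_i$ and is dominated by $\uDD_i$. My mechanism $M^{\ast}$ will be Myerson's DSIC, IR optimal auction for $\lDD = \bigtimes_{i=1}^{n} \lDD_i$; this is computable from $\DD$ alone and, being DSIC and IR, is well defined on any type space, including $\hDD$.

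For the lower bound on $\rev(M^{\ast},\hDD)$, since $\hDD$ stochastically dominates $\lDD$ bidder-wise and $M^{\ast}$ is DSIC, the revenue monotonicity lemma (Lemma~\ref{lem:rev-monotonicity}) gives
\begin{equation*}
\rev(M^{\ast},\hDD) \;\geq\; \rev(M^{\ast},\lDD) \;=\; \opt(\lDD).
\end{equation*}
A symmetric application of revenue monotonicity to Myerson's optimal mechanism for $\hDD$, together with the fact that $\hDD$ is dominated by $\uDD$, yields $\opt(\hDD)\leq \opt(\uDD)$. It therefore suffices to establish the key estimate $\opt(\uDD)-\opt(\lDD) = O(nH\varepsilon)$.

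To bound this gap I will introduce auxiliary product distributions $P=\bigtimes_{i=1}^{n} P_i$ and $Q=\bigtimes_{i=1}^{n} Q_i$ chosen so that (a) $\norm{P_i-\lDD_i}_K\leq\varepsilon$ and $\norm{Q_i-\uDD_i}_K\leq\varepsilon$, which converts the \emph{vertical} part of the L\'evy slack into Kolmogorov slack; and (b) $P$ and $Q$ differ only by a deterministic horizontal shift of size at most $\varepsilon$ in each coordinate, which is what remains of the \emph{horizontal} slack after (a). Property~(a) combined with the Kolmogorov-distance robustness theorem (Theorem~\ref{thm:single-kolmogorov}) yields $|\opt(P)-\opt(\lDD)|=O(nH\varepsilon)$ and $|\opt(Q)-\opt(\uDD)|=O(nH\varepsilon)$. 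For (b), a horizontal shift by $\varepsilon$ in each $P_i$ perturbs Myerson's (possibly ironed) virtual value function by at most $\varepsilon$ pointwise, so the two optimal revenues differ by at most $n\varepsilon$ in expected virtual surplus plus an $O(nH\varepsilon)$ boundary term from the $\varepsilon$-mass near the top of the support that can alter the winner. Chaining the four inequalities closes the gap.

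The main obstacle is the $|\opt(P)-\opt(Q)|=O(nH\varepsilon)$ step, precisely because Myerson's mechanism may involve ironing on irregular distributions and so I cannot work with a closed-form virtual value. I will handle this by coupling the two Myerson auctions on the same bid profile, arguing that the set of profiles where allocations or payments differ is an event of probability $O(\varepsilon)$ per bidder and each such discrepancy contributes at most $H$ to revenue; summing over the $n$ bidders gives the desired $O(nH\varepsilon)$ loss. The construction of $\lDD,\uDD$ from $\DD$ and of $P,Q$ from $\lDD,\uDD$ is the bookkeeping glue that lets the already-proved Kolmogorov-distance robustness (Theorem~\ref{thm:single-kolmogorov}) be used as a black box, reducing the entire L\'evy robustness claim to this single coupling estimate.
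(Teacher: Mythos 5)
Your overall architecture is exactly the paper's: construct the stochastically extremal distributions $\lDD,\uDD$ in the $\varepsilon$-L\'evy ball, let $M^*$ be (the extension of) Myerson's auction for $\lDD$, get $\rev(M^*,\hDD)\geq \opt(\lDD)$ and $\opt(\hDD)\leq\opt(\uDD)$ from strong revenue monotonicity (Lemma~\ref{lem:rev-monotonicity}), and then close the gap $\opt(\uDD)-\opt(\lDD)=O(nH\varepsilon)$ by passing through two shifted auxiliary distributions ($P,Q$ are the paper's $\ltDD,\utDD$, which differ by a shift of $2\varepsilon$, not $\varepsilon$ --- a harmless constant) and invoking the Kolmogorov robustness theorem (Theorem~\ref{thm:single-kolmogorov}). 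Up to that point your proposal matches Lemmas~\ref{lem:LB for the revenue of our mechanism}, \ref{lem:UB for the true revenue} and the structure of Lemma~\ref{lem:LB and UB are close}.

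The gap is in the one step you flag as the ``main obstacle,'' the comparison $\opt(Q)\leq\opt(P)+O(nH\varepsilon)$. Your proposed justification --- couple the two Myerson auctions and argue that allocations or payments differ only on an event of probability $O(\varepsilon)$ per bidder, charging $H$ per discrepancy --- does not hold for arbitrary distributions. Under the natural shift coupling the payments differ on essentially every profile, and the allocation differs exactly when the highest (ironed) virtual value falls in a window of width $2\varepsilon$ around zero; with no density assumption this window can contain an atom, so its probability need not be $O(\varepsilon)$, and the claim is unsubstantiated. (Your earlier virtual-surplus sketch is closer to a fix: a shift of the value distribution by $2\varepsilon$ shifts the revenue curve by the linear term $2\varepsilon q$ in quantile space, hence shifts the \emph{ironed} virtual value by exactly $2\varepsilon$, so the discrepancy events cost at most $2\varepsilon$ each, not $H$ --- but you explicitly abandon this route because of ironing.) The paper avoids all of this with a two-line mechanism transformation (Claim~\ref{clm:two shifted distributions have similar revenues}): take any DSIC, IR mechanism $M$ for $\utDD$, and for bids $v$ from $\ltDD$ run $M$ on $v+2\varepsilon$ while rebating the winner $2\varepsilon$; this is DSIC and IR, and since at most one item is sold the revenue loss is at most $2\varepsilon$ per profile, giving $\opt(\ltDD)\geq\opt(\utDD)-2\varepsilon$ with no dependence on $n$ or $H$ (the reverse direction is free from monotonicity). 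As written, your proposal is missing a valid argument for this step; replacing your coupling claim with either the rebate transformation or the exact-shift-of-ironed-virtual-values observation would complete it.
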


Let us sketch the proof of Theorem~\ref{thm:single-levy}. We prove our statement in three steps.\begin{itemize}
\item	 \textbf{Step (i):} We first identify the ``best'' and ``worst'' distributions (Definition~\ref{def:best-worst-distributions}), in terms of the first-order stochastic dominance (Definition~\ref{def:stochastic dominance}), among all distributions in the $\varepsilon$-L\'evy-ball around the model distribution $\DD$. We construct the optimal mechanism $M^*$ w.r.t. the ``worst'' distribution, and show that its revenue under any possible true distribution is at least $M^*$'s revenue under the ``worst'' distribution (Lemma~\ref{lem:LB for the revenue of our mechanism}). The statement provides a lower bound of $M^*$'s revenue under the unknown true distribution. Its proof follows from the revenue monotonicity lemma (Lemma~\ref{lem:rev-monotonicity}) shown in~\cite{DevanurHP16}.  
\item \textbf{Step (ii):} We use the revenue monotonicity lemma again to show the optimal revenue under the true distribution $\hDD$ is upper bounded by the optimal revenue under the ``best'' distribution(Lemma~\ref{lem:UB for the true revenue}).
\item \textbf{Step (iii):} We complete the proof by argueing that $M^*$'s  revenue under the ``worst'' distribution can be at most $O(nH\cdot \varepsilon)$ worst than the optimal revenue under the ``best'' distribution (Lemma~\ref{lem:LB and UB are close}). The statement follows from a robustness theorem for single-item auctions under the Kolmogorov distance (Theorem~\ref{thm:single-kolmogorov}).
\end{itemize}
 
 We show Step (i) and (ii) in Section~\ref{sec:best and worst distributions Levy} and Step (iii) in Section~\ref{sec:bounding the gap between best and worst Levy}.
 \subsection{Best and Worst Distributions in the $\varepsilon$-L\'evy-Ball}\label{sec:best and worst distributions Levy}
We formally define the  ``best'' and ``worst'' distributions in the $\varepsilon$-L\'evy-ball around the model distribution. 
	\begin{definition}\label{def:best-worst-distributions}
		For every $i\in [n]$, we define $\uDD_i$ and $\lDD_i$ based on $\DD_i$. $\uDD_i$ is supported on $[0,H+\varepsilon]$, and its CDF is defined as $F_{\uDD_i}(x) = \max\left\{F_{\DD_i}(x-\varepsilon)-\varepsilon,0\right\}$. $\lDD_i$ is supported on $[-\varepsilon,H]$, and its CDF is defined as $F_{\lDD_i}(x) = \min\left\{F_{\DD_i}(x+\varepsilon)+\varepsilon,1\right\}$.
	\end{definition}

We provide a more intuitive interpretation of $\uDD_i$ and $\lDD_i$ here. To obtain $\uDD_i$, we first shift all values in $\DD_i$ to the right by $\varepsilon$, then we move the bottom $\varepsilon$ probability mass to $H+\varepsilon$. To obtain $\lDD_i$, we first shift all values in $\DD_i$ to the left by $\varepsilon$, then we move the top $\varepsilon$ probability mass to $-\varepsilon$. It is not hard to see that both $\uDD_i$ and $\lDD_i$ are still in the $\varepsilon$-ball around $\DD_i$ in L\'{e}vy distance. More importantly, $\uDD_i$ and $\lDD_i$ are the ``best'' and ``worst'' distributions in the $\varepsilon$-L\'evy-ball under first-order-stochastic-dominance. 

\begin{definition}[First-Order Stochastic Dominance]\label{def:stochastic dominance}
	We say distribution $B$ \emph{first-order stochastically dominates} $A$ iff $F_B(x)\leq F_A(x)$ for all $x\in \mathbb{R}$. We use $A\preccurlyeq B$ to denote that distribution $B$ first-order stochastically dominates distribution $A$. If $\mathbf{A}=\times_{i=1}^n A_i$ and $\mathbf{B}=\times_{i=1}^n B_i$ are two product distributions, and $A_i\preccurlyeq B_i$ for all $i\in[n]$, we slightly abuse the notation $\preccurlyeq$ to write $\mathbf{A}\preccurlyeq \mathbf{B}$.
\end{definition}

\begin{lemma}\label{lem:best-worst-stochastic-dominance}
	For any $\hDD_i$, such that $\norm{\hDD_i-\DD_i}_L\leq \varepsilon$, we have $\lDD_i \preccurlyeq\hDD_i\preccurlyeq \uDD_i$. \end{lemma}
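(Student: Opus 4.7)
The plan is to prove the two stochastic dominance relations by directly unfolding the definition of L\'evy distance and comparing to the definitions of $\uDD_i$ and $\lDD_i$. Recall that $\|\hDD_i - \DD_i\|_L \leq \varepsilon$ is equivalent (by definition of L\'evy distance) to the two-sided CDF inequality
\begin{equation*}
F_{\DD_i}(x-\varepsilon) - \varepsilon \;\leq\; F_{\hDD_i}(x) \;\leq\; F_{\DD_i}(x+\varepsilon) + \varepsilon, \qquad \forall x \in \mathbb{R}.
\end{equation*}
This is the only hypothesis I will need; the rest is a direct comparison to the definitions of $\uDD_i$ and $\lDD_i$ given in Definition~\ref{def:best-worst-distributions}.

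For the upper bound $\hDD_i \preccurlyeq \uDD_i$, I would take the left inequality above, $F_{\hDD_i}(x) \geq F_{\DD_i}(x-\varepsilon) - \varepsilon$, and combine it with the trivial bound $F_{\hDD_i}(x) \geq 0$ to conclude $F_{\hDD_i}(x) \geq \max\{F_{\DD_i}(x-\varepsilon) - \varepsilon,\, 0\} = F_{\uDD_i}(x)$, which by Definition~\ref{def:stochastic dominance} is exactly $\hDD_i \preccurlyeq \uDD_i$. Symmetrically for $\lDD_i \preccurlyeq \hDD_i$, I would take the right inequality $F_{\hDD_i}(x) \leq F_{\DD_i}(x+\varepsilon) + \varepsilon$ and combine it with the trivial bound $F_{\hDD_i}(x) \leq 1$ to get $F_{\hDD_i}(x) \leq \min\{F_{\DD_i}(x+\varepsilon) + \varepsilon,\, 1\} = F_{\lDD_i}(x)$, giving $\lDD_i \preccurlyeq \hDD_i$.

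There is essentially no obstacle here: the statement is designed to be an immediate consequence of the L\'evy-distance definition, and the role of the $\max\{\cdot,0\}$ and $\min\{\cdot,1\}$ clippings in Definition~\ref{def:best-worst-distributions} is precisely to make $F_{\uDD_i}$ and $F_{\lDD_i}$ legitimate CDFs on $[0,H+\varepsilon]$ and $[-\varepsilon,H]$ respectively, without weakening the CDF comparison. The only ``care'' needed is to observe these clippings are harmless: the lower clip at $0$ only matters when $F_{\DD_i}(x-\varepsilon) - \varepsilon < 0$, in which case the bound $F_{\hDD_i}(x) \geq 0$ already suffices, and symmetrically for the upper clip at $1$. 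Hence no additional argument beyond the above two short chains of inequalities is required, and the lemma follows.
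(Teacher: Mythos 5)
Your proposal is correct and is essentially identical to the paper's own proof: both simply unfold the L\'evy-distance inequality $F_{\DD_i}(x-\varepsilon)-\varepsilon \leq F_{\hDD_i}(x) \leq F_{\DD_i}(x+\varepsilon)+\varepsilon$ and combine it with $0 \leq F_{\hDD_i}(x) \leq 1$ to conclude $F_{\uDD_i}(x) \leq F_{\hDD_i}(x) \leq F_{\lDD_i}(x)$. No further comment is needed.
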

\begin{proof}
	It follows from the definition of L\'{e}vy distance and Definition~\ref{def:best-worst-distributions}. For any $x$, \begin{equation*}
		F_{\hDD_i}(x) \in [F_{\DD_i}(x-\varepsilon)-\varepsilon, F_{\DD_i}(x+\varepsilon)+\varepsilon].
	\end{equation*}
	Clearly, $0\leq F_{\hDD_i}(x) \leq 1$, so we have  $	F_{\uDD_i}(x) \leq F_{\hDD_i}(x) \leq F_{\lDD_i}(x)$
for all $x$.
\end{proof}

The plan is to construct the optimal mechanism for $\lDD=\bigtimes_{i=1}^n \lDD_i$ and show that this mechanism achieves up-to-$O(nH\cdot\varepsilon)$ optimal revenue under any possible true distribution $\DD$.

Next, we state a revenue monotonicity lemma that will be useful. We first need the following definition. \begin{definition}[Extension of a Mechanism to All Values]\label{def:mechanism-extension}
 Suppose a mechanism $M=(x,p)$ is defined for all value profiles in $T=\times_{i=1}^n T_i$. Define its extension $M'=(x',p')$ to all values. We only specify $x'$, as $p'$ can be determined by the payment identity given $x'$. $x'$ first rounds the bid of each bidder $i$ down to the closest value in $T_i$, and then apply allocation rule $x$ on the rounded bids. If some bidder $i$'s bid is smaller than the lowest value in $T_i$, $x'$ does not allocate the item to any bidder.
\end{definition}
Observe that the extension provides a DSIC and IR mechanism for all values if the original mechanism is DSIC and IR.

\begin{lemma}[Strong Revenue Monotonicity~\cite{DevanurHP16}]\label{lem:rev-monotonicity}
Let $\FF=\bigtimes_{i=1}^{n} \FF_i$ be a product distributions. There exists an optimal DSIC and IR mechanism
$M$ for $\FF$ such that, for any product distribution $\FF'=\bigtimes_{i=1}^{n} \FF'_i \succcurlyeq \FF$, 
\begin{equation*}
	\rev(M',\FF')\geq \rev(M,\FF) = \opt(\FF).
\end{equation*}
$M'$ is  the extension of $M$. In particular, this implies $\opt(\FF')\geq \opt(\FF)$.
\end{lemma}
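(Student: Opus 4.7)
The plan is to take $M$ to be Myerson's optimal DSIC and IR single-item auction for $\FF$, defined on $T = T_1\times\cdots\times T_n$ with $T_i = \supp(\FF_i)$, and achieving revenue $\opt(\FF)$. Its extension $M'$ to $[0,H]^n$ per Definition~\ref{def:mechanism-extension} automatically remains DSIC and IR, because the rounding-down step is a deterministic monotone preprocessing. With this choice the lemma reduces to showing $\rev(M',\FF')\ge\opt(\FF)$ for every product distribution $\FF'\succcurlyeq\FF$.

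To bridge $\FF$ and $\FF'$, I would invoke the quantile coupling: for each bidder $i$, draw $U_i\sim U[0,1]$ independently and set $v_i=F_{\FF_i}^{-1}(U_i)$, $v_i'=F_{\FF_i'}^{-1}(U_i)$, so that $v\sim\FF$, $v'\sim\FF'$, and $v_i\le v_i'$ almost surely. When $M'$ processes $v'$, it rounds each $v_i'$ down to the largest $\tilde v_i\in T_i$ with $\tilde v_i\le v_i'$; because $v_i\in T_i$ and $v_i\le v_i'$, the value $v_i$ itself is a valid rounding candidate, so $\tilde v_i\ge v_i$, giving $\tilde v\ge v$ coordinatewise with $\tilde v\in T$. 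A direct computation using the piecewise-constant structure of $x^{M'}$ together with the DSIC payment identity yields $p_i^{M'}(v')=p_i^M(\tilde v(v'))$, and hence
\[
\rev(M',\FF')\;=\;\E_{v\sim\FF}\Bigl[\,\textstyle\sum_i p_i^M(\tilde v(v))\Bigr].
\]
Letting $\hat\FF$ denote the distribution of $\tilde v$ when $v'\sim\FF'$, the task reduces to showing $\rev(M,\hat\FF)\ge\rev(M,\FF)$, where $\hat\FF$ is a product distribution supported on $T$ with $\hat\FF_i\succcurlyeq\FF_i$ for every $i$.

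I would prove this by interpolation: shift a single marginal $\FF_i$ upward at a time within $T_i$, and show the expected total Myerson revenue on $T$ does not decrease at each step. For a one-coordinate shift, bidder $i$'s own expected payment is non-decreasing because $p_i^M(\cdot, v_{-i})$ is monotone in its first argument (by DSIC). The subtler part is controlling the effect on bidders $j\ne i$, whose payments depend on $v_i$ through Myerson's threshold rule. I would handle this using the virtual-surplus representation $\rev(M,\FF)=\sum_j\E[\phi_{\FF_j}(v_j)\,x_j^M(v)]$ together with integration by parts in $v_i$, which trades the density $f_i$ for the survival function $1-F_i$ that is pointwise monotone under the stochastic-dominance shift.

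The main obstacle is precisely this aggregate step, because the total Myerson revenue $\sum_i p_i^M(v)$ is \emph{not} pointwise monotone in $v$: when a previously losing bidder's value crosses their virtual-value threshold and turns them into the new winner, the ousted winner may have been paying strictly more than the new winner's threshold payment, so instantaneous revenue can drop. A pointwise comparison is therefore impossible, and the proof must exploit averaging over the distribution of $v_i$ via the specific structure of Myerson's allocation. This aggregate-monotonicity argument is the technical content of the strong revenue monotonicity lemma of Devanur--Huang--Psomas~\cite{DevanurHP16}, whose proof I would invoke directly to close out the argument.
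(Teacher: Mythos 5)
This lemma is not proved in the paper at all: it is imported verbatim from Devanur--Huang--Psomas~\cite{DevanurHP16}, so there is no in-paper argument to compare against. Your proposal ultimately does the same thing --- after a correct but essentially cosmetic reduction (Myerson's mechanism on $T$, its rounding-down extension, the quantile coupling giving $\tilde v \succcurlyeq v$, and the observation that the extension's threshold payments coincide with $p^M$ on the rounded profile), the step you are left with, namely that Myerson's optimal mechanism for $\FF$ earns at least $\opt(\FF)$ under any stochastically dominating product distribution on $T$, \emph{is} the content of the lemma, and you close it by citing~\cite{DevanurHP16}. You correctly identify why a pointwise argument fails (total Myerson revenue is not coordinatewise monotone) and that the averaging/interim argument of~\cite{DevanurHP16} is what is needed; your interpolation-plus-integration-by-parts sketch is not carried out, but since the paper itself treats the lemma as a black box, relying on the same citation is consistent with the paper's treatment rather than a gap.
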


Combining Lemma~\ref{lem:best-worst-stochastic-dominance} and~\ref{lem:rev-monotonicity}, we show that if $M^*$ is the extension of the optimal mechanism for  $\lDD$, it achieves at least $\opt(\lDD)$ under any distribution $\hDD$ with $\norm{\hDD_i-\DD_i}_L\leq \varepsilon$.
\begin{lemma}\label{lem:LB for the revenue of our mechanism}
	Let $M^*$ be the extension of the optimal DSIC and IR mechanism for $\lDD$. For any product distribution $\hDD=\bigtimes_{i=1}^n \hDD_i$ with $\norm{\hDD_i-\DD_i}_L\leq \varepsilon$ for all $i\in[n]$, we have the following:
	\begin{equation*}
	\rev(M^*,\hDD)\geq \opt(\lDD).
\end{equation*}
\end{lemma}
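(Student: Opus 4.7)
The plan is to chain together the two structural results that have already been set up: the first-order-stochastic-dominance characterization of the L\'evy-ball (Lemma~\ref{lem:best-worst-stochastic-dominance}) and the strong revenue monotonicity lemma from \cite{DevanurHP16} (Lemma~\ref{lem:rev-monotonicity}). Intuitively, $\lDD$ is pointwise dominated by every distribution in the L\'evy-ball, so the optimal DSIC/IR mechanism designed for $\lDD$ can only do better when bidders' true values are drawn from a stochastically larger distribution.

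First, I would invoke Lemma~\ref{lem:best-worst-stochastic-dominance} bidder by bidder: since $\norm{\DD_i-\hDD_i}_L \leq \varepsilon$ for every $i \in [n]$, we obtain $\lDD_i \preccurlyeq \hDD_i$, and hence the product-distribution dominance $\lDD \preccurlyeq \hDD$. Second, let $M$ be the optimal DSIC and IR mechanism for $\lDD$ whose existence and monotonicity is guaranteed by Lemma~\ref{lem:rev-monotonicity}; by Definition~\ref{def:mechanism-extension}, its extension $M^*$ is well-defined and remains DSIC and IR. Applying Lemma~\ref{lem:rev-monotonicity} with $\FF=\lDD$ and $\FF'=\hDD \succcurlyeq \lDD$ then yields
\[
\rev(M^*,\hDD)\;\geq\;\rev(M,\lDD)\;=\;\opt(\lDD),
\]
which is exactly the statement of the lemma.

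There is essentially no obstacle here, since Lemma~\ref{lem:best-worst-stochastic-dominance} and Lemma~\ref{lem:rev-monotonicity} have already done all the work. The only subtlety to flag is that one must use the \emph{extension} $M^*$ (as opposed to $M$ itself) because $\hDD$ may place mass on values outside $\supp(\lDD) \subseteq [-\varepsilon, H]$, and $M^*$ must still be well-defined on such inputs while preserving DSIC/IR; this is precisely why the extension was introduced in Definition~\ref{def:mechanism-extension} and why Lemma~\ref{lem:rev-monotonicity} is stated in terms of it. Once this is noted, the two-line chain above gives the lemma; the real content of the section will then lie in the subsequent upper bound on $\opt(\hDD)-\opt(\lDD)$ (Steps (ii) and (iii) of the outline).
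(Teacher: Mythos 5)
Your argument is correct and is exactly the paper's proof: the paper's one-line argument likewise notes $\hDD \succcurlyeq \lDD$ via Lemma~\ref{lem:best-worst-stochastic-dominance} and then applies the strong revenue monotonicity of Lemma~\ref{lem:rev-monotonicity} to the extension $M^*$. Your added remark about why the extension is needed is a fair (and accurate) elaboration, but does not change the route.
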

\begin{proof}
	Since  $\hDD\succcurlyeq \lDD$ (Lemma~\ref{lem:best-worst-stochastic-dominance}), the claim follows from Lemma~\ref{lem:rev-monotonicity}.
\end{proof}

Lemma~\ref{lem:LB for the revenue of our mechanism} shows that with only knowledge of the model distribution $\DD$, we can design a mechanism whose revenue under any possible true distribution $\hDD$ is at least $\opt(\lDD)$. Next, we upper bound the optimal revenue under $\hDD$ with the optimal revenue under $\uDD$.
\begin{lemma}\label{lem:UB for the true revenue}
	For any product distribution $\hDD$ with $\norm{\hDD_i-\DD_i}_L\leq \varepsilon$ for all $i\in[n]$, we have the following:
	\begin{equation*}
	\opt(\uDD)\geq \opt(\hDD).
\end{equation*}
\end{lemma}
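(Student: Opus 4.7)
The plan is to apply the Strong Revenue Monotonicity Lemma (Lemma~\ref{lem:rev-monotonicity}) in precisely the same spirit as Lemma~\ref{lem:LB for the revenue of our mechanism}, but with the roles reversed: instead of starting from $\lDD$ and moving up in the stochastic-dominance order to $\hDD$, I start from $\hDD$ and move up to $\uDD$.

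First, I invoke Lemma~\ref{lem:best-worst-stochastic-dominance} in its other direction, which gives $\hDD_i \preccurlyeq \uDD_i$ for every bidder $i \in [n]$, and hence $\hDD \preccurlyeq \uDD$ as product distributions. Next, let $N$ be an optimal DSIC and IR mechanism for $\hDD$ guaranteed by Lemma~\ref{lem:rev-monotonicity}, so that its extension $N'$ (in the sense of Definition~\ref{def:mechanism-extension}) is a DSIC and IR mechanism defined on all value profiles. Applying Lemma~\ref{lem:rev-monotonicity} with $\FF = \hDD$ and $\FF' = \uDD$, we get
\begin{equation*}
\rev(N', \uDD) \;\geq\; \rev(N, \hDD) \;=\; \opt(\hDD).
\end{equation*}
Since $N'$ is itself a DSIC and IR mechanism for $\uDD$, the left-hand side is at most $\opt(\uDD)$, yielding $\opt(\uDD) \geq \opt(\hDD)$, as claimed.

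There is no real obstacle in this lemma: all the technical heavy lifting was already done in constructing the ``best'' distribution $\uDD_i$ so as to sit above every $\hDD_i$ in first-order stochastic dominance (Lemma~\ref{lem:best-worst-stochastic-dominance}) and in the revenue monotonicity result (Lemma~\ref{lem:rev-monotonicity}) of~\cite{DevanurHP16}. The only subtlety to be aware of is that, formally, Lemma~\ref{lem:rev-monotonicity} guarantees the monotonicity property for \emph{some} optimal DSIC and IR mechanism of the lower distribution and its extension, which is exactly what the argument above uses; we do not need the statement to hold for \emph{every} optimal mechanism of $\hDD$. The real work of the whole section is then deferred to Step (iii), where one must bound $\opt(\uDD) - \opt(\lDD) = O(nH\varepsilon)$ via the Kolmogorov-distance robustness theorem.
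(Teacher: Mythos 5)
Your proof is correct and follows the same route as the paper: the paper's proof of this lemma is a one-liner invoking $\uDD \succcurlyeq \hDD$ (Lemma~\ref{lem:best-worst-stochastic-dominance}) together with the strong revenue monotonicity lemma (Lemma~\ref{lem:rev-monotonicity}), whose ``in particular, $\opt(\FF')\geq\opt(\FF)$'' conclusion is exactly what you spell out via the extension $N'$ being DSIC, IR, and hence feasible for $\uDD$.
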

\begin{proof}
		Since  $\uDD\succcurlyeq \hDD$ (Lemma~\ref{lem:best-worst-stochastic-dominance}), the claim follows from Lemma~\ref{lem:rev-monotonicity}.
\end{proof}

\subsection{Comparing the Revenue of the Best and Worst Distributions}\label{sec:bounding the gap between best and worst Levy}


In this section, we show that our lower bound of $M^*$'s revenue under the true distribution $\hDD$ and our upper bound of the optimal revenue under $\hDD$ are at most $O(nH\cdot \varepsilon)$ away.

\begin{lemma}\label{lem:LB and UB are close}
	\begin{equation*}
		\opt(\lDD)\geq \opt(\uDD)-O(nH\cdot \varepsilon).
	\end{equation*}
\end{lemma}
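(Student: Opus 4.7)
The plan is to reduce the Lévy-world comparison $\opt(\uDD)$ vs.\ $\opt(\lDD)$ to a Kolmogorov-world comparison, so we can invoke Theorem~\ref{thm:single-kolmogorov}. The key observation is that $\lDD$ and $\uDD$ differ by two operations: a horizontal translation of width $2\varepsilon$, and a vertical mass rearrangement of size $\varepsilon$ at the endpoints. A pure translation does not change the optimal revenue by more than a constant in the single-item setting (only one item is sold), while the remaining mass rearrangement leaves the distributions Kolmogorov-close and is therefore exactly in the regime of Theorem~\ref{thm:single-kolmogorov}. Concretely, I would introduce two auxiliary product distributions $P$ and $Q$ that strip the translations out of $\lDD$ and $\uDD$.

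Define $P_i$ to be $\lDD_i$ shifted right by $\varepsilon$, and $Q_i$ to be $\uDD_i$ shifted left by $\varepsilon$. Unwinding Definition~\ref{def:best-worst-distributions} yields the explicit CDFs
\[
F_{P_i}(x) = \min\{F_{\DD_i}(x) + \varepsilon,\,1\}, \qquad F_{Q_i}(x) = \max\{F_{\DD_i}(x) - \varepsilon,\,0\}.
\]
A short three-case argument (on whether $F_{\DD_i}(x)$ lies in $[0,\varepsilon]$, $[\varepsilon, 1-\varepsilon]$, or $[1-\varepsilon, 1]$) shows $|F_{P_i}(x) - F_{Q_i}(x)| \leq 2\varepsilon$ for every $x$, hence $\norm{P_i - Q_i}_K \leq 2\varepsilon$. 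Since all four distributions are supported inside an interval of length $O(H)$, applying Theorem~\ref{thm:single-kolmogorov} gives $|\opt(P) - \opt(Q)| \leq O(nH\varepsilon)$.

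The remaining ingredient is a shift-back argument: if the product distribution $G$ is obtained from $F$ by translating every coordinate right by $c \geq 0$, then $\opt(G) \leq \opt(F) + c$. To prove this, I would take the optimal DSIC and IR mechanism $M^G$ for $G$ and build a mechanism for $F$ that, on input bid vector $v$, runs $M^G$ on $v + c\cdot\mathbf{1}$, keeps its allocation, and discounts each winner's payment by $c$. Since the translation is a bijection, DSIC is preserved, and IR survives because payments only decrease. Crucially, the expected revenue loss equals $c\cdot\E\bigl[\sum_i x_i\bigr] \leq c$, not $nc$, because at most one item is allocated in the single-item setting. Applied to our two pairs, this gives $\opt(P) \leq \opt(\lDD) + \varepsilon$ and $\opt(\uDD) \leq \opt(Q) + \varepsilon$.

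Chaining the three bounds,
\[
\opt(\uDD) \;\leq\; \opt(Q) + \varepsilon \;\leq\; \opt(P) + O(nH\varepsilon) + \varepsilon \;\leq\; \opt(\lDD) + O(nH\varepsilon),
\]
which is exactly the claimed inequality. The most delicate point I anticipate is the shift-back construction, specifically ensuring that the discounted payments remain well defined: a payment could in principle dip below zero by at most $\varepsilon$, but flooring at zero costs at most an additional $O(\varepsilon)$ of revenue and preserves DSIC and IR, and this error is already absorbed in the $O(nH\varepsilon)$ slack. Everything else is mechanical: the explicit CDF manipulation and the case analysis are elementary, and Theorem~\ref{thm:single-kolmogorov} does the heavy lifting.
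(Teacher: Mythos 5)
Your proof is correct and is essentially the paper's own argument: the paper likewise introduces two shifted auxiliary product distributions ($\utDD_i$ and $\ltDD_i$, obtained by shifting $\DD_i$ right/left by $\varepsilon$), uses a translation coupling whose revenue loss is only the shift amount because at most one item is sold, and invokes Theorem~\ref{thm:single-kolmogorov} for the Kolmogorov-close pairs. The only differences are bookkeeping — you use two $\varepsilon$-shifts and one $2\varepsilon$-Kolmogorov comparison whereas the paper (Claim~\ref{clm:two shifted distributions have similar revenues} plus two applications of Theorem~\ref{thm:single-kolmogorov}) uses one $2\varepsilon$-shift and two $\varepsilon$-Kolmogorov comparisons — and your flooring concern is moot, since the paper simply allows the discounted payment to go slightly negative, which preserves DSIC and IR and is already absorbed by the revenue accounting.
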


It is a priori not clear why Lemma~\ref{lem:LB and UB are close} should be true, as $\uDD$ is the ``best'' distribution and $\lDD$ is the ``worst'' distribution in the $\varepsilon$-L\'evy-ball around $\DD$. We prove Lemma~\ref{lem:LB and UB are close} by introducing another two auxiliary distributions $\ltDD$ and $\utDD$. In particular, we construct $\utDD_i$ by shifting all values in $\DD_i$ to the right by $\varepsilon$, and construct $\ltDD_i$ by shifting all values in $\DD_i$ to the left by $\varepsilon$. There are two important properties of these two new distributions: (i) one can couple $\ltDD_i$ with $\utDD_i$ so that the two random variables are always exactly $2\varepsilon$ away from each other; (ii) $\ltDD_i$ and $\lDD_i$ are within Kolmogorov distance $\varepsilon$, and $\utDD_i$ and $\uDD_i$ are also within Kolmogorov distance $\varepsilon$. Property (i) allows us to prove that $\left \lvert \opt(\ltDD)-\opt(\utDD)\right \rvert \leq 2\varepsilon$ (see Claim~\ref{clm:two shifted distributions have similar revenues}). To make use of property (ii), we prove the following robustness theorem w.r.t. the Kolmogorov distance.

\begin{theorem}\label{thm:single-kolmogorov}
	For any buyer $i\in[n]$, let $\DD_i$ and $\hDD_i$ be two arbitrary distributions supported on $(-\infty,H]$ such that $\norm{\DD_i-\hDD_i}_K\leq \varepsilon$. We have the following:	\begin{equation*}
		\opt\left(\hDD\right)\geq \opt(\DD)-3nH\cdot \varepsilon.
	\end{equation*}
	 where $\DD= \bigtimes_{i=1}^n \DD_i$ and $\hDD = \bigtimes_{i=1}^n \hDD_i$.
\end{theorem}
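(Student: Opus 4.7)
The plan is to compare $\opt(\DD)$ and $\opt(\hDD)$ through an intermediate ``pointwise worst'' distribution $\DD^-$ in the $\varepsilon$-Kolmogorov-ball around $\DD$, so that the strong revenue monotonicity lemma (Lemma~\ref{lem:rev-monotonicity}) handles one direction for free. Concretely, for each bidder $i$ I would define $\DD_i^-$ to have CDF $F_{\DD_i^-}(x) = \min\{F_{\DD_i}(x)+\varepsilon,\, 1\}$, implemented by moving $\varepsilon$ of probability mass from the top of $\DD_i$'s support down to a sufficiently small value in $(-\infty,H]$. Since $\norm{\DD_i-\hDD_i}_K\le \varepsilon$ gives $F_{\hDD_i}(x)\le F_{\DD_i}(x)+\varepsilon \le F_{\DD_i^-}(x)$ for every $x$, the distribution $\hDD_i$ first-order stochastically dominates $\DD_i^-$; hence $\hDD \succcurlyeq \DD^-$, and Lemma~\ref{lem:rev-monotonicity} immediately yields $\opt(\hDD)\ge \opt(\DD^-)$. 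It therefore suffices to show $\opt(\DD^-)\ge \opt(\DD)-O(nH\varepsilon)$.

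To establish this, I would take Myerson's optimal DSIC and IR mechanism $M$ for $\DD$ and apply it to $\DD^-$. Since $M$ remains DSIC and IR as a function of bids under any distribution, $\rev(M,\DD^-)\le \opt(\DD^-)$, and it suffices to bound $|\rev(M,\DD)-\rev(M,\DD^-)|$. A standard hybrid argument replaces one marginal at a time from $\DD_k$ to $\DD_k^-$. The per-profile total payment $\pi(v)=\sum_j p_j(v)$ of any DSIC and IR single-item auction is uniformly bounded by $H$ (only the winner pays, and IR forces her payment to be at most her value), and by construction $\norm{\DD_k-\DD_k^-}_{TV}\le \varepsilon$, so each swap changes $\E[\pi(v)]$ by at most $H\varepsilon$. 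Summing over the $n$ bidders gives $|\rev(M,\DD)-\rev(M,\DD^-)|\le nH\varepsilon$, and chaining yields $\opt(\hDD)\ge \opt(\DD^-)\ge \rev(M,\DD^-)\ge \opt(\DD)-nH\varepsilon$, well within the theorem's $3nH\varepsilon$ slack.

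The main obstacle lies in the hybrid step, and more specifically in seeing why one must pass through $\DD^-$ at all. If one tried to dispense with $\DD^-$ and apply Myerson's mechanism for $\DD$ directly to $\hDD$, the hybrid would need to control the change in $\E[\pi]$ when swapping the $k$-th marginal from $\DD_k$ to $\hDD_k$; but Kolmogorov closeness does not translate into the total-variation bound the hybrid requires, and the per-profile payment $\pi$ is generally not monotone in each coordinate, so the one-dimensional Kolmogorov bound does not apply either. Introducing $\DD^-$ resolves this, because $\DD^-$ is simultaneously $\varepsilon$-TV-close to $\DD$ and pointwise stochastically dominated by every $\hDD$ in the Kolmogorov-ball. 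A minor technicality is that $\DD^-$ must be a valid probability measure on $(-\infty,H]$; placing the displaced $\varepsilon$ mass at an arbitrarily small value in the support is harmless, because a DSIC and IR mechanism cannot extract positive revenue from a bidder reporting a value below any finite reserve.
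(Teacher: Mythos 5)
Your plan is a genuinely different route from the paper's: the paper never uses stochastic dominance or Lemma~\ref{lem:rev-monotonicity} here, but instead runs a hybrid argument directly on the optimal mechanism, exploiting that in a single-item DSIC auction all payments are threshold/interval events whose probabilities move by at most $\varepsilon$ under Kolmogorov closeness. Your route, however, has a genuine gap at its first step. The idealized CDF $\min\{F_{\DD_i}(x)+\varepsilon,1\}$ is not the CDF of a probability measure on $(-\infty,H]$ (it tends to $\varepsilon$ as $x\to-\infty$), and once you implement it by parking the displaced $\varepsilon$ mass at a finite point $x_0$, you have $F_{\DD_i^-}(x)=F_{\DD_i}(x)$ for all $x<x_0$, so the chain $F_{\hDD_i}(x)\le F_{\DD_i}(x)+\varepsilon\le F_{\DD_i^-}(x)$ is simply false there. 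Kolmogorov closeness permits $\hDD_i$ to place up to $\varepsilon$ mass below any finite point: e.g.\ $\DD_i$ a point mass at $H$ and $\hDD_i$ putting mass $1-\varepsilon$ at $H$ and mass $\varepsilon$ on a tail extending to $-\infty$ has $\norm{\DD_i-\hDD_i}_K\le\varepsilon$, yet no finite choice of $x_0$ makes $\hDD_i$ first-order dominate $\DD_i^-$, so Lemma~\ref{lem:rev-monotonicity} cannot be invoked. Your ``minor technicality'' remark addresses a different concern (that a DSIC, IR mechanism earns nothing from very low bids, which is about $\rev(M,\DD^-)$), not the dominance claim, which is precisely where the argument breaks. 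Note also that the $(-\infty,H]$ generality is not idle: the paper applies Theorem~\ref{thm:single-kolmogorov} to the shifted auxiliary distributions $\lDD$ and $\ltDD$, which place mass below $0$.

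The remaining ingredients are sound: $\norm{\DD_k-\DD_k^-}_{TV}\le\varepsilon$ by the mass-moving coupling, the one-marginal-at-a-time hybrid with per-profile payments in $[0,H]$ gives $\lvert\rev(M,\DD)-\rev(M,\DD^-)\rvert\le nH\varepsilon$, and $\rev(M,\DD^-)\le\opt(\DD^-)$ after extending $M$ as in Definition~\ref{def:mechanism-extension}. So your approach does go through whenever all distributions involved are supported on $[c,H]$ for a known finite $c$ (place the displaced mass below $c$), which in fact covers the paper's actual invocations of the theorem, where supports are bounded below by $-\varepsilon$. To prove the theorem in its stated generality you would need an additional argument, e.g.\ that the optimal revenue is insensitive to how the (at most $\varepsilon$) probability mass lying below a sufficiently low threshold is distributed, or an approximate version of revenue monotonicity; as written, that step is missing, whereas the paper's interval-event argument needs no such care.
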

The proof of Theorem~\ref{thm:single-kolmogorov} is postponed to Appendix~\ref{appx:single}. Equipped with Theorem~\ref{thm:single-kolmogorov}, we can immediately show that $\left \lvert \opt(\ltDD)-\opt(\lDD)\right \rvert \leq  O(nH\cdot \varepsilon)$ and $\left \lvert \opt(\utDD)-\opt(\uDD)\right \rvert \leq  O(nH\cdot \varepsilon)$. Lemma~\ref{lem:LB and UB are close} follows quite easily from Claim~\ref{clm:two shifted distributions have similar revenues} and the two inequalities above. The complete proof of Lemma~\ref{lem:LB and UB are close} can be found in Appendix~\ref{appx:single}.

We are now ready to prove Theorem~\ref{thm:single-levy}.

\begin{prevproof}{Theorem}{thm:single-levy}
We first construct $\lDD$ based on $\DD$ and let $M^*$ be the extension of the optimal mechanism for $\lDD$. By Lemma~\ref{lem:LB for the revenue of our mechanism}, we know $\rev(M^*,\hDD)$ is at least $\opt(\lDD)$ for any $\hDD$. We also know that the optimal revenue under  $\hDD$ is at most $\opt(\uDD)$ by Lemma~\ref{lem:UB for the true revenue}, and $\opt(\uDD)\leq \opt(\lDD)+O(nH\cdot \varepsilon)$ by Lemma~\ref{lem:LB and UB are close}. Therefore, $$\rev(M^*,\hDD)\geq \opt(\uDD)-O(nH\cdot \varepsilon)\geq \opt(\hDD)-O(nH\cdot\varepsilon).$$

\end{prevproof}

A simple corollary of Theorem~\ref{thm:single-levy} is the continuity of the optimal revenue under L\'evy distance in single-item settings.
\begin{corollary}\label{cor:single continuity}
If $\DD_i$ and $\hDD_i$ are supported on $[0,H]$, and $\norm{\DD_i-\hDD_i}_L\leq \varepsilon$ for all $i\in[n]$, then $$\left \lvert\opt(\DD)-\opt(\hDD)\right\rvert\leq O\left( nH\cdot \varepsilon\right),$$ where $\DD= \bigtimes_{i=1}^n \DD_i$ and $\hDD = \bigtimes_{i=1}^n \hDD_i$.
\end{corollary}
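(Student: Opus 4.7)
My plan is to derive the continuity statement directly from the ingredients already assembled during the proof of Theorem~\ref{thm:single-levy}. The idea is to ``sandwich'' both $\opt(\DD)$ and $\opt(\hDD)$ between the optimal revenues of the ``worst'' and ``best'' distributions $\lDD$ and $\uDD$ in the $\varepsilon$-L\'evy-ball around $\DD$, and then appeal to Lemma~\ref{lem:LB and UB are close}, which already shows these two revenues differ by at most $O(nH\varepsilon)$.

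Concretely, I would first construct $\lDD$ and $\uDD$ from $\DD$ as in Definition~\ref{def:best-worst-distributions}. Since $\norm{\DD_i-\DD_i}_L=0\leq \varepsilon$ trivially, Lemma~\ref{lem:best-worst-stochastic-dominance} applied with $\hDD_i=\DD_i$ gives $\lDD \preccurlyeq \DD \preccurlyeq \uDD$, and the same lemma applied to the true distribution $\hDD$ (using the hypothesis $\norm{\DD_i-\hDD_i}_L\leq \varepsilon$) gives $\lDD \preccurlyeq \hDD \preccurlyeq \uDD$. The strong revenue monotonicity lemma (Lemma~\ref{lem:rev-monotonicity}) then yields $\opt(\lDD)\leq \opt(\DD)\leq \opt(\uDD)$ and $\opt(\lDD)\leq \opt(\hDD)\leq \opt(\uDD)$, so both quantities sit in the same interval $[\opt(\lDD),\opt(\uDD)]$. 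Combining this with Lemma~\ref{lem:LB and UB are close} immediately gives
\[
\left|\opt(\DD)-\opt(\hDD)\right| \leq \opt(\uDD)-\opt(\lDD) = O(nH\varepsilon),
\]
as claimed.

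There is essentially no new obstacle to overcome: all the heavy lifting is already inside Lemma~\ref{lem:LB and UB are close}, which relies on the auxiliary shifted distributions $\ltDD,\utDD$ and the Kolmogorov-robustness Theorem~\ref{thm:single-kolmogorov} to bound $\opt(\uDD)-\opt(\lDD)$. The only additional observation required for this corollary is the trivial fact that $\DD$ itself also lies in the $\varepsilon$-L\'evy-ball around $\DD$, so that the sandwich picture that the theorem establishes for $\hDD$ applies equally well to $\DD$, and hence controls the difference of the two optimal revenues rather than merely the revenue loss of a single robustified mechanism.
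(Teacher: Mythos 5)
Your argument is correct and is essentially the paper's own (implicit) proof: the paper states the corollary as a direct consequence of Theorem~\ref{thm:single-levy}, whose proof already sandwiches the relevant revenues between $\opt(\lDD)$ and $\opt(\uDD)$ via Lemmas~\ref{lem:best-worst-stochastic-dominance}, \ref{lem:rev-monotonicity} and \ref{lem:LB and UB are close}. Your only addition—observing that $\DD$ itself lies in its own $\varepsilon$-L\'evy ball so that $\opt(\DD)$ also falls in the interval $[\opt(\lDD),\opt(\uDD)]$—is exactly the step the paper leaves implicit, so the proposal is fine.
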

\section{Robustness for Multi-item Auctions}\label{sec:prokhorov multi}
In this section, we prove our robustness results under the total variation distance and the Prokhorov distance in multi-item settings. As discussed in Section~\ref{sec:roadmap}, the proof strategy for single-item auctions fails miserably in multi-item settings due to the lack of structure of the optimal mechanism. In particular, one of the crucial tools we relied on in single-item settings, the revenue monotonicity, no longer holds in multi-item settings~\cite{HartR12}. Nevertheless, we still manage to provide robustness guarantees in multi-item auctions. The plan is to first prove the robustness result under the total variation distance in Section~\ref{sec:tv-robustness}, then we show show to relate the Prokhorov distance with the total variation distance using randomized rounding in Section~\ref{sec:Prokhorov to TV}, and reduce the robustness under the Prokhorov distance to the robustness under the total variation distance in Section~\ref{sec:robust multi}.

\subsection{TV-Robustness for Multi-item Auctions}\label{sec:tv-robustness}
\notshow{ 

Our TV-robustness result for multi-item auctions is stated in Instead of directly proving Theorem~\ref{thm:multi-item tv-robustness}, we prove a stronger version of the statement where the mechanism $M$ is only approximately BIC.
\begin{lemma}\label{lem:transformation under TV}
	 	Given any distribution $\FF=\bigtimes_{i=1}^n \FF_i$, where each $\FF_i$ is a distribution supported on $[0,H]^m$, and a $\eta$-BIC and IR mechanism $M_1$ w.r.t. $\FF$, we construct 
	 	 a mechanism $M_2$. We show that for any distribution $\hFF=\bigtimes_{i=1}^n \hFF_i \in [0,H]^{nm}$,  if we let $\varepsilon_i=\norm{\hFF_i-\FF_i}_{TV}$ for all $i\in[n]$ and $\rho= \sum_{i\in [n]} \varepsilon_i$, $M_2$ is $2m\LL H\rho +\eta$-BIC w.r.t. $\hFF$ and IR. Moreover, $\revT(M_2, \hFF)\geq \revT(M_1, \FF)-nm\LL H\rho.$ Note that our construction of $M_2$ only depends on $\FF$ and does not require any knowledge of $\hFF$.
	 \end{lemma}
	 
	 }

\begin{theorem}[TV-Robustness for Multi-item Auctions]\label{lem:transformation under TV}
	Given any distribution $\DD=\bigtimes_{i=1}^n \DD_i$, where each $\DD_i$ is a distribution supported on $[0,H]^m$, and a $\eta$-BIC and IR mechanism $M$ w.r.t. $\DD$, we can construct 
	 	 a mechanism $\hM$ such that  for any distribution $\hDD=\bigtimes_{i=1}^n \hDD_i \in [0,H]^{nm}$,  if we let $\varepsilon_i=\norm{\hDD_i-\DD_i}_{TV}$ for all $i\in[n]$ and $\rho= \sum_{i\in [n]} \varepsilon_i$, then $\hM$ is $2m\LL H\rho +\eta$-BIC w.r.t. $\hDD$ and IR. Moreover, $\revT(\hM, \hDD)\geq \revT(M, \DD)-nm\LL H\rho.$ Note that our construction of $\hM$ only depends on $\DD$ and does not require any knowledge of $\hDD$.	
\end{theorem}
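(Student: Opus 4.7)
The plan is to let $\hM$ coincide with $M$ on every bid profile in $[0,H]^{nm}$, while additionally accepting the opt-out symbol $\perp$; per the preliminaries' convention, if any bidder reports $\perp$ then $\hM$ halts with zero allocation and zero payment for all bidders. This construction depends only on $M$ (hence only on $\DD$), as required, and IR is immediate because any bidder can unilaterally guarantee utility zero by submitting $\perp$. It is also WLOG to first preprocess $M$ so that $|p_i(b)| \leq m\LL H$ pointwise, since Lipschitzness on $[0,H]^m$ already yields $|v_i(x)| \leq m\LL H$; truncating larger payments (together with the associated allocation of that bidder) preserves $\eta$-BIC and IR of $M$ under $\DD$.

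The heart of the proof is a coupling argument. By Strassen's characterization of the TV distance, for each $i$ there is a coupling $\pi_i$ of $\DD_i$ and $\hDD_i$ under which the two samples coincide with probability at least $1 - \varepsilon_i$. Coupling independently across bidders yields a joint coupling $\pi$ of $\DD$ and $\hDD$ under which the full bid profiles satisfy $b = b'$ with probability at least $1-\rho$ by a union bound. Conditioned on $\{b=b'\}$, $\hM$ and $M$ produce identical allocations and payments, so every estimate below reduces to bounding the contribution of the rare disagreement event.

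On $\{b \neq b'\}$ the total payment under either mechanism is at most $nm\LL H$ (by the pointwise cap), so $\revT(\hM,\hDD) - \revT(M,\DD) \geq -nm\LL H\rho$, which is the claimed revenue bound. For incentives, fix bidder $i$, true type $v_i$, and alternative report $v_i'$. The coupling $\pi$ restricted to the other bidders makes $\DD_{-i}$ and $\hDD_{-i}$ agree with probability at least $1-(\rho - \varepsilon_i)$, and each interim utility $|v_i(x) - p_i|$ is bounded by $m\LL H$, so any fixed strategy's interim utility for $i$ differs by at most $m\LL H(\rho - \varepsilon_i)$ when passing between $\DD_{-i}$ and $\hDD_{-i}$. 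Chaining this with the $\eta$-BIC inequality of $M$ under $\DD$:
\begin{equation*}
u_i^{\hDD}(v_i,v_i) \geq u_i^{\DD}(v_i,v_i) - m\LL H(\rho-\varepsilon_i) \geq u_i^{\DD}(v_i,v_i') - \eta - m\LL H(\rho-\varepsilon_i) \geq u_i^{\hDD}(v_i,v_i') - \eta - 2m\LL H\rho,
\end{equation*}
so $\hM$ is $(\eta + 2m\LL H\rho)$-BIC with respect to $\hDD$. The opt-out deviation $v_i'=\perp$, whose utility is $0$, is handled analogously by invoking IR of $M$ under $\DD$ (which forces $u_i^{\DD}(v_i,v_i) \geq 0$) in place of the $\eta$-BIC step.

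The most delicate point is preserving \emph{exact} IR while only approximately preserving BIC, and this is exactly what the opt-out symbol buys us: without it, the truthful utility under $\hDD$ could be as negative as $-m\LL H\rho$. The bidder-independence assumption enters essentially in the coupling step, since it lets a union bound turn $n$ marginal agreement probabilities into a joint $1-\rho$ agreement probability; coupling joint profiles as a whole would be useless, as $\norm{\DD-\hDD}_{TV}$ can be $\Theta(1)$ even when each $\varepsilon_i$ is small. The pointwise payment cap also requires care---truncating a payment without correspondingly removing the associated allocation would inflate a bidder's utility and potentially break BIC---but the simultaneous allocation/payment modification described above is standard.
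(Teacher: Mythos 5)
There is a genuine gap, and it is exactly the issue the paper's proof spends most of its effort on: what happens when $\supp(\hDD_i)\not\subseteq\supp(\DD_i)$. Your incentive chain uses the $\eta$-BIC property of $M$ under $\DD$ for the true type $v_i$, i.e. $u_i^{\DD}(v_i,v_i)\geq u_i^{\DD}(v_i,v_i')-\eta$. But $M$ being $\eta$-BIC w.r.t.\ $\DD$ only constrains types $v_i\in\supp(\DD_i)$; for a true type $v_i\in\supp(\hDD_i)\setminus\supp(\DD_i)$ (which can exist whenever $\varepsilon_i>0$, and indeed does in the paper's later applications, where $M$ is only defined on a discrete support while $\hDD$ is another discrete distribution) there is no guarantee at all on how $M$ treats the report $v_i$ --- $M$ may not even be defined there, or may allocate nothing on such reports while selling valuable bundles cheaply on in-support reports. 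Such a type can then gain up to $m\LL H$, not $O(\rho)$, by misreporting an in-support type, and since $\varepsilon$-BIC is a per-type condition you cannot discount this by the (small) probability of such types. The paper handles this by composing $M$ with the best-response map $\tau_i$ sending each out-of-support type to $\argmax_{z\in\supp(\DD_i)\cup\{\perp\}}\E_{b_{-i}\sim\DD_{-i}}[u_i(v_i,M(z,b_{-i}))]$, so that truthful reporting in $\hM$ automatically earns (nearly) the best achievable interim utility; your proposal of running $M$ unchanged is missing this idea, and the coupling/union-bound part you do carry out (which matches the paper's argument) only covers the easy case of identical supports.

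A second, related problem is your IR claim. Ex-post IR as defined in the paper requires the \emph{truthful} utility to be non-negative; the availability of an opt-out report $\perp$ is a safe deviation, not IR, and your own remark that ``without it the truthful utility could be as negative as $-m\LL H\rho$'' shows the construction is not IR in the required sense for out-of-support types. The paper restores exact ex-post IR after applying $\tau_i$ by rescaling the payment of a remapped bidder by the factor $\E_{b_{-i}\sim\DD_{-i}}[p_i(\tau_i(v_i),b_{-i})]/\E_{b_{-i}\sim\DD_{-i}}[v_i(x(\tau_i(v_i),b_{-i}))]\le 1$, which caps the payment by the realized value pointwise while preserving the interim expected payment (so the BIC chain and the revenue bound still go through). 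Some mechanism of this kind is needed in your write-up; truncating payments as you suggest, or appealing to $\perp$, does not deliver both exact IR under truthful bidding and the $(2m\LL H\rho+\eta)$-BIC guarantee.
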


We briefly describe the ideas behind the proof. If $\hDD$ and $\DD$ share the same support, it is not hard to see that $M$ is already $(2m\LL H\rho +\eta)$-BIC w.r.t. $\hDD$. The reason is that for any bidder $i$ and any type $v_i$, her expected utility under any report can change by at most $m\LL H\rho$ when the other bidders' bids are drawn from $\hDD_{-i}$ rather than $\DD_{-i}$, as $\norm{\hDD_j-\DD_j}_{TV}=\varepsilon_j$ for all $j\in [n]$. The bulk of the proof is dedicated to the case, where $\hDD$ and $\DD$ have different supports. We construct mechanism $\hM$, which first takes each bidder $i$'s report and maps it to the ``best'' possible report from $\supp(\DD_i)$, then runs essentially $M$ on the transformed reports. We show that $\hM$ is $2m\LL H\rho +\eta$-BIC w.r.t. $\hDD$ and generates at most $nm\LL H\rho$ less revenue. The proof of Theorem~\ref{lem:transformation under TV} is postponed to Appendix~\ref{sec:TV robustness}.

\subsection{Connecting the Prokhorov Distance with the Total Variation Distance}\label{sec:Prokhorov to TV}

In this section, we provide a randomized rounding scheme that relates the Prokhorov distance to the total variation distance.
We first state a characterization of the Prokhorov distance due to Strassen~\cite{Strassen65} that is useful for our analysis.

\begin{theorem}[Characterization of the Prokhorov Metric \cite{Strassen65}]\label{thm:prokhorov characterization}
	Let $\FF$ and $\hFF$ be two distributions supported on $\mathbb{R}^k$. $\norm{\FF-\hFF}_P\leq \varepsilon$ if and only if there exists a coupling $\gamma$ of $\FF$ and $\hFF$, such that $\Pr_{(x,y)\sim \gamma}\left[d(x,y)>\varepsilon \right]\leq \varepsilon,$
	where $d(\cdot,\cdot)$ is the $\ell_1$ distance.
\end{theorem}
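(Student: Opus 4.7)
The theorem has two directions. The forward direction (coupling implies the Prokhorov bound) is a one-line calculation: given a coupling $\gamma$ with $\Pr_{(x,y)\sim\gamma}[d(x,y)>\varepsilon]\leq \varepsilon$, for any Borel set $A$ one has $P(A) = \Pr_\gamma[x\in A] \leq \Pr_\gamma[x\in A,\ d(x,y)\leq \varepsilon] + \varepsilon \leq \Pr_\gamma[y\in A^\varepsilon] + \varepsilon = Q(A^\varepsilon) + \varepsilon$, and symmetrically $Q(A) \leq P(A^\varepsilon)+\varepsilon$, so $\norm{P-Q}_P \leq \varepsilon$.

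The non-trivial direction is to build a coupling from the Prokhorov inequality. The plan is to recast the question as a transportation / max-flow problem: among all couplings $\gamma$ of $P$ and $Q$, \emph{maximize} the mass $\gamma(C)$ that $\gamma$ places on the near-diagonal set $C = \{(x,y) : d(x,y)\leq \varepsilon\}$, and show that this maximum is at least $1-\varepsilon$. Any maximizing coupling then furnishes the required witness.

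I would first handle the case where $P$ and $Q$ are finitely supported. Build a bipartite graph whose left vertices index the atoms of $P$ with capacities equal to the atomic masses, right vertices index the atoms of $Q$ similarly, and with an edge between $x$ and $y$ exactly when $d(x,y)\leq\varepsilon$. Constructing a coupling concentrated on $C$ corresponds to fractionally matching left-mass to right-mass along edges, and by max-flow / min-cut (equivalently, a weighted Hall / K\"onig-type statement) the largest matchable mass equals $1-\sup_A\bigl(P(A)-Q(A^\varepsilon)\bigr)$, where $A$ ranges over subsets of $\supp(P)$. The Prokhorov hypothesis bounds this supremum by $\varepsilon$, so at least a $1-\varepsilon$ fraction of the mass can be matched inside $C$; the leftover mass (of total size at most $\varepsilon$) can be coupled arbitrarily to complete~$\gamma$.

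To move to general Borel measures on $\R^k$, I would discretize by partitioning $\R^k$ into cubes of side $\delta\ll\varepsilon$ and replacing each measure by an atomic approximation supported at cube centers. The Prokhorov condition is preserved up to an $O(\delta)$ slack, so the finite case applies at threshold $\varepsilon+O(\delta)$ and produces couplings $\gamma_\delta$. Tightness of the marginals combined with Prokhorov's compactness theorem for probability measures then extracts a weakly convergent subsequence $\gamma_\delta\to\gamma$, whose marginals remain $P$ and $Q$ by weak continuity. The main obstacle is this last limiting step: one must carefully reconcile the closed condition $\{d\leq\varepsilon\}$ with the open enlargement $A^\varepsilon$ from the definition of Prokhorov distance, typically by first establishing the coupling result at an inflated threshold $\varepsilon+\delta$, invoking upper semi-continuity of the closed set $\{d\leq \varepsilon+\delta\}$ under weak convergence to get $\gamma(\{d\leq \varepsilon+\delta\})\geq 1-\varepsilon-O(\delta)$, and then taking a second limit $\delta\downarrow 0$ (with monotone convergence of the nested closed sets) to recover $\gamma(C)\geq 1-\varepsilon$.
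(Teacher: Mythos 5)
There is no in-paper proof to compare against here: the paper imports this statement as a black box from Strassen~\cite{Strassen65}. Your proposal is, in essence, the standard proof of (this instance of) Strassen's theorem, and the outline is sound: the forward direction is the one-line computation you give; the converse via maximizing the coupled mass on the near-diagonal, handling finitely supported measures by max-flow/min-cut (equivalently, LP duality or the deficiency form of Hall's theorem), and then passing to general measures on $\R^k$ by cube discretization, tightness plus Prokhorov compactness, and the Portmanteau inequality for closed sets followed by shrinking the inflated threshold, is exactly how this equivalence is usually established. The one point deserving more care than you give it is the open-versus-closed mismatch, which already appears in the finite step and not only in the final limit: the paper's $A^{\varepsilon}$ is defined with the strict inequality $d(x,y)<\varepsilon$, whereas your bipartite edges are the closed set $\{d\le\varepsilon\}$, so the min-cut bound you need is $P(A)\le Q\left(\bar{A}^{\varepsilon}\right)+\varepsilon$ for the closed enlargement $\bar{A}^{\varepsilon}=\{y:\exists x\in A,\ d(x,y)\le\varepsilon\}$. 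This follows from the infimum in the definition of the Prokhorov distance: $\norm{P-Q}_P\le\varepsilon$ gives $P(A)\le Q(A^{\varepsilon'})+\varepsilon'$ for every $\varepsilon'>\varepsilon$, and letting $\varepsilon'\downarrow\varepsilon$, using $\bigcap_{\varepsilon'>\varepsilon}A^{\varepsilon'}=\bar{A}^{\varepsilon}$ and continuity from above, yields the closed-enlargement inequality; a symmetric remark (verify the defining inequalities at every $\varepsilon'>\varepsilon$ rather than at $\varepsilon$ itself) repairs the boundary case $d(x,y)=\varepsilon$ in your forward direction. With those adjustments, which you partially anticipate in your discussion of the limiting step, your argument goes through.
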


Theorem~\ref{thm:prokhorov characterization} states that $\FF$ and $\hFF$ are within Prokhorov distance $\varepsilon$ of each other if and only if there exists a coupling between the two distributions such that the two random variables are within $\varepsilon$ of each other with probability at least $1-\varepsilon$. Next, we show that if $\FF$ and $\hFF$ are close to each other in Prokhorov distance, then one can use a randomized rounding scheme to round both $\FF$ and $\hFF$ to discrete distributions so that the two rounded distributions are close in total variation distance with high probability.

First, let us fix some notations.

\begin{definition}[Rounded Distribution]\label{def:rounded dist}
	Let $\FF$ be a distribution supported on $\mathbb{R}_{\geq 0}^k$. For any $\delta > 0$ and $\ell\in [0,\delta]^k$, we define function $\rfun: \mathbb{R}_{\geq 0}^k\mapsto \mathbb{R}^k$  as follows:
	$	\rfun_i(x) = \max \left\{ \left\lfloor \frac{x_i-\ell_i}{\delta} \right\rfloor \cdot \delta+\ell_i , 0\right\}$
	for all $i\in [k]$. Let $X$ be a random variable sampled from distribution $\FF$. We define $\round{\FF}_{\ell,\delta}$ as the distribution for the random variable $\rfun(X)$, and we call $\round{\FF}_{\ell,\delta}$ as the rounded distribution of $\FF$.
\end{definition}

\begin{lemma}\label{lem:prokhorov to TV}
	Let $\FF$ and $\hFF$ be two distributions supported on  $\mathbb{R}^k$, and $\norm{\FF-\hFF}_P\leq \varepsilon$. For any $\delta>0$, sample $\ell$ from the uniform distribution over $[0,\delta]^k$, $\E_{\ell\sim U[0,\delta]^k}\left[\norm{\round{\FF}_{\ell,\delta}-\round{\hFF}_{\ell,\delta}}_{TV}\right]\leq \left(1+\frac{1}{\delta}\right)\varepsilon.$
	\notshow{then for any constant $\CC\geq 1$ \begin{equation*}
		\Pr_{\ell\sim U[0,\delta]^k}\left[\norm{\round{\FF}_{\ell,\delta}-\round{\hFF}_{\ell,\delta}}_{TV}\leq \CC\left(1+\frac{1}{\delta}\right)\varepsilon \right]\geq 1- 1/\CC.
	\end{equation*}}
\end{lemma}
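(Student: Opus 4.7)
The plan is to invoke Strassen's characterization (Theorem~\ref{thm:prokhorov characterization}) to obtain a coupling $\gamma$ of $\FF$ and $\hFF$ such that $\Pr_{(x,y)\sim\gamma}[\|x-y\|_1 > \varepsilon] \leq \varepsilon$. For any fixed shift $\ell \in [0,\delta]^k$, applying the (deterministic) map $\rfun(\cdot)$ to both coordinates of $\gamma$ yields a coupling of $\round{\FF}_{\ell,\delta}$ and $\round{\hFF}_{\ell,\delta}$. Since total variation distance is upper bounded by the probability of disagreement under any coupling, we have
\begin{equation*}
\norm{\round{\FF}_{\ell,\delta}-\round{\hFF}_{\ell,\delta}}_{TV} \;\leq\; \Pr_{(x,y)\sim\gamma}\!\left[\rfun(x)\neq \rfun(y)\right].
\end{equation*}
Taking expectation over $\ell \sim U[0,\delta]^k$ and swapping the two expectations (by Fubini), it suffices to bound $\E_{\ell}\left[\mathds{1}[\rfun(x)\neq \rfun(y)]\right]$ pointwise for each pair $(x,y)$ in the support of $\gamma$.

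Next, I would split the analysis according to whether $\|x-y\|_1 \leq \varepsilon$ or not. For pairs with $\|x-y\|_1 > \varepsilon$ (which have total $\gamma$-mass at most $\varepsilon$) I would crudely bound the disagreement probability by $1$. For pairs with $\|x-y\|_1 \leq \varepsilon$, the key coordinatewise observation is that $\rfun_i(x_i)\neq \rfun_i(y_i)$ only if the random shift $\ell_i$ causes $x_i$ and $y_i$ to land in different cells of the one-dimensional grid of width $\delta$; since $\ell_i$ is uniform on $[0,\delta]$, this happens with probability exactly $\min\{|x_i-y_i|/\delta,\,1\}$ (the boundary case involving the $\max\{\cdot,0\}$ truncation at $0$ only helps us, since it can only merge cells). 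Applying a union bound over the $k$ coordinates,
\begin{equation*}
\Pr_{\ell}\!\left[\rfun(x)\neq \rfun(y)\right] \;\leq\; \sum_{i=1}^{k}\frac{|x_i-y_i|}{\delta} \;=\; \frac{\|x-y\|_1}{\delta} \;\leq\; \frac{\varepsilon}{\delta}.
\end{equation*}

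Combining the two cases yields
\begin{equation*}
\E_{\ell\sim U[0,\delta]^k}\!\left[\norm{\round{\FF}_{\ell,\delta}-\round{\hFF}_{\ell,\delta}}_{TV}\right] \;\leq\; \varepsilon\cdot 1 \;+\; (1-\varepsilon)\cdot \frac{\varepsilon}{\delta} \;\leq\; \left(1+\frac{1}{\delta}\right)\varepsilon,
\end{equation*}
which is exactly the desired bound. I do not expect any serious obstacle here: the only mildly delicate point is justifying the coordinatewise shift probability $|x_i-y_i|/\delta$ in the presence of the $\max\{\cdot,0\}$ truncation in the definition of $\rfun_i$, but since truncation only identifies grid cells rather than separating them, it can never increase the disagreement probability, so the argument goes through unchanged.
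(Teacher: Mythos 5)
Your proposal is correct and follows essentially the same route as the paper's proof: Strassen's coupling from Theorem~\ref{thm:prokhorov characterization}, the observation that for each fixed $\ell$ the pushforward coupling bounds $\norm{\round{\FF}_{\ell,\delta}-\round{\hFF}_{\ell,\delta}}_{TV}$ by the disagreement probability, a split into the far pairs (mass at most $\varepsilon$) and close pairs, and a per-coordinate union bound giving $|x_i-y_i|/\delta$. Your remark that the $\max\{\cdot,0\}$ truncation can only merge cells and hence only decrease the disagreement probability is a fine (and correct) way to dispose of the boundary case.
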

	
	We only sketch the idea and postpone the formal proof to Appendix~\ref{appx:randomized rounding}. Let $x$ be a random variable sampled from $\FF$ and $y$ be a random variable sampled from $\hFF$. Since $\FF$ and $\hFF$ are close in Prokhorov distance, we can couple $x$ and $y$ according to Theorem~\ref{thm:prokhorov characterization} such that they are within $\varepsilon$ of each other with probability at least $1-\varepsilon$. The rounding scheme chooses a random origin $\ell$ from $[0,\delta]^k$ and rounds $\FF$ and $\hFF$ to the corresponding random grid with width $\delta$. More specifically, we round $\FF$ and $\hFF$ to $\round{\FF}_{\ell,\delta}$ and $\round{\hFF}_{\ell,\delta}$ respectively. For simplicity, consider $\delta=\Theta(\sqrt{\varepsilon})$. The key observation is that when $x$ and $y$ are within $\ell_1$-distance $\varepsilon$ of each other, they lie in the same grid with probability at least $1-O(\sqrt{\varepsilon})$ over the randomness of $\ell$. If $x$ and $y$ are in the same grid, they will be rounded to the same point. In other words, the coupling between $x$ and $y$ induces a coupling between $\round{\FF}_{\ell,\delta}$ and $\round{\hFF}_{\ell,\delta}$ such that, in expectation over the choice of $\ell$, the event that the corresponding two rounded random variables have different values happens with probability at most $\varepsilon+(1-\varepsilon)\cdot O(\sqrt{\varepsilon}) = O(\sqrt{\varepsilon})$. By the definition of total variation distance, this implies that the expected total variation distance between $\round{\FF}_{\ell,\delta}$ and $\round{\hFF}_{\ell,\delta}$ is also at most $ O(\sqrt{\varepsilon})$. A similar argument applies to  other choices of $\delta$.
	
	 \subsection{Prokhorov-Robustness for Multi-item Auctions}\label{sec:robust multi}
	 
	 In this section, we show that even in multi-item settings, if every bidder's approximate type distribution $\DD_i$ is within Prokhorov distance $\varepsilon$ of her true type distribution $\hDD_i$, given any BIC and IR mechanism $M$ for $\DD=\bigtimes_{i=1}^n \DD_i$, we can construct a mechanism $\hM$ that is {$O(\poly(n,m,\LL,H,\varepsilon))$}-BIC w.r.t. $\hDD=\bigtimes_{i=1}^n \hDD_i$, IR, and its revenue under truthful bidding $\revT(\hM,\hDD)$ is at most $O(\poly(n,m,\LL,H,\varepsilon))$ worse than $\rev(M,\DD)$. 
	 
 \begin{theorem}\label{thm:multi-item p-robustness}
	 	Suppose we are given $\DD=\bigtimes_{i=1}^n \DD_i$, where $\DD_i $ is an $m$-dimensional distribution for each $i\in[n]$, and a BIC and IR mechanism $M$ w.r.t. $\DD$. Suppose $\hDD=\bigtimes_{i=1}^n \hDD_i$ is the true but unknown type distribution such that $\norm{\DD_i-\hDD_i}_P\leq \varepsilon$ for all $i\in[n]$. We can construct a randomized mechanism $\hM$, oblivious to the true distribution $\hDD$, such that for any $\hDD$ the followings hold: 
	 	\begin{enumerate}
	 		\item $\hM$ is $\kappa$-BIC w.r.t. $\hDD$ and IR, where  $\kappa=O\left( nm \LL H \varepsilon+   m \LL \sqrt{nH\varepsilon}\right)$;
	 		\item  the expected revenue of $\hM$ under truthful bidding is $\revT\left(\hM,\hDD\right)\geq \rev(M,\DD)-O\left(n\kappa\right).$
	 	\end{enumerate}
	 	
	\notshow{ 	If $\norm{\DD_i-\hDD_i}_{TV}\leq \varepsilon$ for all $i\in[n]$, our algorithm becomes deterministic and provides stronger guarantees. More specifically, \begin{enumerate}
	 		\item $\hM$ is a $O(n m \LL H \varepsilon)$-BIC and IR mechanism for $\hDD$;
	 		\item  the expected revenue of $\hM$ under truthful bidding is $$\revT(\hM,\hDD)\geq \rev(M,\DD)-O\left(n^2 m \LL H \varepsilon\right).$$
	 	\end{enumerate}}
	 	 \end{theorem}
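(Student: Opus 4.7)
The plan is to reduce Prokhorov closeness to total-variation closeness via randomized grid rounding, and then invoke the TV-robustness theorem (Theorem~\ref{lem:transformation under TV}). Concretely, the randomized mechanism $\hM$ first samples a grid offset $\ell$ uniformly from $[0,\delta]^m$ for a width $\delta$ to be chosen; it then rounds every reported bid coordinatewise via the operator $\rfun$ of Definition~\ref{def:rounded dist} and applies a grid-specific mechanism $\hM_G^\ell$ to the rounded bids. To build $\hM_G^\ell$, I would first transfer the given BIC+IR mechanism $M$ for $\DD$ into an approximately BIC+IR mechanism $M_G^\ell$ for the rounded model $\DD_G^\ell := \bigtimes_i \round{\DD_i}_{\ell,\delta}$, and then apply Theorem~\ref{lem:transformation under TV} on $M_G^\ell$ with model $\DD_G^\ell$ and the (unknown) true distribution $\hDD_G^\ell := \bigtimes_i \round{\hDD_i}_{\ell,\delta}$, which yields a mechanism $\hM_G^\ell$ robust against any TV-perturbation of $\DD_G^\ell$.

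The transfer step from $M$ to $M_G^\ell$ uses a stochastic lift: on rounded input $b' \in \supp(\DD_G^\ell)$, sample a lift $b \sim \DD \mid \rfun(b) = b'$ (which factors across bidders by independence of $\DD$) and run $M(b)$. Under truthful bidding from $\DD_G^\ell$, the law of total probability implies the effective input to $M$ is distributed exactly as $\DD$, so $\rev(M_G^\ell, \DD_G^\ell) = \rev(M,\DD)$. Approximate BIC follows from the Lipschitz property (Definition~\ref{def:Lipschitz}): for any $v_i' \in \supp(\DD_{G,i}^\ell)$, any lift $v_i \sim \DD_i\mid v_i'$ satisfies $\norm{v_i - v_i'}_1 \le m\delta$, hence $|v_i(x) - v_i'(x)| \le \LL m\delta$ for every allocation $x$; chaining this Lipschitz slack with BIC of $M$ on both the truthful and the (randomly-lifted) deviating side produces $O(\LL m\delta)$-BIC for $M_G^\ell$. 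A per-bidder rebate of $O(\LL m\delta)$ turns approximate IR into exact IR at the cost of $O(n\LL m\delta)$ revenue. Theorem~\ref{lem:transformation under TV} then yields $\hM_G^\ell$ which is $(2m\LL H\rho^\ell + O(\LL m\delta))$-BIC w.r.t.~$\hDD_G^\ell$ and IR, with $\revT(\hM_G^\ell, \hDD_G^\ell) \ge \rev(M,\DD) - O(n\LL m\delta) - nm\LL H\rho^\ell$, where $\rho^\ell := \sum_i \norm{\DD_{G,i}^\ell - \hDD_{G,i}^\ell}_{TV}$.

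Averaging over $\ell$, which is internal randomness of $\hM$ independent of the reports, Lemma~\ref{lem:prokhorov to TV} gives $\E_\ell[\rho^\ell] \le n(1+1/\delta)\varepsilon$. Therefore $\hM$ is $\kappa$-BIC w.r.t.~$\hDD$ and IR with $\kappa = O(nm\LL H\varepsilon + nm\LL H\varepsilon/\delta + \LL m\delta)$, and $\revT(\hM,\hDD) \ge \rev(M,\DD) - O(n\LL m\delta) - n^2 m\LL H(1+1/\delta)\varepsilon$; an additional $O(\LL m\delta)$ correction (with another small rebate for exact IR) absorbs the discrepancy between a bidder's true value $v_i$ and the rounded $v_i' = \rfun(v_i)$ that $\hM_G^\ell$ sees. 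Balancing $nm\LL H\varepsilon/\delta$ against $\LL m\delta$ by setting $\delta := \sqrt{nH\varepsilon}$ gives $\kappa = O(nm\LL H\varepsilon + m\LL\sqrt{nH\varepsilon})$ and revenue loss $O(n\kappa)$, exactly as stated. The main obstacle is the lifting step: because $\DD$ and $\DD_G^\ell$ typically have disjoint supports (continuous vs.\ discrete), $M$ is not directly a BIC mechanism for $\DD_G^\ell$, so the stochastic lift paired with Lipschitz bookkeeping is what lets us transfer BIC+IR across the rounding operator before Theorem~\ref{lem:transformation under TV} can be applied.
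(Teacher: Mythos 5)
Your proposal is correct and follows essentially the same route as the paper's proof: sample a random grid offset, lift $M$ to the rounded model distribution by resampling from the conditional cube with a small payment rebate (the paper's Lemma~\ref{lem:original to rounded}), apply the TV-robustness transformation of Theorem~\ref{lem:transformation under TV} on the rounded distributions, round incoming bids from $\hDD$ with another Lipschitz correction (Lemma~\ref{lem:discrete to continuous}), and average the TV distances over the grid via Lemma~\ref{lem:prokhorov to TV} with $\delta=\sqrt{nH\varepsilon}$. The bookkeeping of the BIC slack and revenue loss matches the paper's bounds, so no gap to report.
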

	 
	\notshow { \begin{theorem}\label{thm:multi-item p-robustness}
	 	Given $\DD=\bigtimes_{i=1}^n \DD_i$, where $\DD_i $ is a $m$-dimensional distribution for all $i\in[n]$, and a BIC mechanism $M$ w.r.t. $\DD$. Let $\alpha$ be an arbitrary real number in $(0,1)$. We have randomized algorithm that constructs a mechanism $\hM$ \yangnote{in polynomial time given access to the oracle that provides the best bid} such that for any possible true type distribution $\hDD=\bigtimes_{i=1}^n \hDD_i$ satisfying $\norm{\DD_i-\hDD_i}_P\leq \varepsilon$ for all $i\in[n]$, with probability at least $1-\alpha$ over the randomness in the construction of $\hM$, $\hM$ is a $\frac{\kappa}{\alpha}$-BIC and IR mechanism for $\hDD$, where  $\kappa=O\left( n^2 m \LL H \varepsilon+  n m \LL \sqrt{H\varepsilon}\right)$. Moreover, the expected revenue of $\hM$ under truthful bidding is {$$\revT(\hM,\hDD)\geq \rev(M,\DD)-O\left(\frac{n\kappa}{\alpha}\right).$$} 
	 \end{theorem}}
	 
	 We postpone the formal proof of Theorem~\ref{thm:multi-item p-robustness} to Appendix~\ref{appx:proof_multi}.
We provide a complete sketch here. Our construction consist of the following five steps. 
	 \begin{itemize}
	
	 	\item \textbf{Step (1):} After receiving the bid profile, first sample $\ell$ from $U[0,\delta]^m$. 
	 	 For every realization of $\ell$, we construct a mechanism $\hMl$ and execute $\hMl$ on the reported bids. In the next several steps, we show how to construct $\hMl$ via two intermediate mechanisms $\Mli$ and $\Mlii$ for every realization of $\ell$ based on $M$. Since $\ell$ is a random variable, $\hM$ is a randomized mechanism.
	 	 
	 	\item \textbf{Step (2):} Round $\DD_i$ to $\round{\DD_i}_{\ell,\delta}$ for every bidder $i$. We construct mechanism $\Mli$ based on $M$ and show that $\Mli$ is $O(m\L \delta)$-BIC w.r.t. $\bigtimes_{i=1}^n \round{\DD_i}_{\ell,\delta}$ and IR. Moreover, $$\revT\left(\Mli, \bigtimes_{i=1}^n \round{\DD_i}_{\ell,\delta}\right)\geq \rev(M,\DD)-O(nm\L \delta).$$ 
	 	 Here is the idea behind the construction: for any bidder $i$ and  type $w_i$ drawn from $\round{\DD_i}_{\ell,\delta}$, we resample a type from $\DD_i \mid \bigtimes_{j=1}^m [w_{ij},w_{ij}+\delta)$, which is the distribution induced by $\DD_i$ conditioned on being in the cube $\bigtimes_{j=1}^m [w_{ij},w_{ij}+\delta)$. We use the allocation rule of $M$ and a slightly modified payment rule on the resampled type profile. This guarantees that the new mechanism is $O(m\L \delta)$-BIC w.r.t. $\bigtimes_{i=1}^n \round{\DD_i}_{\ell,\delta}$ and IR. The formal statement and analysis are shown in Lemma~\ref{lem:original to rounded}.
	 	
	 	\item \textbf{Step (3):} We use $\el_i$ to denote $\norm{\round{\DD_i}_{\ell,\delta}-\round{\hDD_i}_{\ell,\delta}}_{TV}$ for our sample $\ell$ and every $i\in [n]$, and $\rho^{(\ell)}$ to denote $\sum_{i\in [n]}\el_i$. We transform $\Mli$ into a new mechanism $\Mlii$ using Theorem~\ref{lem:transformation under TV}. In particular, $\Mlii$ is $O\left(m\L\delta+m\L H\cdot \rho^{(\ell)}\right)$-BIC w.r.t.  	 $\bigtimes_{i=1}^n \round{\hDD_i}_{\ell,\delta}$ and IR. Importantly, the construction of $\Mlii$ is oblivious to $\bigtimes_{i=1}^n \round{\hDD_i}_{\ell,\delta}$ and $\left\{\el_i\right\}_{i\in[n]}$. Moreover, $\revT\left(\Mlii, \bigtimes_{i=1}^n \round{\hDD_i}_{\ell,\delta}\right)\geq \revT\left(\Mli, \bigtimes_{i=1}^n \round{\DD_i}_{\ell,\delta}\right)-O\left(nm\L H\cdot \rho^{(\ell)}\right).$
	 		 	
	 	\item \textbf{Step (4):} We convert $\Mlii$ to $\hMl$ so that it is $O\left(m\L\delta+m\L H\cdot\rho^{(\ell)}\right)$-BIC w.r.t. $\hDD$, IR and $$\revT(\hMl, \hDD)\geq \revT\left(\Mlii, \bigtimes_{i=1}^n \round{\hDD_i}_{\ell,\delta}\right)-nm\L\delta.$$ Here is the idea behind the construction of $\hMl$: for every bidder $i$ and her type $w_i$ drawn from $\hDD_i$, round it to $\rfun_i(w_i)$ (see Definition~\ref{def:rounded dist}). We use the allocation rule of $\Mlii$ and a slightly modified payment rule on the rounded type profile. This guarantees that the new mechanism is $O\left(m\L\delta+m\L H\cdot \rho^{(\ell)}\right)$-BIC w.r.t. $\hDD$ and IR. Note that our construction only requires knowledge of $\Mlii$, $\ell$, and $\delta$, and is completely oblivious to $\hDD$ and $\bigtimes_{i=1}^n \round{\hDD_i}_{\ell,\delta}$. The formal statement and analysis are shown in Lemma~\ref{lem:discrete to continuous}.
	 	\item \textbf{Step (5):} Since for every realization of $\ell$, $\hMl$ is $O\left(m\L\delta+m\L H\cdot \rho^{(\ell)}\right)$-BIC w.r.t. $\hDD$ and IR, $\hM$ must be  $O\left(m\L\delta+m\L H\cdot \E_{\ell\sim U[0,\delta]^m}\left[\rho^{(\ell)}\right]\right)$-BIC w.r.t. $\hDD$ and IR. According to Lemma~\ref{lem:prokhorov to TV}, $\E_{\ell\sim U[0,\delta]^m}\left[\rho^{(\ell)}\right]=\sum_{i\in[n]} \E_{\ell\sim U[0,\delta]^m}\left[\el_i\right]=n\cdot \left(1+{1\over\delta}\right)\varepsilon.$ Therefore, $\hM$ is \\$O\left(m\L\delta+nm\L H\left(1+{1\over\delta}\right)\epsilon\right)$-BIC w.r.t. $\hDD$ and IR.  Moreover,
	 	$$\revT\left(\hM, \hDD\right)\geq \rev(M,\DD)-O\left(nm\LL\delta+ n^2m\L H\left(1+{1\over\delta}\right)\varepsilon\right).$$
	 		 \end{itemize} 	
	 
	 \begin{lemma}\label{lem:original to rounded}
	Given any $\delta>0$, $\ell\in [0,\delta]^m$, and a BIC and IR mechanism $M$ w.r.t. $\DD$, we can construct 
	 a $\xi_1$ = $O(m\LL  \delta)$-BIC w.r.t. $\lDD=\bigtimes_{i=1}^n \round{\DD_i}_{\ell,\delta}$~and~IR~mechanism $\Mli$,~such~that~$$\revT\left(\Mli, \lDD\right)\geq \rev(M,\DD)-nm\LL\delta.$$
\end{lemma}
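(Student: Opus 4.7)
The plan is to construct $\Mli$ via a resampling trick that turns rounded bids into bids in $\supp(\DD)$, on which $M$ can be invoked directly. Concretely, given a reported profile $w=(w_1,\ldots,w_n)$ with each $w_i\in\supp(\round{\DD_i}_{\ell,\delta})$, the mechanism $\Mli$ independently draws $v_i\sim \DD_i\mid \bigtimes_{j=1}^m [w_{ij},w_{ij}+\delta)$ for every bidder $i$, allocates according to $x(v)$ as in $M$, and charges bidder $i$ the payment $p_i(v)-m\LL\delta$. The uniform rebate $m\LL\delta$ is calibrated to absorb the Lipschitz slack between the reported type $w_i$ and the resampled type $v_i$, since $v_i$ lies in the $m$-cube of side $\delta$ anchored at $w_i$, so $\|w_i-v_i\|_1\le m\delta$ and therefore $|w_i(x)-v_i(x)|\le m\LL\delta$ for every allocation $x$.

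The revenue accounting rests on a clean coupling identity: if $w$ is drawn from $\lDD$ and $v_i$ is then resampled from $\DD_i\mid \text{cube}(w_i)$, the marginal of $v$ is exactly $\DD$, because the two-step process agrees in law with first sampling $v\sim\DD$ and then rounding. Consequently, under truthful bidding in $\Mli$, the interim payments of the underlying $M$ sum in expectation to $\rev(M,\DD)$, and the $n$ rebates subtract exactly $nm\LL\delta$, giving the claimed revenue bound. Ex-post IR follows immediately from ex-post IR of $M$: on any realized $(w,v)$, $v_i(x(v))-p_i(v)\ge 0$, and the Lipschitz bound $|w_i(x(v))-v_i(x(v))|\le m\LL\delta$ together with the rebate implies $w_i(x(v))-(p_i(v)-m\LL\delta)\ge 0$.

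The main obstacle is establishing $O(m\LL\delta)$-BIC, which requires converting BIC of $M$ w.r.t.\ $\DD$ into near-BIC of $\Mli$ w.r.t.\ $\lDD$. Fix $w_i,w_i'\in\supp(\round{\DD_i}_{\ell,\delta})$. Since BIC of $M$ holds against every fixed alternative report $v_i'$, it also holds against the randomized deviation ``report a fresh sample from $\DD_i\mid \text{cube}(w_i')$''. Integrating the resulting inequality over $v_i\sim \DD_i\mid \text{cube}(w_i)$, with $v_i,v_i',v_{-i}$ drawn independently, yields $\E[v_i(x(v_i,v_{-i}))-p_i(v_i,v_{-i})]\ge \E[v_i(x(v_i',v_{-i}))-p_i(v_i',v_{-i})]$. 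On the left-hand side, Lipschitzness converts $v_i(x(v_i,v_{-i}))$ into a lower bound on $w_i(x(v_i,v_{-i}))$ at a cost of $m\LL\delta$; on the right-hand side, it converts $v_i(x(v_i',v_{-i}))$ into an upper bound on $w_i(x(v_i',v_{-i}))$ at the same cost. Combining these (the rebates cancel between truthful and misreport utilities) gives $u_i^{\Mli}(w_i,w_i)\ge u_i^{\Mli}(w_i,w_i')-2m\LL\delta$, i.e., $\xi_1=O(m\LL\delta)$-BIC. The subtlety worth flagging is that BIC of $M$ is only a priori applicable to types in $\supp(\DD_i)$; the randomized-deviation trick sidesteps this by pushing the whole argument through interim expectations over types in $\supp(\DD_i)$, so the rounded types $w_i,w_i'$ never directly appear as reports to $M$.
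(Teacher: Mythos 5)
Your construction and analysis coincide with the paper's proof of Lemma~\ref{lem:original to rounded}: resample each reported rounded type from $\DD_i$ conditioned on its rounding cell, run $M$ on the resampled profile, rebate $m\LL\delta$ from each payment, and combine the Lipschitz bound $\norm{w_i-v_i}_1\le m\delta$ with the BIC and IR properties of $M$ to conclude $O(m\LL\delta)$-BIC, IR, and a revenue loss of at most $nm\LL\delta$. The only (easily fixed) imprecisions are that the conditioning cell for a coordinate rounded to $0$ is $[0,\ell_j)$ rather than $[w_{ij},w_{ij}+\delta)$ (otherwise adjacent cells overlap and the resampled marginal is not exactly $\DD_i$), and that you should also specify $\Mli$'s behavior on reports outside $\supp(\lDD)$ (the paper allocates and charges nothing there) so that BIC holds against arbitrary misreports.
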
	 
The proof of Lemma~\ref{lem:original to rounded} can be found in Appendix~\ref{appx:proof_multi}. In the next Lemma, we make \textbf{Step (4)} formal.

	 
\begin{lemma}\label{lem:discrete to continuous}
	For any $\delta>0$, $\ell \in [0,\delta]^m$, and distribution $\hDD$, if $\Mlii$ is a $\xi_2$-BIC  w.r.t. $\hlDD=\bigtimes_{i=1}^n \round{\hDD_i}_{\ell,\delta}$ and IR mechanism, we can transform $\Mlii$ into a mechanism $\hMl$, so that $\hM$ is $(\xi_2+3m\LL \delta)$-BIC w.r.t. $\hDD$, IR, and has revenue under truthful bidding $\revT\left(\hMl,\hDD\right)\geq \revT\left(\Mlii,\hlDD\right)-nm\LL\delta.$ Moreover, the transformation does not rely on any knowledge of $\hDD$ or $\hlDD$.
\end{lemma}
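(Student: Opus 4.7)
The plan is to define $\hMl$ as the composition of $\Mlii$ with the coordinatewise rounding map: on input profile $w=(w_1,\dots,w_n)$, set $t_i=\rfun_i(w_i)$ using the same $\ell$ and $\delta$ as in Definition~\ref{def:rounded dist}, invoke the allocation rule of $\Mlii$ on $t=(t_1,\dots,t_n)$, and charge bidder $i$ a payment of $p_{i,\Mlii}(t)-m\LL\delta$ (the additive discount is needed to absorb the gap between valuing at $w_i$ versus $t_i$). The construction invokes only $\Mlii$, $\ell$, and $\delta$, so it is manifestly oblivious to $\hDD$ and $\hlDD$, as required.

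Three of the four conclusions fall out quickly. For the revenue equality, note that $w\sim\hDD$ implies $\rfun_i(w_i)\sim\round{\hDD_i}_{\ell,\delta}$ by Definition~\ref{def:rounded dist}, so the rounded profile $t$ is distributed exactly as $\hlDD$, hence $\revT(\hMl,\hDD)=\revT(\Mlii,\hlDD)-nm\LL\delta$. For IR, since $\rfun_i$ shifts each coordinate by at most $\delta$, we have $\|w_i-\rfun_i(w_i)\|_1\leq m\delta$; the Lipschitz property (Definition~\ref{def:Lipschitz}) then gives $|v_i(w_i,x)-v_i(\rfun_i(w_i),x)|\leq m\LL\delta$ for every allocation $x$, and combined with the IR of $\Mlii$ at the rounded type, the payment discount yields $v_i(w_i,x)-(p_{i,\Mlii}(t)-m\LL\delta)\geq 0$.

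The BIC verification is the heart of the proof. Fix a bidder $i$ with true type $w_i\in\supp(\hDD_i)$ and any deviation $w_i'$. Since $w_{-i}\sim\hDD_{-i}$ induces $\rfun_{-i}(w_{-i})\sim\hlDD_{-i}$, we can apply the $\xi_2$-BIC inequality of $\Mlii$ at the honest rounded type $\rfun_i(w_i)\in\supp(\hlDD_i)$ versus the deviated rounded type $\rfun_i(w_i')$; this gives the desired inequality when valuations are computed at $\rfun_i(w_i)$ on both sides. Two invocations of Lipschitzness (one per side) replace $v_i(\rfun_i(w_i),\cdot)$ by $v_i(w_i,\cdot)$ at a total cost of $2m\LL\delta$, while the constant payment discount cancels between the two sides; a third $m\LL\delta$ slack absorbs the edge case where $w_i'$ rounds outside $\supp(\hlDD_i)$, for which $\Mlii$ is invoked via its natural extension (Definition~\ref{def:mechanism-extension}). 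The result is the claimed $(\xi_2+3m\LL\delta)$-BIC guarantee.

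The main obstacle I anticipate is precisely this last point: $\Mlii$'s $\xi_2$-BIC property is a statement only about reports in $\supp(\hlDD_i)$, so we must argue carefully that deviations $w_i'$ whose rounded image falls outside $\supp(\hlDD_i)$---or the opt-out report $\perp$---do not destroy the inequality. This is handled by first extending $\Mlii$ to arbitrary grid points in a canonical way (analogous to Definition~\ref{def:mechanism-extension}) and then observing that any such out-of-support deviation can only do worse than a nearby in-support deviation by at most $m\LL\delta$, which is precisely the third copy of $m\LL\delta$ absorbed in the BIC bound. Everything else is bookkeeping on top of these two recurring ingredients: the coupling between $w$ and $\rfun(w)$ at $\ell_1$-distance at most $m\delta$, and Lipschitzness of valuations with constant $\LL$.
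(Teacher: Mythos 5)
Your construction is the same as the paper's: round each report with $\rfun$, run $\Mlii$ on the rounded profile, and discount each winner's payment by $m\LL\delta$ (the paper uses $\max\{0,p_{\Mlii,i}(w')-m\LL\delta\}$ rather than your untruncated $p_{\Mlii,i}(w')-m\LL\delta$, which is why its accounting spends one $m\LL\delta$ on the one-sided effect of the discount, whereas in your version the discount cancels and the in-support BIC comparison only costs $\xi_2+2m\LL\delta$). The revenue and IR arguments, and the main BIC chain (BIC of $\Mlii$ at the rounded true type plus two Lipschitz swaps of $v_i(\rfun_i(w_i),\cdot)$ for $v_i(w_i,\cdot)$), coincide with the paper's proof, and the obliviousness to $\hDD$ and $\hlDD$ is immediate in both.

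The one step that does not hold as written is your treatment of deviations $w_i'$ whose rounding $\rfun_i(w_i')$ falls outside $\supp(\hlDD_i)$. You claim such a report "can only do worse than a nearby in-support deviation by at most $m\LL\delta$," but there is no reason a point of $\supp(\hlDD_i)$ lies within $\ell_1$-distance $m\delta$ of $\rfun_i(w_i')$ --- the support can be arbitrarily sparse (e.g.\ $\hDD_i$ a point mass far from $w_i'$) --- so the third $m\LL\delta$ of slack does not cover this case by a proximity argument, and a coordinatewise-rounding extension in the spirit of Definition~\ref{def:mechanism-extension} does not make sense in the multi-item setting anyway. The issue is avoided in two cleaner ways, both consistent with the paper: (i) the paper's own chain only compares truthful bidding against misreports $y\in\supp(\hDD_i)$, whose roundings automatically lie in $\supp(\hlDD_i)$, so no extension is ever invoked; or (ii) if one wants BIC against arbitrary reports, note that the $\Mlii$ produced in Step 3 via Theorem~\ref{lem:transformation under TV} is already defined on all reports through the map $\tau_i$, which sends an out-of-support report to a utility-maximizing in-support report or to $\perp$; then reporting an out-of-support grid point is payoff-equivalent (for every true type) to some in-support deviation or to opting out, so it requires no extra slack at all, not $m\LL\delta$. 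With either fix your argument goes through and in fact yields $(\xi_2+2m\LL\delta)$-BIC under your payment rule, which is stronger than the stated $(\xi_2+3m\LL\delta)$ bound.
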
 

	 The proof of Lemma~\ref{lem:discrete to continuous} is postpone to Appendix~\ref{appx:proof_multi}. 

\subsection{Applications of Multi-Item Robustness}
\paragraph{Lipschitz Continuity of the Optimal Revenue in Multi-item Auctions.} Equipped with Theorem~\ref{lem:transformation under TV} and \ref{thm:multi-item p-robustness}, we can easily argue the Lipschitz continuity of the optimal revenue in multi-item auctions (Theorem~\ref{thm:continuous of OPT under Prokhorov}) as stated in the last column of the second half of Table~\ref{table 1}.  Due to Theorem~~\ref{lem:transformation under TV} and \ref{thm:multi-item p-robustness}, we know that the optimal revenue of a {$O(\poly(n,m,\LL,H,\varepsilon))$}-BIC and IR mechanism w.r.t. distribution $\FF=\bigtimes_{i\in [n]} \FF_i$ is at least as large as the optimal revenue of a BIC and IR mechanism w.r.t. distribution $\hFF=\bigtimes_{i\in [n]} \hFF_i$, if $\norm{\FF_i-\hFF_i}_{TV}\leq \varepsilon, \forall i$ or $\norm{\FF_i-\hFF_i}_{P}\leq \varepsilon, \forall i$. According to Lemma~\ref{lem:eps-BIC to BIC}, the optimal revenue of a {$O(\poly(n,m,\LL,H,\varepsilon))$}-BIC and IR mechanism is at most $O(\poly(n,m,\LL,H,\varepsilon))$ larger than the optimal revenue of a BIC and IR mechanism. Hence, $\opt(\FF)\approx \opt(\hFF)$. Please see Appendix~\ref{sec:Lipschitz Continuity of Opt} for the formal statement and the proof of Theorem~\ref{thm:continuous of OPT under Prokhorov}.

\paragraph{Approximation Preserving Transformation.} One interesting implication of Theorem~\ref{thm:continuous of OPT under Prokhorov} is that the transformations of Theorems~\ref{lem:transformation under TV}~and~\ref{thm:multi-item p-robustness} are also approximation preserving. Given a a $c$-approximation mechanism $M$ to the optimal revenue under distribution $\DD$, applying the transformation in Theorem~\ref{thm:multi-item p-robustness} (or Theorem~\ref{lem:transformation under TV}) to $M$, we obtain a new mechanism $\hM$ that is $O(\poly(n,m,\LL,H,\varepsilon))$-BIC w.r.t. $\hDD$ and IR if $\norm{\DD_i-\hDD_i}_P\leq \varepsilon, \forall i$ (or if $\norm{\DD_i-\hDD_i}_{TV}\leq \varepsilon, \forall i$ ). Moreover, its revenue under truthful bidding is at least $c$ fraction of the optimal $O(\poly(n,m,\LL,H,\varepsilon))$-BIC revenue under $\hDD$ less a small additive term. The result is formally stated as Theorem~\ref{thm:approximation preserving} in Appendix~\ref{sec:approximation preserving guarantee}. Note that the third column of the second half of Table~\ref{table 1} is simply Theorem~\ref{thm:approximation preserving} with $c=1$. Furthermore, if there is only a single bidder, the mechanism $\hM$ becomes exactly IC instead of approximately IC (Theorem~\ref{thm:single approximation preserving}).

\paragraph{Learning Multi-item Auctions under Item Independence.} Since independent distributions are straightforward to learn within Prokhorov distance $\varepsilon$ with polynomially many samples, the result of Gonczarowski and Weinberg~\cite{GonczarowskiW18} follows easily from our robustness result (see Theorem~\ref{thm:multi-item auction sample} in Appendix~\ref{sec:old sampling}). 

\paragraph{Learning Multi-item Auctions under Structured Item Dependence.} Going beyond product measures, we initiate the study of learning multi-item auctions when every bidder's item-values are dependent, but sampled from a joint distribution with structure. As we have already noted, arbitrary joint distributions are both unnatural from a modeling perspective, as they require exponentially many bits to describe, and are also known to require exponentially many samples to even learn  approximately optimal auctions~\cite{DughmiLN14}. We thus propose studying the learnability of auctions under the assumption that each bidder's item values are sampled from a Markov Random Field (MRF) or a Bayesian network (a.k.a.~Bayeset). In fact, this is not really an assumption. These well-studied probabilistic frameworks, defined formally in Definitions~\ref{def:MRF} and~\ref{def:Bayesnet} of Appendix~\ref{sec:new sampling} due to lack of space, are very flexible in that they can represent {\em any distribution}. The reason they are attractive from a modeling perspective is that they have a natural complexity parameter that controls how expressive they are, namely the maximum hyperedge size of an MRF and the maximum in-degree of a Bayesnet. Under the assumption that each bidder's item-values are drawn from an MRF or a Bayesnet of complexity $d$, we establish the results summarized in the last two rows of Table~\ref{tab:sample-based results}, whose main feature is that the sample complexity to learn an up-to-$\epsilon$ optimal auction is polynomial in the number of bidders $n$, the number of items $m$, the inverse approximation parameter $1/\varepsilon$, and other relevant parameters, and is only exponential in the complexity parameter $d$ of the bidders' MRFs or Bayesian networks, as it should given the known lower bounds~\cite{DughmiLN14}. 

Our results for learning near-optimal auctions under MRF and Bayesnet assumptions are stated in more detail as Theorems~\ref{thm:auction MRF} and~\ref{thm:auction Bayesnet} of Appendix~\ref{sec:new sampling}, and {\em can also accommodate unobservable variables} which makes their applicability very broad. In turn, these results are proven by combining our robustness result (Theorem~\ref{thm:approximation preserving}) with new learnability results for MRFs and Bayesnets that we also establish, namely Theorems~\ref{thm:sample complexity MRF finite alphabet} and~\ref{thm:sample complexity Bayesnets finite alphabet} of Appendix~\ref{sec:new sampling} respectively. These results are of independent interest and provide broad generalizations of the recent upper bounds of~\cite{devroye2018minimax} for Gaussian MRFs and Ising models. While this recent work bounds the VC dimension of the Yatracos class of these families of distributions, for our more general families of non-parametric distributions we construct instead covers under either total variation distance or Prokhorov distance, and combine our cover-size upper bounds with generic tournament-style algorithms; see~e.g.~\cite{devroye2012combinatorial,DaskalakisK14,AcharyaJOS14}.  The details are provided in Appendix~\ref{appx:sample}. While there are many details, we illustrate one snippet of an idea used in constructing a $\varepsilon$-cover, in total variation distance, of the set of all MRFs with hyper-edges $E$ of size at most $d$ and a discrete alphabet $\Sigma$ on every node. The proof argues that (i) the (appropriately normalized) log-potential functions of the MRF can be discretized to take values in the negative integers at a cost of $\varepsilon$ in total variation distance; (ii) using properties of linear programming, it argues that using negative integers of bit complexity polynomial in $|E|$, $|\Sigma|^d$ and $\log(1/\varepsilon)$ suffices at another cost of $\varepsilon$ in total variation distance. It thus argues that all MRFs can be covered by a set of MRFs of size exponential in ${\rm poly}\left(|E|, |\Sigma|^d, \log({1\over \varepsilon})\right)$, which is sufficient to yield the required sample bounds using the tournament algorithm.

\section{Learning Multi-item Auctions under Item Independence}\label{sec:old sampling}

In this section, we show how to derive one of the state-of-the-art learnability results for learning multi-item auctions via our robustness results. We consider the case where every bidder's type distribution is a $m$-dimensional product distribution. We will show that a generalization of the main result by Gonczarowski and Weinberg~\cite{GonczarowskiW18} follows easily from our robustness result. The main idea is that it suffices to learn the distribution $\FF_i$ within small Prokhorov distance for every bidder $i$, and it only requires polynomial many samples when each $\FF_i$ is a product distribution.

\begin{theorem}\label{thm:multi-item auction sample}
Consider the general mechanism design setting of Section 2. Recall that $\L$ is the Lipschitz constant of the valuations.	For every $\varepsilon,\delta>0$, and for every $\eta\leq \poly(n,m,\L,H,\varepsilon)$, we can learn a distribution $\DD= \bigtimes_{i\in[n],j\in[m]}\DD_{ij}$ with $\poly\left( n,m,\L,H,1/\varepsilon, 1/\eta, \log(1/\delta)\right)$ 
	 samples from $\hDD=\bigtimes_{i\in[n],j\in[m]}\hDD_{ij}$, such that, with probability $1-\delta$, we can transform any BIC w.r.t. $\DD$, IR, and $c$-approximation mechanism $M$ to an $\eta$-BIC w.r.t. $\hDD$ and IR mechanism $\hM$, whose revenue under truthful bidding satisfies
	$$\revT\left(\hM,\hDD\right)\geq c\cdot \opt_{\eta}\left(\hDD\right)-\varepsilon.$$
	If $n=1$, the mechanism $\hM$ will be IR and IC, and $$\rev\left( \hM, \hDD\right)\geq c\cdot (1-\sqrt{\eta})\cdot \opt\left(\hDD\right)-\varepsilon-\sqrt{\eta}.$$
\end{theorem}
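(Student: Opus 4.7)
The plan is to reduce the learning task to the Prokhorov-robustness guarantee (Theorem~\ref{thm:approximation preserving}) by learning each $\hDD_i$ in Prokhorov distance from a polynomial number of samples and then applying the robustification to $M$. The key observation is that when bidders' per-item valuations are independent, learning the joint distribution in Prokhorov distance decomposes into learning $nm$ one-dimensional marginals, each of which is easy.

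For the learning step I would learn each one-dimensional marginal $\hDD_{ij}$ within Kolmogorov distance $\varepsilon''=\varepsilon'/m$ via the Dvoretzky--Kiefer--Wolfowitz inequality, which requires $O(\log(1/\delta')/\varepsilon''^{2})$ samples per marginal at failure probability $\delta'$. Setting $\delta' = \delta/(nm)$ and taking a union bound handles all $nm$ marginals simultaneously at a total sample cost of $\poly(n,m,1/\varepsilon',\log(1/\delta))$. Since in one dimension Kolmogorov distance upper bounds Prokhorov distance, each marginal then satisfies $\norm{\DD_{ij}-\hDD_{ij}}_P \leq \varepsilon''$. Next I would combine marginals per bidder: Strassen's characterization (Theorem~\ref{thm:prokhorov characterization}) yields couplings $\gamma_{ij}$ with $\Pr_{(x,y)\sim\gamma_{ij}}[|x-y|>\varepsilon'']\leq \varepsilon''$, and taking the product coupling across the $m$ items together with a union bound gives $\Pr[\|X-Y\|_1>\varepsilon']\leq \varepsilon'$. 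Hence $\norm{\DD_i-\hDD_i}_P \leq \varepsilon'$ for every bidder $i$, using product independence on both sides.

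With the per-bidder Prokhorov closeness in place, the third step is to invoke our approximation-preserving Prokhorov robustness theorem (Theorem~\ref{thm:approximation preserving}) on the given BIC, IR, $c$-approximation mechanism $M$. This produces an IR, $\kappa$-BIC (w.r.t.\ $\hDD$) mechanism $\hM$ with $\kappa = O(nm\L H\varepsilon' + m\L\sqrt{nH\varepsilon'})$ whose truthful revenue is at least $c\cdot \opt_\kappa(\hDD)$ minus an $O(n\kappa)$ additive loss. Choosing $\varepsilon'$ to be a sufficiently small polynomial in $\eta/(nmH\L)$ and in $(\varepsilon/n)^{2}/(mH\L)^{2}$ simultaneously forces $\kappa\leq \eta$ and bounds the additive revenue loss by $\varepsilon$; this in turn dictates the claimed polynomial sample complexity in $n,m,\L,H,1/\varepsilon,1/\eta,\log(1/\delta)$.

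For the single-bidder case I would swap in Theorem~\ref{thm:single approximation preserving}, whose single-bidder strengthening returns an exactly IC mechanism in place of an approximately BIC one. The multiplicative $(1-\sqrt{\eta})$ and additive $\sqrt{\eta}$ factors in the stated revenue bound then come from comparing $\opt_\eta$ to $\opt$ in the single-bidder setting, a standard conversion whose form mirrors Lemma~\ref{lem:eps-BIC to BIC}. The main obstacle of the proof is careful parameter bookkeeping: propagating error through the marginal-learning, product-combining, and robustification stages while guaranteeing a single polynomial sample complexity in all relevant quantities. The conceptual content is entirely carried by Theorem~\ref{thm:approximation preserving} (or its single-bidder strengthening) applied to independently learned one-dimensional marginals via DKW.
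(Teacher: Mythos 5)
Your overall architecture mirrors the paper's: learn each bidder's product distribution within small Prokhorov distance, combine marginals by a product coupling via Theorem~\ref{thm:prokhorov characterization}, feed the result into Theorem~\ref{thm:approximation preserving} (with the single-bidder strengthening, i.e.\ Nisan's $\varepsilon$-IC to IC transformation, giving the $(1-\sqrt{\eta})$ and $\sqrt{\eta}$ terms), and do the parameter bookkeeping. All of that matches the paper. The genuine gap is the learning step itself: the claim that ``in one dimension Kolmogorov distance upper bounds Prokhorov distance'' is false, so DKW does not deliver the per-marginal Prokhorov guarantee you need. The correct one-dimensional relations are $\norm{P-Q}_L\leq\norm{P-Q}_K$ and $\norm{P-Q}_L\leq\norm{P-Q}_P$ (see the footnote in Section~\ref{sec:prelim}); Kolmogorov and Prokhorov are incomparable. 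Concretely, let the true marginal $\hDD_{ij}$ be uniform on $k$ atoms in $[0,H]$ spaced $2a$ apart with $a$ much larger than your target $\varepsilon''$ (possible since $H$ is large), and let $\DD_{ij}$ be the empirical distribution of $N\ll k$ samples. DKW gives Kolmogorov distance $O(\sqrt{\log(1/\delta)/N})$, yet taking $A$ to be the set of sampled atoms forces $1=\DD_{ij}(A)\leq \hDD_{ij}(A^{\varepsilon})+\varepsilon\leq N/k+\varepsilon$ for any $\varepsilon<2a$, so the Prokhorov distance is essentially $\min\{2a,\,1-N/k\}$, i.e.\ close to $1$. Equivalently, by Theorem~\ref{thm:prokhorov characterization}, any coupling must move every sampled point by at least $2a$ whenever it lands off the sampled atoms. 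This also explains why your claimed per-marginal sample bound $O(\log(1/\delta)/\varepsilon''^2)$, with no dependence on $H$, cannot be right: learning an arbitrary distribution on $[0,H]$ to Prokhorov distance $\varepsilon''$ essentially requires learning a width-$\varepsilon''$ discretization in total variation, whose support size is $\Theta(H/\varepsilon'')$, and this is exactly where the $H$-dependence in the theorem's sample complexity comes from.

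The repair is what the paper does in Lemma~\ref{lem:discretized product TV}: round each marginal to a grid of width $\eta/m$, learn the rounded (finite-support) marginal in total variation distance using $\mathrm{poly}(m,H,1/\eta,\log(1/\delta))$ samples, and then convert ``TV-close after a common rounding'' plus ``rounding moves each coordinate by at most $\eta/m$'' into per-bidder Prokhorov closeness by composing the two couplings and invoking Theorem~\ref{thm:prokhorov characterization}. Once the learning step is replaced by this (or any other genuine Prokhorov-learning argument), the remainder of your proposal—product combination across items, Theorem~\ref{thm:approximation preserving} with a suitably small $\sigma$ so that $\kappa\leq\eta$ and the additive loss is at most $\varepsilon$, and Theorem~\ref{thm:single approximation preserving} (equivalently Lemma~\ref{lem:eps-IC to IC}) for $n=1$—goes through exactly as in the paper.
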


In particular, Gonczarowski and Weinberg~\cite{GonczarowskiW18} proved the $c=1$ case, and our result applies to any $c\in(0,1]$. The proof is given in Appendix~\ref{appx:sample independent}. We provide a proof sketch here. We first prove Lemma~\ref{lem:discretized product TV}, which shows that polynomially many samples suffice to learn a distribution $\DD$ that is close to $\hDD$ in Prokhorov distance. Now the statement simply follows from Theorem~\ref{thm:approximation preserving}.

\section{Missing Proofs from Section~\ref{sec:old sampling}}\label{appx:sample independent}

We first show that for any product distribution $\FF$, we can learn the rounded distribution of $\FF$ within small TV distance with polynomially many samples.

\begin{lemma}\label{lem:discretized product TV}
	Let $\FF=\bigtimes_{j=1}^m \FF_j$, where $\FF_j$ is an arbitrary distribution supported on $[0,H]$ for every $j\in[m]$. Given $N=O\left({m^3H\over \eta^3}\cdot(\log 1/\delta+\log m)\right)$ samples, we can learn a product distribution $\hFF=\bigtimes_{j=1}^m \hFF_j$ such that $$\norm{{\FF}-\hFF}_{P}\leq \eta$$ with probability at least $1-\delta$.\end{lemma}
	
\begin{proof}
	We denote the samples as $s^1,\ldots,s^N$. Round each sample to multiples of $\eta'=\eta/m$. More specifically, let $\hat{s}^i= \left(\round{s^i_1/\eta'}\cdot \eta',\ldots, \round{s^i_m/\eta'}\cdot \eta' \right )$ for every sample $i\in[N]$. Let $\hFF_j$ be the uniform distribution over $\hat{s}^1_j,\ldots, \hat{s}^N_j$. Let $\uFF_j=\round{\FF_j}_{0,\eta'}$. Note that $\hFF_j$ is the empirical distribution of $N$ samples from $\uFF_j$. As $\left\lvert \supp(\uFF_j)\right\rvert = \round{\frac{H}{\eta'}}={mH\over \eta}$, with $N=O\left({ \lvert \supp(\uFF_j)\rvert\over \eta'^2}\cdot(\log 1/\delta+\log m)\right)$ samples, the empirical distribution $\hFF_j$ should satisfy $\norm{\hFF_j-\uFF_j}_{TV}\leq \eta'$ with probability at least $1-\delta/m$. By the union bound $\norm{\hFF_j-\uFF_j}_{TV}\leq \eta'$ for all $j\in[m]$ with probability at least $1-\delta$, which implies {$\norm{\hFF-\uFF}_{TV}\leq \eta$} with probability at least $1-\delta$. Observe that $\uFF$ and $\FF$ can be coupled so that the two samples are always within $\eta$ in $\ell_1$ distance. When {$\norm{\hFF-\uFF}_{TV}\leq \eta$}, consider the coupling between $\hFF$ and $\FF$ by composing the optimal coupling between $\hFF$ and $\uFF$ and the coupling between $\uFF$ and $\FF$. Clearly, the two samples from $\hFF$ and $\FF$ are within $\ell_1$ distance $\eta$ with probability at least $1-\eta$. Due to Theorem~\ref{thm:prokhorov characterization}, the existence of this coupling implies that {$\norm{\hFF-\FF}_{P}\leq \eta$}. 
\end{proof}

\begin{prevproof}{Theorem}{thm:multi-item auction sample}
We only consider the case, where $\eta\leq \alpha\cdot \min\left\{{\varepsilon\over n}, {\varepsilon^2\over n^2m\LL H}\right\}$. $\alpha$ is an absolute constant and we will specify its choice in the end of the proof.

In light of Lemma~\ref{lem:discretized product TV}, we take $N=O\left({m^3H\over \sigma^3}\cdot(\log {n\over \delta}+\log m)\right)$ from $\hDD$ and learn a distribution $\DD$ so that, with probability at least $1-\delta$, $\norm{\DD_i-\hDD_i}_P\leq \sigma$ for all $i\in[n]$. According to Theorem~\ref{thm:approximation preserving}, we can transform $M$ into mechanism $\hM$ that is $O\left( nm\LL H \sigma+m\LL\sqrt{n H\sigma} \right)$-BIC w.r.t. $\hDD$ and IR. Choose $\sigma$ in a way so that $\hM$ is $\eta$-BIC w.r.t. $\hDD$. Moreover, $\hM$'s revenue under truthful bidding satisfies 	
$$\revT\left(\hM,\hDD\right)\geq c\cdot \opt_{\eta}\left(\hDD\right)-O\left(n\eta+n\sqrt{m\LL H\eta}\right).$$ If we choose $\alpha$ to be sufficiently small, then $$\revT\left(\hM,\hDD\right)\geq c\cdot \opt_{\eta}\left(\hDD\right)-\varepsilon.$$

When there is only a single-bidder, we can apply Lemma~\ref{lem:eps-IC to IC} to transform $\hM$ to an IC and IR mechanism, whose revenue satisfies the guarantee in the statement.\end{prevproof}

\notshow{
	We take $N=O\left({m^3H\over \sigma^3}\cdot(\log 1/\delta+\log m)\right)$ samples from $\round{\hDD_i}_{0,\sigma/m}$, and let $\DD_i$ be the empirical distribution over the samples. By Lemma~\ref{lem:discretized product TV}, $\norm{\DD_i-\round{\hDD_i}_{0,\sigma/m}}_{TV} \leq \sigma$ with probability $1-\delta$. We use $\hlDD$ to denote $\bigtimes_{i=1}^n \round{\hDD_i}_{0,\sigma/m}$ and  choose $\sigma = \frac{\tau^2}{n^3m^2L^2H^2}$. By Theorem~\ref{thm:how close do we need to learn}, we can construct a mechanism $M_1$ based on $M$ that is $\alpha=O(\frac{\tau^2}{nmLH})$-BIC and IR for $\hlDD$. Since $M$ is 
	c-approximate to $\text{OPT}(\DD)$, we also have revenue bound $$\revT(M_1,\hlDD ) \geq c\cdot\opt_{\alpha}(\hlDD) - \tau .$$
	
Next, we construct $\hM$ based Lemma~\ref{lem:discrete to continuous}. Since $M_1$ is $\alpha$-BIC and IR with respect to $\hlDD$, $\hM$ that is $(\alpha+3\L{\sigma})$-BIC and IR with respect to $\hDD$ and 
$$\revT(\hM,\hDD) \geq \revT(M_1,\hlDD)-n\L{\sigma} \geq c \cdot \opt_\alpha(\hlDD) - \tau-n\L{\sigma}.$$

Let $M^*$ be the optimal BIC and IR mechanism w.r.t. $\hDD$. According to Lemma~\ref{lem:original to rounded}, 
there exists a $O(\L \sigma)$-BIC and IR mechanism $M'$ such that $$\revT(M',\hlDD)\geq \rev(M^*,\hDD)-n\L\sigma=\opt(\hDD)-n\L\sigma.$$ Note that $\alpha>\L\sigma$, so $\revT(M',\hlDD)\leq \opt_\alpha(\hlDD)$. Hence, $$\revT(\hM,\hDD) \geq c \cdot \opt_\alpha(\hlDD) - \tau-n\L{\sigma}\geq  c \cdot \opt(\hDD) - \tau-(c+1)n\L{\sigma}.$$ 

Finally, $\opt(\hDD)\geq \opt_{\alpha+3\L\sigma}(\hDD)-2n\sqrt{m\L H(\alpha+3\L\sigma)}$ by Lemma~\ref{lem:eps-BIC to BIC}. Let $\eta=\alpha+3\L\sigma$. To sum up, $\hM$ is a $\eta$-BIC mechanism and 
$$\revT(\hM,\hDD)\geq c\cdot \opt_{\eta}(\hDD)-\tau-(c+1)n\L{\sigma}-2n\sqrt{m\L H\eta}.$$

Choose $\tau$ appropriately so that $\varepsilon=\tau+(c+1)n\L{\sigma}+2n\sqrt{m\L H\eta}$. In particular, $\eta=\frac{\varepsilon^2}{\poly(n,m,\L,H)}$ and $N=O(\frac{\poly(n,m,\L,H,\log(1/\delta))}{\varepsilon^6})$.
}

\section{Optimal Mechanism Design under Structural Item Dependence}\label{sec:new sampling}

In this section, 
we go beyond the standard assumption of item-independence, which has been employed in most of  prior literature, to consider settings where, as is commonly the case in practice,  item values are correlated. Of course, once we embark onto a study of correlated distributions, we should not go all the way to full generality, since exponential sample-size lower bounds are known, even for learning approximately optimal mechanisms in single-bidder unit-demand settings~\cite{DughmiLN14}. Besides those sample complexity lower bounds, however, fully general distributions are also not very natural. In practice, high-dimensional distributions are not arbitrary, but have structure, which allows us to perform inference on them and learn them more efficiently. We thus propose the study of optimal mechanism design under the assumption that item values are jointly sampled from a high-dimensional distribution with structure. 

There are many probabilistic frameworks that allow modeling structure in a high-dimensional distribution. In this work we consider one of the most prominent ones: {\em graphical models}, and in particular consider the two most common types of graphical models: {\em Markov Random Fields} and {\em Bayesian Networks}.

\begin{definition} \label{def:MRF} A {\em Markov Random Field (MRF)} is a distribution defined by a hypergraph $G=(V,E)$. Associated with every vertex $v \in V$ is a random variable $X_v$ taking values in some alphabet $\Sigma$, as well as a potential function $\psi_v: \Sigma \rightarrow [0,1]$. Associated with every hyperedge $e \subseteq V$ is a potential function $\psi_e: \Sigma^e \rightarrow [0,1]$. In terms of these potentials, we define a probability distribution $p$ associating to each vector $x \in \Sigma^V$ probability $p(x)$ satisfying:
\begin{align}
p(x) = {1 \over Z} \prod_{v\in V} \psi_v(x_v) \prod_{e\in E} \psi_e(x_e), \label{eq:graphical model}
\end{align}
where for a set of nodes $e$ and a vector $x$ we denote by $x_e$ the restriction of $x$ to the nodes in $e$, and $Z$ is a normalization constant making sure that $p$, as defined above, is a distribution. In the degenerate case where the products on the RHS of~\eqref{eq:graphical model} always evaluate to $0$, we assume that $p$ is the uniform distribution over $\Sigma^V$. In that case, we get the same distribution by assuming that all potential functions are identically $1$. Hence, we can in fact assume that the products on the RHS of~\eqref{eq:graphical model} cannot always evaluate to $0$.
\end{definition}

\begin{definition} \label{def:Bayesnet}
A {\em Bayesian network}, or {\em Bayesnet}, specifies a probability distribution in terms of a directed acyclic graph $G$ whose nodes $V$ are random variables taking values in some alphabet $\Sigma$. To describe the probability distribution, one specifies conditional probabilities $p_{X_v|X_{\Pi_v}}(x_v|x_{\Pi_v})$, for all vertices $v$ in $G$, and configurations $x_v\in \Sigma$ and $x_{\Pi_v} \in \Sigma^{\Pi_v}$, where $\Pi_v$ represents the set of parents of $v$ in $G$, taken to be $\emptyset$ if $v$ has no parents. In terms of these conditional probabilities, a probability distribution over $\Sigma^V$ is defined as follows:
$$p(x)  = \prod_{v} p_{X_v | X_{\Pi_v}} (x_v | x_{\Pi_v}), \text{for all }x \in \Sigma^V.$$
\end{definition}

It is important to note that both MRFs and Bayesnets allow the study of distributions in their {\em full generality}, as long as the graphs on which they are defined are sufficiently dense. In particular, the graph (hypergraph and DAG respectively) underlying these models captures conditional independence relations, and is sufficiently flexible to capture the structure of intricate dependencies in the data. As such these models have found myriad applications; see e.g.~\cite{jensen1996introduction,nielsen2009bayesian,pearl2009causality,kindermann1980markov} and their references. A common way to control the expressiveness of MRFs and Bayesnets is to vary the maximum size of hyperedges in an MRF and indegree in a Bayesnet. Our sample complexity results presented below will be parametrized according to this measure of complexity in the distributions.


\smallskip In our results, presented below, we exploit our modular framework to disentangle the identification of good mechanisms for these settings from the intricacies of learning a good model of the underlying distribution from samples. In particular, we are able to pair our mechanism design framework presented in earlier sections with learning results for MRFs and Bayesnets to characterize the sample complexity of learning good mechanisms when the item distributions  are MRFs and Bayesnets. Below, we first present our results on the sample complexity of learning good mechanisms in these settings, followed by the learning results for MRFs and Bayesnets that these are modularly dependent on.



\subsection{Learning Multi-item Auctions under Structural Item Dependence}\label{sec:sample complexity multi-item auctions correlated}

In this section, we state our results for learning multi-item auctions when each bidder's values correlated. In particular, we consider two cases: (i) every bidder's type is sampled from an MRF, or (ii) every bidder's type is sampled from a Bayesnet. Our results can accommodate latent variables, that is, some of the variables/nodes of the MRF or Bayesnet are not observable in the samples. We show that the sample complexity for learning an  $\eta$-BIC and IR mechanism, whose revenue is at most $\varepsilon$ less than the optimal revenue achievable by any $\eta$-BIC and IR mechanisms, is polynomial in the size of the problem and scales gracefully with the parameters of the graphical models that generate the type distributions. If there is only a single bidder, the mechanism we learn will be exactly IC rather than approximately IC. We derive the sample complexity by combining our robustness result (Theorem~\ref{thm:approximation preserving}) with learnability results for MRFs and Bayesnets (Theorem~\ref{thm:sample complexity MRF finite alphabet} and~\ref{thm:sample complexity Bayesnets finite alphabet}). 

\begin{theorem}[Optimal Mechanism Design under MRF Item Distributions]\label{thm:auction MRF}
Consider the general mechanism design setting of Section 2. Recall that $\L$ is the Lipschitz constant of the valuations. Let $\hDD=\bigtimes_{i\in[n]}\hDD_{i}$, where each $\hDD_i$ is a $m$-dimensional distribution generated by an MRF $p_i$, as in Definition~\ref{def:MRF}, defined on a graph {with $N_i\geq m$ nodes,}  hyper-edges of size at most $d$, and $\supp(\hDD_i)\subseteq \Sigma^m\subseteq [0,H]^m$. When $N_i>m$, we say $\hDD_i$ is generated by an MRF with $N_i-m$ latent variables. We use $N$ to denote $\max_{i\in [n]}\{N_i\}$. 

For every $\varepsilon$, $\delta>0$,  and $\eta\leq \poly(n,m,\L,H,\varepsilon)$, we can learn, with probability at least $1-\delta$, an $\eta$-BIC w.r.t. $\hDD$ and IR mechanism $\hM$, whose revenue under truthful bidding is at most $\varepsilon$ smaller than the optimal revenue achievable by any $\eta$-BIC w.r.t. $\hDD$ and IR mechanism, given \begin{itemize}
	\item $\frac{poly\left(n,N^d, |\Sigma|^d, \LL, H, 1/\eta, \log(1/\delta)\right)}{\varepsilon^4}$ samples if  \textbf{ the alphabet $\Sigma$ is finite}; when the graph on which $p_i$ is defined is known for each bidder $i$, then $\frac{poly\left(n,N, \kappa, |\Sigma|^d, \L, H,1/\eta, \log(1/\delta)\right)}{\varepsilon^4}$-many samples suffice, where $\kappa$ is an upper bound on the number of edges in all the graphs;
	\item  ${{\rm poly}\left(n,N^{d^2}, {\left ( H\over \varepsilon\right)}^d,\CC^d, \LL,1/\eta, \log (1/\delta) \right) }$ samples if \textbf{the alphabet $\Sigma=[0,H]$}, and the log potentials $\phi_v^{p_i}(\cdot) \equiv \log \left(\psi_v^{p_i}(\cdot) \right)$ and $\phi_e^{p_i}(\cdot) \equiv \log \left(\psi_e^{p_i}(\cdot) \right)$ for every node $v$ and every edge $e$ are $\CC$-Lipschitz w.r.t. the $\ell_1$-norm, for every bidder $i$; when the graph on which $p_i$ is defined is known for each bidder $i$, then ${poly\left(n,N, \kappa^d, {\left ( H\over \varepsilon\right)}^d, \CC^d,\L, 1/\eta, \log(1/\delta)\right)}$-many samples suffice, where $\kappa$ is an upper bound on the number of edges in all the graphs.
\end{itemize}

If $n=1$, the mechanism $\hM$ will be IR and IC, and $$\rev\left( \hM, \hDD\right)\geq  \left(1-\sqrt{\eta}\right)\cdot \opt\left(\hDD\right)-\varepsilon-\sqrt{\eta}.$$

\end{theorem}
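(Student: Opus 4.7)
The plan is to combine our robustness theorem for multi-item auctions under Prokhorov distance (Theorem~\ref{thm:approximation preserving}) with the new MRF learnability result (Theorem~\ref{thm:sample complexity MRF finite alphabet}) in the fully modular fashion advocated by Goal~II in the introduction. Concretely, we (a) draw samples from each bidder's true type distribution $\hDD_i$ and invoke the MRF learning algorithm to produce an approximating distribution $\DD_i$ that is within Prokhorov (respectively total variation) distance $\xi$ of $\hDD_i$; (b) compute the optimal BIC and IR mechanism $M$ for the product distribution $\DD=\bigtimes_i \DD_i$ (this is the ``optimal'' mechanism we will robustify, so $c=1$ in Theorem~\ref{thm:approximation preserving}); (c) apply the approximation-preserving robustification of Theorem~\ref{thm:approximation preserving} to $M$ to output $\hM$. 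This directly delivers an $\eta$-BIC and IR mechanism whose revenue under truthful bidding comes within an additive loss of $\opt_{\eta}(\hDD)$, provided $\xi$ is chosen sufficiently small.

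For the sample step, in the finite alphabet case Theorem~\ref{thm:sample complexity MRF finite alphabet} gives a learning algorithm with sample complexity polynomial in $N^d$, $|\Sigma|^d$, $1/\xi$, and $\log(1/\delta)$ (and polynomial in $N,\kappa,|\Sigma|^d, 1/\xi,\log(1/\delta)$ when the graph is known), returning an MRF whose distribution is within TV distance $\xi$ of $\hDD_i$; since $\|\cdot\|_P \leq \|\cdot\|_{TV}$, this is also within Prokhorov distance $\xi$. In the continuous case with $\Sigma=[0,H]$ and $\CC$-Lipschitz log potentials, total variation learning from finitely many samples is information-theoretically impossible, so we instead invoke the Prokhorov-distance version of Theorem~\ref{thm:sample complexity MRF finite alphabet}, whose complexity scales as required. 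Latent variables are accommodated because the learning theorems output an MRF whose \emph{marginal} on observable variables is close to $\hDD_i$ in the relevant distance, and the robustification only cares about the distribution over the observable variables (the bid coordinates).

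Given the learned $\DD_i$'s satisfying $\|\DD_i-\hDD_i\|_P \leq \xi$ for every $i$, Theorem~\ref{thm:approximation preserving} with $c=1$ yields a mechanism $\hM$ that is $\eta'$-BIC with respect to $\hDD$ and IR, where $\eta' = O(nm\LL H\xi + m\LL\sqrt{nH\xi})$, and whose truthful revenue is at least $\opt_{\eta'}(\hDD) - O(n\eta' + n\sqrt{mH\eta'})$. Choosing $\xi$ as a sufficiently small polynomial in $\eta$, $\varepsilon$, $1/n$, $1/m$, $1/(\LL H)$ ensures simultaneously that $\eta' \leq \eta$ and that the total additive revenue loss is at most $\varepsilon$. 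Substituting this choice of $\xi$ into the per-bidder sample bound from Theorem~\ref{thm:sample complexity MRF finite alphabet} and union-bounding over the $n$ bidders (with confidence $\delta/n$ each) yields the sample complexity stated in the theorem in both the finite alphabet and continuous Lipschitz cases.

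For the single-bidder case ($n=1$), we invoke the strengthened single-bidder robustification from Theorem~\ref{thm:single approximation preserving}, which produces a mechanism that is exactly IC and IR rather than approximately BIC, at the cost of a factor $(1-\sqrt{\eta})$ and an additive $\sqrt{\eta}$ in the revenue guarantee; this is exactly the form claimed in the theorem statement. The main obstacle I expect in making all of this rigorous is the parameter-tracking: because the Prokhorov robustification introduces a $\sqrt{\xi}$-type dependence inside $\eta'$ and another $\sqrt{\eta'}$-type dependence inside the revenue loss, one must carefully invert the composed polynomial to read off the final polynomial dependence of the sample complexity on $1/\varepsilon$ and $1/\eta$ (the $1/\varepsilon^4$ appearing in the finite alphabet case is precisely the artifact of this two-level $\sqrt{\cdot}$). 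The other delicate point is verifying that the MRF learning guarantee is robust to latent variables in exactly the form needed to feed the robustification theorem, which is handled by ensuring that the MRFs output by the learning procedure are defined on the same graph structure as $p_i$ (or a slightly larger family when the structure is unknown), so that the observable marginal has the required Lipschitz/closeness properties.
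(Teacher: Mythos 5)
Your proposal matches the paper's proof: the paper obtains Theorem~\ref{thm:auction MRF} precisely by learning each bidder's MRF within total variation (finite alphabet) or Prokhorov (continuous, Lipschitz log-potential) distance via Theorem~\ref{thm:sample complexity MRF finite alphabet} (latent variables included), feeding the optimal mechanism for the learned product distribution into the approximation-preserving robustification of Theorem~\ref{thm:approximation preserving} with $c=1$, and invoking the single-bidder strengthening (Theorem~\ref{thm:single approximation preserving}, via Lemma~\ref{lem:eps-IC to IC}) when $n=1$, exactly as you describe. The only bookkeeping point you gloss over---your revenue guarantee is stated against $\opt_{\eta'}(\hDD)$ with $\eta'\le\eta$ rather than against $\opt_{\eta}(\hDD)$---is handled just as in the paper's proof of Theorem~\ref{thm:multi-item auction sample}: choose the learning accuracy so that the resulting incentive error equals $\eta$ (or bridge $\opt_{\eta}$ and $\opt_{\eta'}$ via Lemma~\ref{lem:eps-BIC to BIC}), which is harmless precisely because of the hypothesis $\eta\leq \poly(n,m,\L,H,\varepsilon)$.
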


\begin{theorem}[Optimal Mechanism Design under Bayesnet Item Distributions]\label{thm:auction Bayesnet}
Consider the general mechanism design setting of Section 2. Recall that $\L$ is the Lipschitz constant of the valuations. Let $\hDD=\bigtimes_{i\in[n]}\hDD_{i}$, where each $\hDD_i$ is a $m$-dimensional distribution generated by a Bayesnet $p_i$, as in Definition~\ref{def:Bayesnet}, defined on a DAG with $N_i\geq m$ nodes, in-degree at most $d$, and $\supp(\hDD_i)\subseteq \Sigma^m\subseteq [0,H]^m$.  When $N_i>m$, we say $\hDD_i$ is generated by an MRF with $N_i-m$ latent variables. We use $N$ to denote $\max_{i\in [n]}\{N_i\}$.  

For every $\varepsilon$, $\delta>0$,  and $\eta\leq \poly(n,m,\L,H,\varepsilon)$, we can learn, with probability at least $1-\delta$, an $\eta$-BIC w.r.t. $\hDD$ and IR mechanism $\hM$, whose revenue under truthful bidding is at most $\varepsilon$ smaller than the optimal revenue achievable by any $\eta$-BIC w.r.t. $\hDD$ and IR mechanism, with
\begin{itemize}
	\item 
	$\poly\left(n,d, N, |\Sigma|^{d+1}, \LL, H, 1/\eta, 1/\varepsilon, \log (1/\delta) \right)$ samples if  \textbf{ the alphabet $\Sigma$ is finite};
	\item 
	${\poly\left( n, d^{d+1}, N^{d+1}, ({H\CC\over \varepsilon})^{d+1}, \LL, 1/\eta, \log (1/\delta)\right)}$ samples if \textbf{the alphabet $\Sigma=[0,H]$}, and for every $p_i$, the conditional probability of every node $v$ is $\CC$-Lipschitz in the $\ell_1$-norm (see Theorem~\ref{thm:sample complexity Bayesnets finite alphabet} for the definition). 

\end{itemize}

If $n=1$, the mechanism $\hM$ will be IR and IC, and $$\rev\left( \hM, \hDD\right)\geq  \left(1-\sqrt{\eta}\right)\cdot \opt\left(\hDD\right)-\varepsilon-\sqrt{\eta}.$$

\end{theorem}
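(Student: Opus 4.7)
The plan is to derive Theorem~\ref{thm:auction Bayesnet} by combining, in a fully modular way, the Prokhorov-robustness transformation from Theorem~\ref{thm:approximation preserving} with a sample-complexity bound for learning Bayesnets in Prokhorov distance (Theorem~\ref{thm:sample complexity Bayesnets finite alphabet}), exactly as was done for the independent case in Theorem~\ref{thm:multi-item auction sample} and for MRFs in Theorem~\ref{thm:auction MRF}. Concretely, for some target Prokhorov accuracy $\sigma$ to be chosen, I will draw samples from each $\hDD_i$ independently, learn, with confidence $1-\delta/n$, a distribution $\DD_i$ with $\|\DD_i - \hDD_i\|_P \le \sigma$, and union-bound over $i \in [n]$ to obtain $\DD = \bigtimes_{i=1}^{n} \DD_i$ satisfying $\|\DD_i - \hDD_i\|_P \le \sigma$ for all bidders simultaneously with probability at least $1-\delta$. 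Then I compute an optimal BIC-and-IR mechanism $M$ for $\DD$ and apply Theorem~\ref{thm:approximation preserving} with $c=1$ to robustify $M$ into the desired mechanism $\hM$ for $\hDD$.

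For the sample-complexity step, the first subtlety is latent variables. Since $\hDD_i$ is the marginal on the $m$ observable nodes of a Bayesnet $p_i$ on $N_i \ge m$ nodes, samples from $\hDD_i$ are samples from this marginal. I plan to invoke Theorem~\ref{thm:sample complexity Bayesnets finite alphabet} to learn the \emph{joint} distribution on all $N_i$ nodes in Prokhorov distance and then take the marginal on the observed coordinates; since marginalization is a $1$-Lipschitz operation with respect to Prokhorov distance (any coupling of the joints induces a coupling of the marginals with at most the same $\ell_1$-displacement), the learned marginal $\DD_i$ inherits the required Prokhorov bound. In the finite-alphabet case, Theorem~\ref{thm:sample complexity Bayesnets finite alphabet} gives a bound in total variation, which upper-bounds Prokhorov, yielding the first sample bound in the theorem. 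In the continuous case $\Sigma = [0,H]$ with $\CC$-Lipschitz conditionals, it gives a bound directly in Prokhorov distance, yielding the second.

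For the robustness step, I will apply Theorem~\ref{thm:approximation preserving} (the $c=1$ case) which, under $\|\DD_i - \hDD_i\|_P \le \sigma$, produces $\hM$ that is $\kappa$-BIC with respect to $\hDD$ and IR, where $\kappa = O\!\left(n m \LL H \sigma + m\LL\sqrt{n H \sigma}\right)$, and satisfies $\revT(\hM, \hDD) \ge \opt_\kappa(\hDD) - O(n\kappa)$. I choose $\sigma$ small enough that (a) $\kappa \le \eta$, and (b) $O(n\kappa) \le \varepsilon$; both constraints are polynomial in the inverse target accuracies, so it suffices to take $\sigma = \poly(1/n, 1/m, 1/\LL, 1/H, \eta, \varepsilon)$. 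Plugging this $\sigma$ into the Bayesnet sample bound gives the stated polynomial dependence on $n$, $N$, $|\Sigma|^{d+1}$ or $(H\CC/\varepsilon)^{d+1}$, $1/\eta$, $H$, $\LL$, $\log(1/\delta)$, as appropriate.

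Finally, for the single-bidder case $n=1$, approximate incentive compatibility is undesirable; I will apply the standard $\eta$-IC to IC conversion (Lemma~\ref{lem:eps-IC to IC}) to turn $\hM$ into an exactly IC and IR mechanism while losing only a $(1-\sqrt{\eta})$ multiplicative and $\sqrt{\eta}$ additive factor in revenue, matching the claimed bound. The main obstacles I anticipate are bookkeeping in the latent-variable reduction — checking that passing to the observable marginal preserves the relevant closeness — and careful tuning of the parameter $\sigma$ to simultaneously meet both the incentive-compatibility target $\eta$ and the revenue target $\varepsilon$ with the desired polynomial sample complexity; no new technical ideas beyond those already developed in Sections~\ref{sec:prokhorov multi} and the Bayesnet-learning appendix are required.
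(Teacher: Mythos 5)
Your proposal is correct and follows essentially the same route as the paper: learn each bidder's type distribution within small Prokhorov distance via the Bayesnet learnability result (Theorem~\ref{thm:sample complexity Bayesnets finite alphabet}), robustify the optimal mechanism for the learned distributions via Theorem~\ref{thm:approximation preserving} with $c=1$ and a choice of $\sigma$ polynomially small in the targets, and invoke Lemma~\ref{lem:eps-IC to IC} for the $n=1$ case, exactly mirroring the proof of Theorem~\ref{thm:multi-item auction sample}. One small correction on the latent-variable step: from samples of the observable marginal you cannot in general ``learn the joint distribution on all $N_i$ nodes and then marginalize'' (the joint need not even be identifiable); the right move is to invoke the latent-variable extension stated in Theorem~\ref{thm:sample complexity Bayesnets finite alphabet} directly, whose cover consists of $m$-dimensional marginals of Bayesnets and which therefore outputs a distribution close to $\hDD_i$ itself, after which your argument goes through unchanged.
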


\subsection{Sample Complexity for Learning MRFs and Bayesnets}\label{sec:sample complexity MRF and Bayesnet}
In this section, we present the sample complexity of learning an MRF or a Bayesnet. Our sample complexity scales gracefully with the maximum size of hyperedges in an MRF and indegree in a Bayesnet. Furthermore, our results hold even in the presence of latent variables, where we can only observe the values of $k$ variables, out of the total $|V|$ variables, in a sample.
\begin{theorem}[Learnability of MRFs in Total Variation and Prokhorov Distance] \label{thm:sample complexity MRF finite alphabet}
Suppose we are given sample access to an MRF $p$, as in Definition~\ref{def:MRF}, defined on an unknown graph with hyper-edges of size at most $d$.
\begin{itemize}
	\item \textbf{Finite alphabet $\Sigma$}: Given ${{\rm poly}\left(|V|^{d}, |\Sigma|^d, \log({1\over \varepsilon})\right) \over \varepsilon^2}$ samples from $p$ we can learn some MRF~$q$ whose hyper-edges also have size at most $d$ such that $\norm{p-q}_{TV} \le \varepsilon$. If the graph on which $p$ is defined is known, then ${{\rm poly}\left(|V|,|E|, |\Sigma|^d, \log({1\over \varepsilon})\right) \over \varepsilon^2}$-many samples suffice. Moreover, the polynomial dependence of the sample complexity on $|\Sigma|^d$ cannot be improved, and the dependence on $\varepsilon$ is tight up to ${\rm poly}(\log{1 \over \varepsilon})$ factors.

	\item \textbf{Alphabet $\Sigma=[0,H]$:} If the log potentials $\phi_v(\cdot) \equiv \log \left(\psi_v(\cdot) \right)$ and $\phi_e(\cdot) \equiv \log \left(\psi_e(\cdot) \right)$ for every node~$v$ and every edge $e$ are $\CC$-Lipschitz w.r.t. the $\ell_1$-norm, then given ${{\rm poly}\left(|V|^{d^2}, {\left ( H\over \varepsilon\right)}^d,\CC^d \right) }$  samples from $p$ we can learn some MRF~$q$ whose hyper-edges also have size at most $d$ such that $\norm{p-q}_{P} \le \varepsilon$. If the graph on which $p$ is defined is known, then ${{\rm poly}\left(|V|, |E|^{d}, {\left ( H\over \varepsilon\right)}^d,\CC^d \right) }$-many samples suffice.
\end{itemize} 
Our sample complexity bounds can be easily extended to MRFs with {\em latent variables}, i.e.~to the case where some subset $V' \subseteq V$ of the variables are observable in each sample we draw from $p$. Suppose $k=|V'| \le |V|$ is the number of observable variables. In this case,
for all settings discussed above, our sample complexity bound only increases by a $k\cdot \log |V|$ multiplicative factor. 
\end{theorem}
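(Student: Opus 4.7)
The plan is to follow the cover-and-tournament strategy. First I construct a finite set $\mathcal{N}$ of MRFs forming an $\varepsilon$-cover, in the appropriate distance, of the family of interest. Then I run a tournament-style hypothesis selection algorithm (as in~\cite{devroye2012combinatorial,DaskalakisK14,AcharyaJOS14}), which given $\Theta(\log|\mathcal{N}|/\varepsilon^2)$ samples from the unknown $p$ outputs some $q \in \mathcal{N}$ with $\norm{p-q}_{TV}=O(\varepsilon)$ (or the corresponding Prokhorov statement). The claimed sample bounds then reduce to upper bounding $\log|\mathcal{N}|$ for each setting considered.

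For the finite-alphabet case, I first normalize so that every potential satisfies $\psi\le 1$ (equivalently $\phi=\log\psi\in(-\infty,0]$) and discretize each log-potential to integer multiples of a grid of width $\eta=\varepsilon/\poly(|V|,|\Sigma|^d)$; tracking how uniform perturbations of $\phi$ propagate to $p$ through the normalization constant $Z$ shows this costs only $\varepsilon$ in TV. Since discretized $\phi$ can take any value in $\eta\mathbb{Z}_{\le 0}$, a naive count gives an infinite cover, and the nontrivial saving is to restrict to log-potentials of bounded bit complexity. Viewing the log-probabilities $\log p(x)$ as affine functions (up to the unknown additive $-\log Z$) of the coefficients of $\phi_v(\cdot)$ and $\phi_e(\cdot)$, identifying an MRF in the cover reduces to a linear feasibility problem over roughly $|V|\cdot|\Sigma|+|E|\cdot|\Sigma|^d$ variables with block structure in which each constraint involves only $|V|+|E|$ coefficients. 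Standard bit-complexity bounds for rational polytopes then let me round any feasible $\phi$ to one of bit complexity $\poly(|E|,|\Sigma|^d,\log(1/\varepsilon))$ at an additional $\varepsilon$ TV cost, capping $|\mathcal{N}|$ at $\exp(\poly(|E|,|\Sigma|^d,\log(1/\varepsilon)))$. When the graph is unknown I pay an additional union bound over the at most $\binom{|V|}{d}$ possible hyperedges per MRF, recovering the $|V|^d$ factor. The matching lower bound on the $|\Sigma|^d$ dependence follows from a standard packing argument---a single hyperedge of size $d$ already carries $|\Sigma|^d$ free parameters---combined with the $\Omega(1/\varepsilon^2)$ information-theoretic lower bound for learning a single categorical distribution in TV.

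For the alphabet $\Sigma=[0,H]$ case, I first discretize $\Sigma$ into a grid of width $\eta=\varepsilon/\poly(|V|,\CC)$. The $\CC$-Lipschitzness of the log potentials translates, via Strassen's characterization (Theorem~\ref{thm:prokhorov characterization}), into an $O(\varepsilon)$ Prokhorov bound between $p$ and its restriction $p^{\mathrm{grid}}$ to the grid: coupling a sample from $p$ with its nearest grid point gives $\ell_1$-error at most $\eta|V|$ deterministically, while Lipschitzness guarantees that $p^{\mathrm{grid}}$ is realized by an MRF with the same hyperedge structure over the discretized alphabet and with potentials of controlled range. I then apply the finite-alphabet cover/tournament result with $|\Sigma_{\mathrm{grid}}|=O(H/\eta)$ to learn $p^{\mathrm{grid}}$ in TV (which upper-bounds Prokhorov distance to $p^{\mathrm{grid}}$), and close by the triangle inequality for Prokhorov distance. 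For latent variables, I run the tournament against a cover of full MRFs on $|V|$ nodes and read off the marginal on the observable $k$ nodes; the extra $k\log|V|$ factor in the sample complexity arises from union-bounding over configurations of the latent structure (at most $\binom{|V|}{k}\le |V|^k$ ways to designate which $k$ nodes are observable) and propagating the bit-complexity bounds through marginalization.

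The main obstacle is the LP bit-complexity argument: handling the coupling between the unknown normalization $Z$ and the potentials while keeping the feasibility problem at polynomial bit complexity is the most delicate step, and it is what ultimately drives both the structure of $\mathcal{N}$ and the final sample bound.
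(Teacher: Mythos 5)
Your proposal follows the same architecture as the paper's proof: a cover-plus-tournament reduction, a cover of finite-alphabet MRFs built by discretizing log-potentials and then invoking bit-complexity properties of linear programs, the lower bound via an unrestricted distribution on $d$ variables, the continuous case by rounding to a grid and exploiting $\CC$-Lipschitzness of the log-potentials together with Strassen's characterization, and latent variables by adding all size-$k$ marginals to the cover (a $|V|^k$ blow-up, hence the $k\log|V|$ factor). Two peripheral claims need small repairs: the rounded distribution $p^{\mathrm{grid}}$ is in general \emph{not} exactly an MRF on the grid with the same hyperedges (integrating a product of edge potentials over a cube does not factorize over edges); it is only $O(\varepsilon)$-close in total variation to one, which still suffices because the tournament only needs the cover to contain one good hypothesis. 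Also, with an unknown graph the grid width must be taken on the order of $\varepsilon/(d\CC|V|^d)$ rather than $\varepsilon/\poly(|V|,\CC)$, since the rounding error accumulates over all of the roughly $|V|^d$ potential hyperedges.

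The genuine gap is in the step you yourself flag as delicate and leave unresolved: bounding the bit complexity of the cover. The natural feasibility problem has as right-hand sides the target quantities $\log\left(p'(x)/p'(x^*)\right)$ over all configurations $x$, and these can be arbitrarily large in magnitude (individual potentials may be arbitrarily small, even zero), so the LP data does not have polynomial bit complexity and the ``standard bit-complexity bounds for rational polytopes'' you invoke do not apply. Moreover, those bounds give the existence of \emph{some} low-bit-complexity feasible point of a bounded-data system, not a rounding of an arbitrary feasible point, and a generic perturbation of the potentials only controls total variation if it is uniformly of order $\varepsilon/|V|^d$ --- exactly the budget your discretization has already spent. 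The paper's missing ingredient is to anchor at $x^*=\argmax_x p'(x)$ and split $\Sigma^V$ into a high-probability set (log-ratio within $T=\Theta\left(\frac{|V|^d}{\varepsilon}\left(|V|\log|\Sigma|+\log\frac{1}{\varepsilon}\right)\right)$ of the maximum) and its complement: the LP imposes equality constraints on the log-ratios over the former and only the one-sided constraint ``at most $-T$'' over the latter. This keeps every constant bounded by $T$ (hence of polynomial bit complexity), and feasibility alone certifies the TV bound, since the equalities pin the probability ratios exactly on a set of mass $1-\varepsilon$ while the inequalities cap the remaining mass at $\varepsilon$. Note that simply clamping very negative potential values at a floor does not obviously work, because the normalizing constant can itself be exponentially small, so a clamped potential need not be negligible after normalization; the anchoring at $x^*$ and the good/bad split are what make the argument go through.
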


\begin{theorem}[Learnability of Bayesnets in Total Variation and Prokhorov Distance] \label{thm:sample complexity Bayesnets finite alphabet}
Suppose we are given sample access to a Bayesnet $p$, as in Definition~\ref{def:Bayesnet}, defined on an unknown DAG with in-degree at most $d$.
\begin{itemize}
	\item  \textbf{Finite alphabet $\Sigma$}: Given $O\left({ d |V|\log |V| + |V| \cdot |\Sigma|^{d+1} \log\left({|V||\Sigma| \over \varepsilon}\right)\over \varepsilon^2}\right)$-many samples from $p$ we can learn some Bayesnet~$q$ defined on a DAG whose in-degree is also bounded by $d$ such that $\norm{p-q}_{TV} \le \varepsilon$. If the graph on which $p$ is defined is known, then $O\left({ |V| \cdot |\Sigma|^{d+1} \log\left({|V||\Sigma| \over \varepsilon}\right)\over \varepsilon^2}\right)$-many samples suffice. Moreover, the dependence of the sample complexity on $|\Sigma|^{d+1}$ and $1\over \varepsilon$  is tight up to logarithmic factors.
	\item \textbf{Alphabet $\Sigma=[0,H]$:} Suppose that the conditional probability distribution of every node $v$ is $\CC$-Lipschitz in the $\ell_1$-norm, that is, $\norm{p_{X_v | X_{\Pi_v}=\sigma} - p_{X_v | X_{\Pi_v}=\sigma'}}_{TV} \le\CC\cdot\norm{\sigma-\sigma'}_1$, $\forall v$ and $\sigma,\sigma'\in \Sigma^{\Pi_v}$. Then, given $O\left({{ d |V|\log |V|  + |V| \cdot \left({H|V|d\CC\over \varepsilon}\right)^{d+1} \log\left({|V|{Hd\CC} \over \varepsilon}\right)} \over \varepsilon^2} \right)$ -many samples from $p$, we can learn some Bayesnet~$q$ defined on a DAG whose in-degree is also bounded by $d$ such that $\norm{p-q}_{P} \le \varepsilon$. If the graph on which $p$ is defined is known, then $O\left({|V| \cdot \left({H|V|d\CC\over \varepsilon}\right)^{d+1} \log\left({|V|{Hd\CC} \over \varepsilon}\right) \over \varepsilon^2}\right)$ -many samples suffice. 
	 \end{itemize}
	Our sample complexity bounds can be easily extended to Bayesnets with {\em latent variables}, i.e.~to the case where some subset $V' \subseteq V$ of the variables are observable in each sample we draw from $p$. Suppose $k=|V'| \le |V|$ is the number of observable variables. In this case,
for all settings discussed above, our sample complexity bound only increases by a $k\cdot \log |V|$ multiplicative factor. 
\end{theorem}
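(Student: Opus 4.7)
The plan is to mirror the Markov Random Field argument described informally after Theorem~\ref{thm:sample complexity MRF finite alphabet}: construct an $\varepsilon$-net (under the relevant distribution distance) for the family of Bayesnets in question, then run a generic tournament-style hypothesis selection algorithm from \cite{devroye2012combinatorial,DaskalakisK14,AcharyaJOS14}, which distinguishes $N$ hypotheses in total variation distance using $O(\log N/\varepsilon^2)$ samples. The bulk of the work is to bound the size of a good net.

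\textbf{Finite-alphabet case.} First assume the DAG is known. For a fixed DAG $G$ of in-degree $\le d$, every Bayesnet is specified by $|V|$ conditional distributions $p_{X_v\mid X_{\Pi_v}=\sigma}$, one for each $v$ and each $\sigma\in\Sigma^{\Pi_v}$; each of these is a distribution on $|\Sigma|$ symbols. I would discretize every such conditional on a uniform probability grid of granularity $\eta := \varepsilon/(|V|\cdot|\Sigma|^{d+1})$, keeping $\le |\Sigma|$ entries per table, then renormalize. A standard chain-rule / hybrid argument along the topological order of $G$ shows that an $\eta$ change in each conditional (in total variation) produces a change of at most $|V|\cdot|\Sigma|^{d+1}\cdot\eta\le\varepsilon$ in the joint in total variation. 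The number of such discrete tables per $(v,\sigma)$ pair is $(1/\eta)^{|\Sigma|}$, so the logarithm of the cover size is $O\left(|V|\cdot|\Sigma|^{d+1}\log(|V||\Sigma|/\varepsilon)\right)$. Plugging into the $O(\log N/\varepsilon^2)$ tournament bound yields the claimed sample complexity for the known-graph case. For the unknown graph, I enumerate over all DAGs of in-degree $\le d$: there are at most $\binom{|V|}{d}^{|V|}\le |V|^{d|V|}$ of them, which adds an $O(d|V|\log|V|/\varepsilon^2)$ additive term to the sample complexity. Lower bounds follow from observing that already a single conditional distribution $p_{X_v\mid X_{\Pi_v}=\sigma}$ is a $|\Sigma|$-ary distribution, and the Bayesnet can embed $|V|\cdot|\Sigma|^{d+1}$ essentially independent such distributions, making the $|\Sigma|^{d+1}/\varepsilon^2$ dependence unavoidable up to logs.

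\textbf{Continuous alphabet $[0,H]$ with $\CC$-Lipschitz conditionals.} Here total variation is hopeless from finitely many samples, so I target Prokhorov distance. The plan is to first round each coordinate to a uniform grid of width $\delta := \varepsilon/|V|$ on $[0,H]$, producing a Bayesnet on the finite alphabet $\Sigma_\delta$ of size $\lceil H/\delta\rceil$. Two observations make this work: (i) rounding moves each coordinate by at most $\delta$, so the $\ell_1$ distance between the rounded and original samples is at most $|V|\delta\le\varepsilon$, hence by Theorem~\ref{thm:prokhorov characterization} the rounded distribution is within Prokhorov distance $\varepsilon$ of the original; and (ii) the Lipschitz assumption $\norm{p_{X_v\mid X_{\Pi_v}=\sigma}-p_{X_v\mid X_{\Pi_v}=\sigma'}}_{TV}\le\CC\norm{\sigma-\sigma'}_1$ ensures that the rounded Bayesnet is again in the same Bayesnet family with the finite alphabet $\Sigma_\delta$ and only $\CC\cdot H$-bounded parameters. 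Applying the finite-alphabet result with $|\Sigma|=|\Sigma_\delta|=O(H|V|/\varepsilon)$ and working through the bookkeeping (the $(H|V|d\CC/\varepsilon)^{d+1}$ factor arises from $|\Sigma_\delta|^{d+1}$ after absorbing the $\CC$ in a finer discretization of the conditionals) yields the stated bound.

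\textbf{Latent variables.} When only $k=|V'|$ coordinates are observable, the observed distribution is the marginal $p_{V'}$. Since TV (and Prokhorov) distance is non-increasing under marginalization, a cover of the joint is automatically a cover of the marginal, and the tournament algorithm runs on the observed samples. The only extra cost is enumerating which $k$-subset of nodes is observable, which adds $\log\binom{|V|}{k}\le k\log|V|$ to the log-cover-size, yielding the claimed multiplicative $k\log|V|$ penalty.

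\textbf{Main obstacle.} The subtle step is step (i) of the continuous case: bounding the Prokhorov-cover size requires simultaneously discretizing the alphabet (so that we can enumerate a finite family) and discretizing the conditional probability tables on that finer alphabet, and the $\CC$-Lipschitz hypothesis must be used carefully to argue that the true distribution lies close (in Prokhorov) to some element of the cover. Making sure the error accumulation along the topological order of the DAG remains $O(\varepsilon)$ rather than blowing up exponentially in $|V|$ is the key quantitative check; this is exactly where the hybrid-coupling argument (replacing conditionals one node at a time) must be executed with the Lipschitz continuity of each conditional taken in total variation.
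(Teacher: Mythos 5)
Your overall architecture -- an explicit cover of the hypothesis class plus a tournament-style selection algorithm, a hybrid argument along a topological order to propagate per-conditional TV error to the joint, enumeration of DAGs for the unknown-graph case, and enumeration of $k$-subsets (marginals of cover elements) for latent variables -- is exactly the paper's route, and the finite-alphabet and latent-variable parts go through essentially as you describe. Two quantitative remarks on the finite case: the hybrid argument charges each node only the worst-case TV error over its parent configurations, not the sum over them, so per-conditional accuracy $\varepsilon/|V|$ already suffices; your choice $\eta=\varepsilon/(|V|\cdot|\Sigma|^{d+1})$ is needlessly fine and costs roughly an extra factor $d+1$ in the log-cover size (harmless up to logarithms, but the tighter accounting is what yields the displayed bound). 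Also the lower bound is simpler than your ``embedding'' sketch: a single arbitrary distribution on $d+1$ variables is already a Bayesnet of in-degree $d$, and learning it requires $\Omega(|\Sigma|^{d+1}/\varepsilon^2)$ samples.

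The genuine gap is your claim (ii) in the continuous case: the coordinate-wise rounding $\tilde p$ of a Bayesnet $p$ is in general \emph{not} a Bayesnet on the same DAG with alphabet $\Sigma_\delta$ -- conditioning on rounded parent values does not screen off the other non-descendants, because the posterior over the exact parent values inside their cells depends on them -- and the Lipschitz hypothesis does not repair this. The correct route (the paper's) inserts an intermediate continuous Bayesnet $p'$ whose conditionals are $p'_{X_v|X_{\Pi_v}=\sigma}\equiv p_{X_v|X_{\Pi_v}=\lfloor\sigma\rfloor_\delta}$; Lipschitzness gives per-conditional TV error at most $\CC d\delta$, so the hybrid lemma yields $\norm{p-p'}_{TV}\le\varepsilon$ provided $\delta=\varepsilon/(d\CC|V|)$ (your $\delta=\varepsilon/|V|$ omits the $d\CC$ factor, which is precisely where the $(H|V|d\CC/\varepsilon)^{d+1}$ term in the statement comes from). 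The rounding of $p'$ \emph{is} a discrete Bayesnet $p''$ on the same DAG, and since rounding is a deterministic map, $\norm{\tilde p-p''}_{TV}\le\norm{p-p'}_{TV}\le\varepsilon$. The point is that $\tilde p$ itself need not belong to the family: the tournament only needs $\tilde p$ to be $O(\varepsilon)$-close in TV to some member of the finite cover of discrete Bayesnets, and the final Prokhorov guarantee for $p$ follows by composing the TV coupling of $q$ with $\tilde p$ and the $\ell_1$-rounding coupling of $\tilde p$ with $p$, via Theorem~\ref{thm:prokhorov characterization}. Your ``main obstacle'' paragraph points at the right place, but as written the argument rests on membership of $\tilde p$ in the Bayesnet family rather than on this one-node-at-a-time replacement, and that membership claim is false.
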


In our proof of Theorem~\ref{thm:sample complexity MRF finite alphabet}, we first carefully construct an $\varepsilon$-net over all MRFs with hyperedges of size at most $d$  in either total variation distance or Prokhorov distance, then apply a tournament-style  density estimation algorithm~\cite{devroye2012combinatorial,DaskalakisK14,AcharyaJOS14} to learn a distribution from the $\varepsilon$-net that is at most $O(\varepsilon)$ away from the true distribution using polynomially many samples. Our proof of Theorem~\ref{thm:sample complexity Bayesnets finite alphabet} follows  a similar recipe. The main difference is how we construct the $\varepsilon$-net over all Bayesnets with in-degree at most $d$. Both proofs are presented in Appendix~\ref{appx:sample}.

\bibliographystyle{plain}	
\bibliography{Yang}
\newpage
\appendix
\section{Further Related Work} \label{sec:related}
As described earlier, most prior work on learning multi-item auctions follows a PAC-learning approach, bounding the statistical complexity of classes of mechanisms that are (approximately) optimal for the setting of interest. The statistical complexity measures that are used for this purpose are the standard notions of pseudodimension, which generalizes VC dimension to real valued functions, and Rademacher complexity. In particular, Morgenstern and Roughgarden~\cite{MorgensternR16} and Syrgkanis~\cite{Syrgkanis17} bound respectively the pseudodimension and Rademacher complexity of simple classes of mechanisms that have been shown in the literature to contain approximately optimal mechanisms 
in  multi-item multi-bidder settings satisfying item-independence~\cite{ChawlaHMS10, BabaioffILW14, Yao15, CaiDW16, CaiZ17}. 
The classes of mechanisms studied by these works
contain
approximately 
optimal mechanisms in multi-item settings with item-independence and either multiple unit-demand/additive bidders,
or a single subadditive bidder.
More powerful classes of simple mechanisms are also known in the literature. The state-of-the-art is the \emph{sequential two-part tariff mechanism}  considered by Cai and Zhao~\cite{CaiZ17}, which is shown to approximate the optimal revenue in multi-item settings even with multiple bidders whose valuations are fractionally subadditive, again under item-independence. Unfortunately, both the pseudodimension and the empirical Rademacher complexity of sequential two-part tariff mechanisms are already exponential even in two bidder settings, making these measures unsuitable tools for showing the learnability of two-part tariff mechanisms. 

An important feature of the afore-described works is that bounding the pseudo-dimension or empirical Rademacher complexity of mechanism classes is oblivious to the structure in the distribution. Hence, while the mechanisms considered in these works are only approximately optimal under item-independence, the independence cannot be exploited. In contrast to empirical Rademacher complexity, Rademacher complexity {\em is} sensitive to the underlying distribution, but bounds exploiting the structure of the  distribution are not easy to obtain.
%
%
This observation motivated another line of work which
	heavily exploits the structure of the distributions of interest to choose both the class of mechanisms {\em and} the statistical complexity measure to bound their learnability. 
	So far, this approach has only been applied to settings satisfying item-independence.
	Indeed, Cai and Daskalakis~\cite{CaiD17} propose a statistical complexity measure that is tailored to product distributions, and use their new measure to establish  learnability of sequential two-part tariff mechanisms under item-independence. Gonczarowski and Weinberg~\cite{GonczarowskiW18}  choose a finite class of mechanisms so that an up-to-$\varepsilon$ optimal mechanism is guaranteed to exist in the class. 
	For item-independent distributions, the size of this class is only singly exponential implying polynomial sample learnability. Unfortunately, the size becomes doubly exponential for correlated items turning the sample complexity exponential. 
	
Finally,
Goldner and Karlin~\cite{GoldnerK16} do not use a PAC-learning based approach. They show how to learn approximately optimal auctions in the multi-item multi-bidder setting with additive bidders using only one sample from each bidder's distribution, assuming that it is regular and independent across items. Their approach is tailored for a mechanism designed by Yao~\cite{Yao15} and does not apply to broader settings.

\section{Additional Preliminaries}\label{sec:add_prelim}
\begin{definition}[Total Variation Distance]\label{def:TV}
The {\em total variation distance} between two probability measures $P$ and $Q$ on a $\sigma$-algebra $\cal F$ of subsets of some sample space $\Omega$, denoted $|| P-Q||_{TV}$, is defined as 
$$\sup_{E \in {\cal F}}\left\lvert P({E})-Q({E})\right\rvert.$$
\end{definition}

\begin{definition}[Kolmogorov Distance] \label{def:Kolm}
The {\em Kolmogorov distance} between two distributions $P$ and $Q$ over $\mathbb{R}$, denoted $\norm{P-Q}_K$, is defined as 
$$\sup_{x \in \mathbb{R}}\left\lvert\Pr_{X \sim P}[X \le x]-\Pr_{X \sim Q}[X \le x]\right\rvert.$$
\end{definition}

\begin{definition}[L\'evy Distance]\label{def:levy}
Let $\DD_1$ and $\DD_2$ be two probability distributions on $\mathbb{R}$ with cumulative distribution functions $F$ and $G$ respectively. Then we denote their L\'evy distance by

$$\norm{\DD_1-\DD_2}_L = \inf\left\{\varepsilon>0 : F(x-\varepsilon)-\varepsilon  \leq G(x) \leq F(x+ \varepsilon)+\varepsilon,  \ \forall x \in \mathbb{R} \right\}$$
\end{definition}

\paragraph{Multi-item Auctions:} We focus on revenue maximization in the combinatorial auction with \textbf{$n$ bidders} and \textbf{$m$ heterogenous items}. 

The outcomes of the auction lie in $X \subseteq \{0,1\}^{n \cdot m}$ such that for any allocation $x \in X$, $x_{i,j}$ is the probability that bidder $i$ receives item $j$. Formally, $X=\left\{ (x_{i,j})_{i\in[n], j\in [m]} \in \{0,1\}^{nm} \mid \forall j:\sum_{i=1}^n x_{i,j} \leq 1\right\}$. Each bidder $i\in [n]$ has a valuation function  $v_i(\cdot): X \rightarrow \mathbb{R}$ that maps an allocations of items to a real number. In this paper, we assume the function $v_i(\cdot)$ is parametrized by $(v_{i,1},\ldots,v_{i,m})$, where $v_{i,j}$ is bidder $i$'s value for item $j$. We will refer to the vector $(v_{i,1},\ldots,v_{i,m})$ as bidder $i$'s type, and we assume that \emph{each bidder's type is drawn independently} from some distribution.~\footnote{We will not explicitly write bidder $i$'s valuation as $v_{i,\bold{v_i}}(\cdot)$ where $\bold{v_i}=(v_{i,1},\ldots,v_{i,m})$.} Throughout this paper, we assume all bidders types lie in $[0,H]^m$.

\paragraph{Mechanisms, Payments, and Utility:} We use $p=(p_1,\ldots,p_n)$ to specify the payments for the bidders. Given some prices $p=(p_1,\ldots,p_n)$, allocation $x$ and type $v_i$, denote the quasilinear utility of bidder $i\in [n]$ by $u_i(v_i,(x,p)) = v_i(x)-p_i$.  Let $M=\left(x(\cdot),p(\cdot)\right)$ be a mechanism with allocation rule $x(\cdot)$ and payment rule $p(\cdot)$. For any input bid vector $b=(b_1,\ldots,b_n)$, the allocation rule outputs a distribution over allocations $x(b)\in \Delta(X)$ and payments $p(b) = \left(p_1(b),\ldots, p_n(b) \right)$. Then $u_i(v_i,M(b)) = v_i(x(b)) - p_i(b)$. If bidder $i$'s type is $v_i$, then her utility under input bid vector $b$ is $u_i\left(v_i, M(b)\right)=\E\left[v_i\left(x(b)\right)-p_i(b)\right]$, where the expectation is over the randomness of the allocation and payment rule. 
\paragraph{$\varepsilon$-Incentive Compatible and Individually Rational:}
\begin{itemize}
	\item \emph{Ex-post Individually Rational (IR):} $M$ is \emph{IR} if for all types $v \in [0,H]^{n \cdot m}$ and all bidders $i\in [n]$, $$u_i(v_i,M(v_i,v_{-i})) \geq 0.$$
	\item \emph{$\varepsilon$-Dominant Strategy Incentive Compatible ($\varepsilon$-DSIC):}  if for all $i\in [n]$, $v\in [0,H]^{n\cdot m}$ and potential misreports $v'_i \in [0,H]^m$ of bidder $i$, $u_i(v_i,M(v_i,v_{-i}))\geq u_i(v_i,M((v'_i,v_{-i}))) -\varepsilon$.	A mechanism is DSIC if it is $0$-DSIC.
	\item  \emph{$\varepsilon$-Bayesian Incentive Compatible ($\varepsilon$-BIC):} if bidders draw their values from some distribution $\FF = (\FF_1,\ldots,\FF_n)$, then 
define $M$ to be \emph{$\varepsilon$-BIC with respect to $\FF$} if 
$$\mathbb{E}_{v_{-i} \sim \FF_{-i}}[u_i(v_i,M(v_i,v_{-i}))] \geq \mathbb{E}_{v_{-i} \sim \FF_{-i}}[u_i(v_i,M(v'_i,v_{-i}))] -\varepsilon,$$ for all potential misreports $v'_i$, in expectation over all other bidders bid $v_{-i}$. A mechanism is BIC if it is $0$-BIC.
\end{itemize}

If there is only one bidder, the definition of DSIC coincides with the definition of BIC, and we simply use $\varepsilon$-IC to describe the incentive compatibility of single bidder mechanisms.

In single-bidder case, there is a well known transformation, Lemma~\ref{lem:eps-IC to IC}, that maps any $\varepsilon$-IC mechanism to an IC mechanism with negligible revenue loss. To the best of our knowledge, the result is attributed Nisan in~\cite{ChawlaHK07, HartN13, GonczarowskiW18} and many other papers.

\begin{lemma}[Nisan, circa 2005]\label{lem:eps-IC to IC}
	Let $M$ be an IR and $\varepsilon$-IC mechanism for a single bidder, and $\DD$ be the bidder's type distribution. Modifying each possible allocation and payment pair by multiplying the payment by $1-\sqrt{\varepsilon}$ and letting the bidder choose the (modified) allocation and payment pair that maximizes her utility yields an IR and IC mechanism $M'$ with expected revenue at least $(1-\sqrt{\varepsilon})(\revT(M,\DD)-\sqrt{\varepsilon})$. Importantly, the modification does not require any knowledge of $\DD$. 
\end{lemma}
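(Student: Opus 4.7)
The plan is to analyze $M'$ as a menu-based (``taxation principle'') mechanism in which the seller offers the bidder the option to pick any $(x(v),(1-\sqrt{\varepsilon})p(v))$ from the menu $\{(x(t),(1-\sqrt{\varepsilon})p(t)) : t \in \supp(\DD)\}$. Since the bidder chooses a utility-maximizing entry from a fixed menu that does not depend on her reported type, IC follows for free. For IR, observe that the bidder of true type $v$ is always free to pick the entry $(x(v),(1-\sqrt{\varepsilon})p(v))$, which gives her utility $v(x(v))-(1-\sqrt{\varepsilon})p(v)\geq v(x(v))-p(v)\geq 0$ by the IR of $M$. Thus $M'$ is automatically IR and IC without any knowledge of $\DD$.

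The heart of the argument is the revenue comparison. Fix a true type $v$ and let $v'$ denote the type whose menu entry the bidder picks in $M'$. I would write down two inequalities: (a) the utility-maximization condition in $M'$,
\[
v(x(v'))-(1-\sqrt{\varepsilon})\,p(v') \;\geq\; v(x(v))-(1-\sqrt{\varepsilon})\,p(v),
\]
and (b) the $\varepsilon$-IC condition of $M$ applied to type $v$ deviating to $v'$,
\[
v(x(v))-p(v) \;\geq\; v(x(v'))-p(v')-\varepsilon.
\]
Rewriting (a) as $v(x(v'))-v(x(v))\geq (1-\sqrt{\varepsilon})(p(v')-p(v))$ and (b) as $v(x(v'))-v(x(v))\leq p(v')-p(v)+\varepsilon$, and chaining the two, the $v(x(v'))-v(x(v))$ terms cancel, leaving $-\sqrt{\varepsilon}(p(v')-p(v))\leq \varepsilon$, i.e.~$p(v')\geq p(v)-\sqrt{\varepsilon}$.

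Therefore the payment collected from bidder of type $v$ in $M'$ is at least $(1-\sqrt{\varepsilon})(p(v)-\sqrt{\varepsilon})$. Taking expectation over $v\sim \DD$ yields
\[
\rev(M',\DD) \;\geq\; (1-\sqrt{\varepsilon})\bigl(\revT(M,\DD)-\sqrt{\varepsilon}\bigr),
\]
which is the desired bound. The only delicate step is the algebraic manipulation that combines (a) and (b): the factor $1-\sqrt{\varepsilon}$ in the discount is precisely what makes the loss term $\sqrt{\varepsilon}\cdot(p(v')-p(v))$ dominate the $\varepsilon$ slack from approximate incentive compatibility, so the choice of $\sqrt{\varepsilon}$ (rather than any other function of $\varepsilon$) is the one point where the constant in the statement is genuinely forced. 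Everything else is a routine expectation.
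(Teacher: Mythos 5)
Your proposal is correct and is exactly the standard argument: the paper does not prove this lemma itself but defers to~\cite{GonczarowskiW18}, and the proof there is the same menu-plus-discount construction, chaining the utility-maximization inequality in $M'$ with the $\varepsilon$-IC inequality of $M$ to get $p(v')\geq p(v)-\sqrt{\varepsilon}$, just as you do. The only nitpick is your IR step, where $v(x(v))-(1-\sqrt{\varepsilon})p(v)\geq v(x(v))-p(v)$ implicitly uses $p(v)\geq 0$; either note that payments are nonnegative or include the non-participation option $\perp$ (no allocation, zero payment) in the menu, which also resolves the slight tension between defining the menu over $\supp(\DD)$ and the claim that the modification needs no knowledge of $\DD$.
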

Interested readers can find a proof of Lemma~\ref{lem:eps-IC to IC} in~\cite{GonczarowskiW18}.

\paragraph{Up-to-$\varepsilon$ Optimal Mechanisms:} We say a mechanism $M$ is up-to-$\varepsilon$ optimal under distribution $\DD$, if $$\rev_T(M,\DD)\geq \opt(\DD)-\varepsilon.$$

\section{Missing Proofs from Section~\ref{sec:single}}\label{appx:single}

\begin{prevproof}{Theorem}{thm:single-kolmogorov}
	We prove the claim using a hybrid argument. We construct a collection of distributions, where $\DD^{(0)} = \DD$, $\DD^{(i)}= \hDD_1\times\cdots\times \hDD_i\times\DD_{i+1}\times \cdots\times \DD_n$ for all $1\leq i < n$, and $\DD^{(n)}=\hDD$. We first show the following claim
	\begin{claim}\label{clm:hybrid-single-kolmogorov}
		\begin{equation*}
		\opt\left (\DD^{(i)}\right)\geq \opt\left(\DD^{(i-1)}\right)-3H\cdot \varepsilon,
	\end{equation*}	for all $i\in[n]$.
	\end{claim}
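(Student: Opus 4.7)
My plan is to choose a specific near-optimal mechanism for $\DD^{(i-1)}$ that is DSIC and IR, and therefore is also a feasible DSIC and IR mechanism for $\DD^{(i)}$, and then bound the change in its expected revenue when the $i$-th marginal is swapped out. Concretely, I will let $M$ be Myerson's optimal DSIC and IR single-item auction for $\DD^{(i-1)}$, appropriately extended per Definition~\ref{def:mechanism-extension} to accept any bid in $(-\infty,H]^n$. Because DSIC and IR are pointwise properties of $M$ that do not depend on the underlying distribution, $M$ will remain DSIC and IR on $\DD^{(i)}$, so $\opt(\DD^{(i)}) \ge \rev(M,\DD^{(i)})$. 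It thus suffices to show $\rev(M,\DD^{(i)}) \ge \rev(M,\DD^{(i-1)}) - 3H\varepsilon$, since $\rev(M,\DD^{(i-1)}) = \opt(\DD^{(i-1)})$.

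\textbf{Reducing to a conditional single-bidder comparison.} Because $\DD^{(i-1)}$ and $\DD^{(i)}$ share all marginals except the $i$-th, I will condition on $v_{-i}$ (which has the same distribution under both) to get
\[
\rev(M,\DD^{(i-1)}) - \rev(M,\DD^{(i)}) = \E_{v_{-i}}\!\Bigl[\E_{v_i\sim\DD_i}[P(v_i,v_{-i})] - \E_{v_i\sim\hDD_i}[P(v_i,v_{-i})]\Bigr],
\]
where $P(v_i,v_{-i}) := \sum_j p_j(v_i,v_{-i})$ is the total payment collected on bid profile $(v_i,v_{-i})$. The key structural fact I will use about any DSIC and IR single-item auction is that, for fixed $v_{-i}$, the function $v_i \mapsto P(v_i,v_{-i})$ is a step function with a single jump: there is a threshold $\tau(v_{-i})\in[0,H]$ such that for $v_i<\tau$ bidder $i$ loses and $P$ equals a constant $P_L(v_{-i})\in[0,H]$ (the payment of whichever other bidder wins, if any), while for $v_i\ge\tau$ bidder $i$ wins and pays $\tau(v_{-i})$ while no other bidder pays. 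In particular, $P(\cdot,v_{-i})$ takes values in $[0,H]$ and has total variation at most $H$.

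\textbf{Kolmogorov bound and conclusion.} A one-line calculation on the step function $g(x)=a\cdot\mathbf{1}[x<t]+b\cdot\mathbf{1}[x\ge t]$ yields $\E_{\DD_i}[g] - \E_{\hDD_i}[g] = (a-b)\bigl(F_{\DD_i}(t^-) - F_{\hDD_i}(t^-)\bigr)$, hence $|\E_{\DD_i}[g] - \E_{\hDD_i}[g]| \le |a-b|\cdot\norm{\DD_i-\hDD_i}_K \le H\varepsilon$. Applying this to $g(\cdot)=P(\cdot,v_{-i})$ and averaging the conditional bound over $v_{-i}$ gives $|\rev(M,\DD^{(i-1)}) - \rev(M,\DD^{(i)})| \le H\varepsilon$, which comfortably implies the desired $3H\varepsilon$ slack and hence the claim; summing over $i=1,\ldots,n$ then telescopes to the theorem. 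The main subtlety I anticipate is verifying the step-function structure uniformly for the extended mechanism: when a bid $v_i<0$ lies below the support of $\DD_i$, Definition~\ref{def:mechanism-extension} forces no allocation and zero payment for bidder $i$, so bidder $i$ is indeed ``losing'' and the structure persists; if Myerson's mechanism uses randomized tie-breaking, the step-function structure holds in expectation over the randomness, which is all that is needed for the revenue estimate.
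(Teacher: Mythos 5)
Your overall architecture matches the paper's: take the optimal DSIC and IR mechanism for $\DD^{(i-1)}$, extend it so it accepts arbitrary bids, observe that DSIC/IR are distribution-free so the extension is feasible for $\DD^{(i)}$, and then bound the change in its expected revenue when only bidder $i$'s marginal is swapped, using $\norm{\DD_i-\hDD_i}_K\le\varepsilon$. The gap is in your ``key structural fact.'' It is not true for a general deterministic DSIC single-item auction (and in particular not for Myerson's auction with asymmetric virtual value functions) that, for fixed $v_{-i}$, the total payment $P(\cdot,v_{-i})$ is a two-piece step function that is constant on the losing region. When bidder $i$ loses but her (virtual) bid is the second highest, she is the price-setter for the winner $\ell^*$: the winner's payment is roughly $\inf\{z:\mu_{\ell^*}(z)\ge \mu_i(v_i)\}$, which varies with $v_i$ throughout an entire subinterval of the losing region. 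Already in a three-bidder second-price auction with $v_{-i}=(5,3)$ the total payment equals $3$ for $v_i<3$, equals $v_i$ for $3\le v_i<5$, and equals $5$ afterwards. Consequently your one-line computation for $g(x)=a$ below a threshold and $b$ above it does not apply to the actual payment function, and the claimed per-swap bound of $H\varepsilon$ (and even the parenthetical ``total variation at most $H$'') is unjustified; at the crossover where bidder $i$ starts winning, the payment can also jump downward, so the total variation can be as large as $2H$.

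This is exactly the point where the paper does the real work: it splits the revenue into bidder $i$'s own payment (a genuine threshold payment, giving a change of at most $H\varepsilon$) and the payment of the winner among the other bidders, and for the latter it shows that for each $t$ the event $\{p_{\ell^*}(v)>t\}$ corresponds to $v_i$ lying in an interval $[a(v_{-i}),b(v_{-i})]$, whose probability changes by at most $2\varepsilon$ under a Kolmogorov-$\varepsilon$ perturbation; integrating $t$ over $[0,H]$ gives $2H\varepsilon$, and summing the two parts yields the $3H\varepsilon$ in the claim. Your route can be repaired in the same spirit --- either reproduce that interval argument, or replace your step-function computation by an integration-by-parts bound $\left\lvert\int g\, d(F_{\DD_i}-F_{\hDD_i})\right\rvert\le \varepsilon\cdot\mathrm{TV}(g)$ with the correct total-variation bound $\mathrm{TV}(P(\cdot,v_{-i}))\le 2H$, obtained by analyzing the three regimes (constant, price-setting, winning) separately --- but as written the central step is false and the proof does not go through.
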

	 	\begin{proof}
	 	W.l.o.g, we can assume the optimal mechanism for $\DD^{(i-1)}$ is a deterministic. We use $M=(x,p)$ to denote it. In particular, there exists a collection of monotone non-decreasing functions $\{\mu_j(\cdot)\}_{\{j\in[n]\}}$ such that $\mu_j: \supp\left(\DD_j^{(i-1)}\right)\mapsto \mathbb{R}$.  We extend the function $\mu_j(\cdot)$ to the whole interval $(-\infty,H]$. We slightly abuse notation and still call the extended function $\mu_j(\cdot)$. For any $z\in \supp\left(\DD_j^{(i-1)}\right)$, $\mu_j(x)$ remains the same. For any $z> \inf\supp\left(\DD_j^{(i-1)}\right) $, let $$\mu_j(z)=\sup\left\{\mu_j(w)\mid w\leq z \text{ and } w\in \supp\left(\DD_j^{(i-1)}\right)\right\}.$$ If $z\leq \inf\supp\left(\DD_j^{(i-1)}\right)$ and $\notin \supp\left(\DD_j^{(i-1)}\right)$, let $\mu_j(z)=-\infty$. 

	 	 Now we define a mechanism $M'=(x',p')$ for $\DD^{(i)}$ based on the extended $\{\mu_j(\cdot)\}_{\{j\in[n]\}}$. For every profile $v$, let the bidder $j^*$ with the highest positive $\mu_j(v_j)$ be the winner. If no bidder $j$ has positive $\mu_j(v_j)$, the item is unallocated. When there are ties, break the tie in alphabetical order. Clearly, the allocation rule is monotone. According to Myerson's payment identity, if a bidder wins the item, she should pay $\inf\{z \mid \text{$z$ is a winning bid}\}$.

	 	To prove the claim, we demonstrate the following two statements: for every fixed $v_{-i} $ \textbf{(A1:)} bidder $i$'s expected payments under $\DD^{(i)}$ and $\DD^{(i-1)}$ are within $O(H\cdot \varepsilon)$; \textbf{(A2:)} the total expected payments of all bidders except $i$ under $\DD^{(i)}$ and $\DD^{(i-1)}$ are within $O(H\cdot \varepsilon)$. We first prove A1.
	 	
	 	\paragraph{Proof of A1:} For every fixed $v_{-i}$, let $\ell^* = \argmax_{\ell\neq i } \mu_\ell(v_\ell)$. For bidder $i$ to win the item, $\mu_i(v_i)$ needs to be greater than $\mu_{\ell^*}(v_{\ell^*})$. Therefore, there exists a threshold $\theta(v_{-i})$ for every fixed $v_{-i}$, such that bidder $i$ wins the item iff $v_i\geq \theta(v_{-i})$.  
	 	Clearly, \begin{equation*}
 	\E_{v_i\sim \hDD_i}[p'_i(v_i,v_{-i})]=\theta(v_{-i})\cdot \Pr_{v_i\sim \hDD_i}\left[v_i\geq \theta(v_{-i})\right],
 	 \end{equation*}
 	 and
\begin{equation*}
 		\E_{v_i\sim \DD_i}[p_i(v_i,v_{-i})]=\theta(v_{-i})\cdot \Pr_{v_i\sim \DD_i}\left[v_i\geq \theta(v_{-i})\right].
 \end{equation*}
 Since $\norm{\DD_i-\hDD_i}_K\leq \varepsilon$, $\left\lvert \Pr_{v_i\sim \hDD_i}\left[v_i\geq \theta(v_{-i})\right]-\Pr_{v_i\sim \DD_i}\left[v_i\geq \theta(v_{-i})\right]\right\rvert\leq \varepsilon$, which implies that
\begin{align*}
	&\left\lvert\E_{v\sim \DD^{(i)}}[p'_i(v)]-\E_{v\sim \DD^{(i-1)}}[p_i(v)]\right\rvert\\
	\leq &~ \E_{v_{-i}\sim \DD^{(i)}_{-i}}\left[\left\lvert \E_{v_i\sim \hDD_i}[p'_i(v_i,v_{-i})]-\E_{v_i\sim \DD_i}[p_i(v_i,v_{-i})]\right\rvert \right]\\
	\leq &~ \E_{v_{-i}\sim \DD^{(i)}_{-i}}\left[\theta(v_{-i})\cdot \varepsilon\right]\\
	\leq &~ H\cdot\varepsilon
\end{align*}
This completes the argument for statement A1. Next, we prove statement A2.

	\paragraph{Proof of A2:}

Since there is only one item, only the winner $\ell^*$ has non-zero payment and $\sum_{\ell\neq i}p_\ell(\textbf{v})= p_{\ell^*}(v)$ for any $v_i$.  Our goal now is to bound the difference between $\E_{v_i\sim \DD_i}\left[p_{\ell^*}(v)\right]$ and $\E_{v_i\sim \hDD_i}\left[p'_{\ell^*}(v)\right]$. Note that \begin{equation*}
 	\E_{v_i\sim \DD_i}\left[p_{\ell^*}(v)\right]=\int_0^H \Pr_{v_i\sim \DD_i}[p_{\ell^*}(v) > t] dt.  
 \end{equation*}
When $\mu_{\ell^*}(t) \geq \mu_{\ell^*}(v_{\ell^*})$, $\Pr_{v_i\sim \DD_i}[p_{\ell^*}(v) > t] =0$, so we only consider the case where $\mu_{\ell^*}(t) < \mu_{\ell^*}(v_{\ell^*})$. Let $\alpha = \max_{\ell\neq i \text{ or } \ell^*}\mu_{\ell}(v_{\ell})$. $p_{\ell^*}(v) > t$ is equivalent to having  $\max\{\alpha, \mu_{i}(v_{i})\} > \mu_{\ell^*}(t)$ and $\mu_{i}(v_{i}) < \mu_{\ell^*}(v_{\ell^*})$ if $\ell^*> i$ (or $\mu_{i}(v_{i}) \leq \mu_{\ell^*}(v_{\ell^*})$ if $\ell^*< i$). Since $\mu_i(\cdot)$ is monotone, it is not hard to observe that this is equivalent to having $v_i$ lying in some interval that only depends on $v_{-i}$. Let the lower bound of the interval be $a(v_{-i})$ and the upper bound be $b(v_{-i})$. Similarly, we know \begin{equation*}
 	\E_{v_i\sim \hDD_i}\left[p_{\ell^*}(v)\right]=\int_0^H \Pr_{v_i\sim \hDD_i}[p'_{\ell^*}(v) > t] dt,
 \end{equation*}
 and $\Pr_{v_i\sim \hDD_i}[p'_{\ell^*}(v) > t] $ is also the probability that $v_i$ lies between $a(v_{-i})$ and $b(v_{-i})$. Since $\norm{\DD_i-\hDD_i}_K\leq \varepsilon$, $\left \lvert\Pr_{v_i\sim \DD_i}[p_{\ell^*}(v) > t] -\Pr_{v_i\sim \hDD_i}[p'_{\ell^*}(v) > t]\right\rvert\leq 2\varepsilon$ for all $t\in[0,H]$, and 
 \begin{equation*}
 	\left\lvert \E_{v_i\sim \DD_i}\left[p_{\ell^*}(v)\right]-\E_{v_i\sim \hDD_i}\left[p'_{\ell^*}(v)\right]\right\rvert\leq H\cdot 2\varepsilon.
 	 \end{equation*}

 	Combining statement (i) and (ii), we complete the proof.
 	\end{proof}
 	By Claim~\ref{clm:hybrid-single-kolmogorov}, it is clear that \begin{equation*}
 		\opt\left(\hDD\right)=\opt\left(\DD^{(n)}\right)\geq \opt\left(\DD^{(0)}\right)-3nH\cdot \varepsilon=\opt\left(\DD\right)-3nH\cdot \varepsilon
 	\end{equation*}
	\end{prevproof}
	
\notshow{

\begin{theorem}\label{thm:single-kolmogorov}
	For any buyer $i$, let $\DD_i$ and $\hDD_i$ be two arbitrary distributions supported on $(-\infty,H]$ such that $\norm{\DD_i-\hDD_i}_K\leq \varepsilon$. If $M$ is the optimal DSIC and IR mechanism w.r.t. $\DD$, then there exists a DSIC and IR mechanism $\widehat{M}$ such that
	\begin{equation*}
		\rev(\widehat{M},\hDD)\geq \rev(M,\DD)-2nH\cdot \varepsilon.
	\end{equation*}
	 where $\hDD = \bigtimes_{i=1}^n \hDD_i$.
\end{theorem}
\begin{proof}
	We prove the claim using a hybrid argument. We construct a collection of distributions, where $\DD^{(0)} = \DD$, $\DD^{(i)}= \hDD_1\times\cdots\times \hDD_i\times\DD_{i+1}\times \cdots\times \DD_n$ for all $1\leq i < n$, and $\DD^{(n)}=\hDD$. We first show the following claim
	\begin{claim}\label{clm:hybrid-single-kolmogorov}
		\begin{equation*}
		\rev(M,\DD^{(i)})\geq \rev(M,\DD^{(i-1)})-2H\cdot \varepsilon,
	\end{equation*}	for all $i\in[n]$.
	\end{claim}
	 	\begin{proof}
	 	To prove the claim, we demonstrate the following two statements: (i) bidder $i$'s expected payments under $\DD^{(i)}$ and $\DD^{(i-1)}$ are within $O(H\cdot \varepsilon)$; (ii) the total expected payments of all bidders except $i$ under $\DD^{(i)}$ and $\DD^{(i-1)}$ are within $O(H\cdot \varepsilon)$.
	 	
	 	\paragraph{Statement (i):}W.l.o.g. we can assume that  $M$ is a deterministic. Therefore, there exists a threshold $\theta(v_{-i})$ for every fixed $v_{-i}$, such that bidder $i$ wins the item w.p. $1$ iff $v_i\geq \theta(v_{-i})$.  
	 	Clearly, \begin{equation*}
 	\E_{v_i\sim \hDD_i}[p_i(v_i,v_{-i})]=\theta(v_{-i})\cdot \Pr_{v_i\sim \hDD_i}\left[v_i\geq \theta(v_{-i})\right],
 	 \end{equation*}
 	 and
\begin{equation*}
 		\E_{v_i\sim \DD_i}[p_i(v_i,v_{-i})]=\theta(v_{-i})\cdot \Pr_{v_i\sim \DD_i}\left[v_i\geq \theta(v_{-i})\right].
 \end{equation*}
 Since $\norm{\DD_i-\hDD_i}_K\leq \varepsilon$, $\left\lvert \Pr_{v_i\sim \hDD_i}\left[v_i\geq \theta(v_{-i})\right]-\Pr_{v_i\sim \DD_i}\left[v_i\geq \theta(v_{-i})\right]\right\rvert\leq \varepsilon$, which implies that
\begin{align*}
	&\left\lvert\E_{v\sim \DD^{(i)}}[p_i(v)]-\E_{v\sim \DD^{(i-1)}}[p_i(v)]\right\rvert\\
	\leq &~ \E_{v_{-i}\sim \DD^{(i)}_{-i}}\left[\left\lvert \E_{v_i\sim \hDD_i}[p_i(v_i,v_{-i})]-\E_{v_i\sim \DD_i}[p_i(v_i,v_{-i})]\right\rvert \right]\\
	\leq &~ \E_{v_{-i}\sim \DD^{(i)}_{-i}}\left[\theta(v_{-i})\cdot \varepsilon\right]\\
	\leq &~ H\cdot\varepsilon
\end{align*}
This completes the argument for statement (i). Next, we prove statement (ii).

	\paragraph{Statement (ii):}Since $M$ is a deterministic, DSIC, and IR mechanism for $\DD$, there exists a collection of monotone non-decreasing functions $\{\mu_i(\cdot)\}_{\{i\in[n]\}}$ such that $\mu_i: \supp(\DD_i)\mapsto \mathbb{R}$.  We extend the function $\mu_i(\cdot)$ to the whole interval $[0,H]$. We slightly abuse notation and still called the extended function $\mu_i(\cdot)$. For any $x\in \supp(\DD_i)$, $\mu_i(x)$ remains the same. If $0\notin \supp(\DD_i)$, let $\mu_i(0)=-\infty$. For any $x\in (0,H]$, let $\mu_i(x)=\mu_i(x')$, where $x' = \sup\left\{w\mid w\leq x \text{ and } w\in \supp(\DD_i)\right\}$. 
\footnote{\todo{Making the assumption that the support is compact, when the distribution is continuous.}} It is not hard to see that the extension of $M$ on $\DD^{(j)}$ is defined by the following allocation rule based on the extended $\{\mu_i(\cdot)\}_{\{i\in[n]\}}$. For every profile $v$, give the item to the bidder with the highest positive $\mu_i(v_i)$. If no bidder $i$ has positive $\mu_i(v_i)$, then the item is unallocated. When there are ties, break the tie in alphabetical order. Clearly, the allocation rule is monotone. According to Myerson's payment identity, if a bidder wins the item, she should pay $\inf\{x \mid \text{$x$ is a winning bid}\}$.

For every fixed $v_{-i}$, let $\ell^* = \argmax_{\ell\neq i } \mu_\ell(v_\ell)$. Since there is only one item, $\sum_{\ell\neq i}p_\ell(v)= p_{\ell^*}(v)$ for any $v_i$.  Our goal now is to bound the difference between $\E_{v_i\sim \DD_i}\left[p_{\ell^*}(v)\right]$ and $\E_{v_i\sim \hDD_i}\left[p_{\ell^*}(v)\right]$. Note that \begin{equation*}
 	\E_{v_i\sim \DD_i}\left[p_{\ell^*}(v)\right]=\int_0^H \Pr_{v_i\sim \DD_i}[p_{\ell^*}(v) > t] dt.  
 \end{equation*}
When $\mu_{\ell^*}(t) \geq \mu_{\ell^*}(v_{\ell^*})$, $\Pr_{v_i\sim \DD_i}[p_{\ell^*}(v) > t] =0$, so we only consider the case where $\mu_{\ell^*}(t) <\mu_{\ell^*}(v_{\ell^*})$. Let $\alpha = \max_{\ell\neq i \text{ or } \ell^*}\mu_{\ell}(v_{\ell})$. $p_{\ell^*}(v) > t$ is equivalent to having  $\max\{\alpha, \mu_{i}(v_{i})\} > \mu_{\ell^*}^{(i)}(t)$ and $\mu_{i}(v_{i}) < \mu_{\ell^*}(v_{\ell^*})$ if $\ell^*> i$ (or $\mu_{i}(v_{i}) \leq \mu_{\ell^*}(v_{\ell^*})$ if $\ell^*< i$). Since $\mu_i(\cdot)$ is monotone, it is not hard to observe that this is equivalent to having $v_i$ lie in some interval that only depends on $v_{-i}$. Let the lower bound be $a(v_{-i})$ and the upper bound be $b(v_{-i})$.\footnote{\todo{Whether the interval is open or close depends on the support and $\mu_i$.}} Similarly, we know \begin{equation*}
 	\E_{v_i\sim \hDD_i}\left[p_{\ell^*}(v)\right]=\int_0^H \Pr_{v_i\sim \hDD_i}[p_{\ell^*}(v) > t] dt,
 \end{equation*}
 and $\Pr_{v_i\sim \hDD_i}[p_{\ell^*}(v) > t] $ is also the probability  $v_i$ lie between $a(v_{-i})$ and $b(v_{-i})$. Since $\norm{\DD_i-\hDD_i}_K\leq \varepsilon$, $\left \lvert\Pr_{v_i\sim \DD_i}[p_{\ell^*}(v) > t] -\Pr_{v_i\sim \hDD_i}[p_{\ell^*}(v) > t]\right\rvert\leq \varepsilon$ for all $t\in[0,H]$, and 
 \begin{equation*}
 	\left\lvert \E_{v_i\sim \DD_i}\left[p_{\ell^*}(v)\right]-\E_{v_i\sim \hDD_i}\left[p_{\ell^*}(v)\right]\right\rvert\leq H\cdot \varepsilon.
 	 \end{equation*}

 	Combining statement (i) and (ii), we complete the proof.
 	\end{proof}
 	By Claim~\ref{clm:hybrid-single-kolmogorov}, it is clear that \begin{equation*}
 		\rev(M,\hDD)=\rev(M,\DD^{(n)})\geq \rev(M, \DD^{(0)})-2nH\cdot \varepsilon=\rev(M, \DD)-2nH\cdot \varepsilon
 	\end{equation*}
	\end{proof}
	
	}

	\notshow{With Theorem~\ref{thm:single-kolmogorov}, we are ready to prove Theorem~\ref{thm:single-levy}. We define two auxiliary distributions $\uDD$ and $\lDD$ that will be crucial in our proof. 	\begin{definition}\label{def:best-worst-distributions}
		For every $i$, we define $\uDD_i$ and $\lDD_i$ based on $\DD_i$. $\uDD_i$ is supported on $[0,H+\varepsilon]$, and its CDF is defined as $F_{\uDD_i}(x) = \max\left\{F_{\DD_i}(x-\varepsilon)-\varepsilon,0\right\}$. $\lDD_i$ is supported on $[-\varepsilon,H]$, and its CDF is defined as $F_{\lDD_i}(x) = \min\left\{F_{\DD_i}(x+\varepsilon)+\varepsilon,1\right\}$.
	\end{definition}
We provide a more intuitive interpretation of $\uDD_i$ and $\lDD_i$ here. To obtain $\uDD_i$, we first shift all values in $\DD_i$ to the right by $\varepsilon$, then we move the bottom $\varepsilon$ probability to $H+\varepsilon$. To obtain $\lDD_i$, we first shift all values in $\DD_i$ to the left by $\varepsilon$, then we move the top $\varepsilon$ probability to $-\varepsilon$. It is not hard to see that both $\uDD_i$ and $\lDD_i$ are still in the $\varepsilon$-ball around $\DD$ in L\'{e}vy distance. More importantly, $\uDD_i$ and $\lDD_i$ are the ``best'' and ``worst'' distributions in the $\varepsilon$-ball. To make the statement formal, we need the definition of first-order-stochastic-dominance. 

\begin{definition}[First-Order Stochastic Dominance]
	We say distribution $B$ \emph{first-order stochastically dominates} $A$ iff $F_B(x)\leq F_A(x)$ for all $x\in \mathbb{R}$. We use $A\preccurlyeq B$ to denote that distribution $B$ first-order stochastically dominates distribution $A$. If $\mathbf{A}=\times_{i=1}^n A_i$ and $\mathbf{B}=\times_{i=1}^n B_i$ are two product distributions, and $A_i\preccurlyeq B_i$ for all $i\in[n]$, we slightly abuse the notation $\preccurlyeq$ to write $\mathbf{A}\preccurlyeq \mathbf{B}$.
\end{definition}

\begin{lemma}\label{lem:best-worst-stochastic-dominance}
	For any $\hDD_i$, such that $\norm{\hDD_i-\DD_i}_L\leq \varepsilon$, we have $$\lDD_i \preccurlyeq\hDD_i\preccurlyeq \uDD_i.$$ \end{lemma}
\begin{proof}
	It follows from the definition of L\'{e}vy distance and Definition~\ref{def:best-worst-distributions}. For any $x$, \begin{equation*}
		F_{\hDD_i}(x) \in [F_{\DD_i}(x-\varepsilon)-\varepsilon, F_{\DD_i}(x+\varepsilon)+\varepsilon].
	\end{equation*}
	Clearly, $0\leq F_{\hDD_i}(x) \leq 1$, so we have \begin{equation*}
 	F_{\uDD_i}(x) \leq F_{\hDD_i}(x) \leq F_{\lDD_i}(x)
 \end{equation*}
for all $x$.
\end{proof}}

\begin{prevproof}{Lemma}{lem:LB and UB are close}
	For every $i\in[n]$, we construct two extra distributions $\utDD_i$ and $\ltDD_i$ as follows.  $\utDD_i$ is supported on $[\varepsilon,H+\varepsilon]$, and its CDF is defined as $F_{\utDD_i}(x) = F_{\DD_i}(x-\varepsilon)$. $\ltDD_i$ is supported on $[-\varepsilon,H-\varepsilon]$, and its CDF is defined as $F_{\lDD_i}(x) = F_{\DD_i}(x+\varepsilon)$. In other words, $\utDD_i$ is the distribution by shifting all values in $\DD_i$ to the right by $\varepsilon$, and $\ltDD_i$ is the distribution by shifting all values in $\DD_i$ to the left by $\varepsilon$.
	
	\begin{claim}\label{clm:two shifted distributions have similar revenues}
	Let M be any DSIC and IR mechanism for $\utDD=\bigtimes_{i=1}^n \utDD_i$, there exists a DSIC and IR mechanism $M'$ for $\ltDD=\bigtimes_{i=1}^n \ltDD_i$ such that
	\begin{equation*}
		\rev(M',\ltDD)\geq \rev(M,\utDD) - 2\varepsilon.
	\end{equation*}
	\end{claim}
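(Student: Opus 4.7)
The plan is to exploit the fact that $\utDD$ and $\ltDD$ are literally shifts of one another: sampling $v \sim \ltDD$ and then adding $2\varepsilon$ coordinate-wise yields a sample from $\utDD$. This gives a natural coupling in which corresponding values are always exactly $2\varepsilon$ apart. Given a DSIC, IR mechanism $M=(x,p)$ for $\utDD$, I would define a mechanism $M'=(x',p')$ for $\ltDD$ as follows: on input bid profile $v=(v_1,\ldots,v_n)$, let $\tilde v := (v_1+2\varepsilon,\ldots,v_n+2\varepsilon)$ and set $x'(v) := x(\tilde v)$. For the payment, let $p'_i(v) := p_i(\tilde v) - 2\varepsilon \cdot \ind[\text{bidder } i \text{ is allocated the item under } x(\tilde v)]$. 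That is, $M'$ runs $M$ on the shifted profile but rebates $2\varepsilon$ to whoever wins the item.

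To verify correctness I would check three properties in order. First, DSIC for $M'$: a bidder $i$ with true value $v_i$ drawn from $\ltDD_i$ reporting $b_i$ under $M'$ is, by construction, equivalent to having value $v_i+2\varepsilon$ and reporting $b_i+2\varepsilon$ under $M$, since both the allocation and the post-rebate payment are just the allocation/payment of $M$ on the shifted profile minus $2\varepsilon$ times the winning indicator, which does not depend on $b_i$'s entry beyond whether the bidder wins. Since $M$ is DSIC, truthful reporting under $\utDD$ (value $v_i+2\varepsilon$ reporting $v_i+2\varepsilon$) is optimal, which precisely corresponds to truthful reporting under $M'$ (value $v_i$ reporting $v_i$). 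Second, IR: if bidder $i$ wins under $M$ at profile $\tilde v$, then by IR of $M$ we have $p_i(\tilde v) \le v_i + 2\varepsilon$, so $p'_i(v) = p_i(\tilde v) - 2\varepsilon \le v_i$; if the bidder does not win, $p'_i(v)=0$ by Myerson's payment identity applied to $M$.

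Third, the revenue comparison: by the coupling between $\utDD$ and $\ltDD$,
\begin{align*}
\rev(M',\ltDD) &= \E_{v\sim\ltDD}\!\left[\sum_i p'_i(v)\right] = \E_{\tilde v\sim\utDD}\!\left[\sum_i p_i(\tilde v)\right] - 2\varepsilon\cdot \Pr_{\tilde v\sim\utDD}[\text{item is allocated}] \\
&\ge \rev(M,\utDD) - 2\varepsilon,
\end{align*}
since at most one unit of the item is ever sold.

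I do not anticipate any real obstacle here; the claim is essentially a shift-invariance statement, and the only subtlety is making sure the $2\varepsilon$ rebate preserves IR while costing at most $2\varepsilon$ in expected revenue. This rebate trick is what allows us to ``translate'' the optimal mechanism for $\utDD$ into an IR mechanism for $\ltDD$, completing the argument that $\opt(\utDD)$ and $\opt(\ltDD)$ are within $2\varepsilon$ of each other (applying this construction to the optimal DSIC, IR mechanism for $\utDD$).
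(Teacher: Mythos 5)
Your proposal is correct and matches the paper's proof essentially verbatim: both couple $\ltDD$ and $\utDD$ via the $2\varepsilon$ shift, run $M$ on the shifted bids, rebate $2\varepsilon$ to the winner (the paper rebates $2\varepsilon\cdot x_i(v+2\varepsilon)$, which agrees with your winning-indicator rebate in expectation), deduce DSIC and IR from the identity that utility under $M'$ at value $v_i$ equals utility under $M$ at value $v_i+2\varepsilon$, and lose at most $2\varepsilon$ in revenue because at most one unit is sold. The only cosmetic difference is that your IR case analysis (and the appeal to Myerson's payment identity for losers) is unnecessary, since the shifted-utility identity already gives IR directly.
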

\begin{proof}
Based on the construction of $\utDD$ and $\ltDD$, we can couple the two distributions so that whenever we draw a value profile $ {v}=(v_1,\ldots, v_n)$ from $\utDD$, we also draw a value profile $ {v-2\varepsilon} = (v_1-2\varepsilon,\ldots, v_n-2\varepsilon)$ from $\ltDD$. Given mechanism $M=(x,p)$, we construct mechanism $M'$ as follows. For every bid profile $ {v}$, we offer bidder $i$ the item with probability $x_i ( {v+2\varepsilon})$ and asks her to pay $p_i( {v+2\varepsilon})-2\varepsilon\cdot x_i ( {v+2\varepsilon})$. Why is $M'$ a DSIC and IR mechanism? For any value profile $ {v}$ and any bidder $i$, her utility for reporting the true value is $$(v_i+2\varepsilon) \cdot x_i( {v+2\varepsilon})-p_i( {v+2\varepsilon}),$$ and her utility for misreporting to $v_i'$ is $$(v_i+2\varepsilon) \cdot x_i\left( (v_i',v_{-i})+ {2\varepsilon}\right)-p_i((v_i',v_{-i})+{2\varepsilon}).$$ Now consider a different scenario, where we run mechanism $M$ and  all the other bidders report $v_{-i}+ {2\varepsilon}$. The former is bidder $i$'s utility in $M$ when her true value is $v_i+2\varepsilon$ and she reports truthfully. The latter is bidder $i$'s utility in $M$ when she lies and reports $v'_i+2\varepsilon$. As $M$ is a DSIC and IR mechanism, $(v_i+2\varepsilon) \cdot x_i( {v+2\varepsilon})-p_i( {v+2\varepsilon})$ is nonnegative and greater than $(v_i+2\varepsilon) \cdot x_i\left( (v_i',v_{-i})+ {2\varepsilon}\right)-p_i((v_i',v_{-i})+ {2\varepsilon})$. Thus, $M'$ is also a DSIC and IR mechanism. Since there is only one item for sale, $\sum_i x_i( {v+2\varepsilon})\leq 1$. For every value profile $ {v}$, the total payment in $M'$ for this profile is at most $2\varepsilon$ smaller than the total payment in $M$ for value profile $ {v+2\varepsilon}$. Therefore, $\rev(M',\ltDD)\geq \rev(M,\utDD) - 2\varepsilon$.
\end{proof}
An easy corollary of Claim~\ref{clm:two shifted distributions have similar revenues} is that \begin{equation}\label{eq:lt and ut}
 	\opt(\ltDD)\geq \opt(\utDD)-2\varepsilon.
 \end{equation}
 Next we will use this corollary and Theorem~\ref{thm:single-kolmogorov} to prove our claim. Note that $\norm{\ltDD_i-\lDD_i}_K\leq \varepsilon$ and $\norm{\utDD_i-\uDD_i}_K\leq \varepsilon$ for all $i\in[n]$. Theorem~\ref{thm:single-kolmogorov} implies that \begin{equation}\label{eq:lt and l}
 	\opt(\lDD)\geq \opt(\ltDD)-3nH\cdot\varepsilon
 \end{equation}
 and \begin{equation}\label{eq:ut and u}
 \opt(\utDD)\geq \opt(\uDD)-3n(H+\varepsilon)\cdot\varepsilon.
  \end{equation}
  Chaining inequalities \eqref{eq:lt and l},~\eqref{eq:lt and ut}, and~\eqref{eq:ut and u}, we have $$\opt(\lDD)\geq \opt(\uDD)-(6nH+3n\varepsilon+2)\cdot \varepsilon.$$
\end{prevproof}

\section{Missing Details from Section~\ref{sec:prokhorov multi}}\label{appx:multi}

\notshow{

\begin{theorem}[TV-Robustness for Multi-item Auctions]\label{thm:multi-item tv-robustness}
	 	Given $\DD=\bigtimes_{i=1}^n \DD_i$, where $\DD_i $ is a $m$-dimensional distribution for all $i\in[n]$, and a BIC and IR mechanism $M$ w.r.t. $\DD$. We use $\hDD=\bigtimes_{i=1}^n \hDD_i$ to denote the true but unknown type distribution satisfying $\norm{\DD_i-\hDD_i}_{TV}\leq \varepsilon$ for all $i\in[n]$. There is a deterministic algorithm, oblivious to $\hDD$, that constructs a mechanism $\hM$ such that:
	 	 \begin{enumerate}
	 		\item $\hM$ is a $O(n m \LL H \varepsilon)$-BIC and IR mechanism for $\hDD$;
	 		\item  the expected revenue of $\hM$ under truthful bidding is $\revT(\hM,\hDD)\geq \rev(M,\DD)-O\left(n^2 m \LL H \varepsilon\right).$
	 	\end{enumerate}
	 	 \end{theorem}
}

\subsection{Proof of Theorem~\ref{lem:transformation under TV}}\label{sec:TV robustness}
	 \begin{prevproof}{Theorem}{lem:transformation under TV}
  
  	 	We first construct a mechanism $M_2$, and we show that $M_2$ is $(2m\LL H\rho+\eta)$-BIC w.r.t. $\hFF$ and IR. We first define a mapping $\tau_i$ for every bidder $i$:\begin{equation}\label{eq:mapping}
    \tau_i(v_i)=
    \begin{cases}
      v_i, & \text{if}\ v_i\in \supp (\FF_i)\\
      \argmax_{z\in \supp (\FF_i)\cup \perp}\E_{b_{-i}\sim \FF_{-i}}\left[ u_i(v_i,M_1(z,b_{-i}))\right], & \text{otherwise.}
    \end{cases}
  \end{equation}

Note that $\E_{b_{-i}\sim \FF_{-i}}\left[ u_i(v_i,M_1(\perp,b_{-i}))\right]=0$.
For any bid profile $v$, we use $\tau(v)$ to denote the vector $(\tau_1(v_1),\ldots,\tau_n(v_n))$. Let $x(\cdot)$ and $p(\cdot)$ be the allocation and payment rule for $M_1$. We now define $M_2$'s allocation rule $x'(\cdot)$ and payment rule $p'(\cdot)$. For any bid profile $v$, $x'(v) = x(\tau(v))$. If $\tau_i(v_i)\neq v_i$ and $\tau_\ell(v_\ell)\neq \perp$ for all bidders $\ell\in[n]$, then $$p_i'(v_i,v_{-i})=v_i(x(\tau(v)))\cdot{\E_{b_{-i}\sim \FF_{-i}}\left[ p_i(\tau_i(v_i),b_{-i})\right]\over \E_{b_{-i}\sim \FF_{-i}}\left[ v_i\left(x\left(\tau_i(v_i),b_{-i}\right)\right)\right]}.$$ Otherwise, $p_i'(v)=p_i(\tau(v))$.

An important property of $p'(\cdot)$ is that $\E_{b_{-i}\sim \FF_{-i}}\left[ p'_i(v_i,b_{-i})\right]=\E_{b_{-i}\sim \FF_{-i}}\left[ p_i(\tau_i(v_i),b_{-i})\right]$ for any $v_i$. We first argue that $M_2$ is IR.

\paragraph{\textbf{$M_2$ is IR:}}
 For any bidder $i$ and any bid profile $v$, if any of $\tau_\ell(v_\ell)=\perp$ bidder $i$'s utility is clearly $0$. So we only need to consider the case where $\tau_\ell(v_\ell)\neq \perp$ for all $\ell \in [n]$. \begin{itemize}
 	\item If $v_i=\tau_i(v_i)$, bidder $i$'s utility is $v_i(x(v_i,\tau_{-i}(v_{-i})))-p_i(v_i,\tau_{-i}(v_{-i}))=u_i\left(v_i, M_1(v_i, \tau_{-i}(v_{-i}))\right)$, which is non-negative as $v_i\in \supp(\FF_i)$ and $M_1$ is IR.
 	\item  If $v_i\neq\tau_i(v_i)$, since $\tau_i(v_i)\neq \perp$ by our assumption, $$\E_{b_{-i}\sim \FF_{-i}}\left[ v_i\left(x\left(\tau_i(v_i),b_{-i}\right)\right)\right]-\E_{b_{-i}\sim \FF_{-i}}\left[ p_i(\tau_i(v_i),b_{-i})\right] =\E_{b_{-i}\sim \FF_{-i}}\left[ u_i(v_i,M_1(\tau_i(v_i),b_{-i}))\right],$$ which is non-negative due to the definition of $\tau_i(\cdot)$. Equivalently, this means that $${\E_{b_{-i}\sim \FF_{-i}}\left[ p_i(\tau_i(v_i),b_{-i})\right]\over \E_{b_{-i}\sim \FF_{-i}}\left[ v_i\left(x\left(\tau_i(v_i),b_{-i}\right)\right)\right]}\leq 1$$ and $p_i'(v_i,v_{-i})\leq v_i(x(\tau(v)))=v_i(x'(v))$.
 \end{itemize}

  Next, we argue that $M_2$ is $(2m\LL H\rho+\eta)$-BIC.

  \paragraph{\textbf{$M_2$ is $(2m\LL H\rho+\eta)$-BIC:}}
Consider any bidder $i$ and any type $v_i$ and $t$, we first bound the difference between $\E_{b_{-i}\sim \FF_{-i}} \left[u_i(v_i,M_1(\tau_i(t),b_{-i}))\right]$ and $\E_{\hb_{-i}\sim \hFF_{-i}} \left[u_i(v_i,M_2(t,\hb_{-i}))\right]$. Note that \begin{equation}\label{eq:same interim utility}
	\E_{b_{-i}\sim \FF_{-i}} \left[u_i(v_i,M_1(\tau_i(t),b_{-i}))\right]=\E_{b_{-i}\sim \FF_{-i}} \left[u_i(v_i,M_2(t,b_{-i}))\right].
\end{equation}
 This is because  $x'(t,b_{-i})=x(\tau_i(t),b_{-i})~\forall b_{-i}\in \supp(\FF_{-i})$ and $\E_{b_{-i}\sim \FF_{-i}}\left[ p'_i(t,b_{-i})\right]=\E_{b_{-i}\sim \FF_{-i}}\left[ p_i(\tau_i(t),b_{-i})\right]$

  Since $\norm{\hFF_j-\FF_j}_{TV}= \varepsilon_j$, we can couple $b_{-i}$ and $\hb_{-i}$ so that $$\Pr[b_{-i}\neq \hb_{-i}]\leq \rho.$$ Clearly, when $b_{-i}=\hb_{-i}$,  ${ u_i(v_i,M_2(t,b_{-i}))=u_i(v_i,M_2(t,\hb_{-i}))}$. When $b_{-i}\neq \hb_{-i}$, $${\left|u_i(v_i,M_2(t,b_{-i}))-u_i(v_i,M_2(t,\hb_{-i}))\right|\leq m\LL H},$$ as $u_i(v_i,M_2(t,b'_{-i}))\in [0,m\LL H]$ for any $b_{-i}'$.  Hence, for any $v_i$ and $t$  
    \begin{equation}\label{eq:under F vs. hF in M2}
 	\left \lvert\E_{b_{-i}\sim \FF_{-i}} \left[u_i(v_i,M_2(t,b_{-i}))\right] - \E_{\hb_{-i}\sim \hFF_{-i}} \left[u_i(v_i,M_2(t,\hb_{-i}))\right]\right\rvert \leq m\LL H\rho.
 \end{equation}
  
  Combining Inequality~\eqref{eq:same interim utility} and~\eqref{eq:under F vs. hF in M2}, we have the following inequality
  \begin{equation}\label{eq:M'_1 vs. M_1}
 	\left \lvert\E_{b_{-i}\sim \FF_{-i}} \left[u_i(v_i,M_1(\tau_i(t),b_{-i}))\right] - \E_{\hb_{-i}\sim \hFF_{-i}} \left[u_i(v_i,M_2(t,\hb_{-i}))\right]\right\rvert \leq m\LL H\rho.
 \end{equation}
  	Suppose bidder $i$ has type $v_i$, how much more utility can she get by misreporting? Since $M_2$ is IR, she clearly cannot gain by reporting a type $t$, whose corresponding $\tau_i(t)=\perp$. Next, we argue that she cannot gain much by reporting any other possible types either. If all other bidders report truthfully, bidder $i$'s interim utility for reporting her true type  \begin{align*}
  	\E_{\hb_{-i}\sim \hFF_{-i}} \left[u_i(v_i,M_2(v_i,\hb_{-i}))\right] \geq &~ \E_{b_{-i}\sim \FF_{-i}} \left[u_i(v_i,M_1(\tau_i(v_i),b_{-i}))\right]- m\LL H\rho \\
  	\geq & ~\max_{x\in\supp(\FF_i)}\E_{b_{-i}\sim \FF_{-i}} \left[u_i(v_i,M_1(x,b_{-i}))\right]- m\LL H\rho-\eta\\
  	  	\geq & ~\max_{t:\tau_i(t)\neq \perp}\E_{b_{-i}\sim \FF_{-i}} \left[u_i(v_i,M_1(\tau_i(t),b_{-i}))\right]- m\LL H\rho-\eta\\
  	\geq & ~\max_{t:\tau_i(t)\neq \perp}\E_{\hb_{-i}\sim \hFF_{-i}} \left[u_i(v_i,M_2(t,\hb_{-i}))\right]- 2m\LL H\rho-\eta
  \end{align*}
	 
The first inequality is due to Inequality~\eqref{eq:M'_1 vs. M_1}. The second inequality is true because (a) if $v_i=\tau_i(v_i)$, then 
$$\E_{b_{-i}\sim \FF_{-i}} \left[u_i(v_i,M_1(\tau_i(v_i),b_{-i}))\right]\geq ~\max_{x\in\supp(\FF_i)}\E_{b_{-i}\sim \FF_{-i}} \left[u_i(v_i,M_1(x,b_{-i}))\right]-\eta$$
 as $M_1$ is $\eta$-BIC; (b) if $v_i\notin \supp(\FF_i)$, then by the definition of $\tau_i(v_i)$, 
 $$\E_{b_{-i}\sim \FF_{-i}} \left[u_i(v_i,M_1(\tau_i(v_i),b_{-i}))\right]\geq ~\max_{x\in\supp(\FF_i)}\E_{b_{-i}\sim \FF_{-i}} \left[u_i(v_i,M_1(x,b_{-i}))\right].$$
 The third inequality is because when $\tau_i(t)\neq \perp$ it must lie in $\supp(\FF_i)$.  The last inequality is again due to  Inequality~\eqref{eq:M'_1 vs. M_1}. 
 
	Finally, we show that $\revT(M_2,\hFF)$ is not much less than $\revT(M_1,\FF)$. Let $b\sim \FF$ and $\hb\sim \hFF$. There exists a coupling of $b$ and $\hb$ so that they are different w.p. less than $\rho$. When $b=\hb$, $M_1(b)=M_2(\hb)$. When $b\neq\hb$, the revenue in $M_1(b)$ is at most $nm\LL H$ more than the revenue in $M_2(\hb)$, as both mechanisms are IR. Therefore, $$\revT(M_2,\hFF)\geq \revT(M_1,\FF)-nm\LL H \rho.$$ 
	  	 \end{prevproof}

\subsection{Proof of Lemma~\ref{lem:prokhorov to TV}}\label{appx:randomized rounding}

\begin{prevproof}{Lemma}{lem:prokhorov to TV}
	According to Theorem~\ref{thm:prokhorov characterization}, there exists a coupling $\gamma$ of $\FF$ and $\hFF$ so that $$\Pr_{(x,y)\sim \gamma}\left[d(x,y)>\varepsilon \right]\leq \varepsilon.$$ Now we bound the probability that $\rfun(x)\neq \rfun(y)$, when $(x,y)$ is drawn from $\gamma$, and $\ell$ is drawn from $U[0,\delta]^k$.
	\begin{align*}
		&\Prob_{\ell\sim U[0,\delta]^k, (x,y)\sim\gamma}\left[\rfun(x)\neq \rfun(y)\right]\\
		=&\Prob_{\ell\sim U[0,\delta]^k, (x,y)\sim\gamma}\left[\rfun(x)\neq \rfun(y)~\land~d(x,y)>\varepsilon \right]\\
		&~~~~~~~~~~~~~~~~~~~~~~~~~~~~~~~~~~~~~~~~~+\Prob_{\ell\sim U[0,\delta]^k, (x,y)\sim\gamma}\left[\rfun(x)\neq \rfun(y)~\land~d(x,y)\leq\varepsilon \right]\\
		\leq & \Prob_{(x,y)\sim\gamma} \left[d(x,y)>\varepsilon \right]+\Prob_{\ell\sim U[0,\delta]^k}\left[\rfun(x)\neq \rfun(y) \mid d(x,y)\leq\varepsilon \right]\cdot \Pr_{(x,y)\sim\gamma}[d(x,y)\leq \varepsilon ]\\
		\leq & \varepsilon + \Prob_{\ell\sim U[0,\delta]^k}\left[\rfun(x)\neq \rfun(y) \mid d(x,y)\leq\varepsilon \right] 
	\end{align*}
	
	Now, we bound the probability that $\rfun(\cdot)$ rounds two points $x$ and $y$ to two different points when $x$ and $y$ are within distance $\varepsilon$. For any fixed $x$ and $y$, we have the following.
	\begin{align*}
		&\Prob_{\ell\sim U[0,\delta]^k}\left[\rfun(x)\neq \rfun(y) \right]\\
		\leq & \sum_{i\in[k]} \Prob_{\ell_i\sim U[0,\delta]}\left[\rfun_i(x)\neq \rfun_i(y)  \right]\\
		\leq& \sum_{i\in[k]} \frac{\lvert x_i-y_i \rvert}{\delta}\\
		=&\frac{d(x,y)}{\delta}
	\end{align*}
The first inequality follows from the union bound. Why is the second inequality true? If $|x_i-y_i|\geq \delta$, the inequality clearly holds, so we only need to consider the case where $|x_i-y_i|<\delta$. W.l.o.g. we assume $y_i\geq x_i$ and we consider the following two cases: (i) $\round{\frac{y_i}{\delta}}=\round{\frac{x_i}{\delta}}$ and (ii) $\round{\frac{y_i}{\delta}}=\round{\frac{x_i}{\delta}}+1$. In case (i), $\rfun_i(x)\neq \rfun_i(y) $ if and only if $\ell \in \left[x_i-\round{\frac{x_i}{\delta}}\cdot \delta, y_i -\round{\frac{y_i}{\delta}}\cdot \delta\right]$. Since $\ell$ is drawn from the uniform distribution over $[0,\delta]$, this happens with probability exactly $\frac{y_i-x_i }{\delta}$. In case (ii), $\rfun_i(x)\neq \rfun_i(y) $ if and only if $\ell \in \left[x_i-\round{\frac{x_i}{\delta}}\cdot \delta,\delta\right]\cup [0, y_i -\round{\frac{y_i}{\delta}}\cdot \delta]$. This again happens with probability $\frac{y_i-x_i }{\delta}$.
	Therefore, $$\Prob_{\ell\sim U[0,\delta]^k}\left[\rfun(x)\neq \rfun(y) \mid d(x,y)\leq\varepsilon \right]\leq \frac{\varepsilon}{\delta}$$
	and 
	\begin{equation}\label{eq:randomized rounding}
			\Prob_{\ell\sim U[0,\delta]^k, (x,y)\sim\gamma}\left[\rfun(x)\neq \rfun(y)\right]\leq \left(1+\frac{1}{\delta}\right)\varepsilon.
	\end{equation}

Clearly, for any choice of $\ell$,  $\norm{\round{\FF}_{\ell,\delta}-\round{\hFF}_{\ell,\delta}}_{TV}\leq\Pr_{(x,y)\sim \gamma}\left[\rfun(x)\neq \rfun(y)\right]$. Combining this inequality with Inequality~\eqref{eq:randomized rounding}, we have

\begin{align*}
	&\E_{\ell\sim U[0,\delta]^k}\left[\norm{\round{\FF}_{\ell,\delta}-\round{\hFF}_{\ell,\delta}}_{TV}\right]\\
	\leq  & \E_{\ell\sim U[0,\delta]^k}\left[\Pr_{(x,y)\sim \gamma}\left[\rfun(x)\neq \rfun(y)\right]\right]\\
	= & \Prob_{\ell\sim U[0,\delta]^k, (x,y)\sim\gamma}\left[\rfun(x)\neq \rfun(y)\right]\\
	\leq &\left(1+\frac{1}{\delta}\right)\varepsilon
\end{align*}

\notshow{	 From Inequality~\eqref{eq:randomized rounding}, we can further show that with high probability over the randomness of $\ell$,  the event that  $\Pr_{(x,y)\sim \gamma}\left[\rfun(x)\neq \rfun(y)\right]$ is large happens with low probability. In particular, for any constant $\CC\geq 1$ \begin{align*}
		&\Prob_{\ell\sim U[0,\delta]^k}\left[ \Prob_{(x,y)\sim\gamma}\left[\rfun(x)\neq \rfun(y) 
\right]> \CC \left(1+\frac{1}{\delta}\right)\varepsilon \right]\\
\leq &\frac{\E_{\ell\sim U[0,\delta]^k}\left[ \Prob_{(x,y)\sim\gamma}\left[\rfun(x)\neq \rfun(y) 
\right] \right]} {\CC \left(1+\frac{1}{\delta}\right)\varepsilon}\\
\leq & {1\over \CC}.
	\end{align*}
	
	The first inequality is due to Markov's inequality, and the second inequality follows from Equation~\eqref{eq:randomized rounding}. Hence, \begin{equation}\label{eq:UB on bad rounding}
		\Prob_{\ell\sim U[0,\delta]^k}\left[ \Prob_{(x,y)\sim\gamma}\left[\rfun(x)\neq \rfun(y) 
\right]\leq \CC \left(1+\frac{1}{\delta}\right)\varepsilon \right]\geq 1-  {1\over \CC}.
	\end{equation}
	 Note that for any fixed $\ell$ the random variables $\rfun(x)$ and $\rfun(y)$ when $(x,y)$ is drawn from $\gamma$ have distributions  $\round{\FF}_{\ell,\delta}$ and $\round{\hFF}_{\ell,\delta}$. In particular, $\norm{\round{\FF}_{\ell,\delta}-\round{\hFF}_{\ell,\delta}}_{TV}\leq \Prob_{(x,y)\sim\gamma}\left[\rfun(x)\neq \rfun(y)\right]$ for any fixed $\ell$. By Equation~\eqref{eq:UB on bad rounding}, 
	 \begin{equation*}
	 			\Pr_{\ell\sim U[0,\delta]^k}\left[\norm{\round{\FF}_{\ell,\delta}-\round{\hFF}_{\ell,\delta}}_{TV}\leq \CC\left(1+\frac{1}{\delta}\right)\varepsilon \right]\geq 1- {1\over \CC}.
	 \end{equation*}}
	 \end{prevproof}
	 
	 \notshow{
	 \begin{prevproof}{Lemma}{lem:prokhorov to TV}
	According to Theorem~\ref{thm:prokhorov characterization}, there exists a coupling $\gamma$ of $\FF$ and $\hFF$ so that $$\Pr_{(x,y)\sim \gamma}\left[d(x,y)>\varepsilon \right]\leq \varepsilon.$$ Now we bound the probability that $\rfun(x)\neq \rfun(y)$, when $(x,y)$ is drawn from $\gamma$, and $\ell$ is drawn from $U[0,\delta]^k$.
	\begin{align*}
		&\Prob_{\ell\sim U[0,\delta]^k, (x,y)\sim\gamma}\left[\rfun(x)\neq \rfun(y)\right]\\
		=&\Prob_{\ell\sim U[0,\delta]^k, (x,y)\sim\gamma}\left[\rfun(x)\neq \rfun(y)~\land~d(x,y)>\varepsilon \right]\\
		&~~~~~~~~~~~~~~~~~~~~~~~~~~~~~~~~~~~~~~~~~+\Prob_{\ell\sim U[0,\delta]^k, (x,y)\sim\gamma}\left[\rfun(x)\neq \rfun(y)~\land~d(x,y)\leq\varepsilon \right]\\
		\leq & \Prob_{(x,y)\sim\gamma} \left[d(x,y)>\varepsilon \right]+\Prob_{\ell\sim U[0,\delta]^k}\left[\rfun(x)\neq \rfun(y) \mid d(x,y)\leq\varepsilon \right]\cdot \Pr_{(x,y)\sim\gamma}[d(x,y)\leq \varepsilon ]\\
		\leq & \varepsilon + \Prob_{\ell\sim U[0,\delta]^k}\left[\rfun(x)\neq \rfun(y) \mid d(x,y)\leq\varepsilon \right] 
	\end{align*}
	
	Now, we bound the probability that $\rfun(\cdot)$ rounds two points $x$ and $y$ to two different points when $x$ and $y$ are within distance $\varepsilon$. For any fixed $x$ and $y$, we have the following.
	\begin{align*}
		&\Prob_{\ell\sim U[0,\delta]^k}\left[\rfun(x)\neq \rfun(y) \right]\\
		\leq & \sum_{i\in[k]} \Prob_{\ell_i\sim U[0,\delta]}\left[\rfun_i(x)\neq \rfun_i(y)  \right]\\
		\leq& \sum_{i\in[k]} \frac{\lvert x_i-y_i \rvert}{\delta}\\
		=&\frac{d(x,y)}{\delta}
	\end{align*}
The first inequality follows from the union bound. Why is the second inequality true? If $|x_i-y_i|\geq \delta$, the inequality clearly holds, so we only need to consider the case where $|x_i-y_i|<\delta$. W.l.o.g. we assume $y_i\geq x_i$ and we consider the following two cases: (i) $\round{\frac{y_i}{\delta}}=\round{\frac{x_i}{\delta}}$ and (ii) $\round{\frac{y_i}{\delta}}=\round{\frac{x_i}{\delta}}+1$. In case (i), $\rfun_i(x)\neq \rfun_i(y) $ if and only if $\ell \in \left[x_i-\round{\frac{x_i}{\delta}}\cdot \delta, y_i -\round{\frac{y_i}{\delta}}\cdot \delta\right]$. Since $\ell$ is drawn from the uniform distribution over $[0,\delta]$, this happens with probability exactly $\frac{y_i-x_i }{\delta}$. In case (ii), $\rfun_i(x)\neq \rfun_i(y) $ if and only if $\ell \in \left[x_i-\round{\frac{x_i}{\delta}}\cdot \delta,\delta\right]\cup [0, y_i -\round{\frac{y_i}{\delta}}\cdot \delta]$. This again happens with probability $\frac{y_i-x_i }{\delta}$.
	Therefore, $$\Prob_{\ell\sim U[0,\delta]^k}\left[\rfun(x)\neq \rfun(y) \mid d(x,y)\leq\varepsilon \right]\leq \frac{\varepsilon}{\delta},$$
	and 
	\begin{equation}\label{eq:randomized rounding}
			\Prob_{\ell\sim U[0,\delta]^k, (x,y)\sim\gamma}\left[\rfun(x)\neq \rfun(y)\right]\leq \left(1+\frac{1}{\delta}\right)\varepsilon.
	\end{equation}

	 From Equation~\eqref{eq:randomized rounding}, we can further show that with high probability over the randomness of $\ell$,  the event $\rfun(x)\neq \rfun(y)$ happens with low probability. In particular, for any constant $\CC\geq 1$ \begin{align*}
		&\Prob_{\ell\sim U[0,\delta]^k}\left[ \Prob_{(x,y)\sim\gamma}\left[\rfun(x)\neq \rfun(y) 
\right]> \CC \left(1+\frac{1}{\delta}\right)\varepsilon \right]\\
\leq &\frac{\E_{\ell\sim U[0,\delta]^k}\left[ \Prob_{(x,y)\sim\gamma}\left[\rfun(x)\neq \rfun(y) 
\right] \right]} {\CC \left(1+\frac{1}{\delta}\right)\varepsilon}\\
\leq & {1\over \CC}.
	\end{align*}
	
	The first inequality is due to Markov's inequality, and the second inequality follows from Equation~\eqref{eq:randomized rounding}. Hence, \begin{equation}\label{eq:UB on bad rounding}
		\Prob_{\ell\sim U[0,\delta]^k}\left[ \Prob_{(x,y)\sim\gamma}\left[\rfun(x)\neq \rfun(y) 
\right]\leq \CC \left(1+\frac{1}{\delta}\right)\varepsilon \right]\geq 1-  {1\over \CC}.
	\end{equation}
	 Note that for any fixed $\ell$ the random variables $\rfun(x)$ and $\rfun(y)$ when $(x,y)$ is drawn from $\gamma$ have distributions  $\round{\FF}_{\ell,\delta}$ and $\round{\hFF}_{\ell,\delta}$. In particular, $\norm{\round{\FF}_{\ell,\delta}-\round{\hFF}_{\ell,\delta}}_{TV}\leq \Prob_{(x,y)\sim\gamma}\left[\rfun(x)\neq \rfun(y)\right]$ for any fixed $\ell$. By Equation~\eqref{eq:UB on bad rounding}, 
	 \begin{equation*}
	 			\Pr_{\ell\sim U[0,\delta]^k}\left[\norm{\round{\FF}_{\ell,\delta}-\round{\hFF}_{\ell,\delta}}_{TV}\leq \CC\left(1+\frac{1}{\delta}\right)\varepsilon \right]\geq 1- {1\over \CC}.
	 \end{equation*}
	 \end{prevproof}

	 }

	 \subsection{Missing Proofs from Section~\ref{sec:robust multi}}\label{appx:proof_multi}
	 
	\begin{prevproof}{Lemma}{lem:original to rounded}
		We first define $\Mli$. If the bid profile $w\notin \supp(\lDD)$, the mechanism allocates nothing and charges no one. If the bid profile $w\in \supp(\lDD)$, for each bidder $i$  sample $w'_i$ independently from the distribution $\DD_i \mid \bigtimes_{j=1}^m \beta(w_{ij})$, where $\beta(w_{ij})$ is defined to be $[0,\ell_j)$ if $w_{ij}=0$ and $[w_{ij},w_{ij}+\delta)$ otherwise. Bidder $i$ receives allocation $x_{M,i}(w')$ and pays $(p_{M,i}(w')-m\LL\delta)^+=\max\{0, p_{M,i}(w')-m\LL\delta\}$. Note that,  for any $i\in [n]$, if $w_i$ is drawn from $\round{\DD_i}_{\ell,\delta}$ then $w'_{i}$ is drawn from $\DD_i$. If all bidders bid truthfully in $\Mli$, the revenue is at least $\rev(M,\DD)-nm\LL \delta$. Next, we argue that $\Mli$ is IR and $\xi_1$-BIC with $\xi_1=O(m\LL\delta)$.
		
		Note that for every bidder $i$ and $w_i\in \supp(\round{\DD_i}_{\ell,\delta})$ her interim utility in $\Mli$ when all other bidders bid truthfully is at least 
		$\E_{w_i'\sim \DD_i \mid \bigtimes_{j=1}^m \beta(w_{ij}), w_{-i}'\sim \DD_{-i}}\left [u_i(w_i, M(w_i',w_{-i}'))\right] $ due to the definition of $\Mli$. Now consider every realization of $w_i'$, it must hold that
\begin{align*}
 	&\E_{w_{-i}'\sim \DD_{-i}}\left [u_i(w_i, M(w_i',w_{-i}'))\right] \\
 	\geq & \E_{w_{-i}'\sim \DD_{-i}}\left [u_i(w'_i, M(w_i',w_{-i}'))\right]-m\L \delta\\
 	\geq & \max_{x\in \supp(\DD_i)}\E_{w_{-i}'\sim \DD_{-i}}\left [u_i(w'_i, M(x,w_{-i}'))\right]-m\L \delta\\
 	\geq &  \max_{x\in \supp(\DD_i)}\E_{w_{-i}'\sim \DD_{-i}}\left [u_i(w_i, M(x,w_{-i}'))\right] - 2m\LL\delta 
 \end{align*}
		The first and the last inequalities are both due to the fact that the valuation is $\LL$-Lipschitz and $\norm{w_i-w_i'}_1\leq m\delta$. The second inequality is because $M$ is BIC w.r.t. $\DD$. 
		 Hence, bidder $i$'s interim utility in $\Mli$ is at least $\max_{x\in \supp(\DD_i)}\E_{w_{-i}'\sim \DD_{-i}}\left [u_i(w_i, M(x,w_{-i}'))\right] - 2m\LL\delta $.
		
		If bidder $i$ misreports, her utility is no more than $$\max_{x\in \supp(\DD_i)} \E_{w_{-i}'\sim \DD_{-i}}\left [u_i(w_i, M(x,w_{-i}'))\right]+m\LL\delta,$$ due to the definition of $\Mli$. Therefore, misreporting can increase bidder $i$'s utility by at most $3m\LL\delta$, and $\Mli$ is $3m\LL\delta$-BIC.

		
		Next, we argue that $\Mli$ is IR. If the $w_{-i}\notin \supp(\lDD_{-i})$, bidder $i$'s utility is $0$. So we focus on the case where $w_{-i}\in \supp(\lDD_{-i})$. We will show that for any realization of $w'_i$ and $w'_{-i}$, bidder $i$'s utility is non-negative. If the payment is $0$, the claim is trivially true. If the payment is nonzero, bidder $i$ pays $p_{M,i}(w')-m\LL \delta$ and has utility $u_i(w_i, M(w_i',w_{-i}')))+m\L\delta$ which is at least $u_i(w'_i, M(w_i',w_{-i}')))$, since the valuation is $\LL$-Lipschitz and $\norm{w_i-w_i'}_1\leq m\delta$. As $M$ is IR, $u_i(w'_i, M(w_i',w_{-i}')))\geq 0$. Thus, bidder $i$'s utility is non-negative and $\Mli$ is IR.
		\end{prevproof}

\begin{prevproof}{Lemma}{lem:discrete to continuous}
	We first construct $\hMl$. For any bid profile $w$, construct $w'=(\rfun(w_1),\ldots,\rfun(w_n))$, and run $\Mlii$ on $w'$. Bidder $i$ receives allocation $x_{\Mlii, i}(w')$ and pays $\max\{0,p_{\Mlii,i}(w')-m\LL\delta\}$. Note that if $w_i\sim \hDD_i$, then $w_i'\sim \round{\hDD_i}_{\ell,\delta}$. Assuming all other bidders bid truthfully and bidder $i$'s type is $w_i$, bidder $i$'s interim utility for bidding truthfully is \begin{align*}
 	\E_{b_{-i}\sim \hDD_{-i}}\left[u_i(w_i,\hMl(w_i,b_{-i}))\right]\geq &~\E_{b'_{-i}\sim \hlDD_{-i}}\left[u_i(w_i,\Mlii(w'_i,b'_{-i}))\right]\\
 	\geq &~ \E_{b_{-i}\sim \hlDD_{-i}}\left[u_i(w'_i,\Mlii(w'_i,b'_{-i}))\right]-m\L\delta\\
 	 	\geq &~ \max_{x\in \supp(\round{\hDD_i}_{\ell,\delta})} \E_{b'_{-i}\sim \hlDD_{-i}}\left[u_i(w'_i,\Mlii(x,b'_{-i}))\right]-\xi_2-m\L\delta\\
 	\geq &~ \max_{x\in \supp(\round{\hDD_i}_{\ell,\delta})} \E_{b'_{-i}\sim \hlDD_{-i}}\left[u_i(w_i,\Mlii(x,b'_{-i}))\right]-\xi_2-2m\LL \delta\\
 	\geq & ~ \max_{y\in \supp(\hDD_i)} \E_{b_{-i}\sim \hDD_{-i}}\left[u_i(w_i,\hMl(y,b_{-i}))\right]-\xi_2-3m\LL \delta
 \end{align*}
 
 The first inequality and the last equality are due to the definition of $\hMl$. The second and the fourth inequalities are due to the $\LL$-Lipschitzness of the valuation function and $\norm{w_i-w_i'}_1\leq m\delta$. The third inequality is because $\Mlii$ is a $\xi_2$-BIC mechanism w.r.t. $\hlDD$. By this chain of inequalities, we know that $\hMl$ is a $(\xi_2+3m\LL \delta)$-BIC mechanism w.r.t. $\hDD$. 
 
 Next, we argue that $\hMl$ is also IR. Consider any bidder $i$ and type profile $w$, $\hMl(w)$ has the same allocation as $\Mlii(w')$. When bidder $i$'s payment is $0$, her utility is clearly non-negative. When bidder $i$'s payment is $p_{\Mlii,i}(w')-m\LL\delta$, her utility is at least $u_i(w'_i, \Mlii(w'))$ due to the $\L$-Lipschitzness of the valuation function and $\norm{w_i-w_i'}_1\leq m\delta$. Since $\Mlii$ is IR, bidder $i$'s utility in $\hMl$ is also non-negative.
 
 Finally, if all bidders bid truthfully in $\hMl$ when their types are drawn from $\hDD$, its revenue under truthful bidding is $$\revT\left(\hMl,\hDD\right)\geq \revT\left(\Mlii,\hlDD\right)-nm\LL\delta.$$
\end{prevproof}

\notshow{
\begin{prevproof}{Lemma}{lem:discrete to continuous}
	We first construct $\hM$. For any bid profile $w$, construct $w'=(\rfun(w_1),\ldots,\rfun(w_n))$, and run $M_2$ on $w'$. Bidder $i$ receives allocation $x_{M_2}(w')$ and pays $\max\{0,p_{M_2}(w')-m\LL\delta\}$. Note that if $w_i\sim \hDD_i$, $w_i'\sim \round{\hDD_i}_{\ell,\delta}$. Assuming all other bidders bid truthfully and bidder $i$'s type is $w_i$, bidder $i$'s interim utility for bidding truthfully is \begin{align}
 	\E_{b_{-i}\sim \hDD_{-i}}\left[w_i,\hM(w_i,b_{-i})\right]\geq &~\E_{b'_{-i}\sim \hlDD_{-i}}\left[w_i,M_2(w'_i,b'_{-i})\right]+m\LL \delta\\
 	\geq &~ \E_{b_{-i}\sim \hlDD_{-i}}\left[w'_i,M_2(w'_i,b'_{-i})\right]\label{eq:IR discrete to continuous}\\
 	 	\geq &~ \max_{x\in \supp(\round{\hDD_i}_{\ell,\delta})} \E_{b'_{-i}\sim \hlDD_{-i}}\left[w'_i,M_2(x,b'_{-i})\right]-\xi_2\\
 	\geq &~ \max_{x\in \supp(\round{\hDD_i}_{\ell,\delta})} \E_{b'_{-i}\sim \hlDD_{-i}}\left[w_i,M_2(x,b'_{-i})\right]-\xi_2-m\LL \delta\\
 	= & ~ \max_{y\in \supp(\hDD_i)} \E_{b_{-i}\sim \hDD_{-i}}\left[w_i,\hM(y,b_{-i})\right]-\xi_2-2m\LL \delta
 \end{align}
 
 The first and last equalities are due to the definition of $\hM$. The first and the third inequalities are due to the definition of $w_i'$ and the $\LL$-Lipschitzness of the valuation function. The second inequality is because $M_2$ is a $\xi_2$-BIC mechanism w.r.t. $\hlDD$. By this chain of inequalities, we know that $\hM$ is a $(\xi_2+2m\LL \delta)$-BIC mechanism w.r.t. $\hDD$. It is not hard to see that $\hM$ is also IR by Inequality~\eqref{eq:IR discrete to continuous}. Finally, if all bidders bid truthfully in $\hM$ when their types are drawn from $\hDD$, its revenue $$\revT(\hM,\hDD)=\revT(M_2,\hlDD)-nm\LL\delta.$$
\end{prevproof}

}

	 \begin{prevproof}{Theorem}{thm:multi-item p-robustness}
	 	First, sample $\ell$ uniformly from $[0,\delta]^m$, and construct $\round{\DD_i}_{\ell,\delta}$ for all $i\in[n]$. According to Lemma~\ref{lem:original to rounded}, we can construct a mechanism $\Mli$ based on $M$ that is $\xi_1=O(m\LL \delta)$-BIC w.r.t. $\bigtimes_{i=1}^n \round{\DD_i}_{\ell,\delta}$, IR, and has revenue $\revT\left(\Mli, \bigtimes_{i=1}^n \round{\DD_i}_{\ell,\delta}\right)\geq \rev(M,\DD)-nm\LL\delta$. 
	 	
	 	Next, we transform $\Mli$ to $\Mlii$ using Lemma~\ref{lem:transformation under TV}. We use $\el_i$ to denote $\norm{\round{\DD_i}_{\ell,\delta}-\round{\hDD_i}_{\ell,\delta}}_{TV}$ for our sample $\ell$ and every $i\in [n]$, and $\rho^{(\ell)}$ to denote $\sum_{i\in [n]}\el_i$. For every realization of $\ell$, $\Mlii$ is $\xi_2=\left ( 2m\LL H\rho^{(\ell)}+\xi_1\right)$-BIC w.r.t. $\bigtimes_{i=1}^n \round{\hDD_i}_{\ell,\delta}$ and IR. Its revenue under truthful bidding satisfies $$\revT\left(\Mlii, \bigtimes_{i=1}^n \round{\hDD_i}_{\ell,\delta}\right)\geq \revT\left(\Mli, \bigtimes_{i=1}^n \round{\DD_i}_{\ell,\delta}\right)-nm\LL H\rho^{(\ell)}.$$
	 	
	 Lemma~\ref{lem:discrete to continuous} shows that we can construct $\hMl$ using $\Mlii$, such that $\hMl$ is a $(\xi_2+3m\LL\delta)$-BIC w.r.t. $\hDD$ and IR mechanism with revenue $$\revT\left(\hMl,\hDD\right)\geq\revT\left(\Mlii, \bigtimes_{i=1}^n \round{\hDD_i}_{\ell,\delta}\right)-nm\LL\delta.$$

	 Since $\hMl$ is $O(m\LL\delta+m\LL H\rho^{(\ell)})$-BIC w.r.t. $\hDD$ and IR for every realization of $\ell$, our mechanism $\hM$ is clearly $O\left (m\LL\delta+m\LL H\cdot \E_{\ell\sim U[0,\delta]^m} \left[\rho^{(\ell)}\right]\right )$-BIC w.r.t. $\hDD$ and IR. Moreover, its expected revenue under truthful bidding satisfies
	 
	 $$\revT\left(\hM,\hDD\right)\geq\rev\left(M, \DD\right)-O\left(nm\LL\delta+nm\LL H\cdot \E_{\ell\sim U[0,\delta]^m} \left[\rho^{(\ell)}\right]\right).$$
	 
	 	According to Lemma~\ref{lem:prokhorov to TV}, $$\E_{\ell\sim U[0,\delta]^m} \left[\rho^{(\ell)}\right]\leq n\left(1+{1\over \delta}\right)\varepsilon.$$

\notshow{
	 	\begin{equation}\label{eq:randomized rounding success rate}
	 		\Pr_{\ell\sim U[0,\delta]^m}\left[\norm{\round{\hDD_i}_{\ell,\delta}-\round{\DD_i}_{\ell,\delta}}_{TV}>{n\over \alpha} \left(1+\frac{1}{\delta}\right)\varepsilon\right]\leq \alpha/ n
	 	\end{equation} 
	 	for any $\alpha>0$. By the union bound, \begin{equation}\label{eq:good l}
 	\norm{\round{\hDD_i}_{\ell,\delta}-\round{\DD_i}_{\ell,\delta}}_{TV}\leq {n\over \alpha} \left(1+\frac{1}{\delta}\right)\varepsilon \text{ for all $i\in[n]$}
 \end{equation}
happens with probability at least $1-\alpha$. From now on, we will only consider the case where Equation~\eqref{eq:good l} is true. Choose $\FF_i$ to be $\round{\DD_i}_{\ell,\delta}$, $\hFF_i$ to be $\round{\hDD_i}_{\ell,\delta}$, and $\norm{\FF_i-\hFF_i}_{TV}\leq {n\over \alpha} \left(1+\frac{1}{\delta}\right)\varepsilon = \mu$. We use Lemma~\ref{lem:transformation under TV} to construct mechanism $M_2$ based on only $\Mli$ but not any information about $\hDD$ or $\bigtimes_{i=1}^n \round{\hDD_i}_{\ell,\delta}$, such that $M_2$ is $(\xi_2=2nm\LL H\mu+\xi_1)$-BIC w.r.t. $\bigtimes_{i=1}^n \round{\hDD_i}_{\ell,\delta}$, IR, and has revenue $$\revT\left(M_2,\bigtimes_{i=1}^n \round{\hDD_i}_{\ell,\delta}\right)\geq \revT\left(\Mli, \bigtimes_{i=1}^n \round{\DD_i}_{\ell,\delta}\right)-O\left(n^2m\LL H\mu\right).$$
}

We choose $\delta$ to be $\sqrt{nH\varepsilon}$, and $\hM$ becomes $\kappa$-BIC w.r.t. $\hDD$, where $\kappa=O\left( n m \LL H \varepsilon+   m \LL \sqrt{nH\varepsilon}\right)$, and IR. Furthermore, 
$$\revT\left(\hM,\hDD\right)\geq\rev\left(M, \DD\right)-O\left(n\kappa\right).$$


	 \end{prevproof}

\subsection{Lipschitz Continuity of the Optimal Revenue in Multi-item Auctions}\label{sec:Lipschitz Continuity of Opt}

	Using Theorem~\ref{thm:multi-item p-robustness}, we can easily prove that the optimal BIC revenue w.r.t. $\DD$ and the optimal BIC revenue w.r.t. $\hDD$ are close as long as $\DD_i$ and $\hDD_i$ are close in either the total variation distance or the Prokhorov distance for all $i\in[n]$.
	
	\begin{theorem}[Lipschitz Continuity of the Optimal Revenue]\label{thm:continuous of OPT under Prokhorov}
		Consider the general mechanism design setting of Section 2. Recall that $\L$ is the Lipschitz constant of the valuations. For any distributions $\DD=\bigtimes_{i=1}^n \DD_i$ and $\hDD=\bigtimes_{i=1}^n \hDD_i$, where $\DD_i$ and $\hDD_i$ are supported on $[0,H]^m$ for every $i\in[n]$ 
		\begin{itemize}
					\item  If $\norm{\DD_i-\hDD_i}_{TV}\leq \varepsilon$ for all $i\in[n]$, then $$\left \lvert\opt(\DD)-\opt(\hDD)\right\rvert\leq O\left(nm\LL H\left(n\varepsilon+\sqrt{n\varepsilon}\right)\right);$$
\item if $\norm{\DD_i-\hDD_i}_P\leq \varepsilon$ for all $i\in[n]$, then $$\left \lvert\opt(\DD)-\opt(\hDD)\right\rvert\leq O\left(n\kappa+n\sqrt{m\LL H\kappa}\right),$$
		where $\kappa=O\left(n m \LL H \varepsilon+  m \LL \sqrt{nH\varepsilon}\right)$.
		\end{itemize}
	\end{theorem}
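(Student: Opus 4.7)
The overall plan is to bound the two sides $|\opt(\DD)-\opt(\hDD)|$ by a symmetric argument that invokes our robustness theorems. The key observation is that, while the robustness theorems already transform the BIC-optimal mechanism for $\DD$ into an approximately-BIC mechanism for $\hDD$ whose (truthful) revenue is close to $\opt(\DD)$, we cannot yet compare this quantity directly to $\opt(\hDD)$, which is defined over \emph{exactly} BIC mechanisms w.r.t.~$\hDD$. The bridge is Lemma~\ref{lem:eps-BIC to BIC}, which allows us to pass between $\opt_\eta(\hDD)$ and $\opt(\hDD)$ at the cost of an additive $2n\sqrt{m\LL H\eta}$ term.

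More concretely, for the TV case, I would first let $M$ be a BIC and IR mechanism for $\DD$ with $\rev(M,\DD) = \opt(\DD)$. Apply Theorem~\ref{lem:transformation under TV} with $\eta=0$ and $\rho = \sum_i \varepsilon_i \le n\varepsilon$: this produces $\hM$ which is $(2nm\LL H\varepsilon)$-BIC w.r.t.~$\hDD$, IR, and has $\revT(\hM,\hDD)\ge \opt(\DD)-n^2 m\LL H\varepsilon$. Thus $\opt_{2nm\LL H\varepsilon}(\hDD)\ge \opt(\DD)-n^2 m\LL H\varepsilon$. Lemma~\ref{lem:eps-BIC to BIC} then gives $\opt_{2nm\LL H\varepsilon}(\hDD)\le \opt(\hDD)+2n\sqrt{m\LL H\cdot 2nm\LL H\varepsilon}=\opt(\hDD)+O(nm\LL H\sqrt{n\varepsilon})$. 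Chaining these two inequalities yields $\opt(\hDD)\ge \opt(\DD)-O(nm\LL H(n\varepsilon+\sqrt{n\varepsilon}))$. Swapping the roles of $\DD$ and $\hDD$, which is permitted because $\norm{\cdot}_{TV}$ is symmetric, gives the reverse inequality and establishes the first bullet.

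For the Prokhorov case the structure is identical, only the error terms from the robustness step are different. Starting from the BIC-optimal $M$ for $\DD$ and invoking Theorem~\ref{thm:multi-item p-robustness}, I obtain $\hM$ which is $\kappa$-BIC w.r.t.~$\hDD$, IR, and has $\revT(\hM,\hDD)\ge \opt(\DD)-O(n\kappa)$, where $\kappa=O(nm\LL H\varepsilon+m\LL\sqrt{nH\varepsilon})$. Hence $\opt_\kappa(\hDD)\ge \opt(\DD)-O(n\kappa)$, and Lemma~\ref{lem:eps-BIC to BIC} upgrades this to $\opt(\hDD)\ge \opt(\DD)-O(n\kappa+n\sqrt{m\LL H\kappa})$. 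Because the Prokhorov distance is symmetric, exchanging the roles of $\DD$ and $\hDD$ yields the matching lower bound, and together they give the claim.

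There is essentially no new technical obstacle: both robustness theorems were designed precisely to enable this kind of two-way comparison, and the $\eta$-BIC-to-BIC conversion of Lemma~\ref{lem:eps-BIC to BIC} is what absorbs the small incentive slack that our oblivious robustification unavoidably introduces. The only subtlety worth flagging is that Theorem~\ref{thm:multi-item p-robustness} produces a \emph{randomized} mechanism, so ``revenue'' must be interpreted as expected revenue over the mechanism's internal randomness, but since both $\opt(\DD)$ and $\opt_\kappa(\hDD)$ are defined over arbitrary (randomized) BIC and IR mechanisms this causes no issue.
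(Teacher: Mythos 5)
Your proof is correct and follows essentially the same route as the paper's: apply the robustness transformation (Theorem~\ref{lem:transformation under TV} for TV, Theorem~\ref{thm:multi-item p-robustness} for Prokhorov) to the BIC-optimal mechanism for $\DD$, compare the resulting approximately-BIC revenue to $\opt(\hDD)$ via Lemma~\ref{lem:eps-BIC to BIC}, and conclude by symmetry of the distance. The constants you track also match the stated bounds, so there is nothing to add.
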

		
		\begin{prevproof}{Theorem}{thm:continuous of OPT under Prokhorov}
		Let $M^*$ be the optimal BIC mechanism for $\DD$. We first prove the Prokorov case. According to Theorem~\ref{thm:multi-item p-robustness}, there exists a mechanism $\hM^*$ such that it is $\kappa$-BIC w.r.t. $\hDD$ and IR. Moreover, $$\revT(\hM^*,\hDD)\geq \rev(M^*,\DD)-O(n\kappa).$$
		By Lemma~\ref{lem:eps-BIC to BIC}, $\revT\left(\hM^*,\hDD\right)\leq \opt\left(\hDD\right)
		+2n\sqrt{m\LL H\kappa}$. Combining the two inequalities, we have $$\opt\left(\hDD\right)\geq \opt(\DD)-O\left(n\kappa+n\sqrt{m\LL H\kappa}\right).$$
		
		By symmetry, we can also argue that $$\opt(\DD)\geq \opt\left(\hDD\right)-O\left(n\kappa+n\sqrt{m\LL H\kappa}\right).$$
		
		In the TV case, $\revT\left(\hM^*,\hDD\right)\geq \rev(M^*,\DD)-O(n^2m\LL H\varepsilon)$. Since $\hM^*$ is $O(mn\L H\varepsilon)$-BIC, $\opt\left(\hDD\right)\geq \revT\left(\hM^*,\hDD\right)-O(nm\LL H\sqrt{n\varepsilon})$ due to Lemma~\ref{lem:eps-BIC to BIC}. By symmetry and the inequalities above, we have $\left \lvert\opt(\DD)-\opt\left(\hDD\right)\right\rvert\leq O\left(nm\LL H(n\varepsilon+\sqrt{n\varepsilon})\right)$.
	\end{prevproof}
	
	\subsection{Approximation Preserving Transformation}\label{sec:approximation preserving guarantee}
\notshow{

\begin{theorem}\label{thm:approximation preserving}
	Given $\DD=\bigtimes_{i=1}^n \DD_i$, where $\DD_i $ is a $m$-dimensional distribution for all $i\in[n]$, and a BIC mechanism $M$ w.r.t. $\DD$. Let $\alpha$ be an arbitrary real number in $(0,1)$. If $M$ is a $c$-approximation to the optimal BIC revenue for $\DD$, we have a randomized algorithm that constructs a mechanism $\hM$ \yangnote{in polynomial time given access to the optimizer that provides the best bid} such that for any possible true type distribution $\hDD=\bigtimes_{i=1}^n \hDD_i$ satisfying $\norm{\DD_i-\hDD_i}_P\leq \varepsilon$ for all $i\in[n]$, with probability at least $1-\alpha$ over the randomness of the construction of $\hM$, $\hM$ is a $\kappa\over \alpha$-BIC and IR mechanism for $\hDD$, where  $\kappa=O\left( n^2 m \LL H \varepsilon+  n m \LL \sqrt{H\varepsilon}\right)$. Moreover, the expected revenue of $\hM$ under truthful bidding is {$$\revT(\hM,\hDD)\geq c\cdot\opt_{\kappa\over \alpha}(\hDD)-O\left({ n\kappa\over \alpha}+n\sqrt{{m\LL H\kappa\over \alpha}}\right).$$} 
	\end{theorem}
	
	}

	\notshow{ 
	\begin{theorem}\label{thm:multi-item p-robustness}
	 	Given $\DD=\bigtimes_{i=1}^n \DD_i$, where $\DD_i $ is a $m$-dimensional distribution for all $i\in[n]$, and a BIC mechanism $M$ w.r.t. $\DD$. We use $\hDD=\bigtimes_{i=1}^n \hDD_i$ to denote the true but unknown type distribution, which could be any distribution that satisfies $\norm{\DD_i-\hDD_i}_P\leq \varepsilon$ for all $i\in[n]$. For any $\alpha\in (0,1)$, our $\hDD$-oblivious randomized algorithm constructs a mechanism $\hM$ such that with probability at least $1-\alpha$ the following holds for any fixed $\hDD$:
	 	\begin{enumerate}
	 		\item $\hM$ is a $\frac{\kappa}{\alpha}$-BIC and IR mechanism for $\hDD$, where  $\kappa=O\left( n^2 m \LL H \varepsilon+  n m \LL \sqrt{H\varepsilon}\right)$;
	 		\item  the expected revenue of $\hM$ under truthful bidding is $$\revT(\hM,\hDD)\geq \rev(M,\DD)-O\left(\frac{n\kappa}{\alpha}\right).$$
	 	\end{enumerate}
	 	
	 	If $\norm{\DD_i-\hDD_i}_{TV}\leq \varepsilon$ for all $i\in[n]$, our algorithm becomes deterministic and provides stronger guarantees. More specifically, \begin{enumerate}
	 		\item $\hM$ is a $O(n m \LL H \varepsilon)$-BIC and IR mechanism for $\hDD$;
	 		\item  the expected revenue of $\hM$ under truthful bidding is $$\revT(\hM,\hDD)\geq \rev(M,\DD)-O\left(n^2 m \LL H \varepsilon\right).$$
	 	\end{enumerate}\yangnote{Finally, our algorithm runs in polynomial time if we are given an oracle OptimizerBIC.} 	 \end{theorem}
	 	
	 	}

	\begin{theorem}[Approximation Preserving Transformation]\label{thm:approximation preserving}
	Consider the general mechanism design setting of Section 2. Recall that $\L$ is the Lipschitz constant of the valuations. Given $\DD=\bigtimes_{i=1}^n \DD_i$, where $\DD_i $ is a $m$-dimensional distribution supported on $[0,H]^m$ for all $i\in[n]$, and a BIC w.r.t. $\DD$ and IR mechanism $M$. We use $\hDD=\bigtimes_{i=1}^n \hDD_i$ to denote the true but unknown type distribution, and $\hDD_i$ is supported on $[0,H]^m$ for all $i\in[n]$. 
	
	If $\norm{\DD_i-\hDD_i}_{TV}\leq \varepsilon$ for all $i\in[n]$, we can construct a mechanism $\hM$, in a way that is completely oblivious to the true distribution $\hDD$, such that 
	\begin{enumerate}
	 		\item $\hM$ is $\eta$-BIC w.r.t. $\hDD$ and IR, where $\eta=O(n m \LL H \varepsilon)$;
	 		\item if $M$ is a $c$-approximation to the optimal BIC revenue for $\DD$, then $$\revT\left(\hM,\hDD\right)\geq c\cdot\opt_{\eta}\left(\hDD\right)-O\left(nm\LL H\left(n\varepsilon+\sqrt{n\varepsilon}\right)\right).$$
	 		\end{enumerate}
If $\norm{\DD_i-\hDD_i}_{P}\leq \varepsilon$ for all $i\in[n]$, we can again construct a mechanism $\hM$, in a way that is completely oblivious to the true distribution $\hDD$, such that
\begin{enumerate}
	 		\item $\hM$ is $\kappa$-BIC w.r.t. $\hDD$ and IR, where  $\kappa=O\left( n m \LL H \varepsilon+   m \LL \sqrt{nH\varepsilon}\right)$;
	 		\item if $M$ is a $c$-approximation to the optimal BIC revenue for $\DD$, then $\hM$ is almost a $c$-approximation to the optimal ${\kappa}$-BIC revenue for $\hDD$, that is, $$\revT\left(\hM,\hDD\right)\geq c\cdot\opt_{\kappa}\left(\hDD\right)-O\left({ n\kappa}+n\sqrt{{m\LL H\kappa}}\right).$$
	 	\end{enumerate}
		\end{theorem}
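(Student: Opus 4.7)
The plan is to prove Theorem~\ref{thm:approximation preserving} by composing three results already in hand: the two robustness theorems (Theorem~\ref{lem:transformation under TV} for the TV case, Theorem~\ref{thm:multi-item p-robustness} for the Prokhorov case), the Lipschitz continuity of optimal revenue (Theorem~\ref{thm:continuous of OPT under Prokhorov}), and the $\eta$-BIC-to-BIC conversion lemma (Lemma~\ref{lem:eps-BIC to BIC}). The robustness theorems already produce the mechanism $\hM$ and give us a revenue lower bound of the form $\revT(\hM,\hDD) \ge \rev(M,\DD) - (\text{small error})$; the remaining work is to turn the base $\rev(M,\DD)$ into a $c$-approximation to $\opt_\eta(\hDD)$ (resp.\ $\opt_\kappa(\hDD)$) rather than to $\opt(\DD)$.

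For the TV case, first I would apply Theorem~\ref{lem:transformation under TV} with $\eta=0$ (since $M$ is exactly BIC w.r.t.\ $\DD$) and $\rho \le n\varepsilon$. This produces $\hM$ which is $\eta$-BIC with $\eta=O(nm\LL H\varepsilon)$, IR, and satisfies
\[
\revT(\hM,\hDD)\ \ge\ \rev(M,\DD) - O(n^2 m\LL H\varepsilon).
\]
Since $M$ is a $c$-approximation to $\opt(\DD)$, we get $\rev(M,\DD)\ge c\cdot \opt(\DD)$. Next I would invoke Theorem~\ref{thm:continuous of OPT under Prokhorov} (TV case) to obtain $\opt(\DD) \ge \opt(\hDD) - O(nm\LL H(n\varepsilon+\sqrt{n\varepsilon}))$, and finally use Lemma~\ref{lem:eps-BIC to BIC} to bound $\opt(\hDD)\ge \opt_\eta(\hDD) - 2n\sqrt{m\LL H\eta} = \opt_\eta(\hDD) - O(nm\LL H\sqrt{n\varepsilon})$. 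Chaining these (and using $c\le 1$ so that the error gets multiplied by at most $1$) yields
\[
\revT(\hM,\hDD)\ \ge\ c\cdot\opt_\eta(\hDD) - O\!\left(nm\LL H(n\varepsilon+\sqrt{n\varepsilon})\right),
\]
which is exactly the desired bound.

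For the Prokhorov case, the recipe is identical but uses Theorem~\ref{thm:multi-item p-robustness} in place of Theorem~\ref{lem:transformation under TV}. That theorem gives $\hM$ which is $\kappa$-BIC w.r.t.\ $\hDD$, IR, with $\revT(\hM,\hDD)\ge \rev(M,\DD) - O(n\kappa)$, where $\kappa = O(nm\LL H\varepsilon + m\LL\sqrt{nH\varepsilon})$. Combining with $\rev(M,\DD)\ge c\cdot\opt(\DD)$, the Prokhorov case of Theorem~\ref{thm:continuous of OPT under Prokhorov}, which gives $\opt(\DD)\ge \opt(\hDD) - O(n\kappa + n\sqrt{m\LL H\kappa})$, and Lemma~\ref{lem:eps-BIC to BIC}, which gives $\opt(\hDD)\ge \opt_\kappa(\hDD) - 2n\sqrt{m\LL H\kappa}$, we obtain the desired
\[
\revT(\hM,\hDD)\ \ge\ c\cdot\opt_\kappa(\hDD) - O\!\left(n\kappa + n\sqrt{m\LL H\kappa}\right).
\]
The obliviousness of $\hM$ to $\hDD$ is inherited directly from the robustness theorems, and the IR property likewise transfers through unchanged.

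There is essentially no conceptual obstacle here; the entire argument is a careful bookkeeping exercise that chains three existing tools together. The one thing to be mindful of is that the additive error in Theorem~\ref{thm:continuous of OPT under Prokhorov} and the $\eta$-BIC-to-BIC slack in Lemma~\ref{lem:eps-BIC to BIC} must both be absorbed into the error term of the final bound, and the slack parameters ($\eta$ or $\kappa$) produced by the robustness theorems must exactly match the $\eta$ or $\kappa$ appearing in the right-hand side $\opt_\eta(\hDD)$ or $\opt_\kappa(\hDD)$; since they do, the proof goes through with only the stated $O(\cdot)$ losses.
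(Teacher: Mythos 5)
Your proposal is correct and follows essentially the same route as the paper's proof: both chain the TV/Prokhorov robustness theorems (Theorems~\ref{lem:transformation under TV} and~\ref{thm:multi-item p-robustness}) with the revenue-continuity result (Theorem~\ref{thm:continuous of OPT under Prokhorov}) and the $\eta$-BIC-to-BIC comparison (Lemma~\ref{lem:eps-BIC to BIC}), with matching error bookkeeping. No gaps.
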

\begin{prevproof}{Theorem}{thm:approximation preserving}
For the TV case, by Theorem~\ref{lem:transformation under TV}, we can construct a $\eta$-BIC w.r.t. $\hDD$ and IR mechanism $\hM$ such that $\revT\left(\hM,\hDD\right)\geq \rev(M,\DD)-O\left(n^2m\L H\varepsilon\right )\geq c\cdot \opt(\DD)-O\left(n^2m\L H\varepsilon\right )$. By Theorem~\ref{thm:continuous of OPT under Prokhorov}, $\opt(\DD)$ is at least  $\opt(\hDD)-O\left(nm\L H(n\varepsilon+\sqrt{n\varepsilon})\right)$. Finally, $\opt(\hDD)\geq \opt_\eta(\hDD) -2n\sqrt{m\LL H\eta}$ due to Lemma~\ref{lem:eps-BIC to BIC}, so 
$$\revT\left(\hM,\hDD\right)\geq c\cdot\opt_\eta\left(\hDD\right)-O\left(nm\L H(n\varepsilon+\sqrt{n\varepsilon})\right).$$

	For the Prokhorov case, according to Theorem~\ref{thm:multi-item p-robustness}, we can construct a $\kappa$-BIC w.r.t. $\hDD$ and IR mechanism $\hM$ such that  $\revT\left(\hM,\hDD\right)\geq \rev(M,\DD)-O\left({n\kappa}\right)\geq c\cdot \opt(\DD)-O\left({n\kappa}\right)$.	By Theorem~\ref{thm:continuous of OPT under Prokhorov} and Lemma~\ref{lem:eps-BIC to BIC},  $\opt(\DD)\geq \opt\left(\hDD\right)-O\left(n\kappa+n\sqrt{m\LL H\kappa}\right)\geq \opt_{\kappa}(\hDD)-O\left(n\kappa+n\sqrt{{m\LL H\kappa}}\right)$	Chaining all the inequalities above, we have $$\revT\left(\hM,\hDD\right)\geq c\cdot\opt_{\kappa}(\hDD)-O\left({ n\kappa}+n\sqrt{{m\LL H\kappa}}\right).$$
	\end{prevproof}

If there is a single bidder, we can strengthen Theorem~\ref{thm:approximation preserving} and make constructed mechanism $\hM$ exactly IC with essentially the same guarantees.

\begin{theorem}[Single-Bidder Approximation Preserving Transformation]\label{thm:single approximation preserving}
	Consider the general mechanism design setting of Section 2. Recall that $\L$ is the Lipschitz constant of the valuations. Given a $m$-dimensional distribution $\DD$ supported on $[0,H]^m$, and a IC and IR mechanism $M$. We use $\hDD$ to denote the true but unknown type distribution, and $\hDD$ is also supported on $[0,H]^m$. 
	\begin{itemize}
		\item If $\norm{\DD-\hDD}_{TV}\leq \varepsilon$, we can construct an IC and IR mechanism $\hM$, in a way that is completely oblivious to the true distribution $\hDD$, such that if $M$ is a $c$-approximation to the optimal BIC revenue for $\DD$, then $$\rev\left(\hM,\hDD\right)\geq c\cdot \left(1-O\left(\sqrt{m\LL H\varepsilon}\right)\right)\cdot\opt\left(\hDD\right)-O\left(\left(m\LL H+\sqrt{m\LL H}\right)\cdot\sqrt{\varepsilon}\right).$$
		\item If $\norm{\DD-\hDD}_{P}\leq \varepsilon$, we can again construct an IC and IR mechanism $\hM$, in a way that is completely oblivious to the true distribution $\hDD$, such that if $M$ is a $c$-approximation to the optimal BIC revenue for $\DD$,  $$\rev\left(\hM,\hDD\right)\geq c\cdot\left(1-\sqrt{\kappa}\right)\cdot\opt\left(\hDD\right)-O\left({ \kappa}+\left(\sqrt{m\LL H}+1\right)\cdot\sqrt{\kappa}\right),$$
	 where  $\kappa=O\left( m \LL H \varepsilon+   m \LL \sqrt{H\varepsilon}\right)$.
	\end{itemize}	 	
		\end{theorem}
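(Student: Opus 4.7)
The plan is to obtain Theorem~\ref{thm:single approximation preserving} as a direct composition of our approximation-preserving transformation (Theorem~\ref{thm:approximation preserving}) with Nisan's $\varepsilon$-IC to IC transformation for single-bidder mechanisms (Lemma~\ref{lem:eps-IC to IC}). The reason the multi-bidder result could only give $\eta$-BIC (resp. $\kappa$-BIC) guarantees is structural: in multi-bidder settings one cannot in general remove the $\eta$ slack without breaking incentive compatibility, whereas in the single-bidder setting Nisan's trick of rescaling payments by $1-\sqrt{\eta}$ and letting the bidder self-select her favorite (allocation, payment) pair yields an exactly IC mechanism at the cost of a small multiplicative revenue loss. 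Since this payment rescaling is completely $\hDD$-oblivious, it composes cleanly with the $\hDD$-oblivious transformation of Theorem~\ref{thm:approximation preserving}, preserving the obliviousness property required in the statement.

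Concretely, for the TV case, I would first invoke Theorem~\ref{thm:approximation preserving} with $n=1$ to transform $M$ into an intermediate mechanism $\hM'$ that is $\eta$-IC w.r.t. $\hDD$ and IR, where $\eta = O(m\LL H\varepsilon)$, and whose expected truthful revenue satisfies $\revT(\hM',\hDD) \geq c\cdot\opt_\eta(\hDD) - O\bigl(m\LL H(\varepsilon + \sqrt{\varepsilon})\bigr)$. Next, I would apply Lemma~\ref{lem:eps-IC to IC} to $\hM'$ to obtain an IC and IR mechanism $\hM$ satisfying $\rev(\hM,\hDD) \geq (1-\sqrt{\eta})\bigl(\revT(\hM',\hDD) - \sqrt{\eta}\bigr)$. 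Using the trivial bound $\opt_\eta(\hDD) \geq \opt(\hDD)$, substituting $\sqrt{\eta} = O(\sqrt{m\LL H\varepsilon})$, and noting that $m\LL H\varepsilon \le m\LL H\sqrt{\varepsilon}$ (for $\varepsilon \le 1$), one obtains
\[
\rev(\hM,\hDD) \;\geq\; c\bigl(1 - O(\sqrt{m\LL H\varepsilon})\bigr)\opt(\hDD) - O\bigl((m\LL H + \sqrt{m\LL H})\sqrt{\varepsilon}\bigr),
\]
which is exactly the desired TV bound.

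For the Prokhorov case, I would follow the identical two-step procedure. The first step now uses the Prokhorov branch of Theorem~\ref{thm:approximation preserving} (with $n=1$) to produce a $\kappa$-IC mechanism $\hM'$ with $\kappa = O(m\LL H\varepsilon + m\LL\sqrt{H\varepsilon})$ whose revenue satisfies $\revT(\hM',\hDD) \geq c\cdot\opt_\kappa(\hDD) - O(\kappa + \sqrt{m\LL H\kappa})$. Applying Lemma~\ref{lem:eps-IC to IC} with parameter $\kappa$ and using $\opt_\kappa(\hDD) \ge \opt(\hDD)$ yields
\[
\rev(\hM,\hDD) \;\geq\; c(1-\sqrt{\kappa})\opt(\hDD) - O\bigl(\kappa + (\sqrt{m\LL H}+1)\sqrt{\kappa}\bigr),
\]
as required. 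The main point to verify is that the two transformations compose in a $\hDD$-oblivious fashion; this holds because Theorem~\ref{thm:approximation preserving} is oblivious by construction, and Nisan's rescaling depends only on the known slack parameter ($\eta$ or $\kappa$), not on $\hDD$. There is no real technical obstacle beyond this bookkeeping of error terms; all heavy lifting is done by our robustness machinery in Theorem~\ref{thm:approximation preserving}, and Nisan's lemma is tailor-made for the final IC upgrade in the single-bidder regime.
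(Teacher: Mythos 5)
Your proposal is correct and matches the paper's own proof: the paper likewise constructs $\hM$ by applying Theorem~\ref{thm:approximation preserving} and then Nisan's transformation (Lemma~\ref{lem:eps-IC to IC}), with the same bookkeeping of the $\eta$ and $\kappa$ error terms. Your explicit use of $\opt_\eta(\hDD)\geq \opt(\hDD)$ and the obliviousness check are exactly the details the paper leaves implicit in its sketch.
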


\begin{prevproof}{Theorem}{thm:single approximation preserving}
	We only sketch the proof here. Let $M'$ be the mechanism constructed using Theorem~\ref{thm:approximation preserving}, and we construct another mechanism $\hM$ by modifying $M'$ using Lemma~\ref{lem:eps-IC to IC}. Clearly, $\hM$ is IC and IR. It is not hard to verify that $\rev\left(\hM,\hDD\right)$ satisfies the guarantees in the statement by combining the revenue guarantees for $\revT\left(M',\hDD\right)$ as provided by Theorem~\ref{thm:approximation preserving} and the relation between $\rev\left(\hM,\hDD\right)$ and $\revT\left(M',\hDD\right)$ as stated in Lemma~\ref{lem:eps-IC to IC}. 
	\end{prevproof}


\notshow{

\subsection{From Robustness to Learnability}
In this section, we show the connection between learnability and robustness by rephrasing Theorem~\ref{thm:approximation preserving}. In particular, we show that as long as we can learn each bidder's type distribution within Prokhorov distance $\Theta\left({\varepsilon^4 \over\poly(n,m,\L,H)}\right)$ or TV distance $\Theta\left({\varepsilon^2 \over n^3m^2\L^2 H^2}\right)$, we can construct a nearly BIC mechanism that has revenue at most $\varepsilon$ less than the optimal revenue w.r.t. $\hDD$. 

\notshow{

\begin{theorem}\label{thm:how close do we need to learn}
Let $\hDD=\bigtimes_{i=1}^n \hDD_i$ be the true but unknown type distribution, and $\alpha$ be an arbitrary real number in $(0,1)$. For any $\varepsilon\in (0,1)$, there exists a $\delta = \Theta\left({\alpha^4\varepsilon^4 \over\poly(n,m,\L,H)}\right)$ such that if we can learn an approximate distribution $\DD=\bigtimes_{i=1}^n \DD_i$ satisfying $\norm{ \DD_i- \hDD_i}_P\leq \delta$, there is a randomized algorithm that converts any BIC and IR mechanism $M$ w.r.t. $\DD$ to a mechanism $\hM$ that have the following properties with probability at least $1-\alpha$ over the randomness of the algorithm:\begin{enumerate}
	\item $\hM$ is $\kappa\over \alpha$-BIC and IR w.r.t. $\hDD$, where $\kappa=O\left(\left( \alpha\varepsilon\over nm\L H \right)^2 \right)$;
	\item if $M$ is a $c$-approximation to the optimal BIC revenue w.r.t. $\DD$, $\revT(\hM,\hDD)\geq  c\cdot\opt_{\kappa\over \alpha}(\hDD)-\varepsilon$.
\end{enumerate}\end{theorem}

}

\begin{theorem}\label{thm:how close do we need to learn}
Let $\hDD=\bigtimes_{i=1}^n \hDD_i$ be the true but unknown type distribution, and we use $\DD=\bigtimes_{i=1}^n \DD_i$ to denote the approximate type distribution that we learned.
\begin{itemize}
	\item Prokhorov distance: For any $\alpha\in (0,1)$ and any $\varepsilon\in (0,1)$, there exists a $\delta_1 = \Theta\left({\alpha^4\varepsilon^4 \over\poly(n,m,\L,H)}\right)$ such that if we can learn each $\DD_i$ satisfying $\norm{ \DD_i- \hDD_i}_P\leq \delta_1$ for each $i\in[n]$, there is a randomized algorithm that converts any BIC and IR mechanism $M$ w.r.t. $\DD$ to a mechanism $\hM$ that have the following properties with probability at least $1-\alpha$ over the randomness of the algorithm:\begin{enumerate}
	\item $\hM$ is $\kappa\over \alpha$-BIC and IR w.r.t. $\hDD$, where $\kappa=O\left(\left( \alpha\varepsilon\over nm\L H \right)^2 \right)$;
	\item if $M$ is a $c$-approximation to the optimal BIC revenue w.r.t. $\DD$, $\revT(\hM,\hDD)\geq  c\cdot\opt_{\kappa\over \alpha}(\hDD)-\varepsilon$.
\end{enumerate}
\item {TV distance} : For any $\varepsilon\in (0,1)$, there exists a $\delta_2 = \Theta\left({\varepsilon^2 \over n^3m^2\L^2 H^2}\right)$ such that if we can learn each $\DD_i$ satisfying $\norm{ \DD_i- \hDD_i}_{TV}\leq \delta_2$ for each $i\in[n]$, there is a deterministic algorithm that converts any BIC and IR mechanism $M$ w.r.t. $\DD$ to a mechanism $\hM$ that have the following properties:
\begin{enumerate}
	 		\item $\hM$ is a $\eta$-BIC and IR mechanism for $\hDD$, where $\eta=O({\varepsilon^2\over n m\L H })$;
	 		\item if $M$ is a $c$-approximation to the optimal BIC revenue for $\DD$, then $$\revT(\hM,\hDD)\geq c\cdot\opt_{\eta}(\hDD)-\varepsilon.$$
	 		\end{enumerate}

\end{itemize} \end{theorem}
If we have sample access to distribution $\hDD$, we can use Theorem~\ref{thm:how close do we need to learn} to determine the sample complexity for learning a close to optimal mechanism.

}

\section{Missing Proofs from Section~\ref{sec:old sampling}}\label{appx:sample independent}

We first show that for any product distribution $\FF$, we can learn the rounded distribution of $\FF$ within small TV distance with polynomially many samples.

\begin{lemma}\label{lem:discretized product TV}
	Let $\FF=\bigtimes_{j=1}^m \FF_j$, where $\FF_j$ is an arbitrary distribution supported on $[0,H]$ for every $j\in[m]$. Given $N=O\left({m^3H\over \eta^3}\cdot(\log 1/\delta+\log m)\right)$ samples, we can learn a product distribution $\hFF=\bigtimes_{j=1}^m \hFF_j$ such that $$\norm{{\FF}-\hFF}_{P}\leq \eta$$ with probability at least $1-\delta$.\end{lemma}
	
\begin{proof}
	We denote the samples as $s^1,\ldots,s^N$. Round each sample to multiples of $\eta'=\eta/m$. More specifically, let $\hat{s}^i= \left(\round{s^i_1/\eta'}\cdot \eta',\ldots, \round{s^i_m/\eta'}\cdot \eta' \right )$ for every sample $i\in[N]$. Let $\hFF_j$ be the uniform distribution over $\hat{s}^1_j,\ldots, \hat{s}^N_j$. Let $\uFF_j=\round{\FF_j}_{0,\eta'}$. Note that $\hFF_j$ is the empirical distribution of $N$ samples from $\uFF_j$. As $\left\lvert \supp(\uFF_j)\right\rvert = \round{\frac{H}{\eta'}}={mH\over \eta}$, with $N=O\left({ \lvert \supp(\uFF_j)\rvert\over \eta'^2}\cdot(\log 1/\delta+\log m)\right)$ samples, the empirical distribution $\hFF_j$ should satisfy $\norm{\hFF_j-\uFF_j}_{TV}\leq \eta'$ with probability at least $1-\delta/m$. By the union bound $\norm{\hFF_j-\uFF_j}_{TV}\leq \eta'$ for all $j\in[m]$ with probability at least $1-\delta$, which implies {$\norm{\hFF-\uFF}_{TV}\leq \eta$} with probability at least $1-\delta$. Observe that $\uFF$ and $\FF$ can be coupled so that the two samples are always within $\eta$ in $\ell_1$ distance. When {$\norm{\hFF-\uFF}_{TV}\leq \eta$}, consider the coupling between $\hFF$ and $\FF$ by composing the optimal coupling between $\hFF$ and $\uFF$ and the coupling between $\uFF$ and $\FF$. Clearly, the two samples from $\hFF$ and $\FF$ are within $\ell_1$ distance $\eta$ with probability at least $1-\eta$. Due to Theorem~\ref{thm:prokhorov characterization}, the existence of this coupling implies that {$\norm{\hFF-\FF}_{P}\leq \eta$}. 
\end{proof}

\begin{prevproof}{Theorem}{thm:multi-item auction sample}
We only consider the case, where $\eta\leq \alpha\cdot \min\left\{{\varepsilon\over n}, {\varepsilon^2\over n^2m\LL H}\right\}$. $\alpha$ is an absolute constant and we will specify its choice in the end of the proof.

In light of Lemma~\ref{lem:discretized product TV}, we take $N=O\left({m^3H\over \sigma^3}\cdot(\log {n\over \delta}+\log m)\right)$ from $\hDD$ and learn a distribution $\DD$ so that, with probability at least $1-\delta$, $\norm{\DD_i-\hDD_i}_P\leq \sigma$ for all $i\in[n]$. According to Theorem~\ref{thm:approximation preserving}, we can transform $M$ into mechanism $\hM$ that is $O\left( nm\LL H \sigma+m\LL\sqrt{n H\sigma} \right)$-BIC w.r.t. $\hDD$ and IR. Choose $\sigma$ in a way so that $\hM$ is $\eta$-BIC w.r.t. $\hDD$. Moreover, $\hM$'s revenue under truthful bidding satisfies 	
$$\revT\left(\hM,\hDD\right)\geq c\cdot \opt_{\eta}\left(\hDD\right)-O\left(n\eta+n\sqrt{m\LL H\eta}\right).$$ If we choose $\alpha$ to be sufficiently small, then $$\revT\left(\hM,\hDD\right)\geq c\cdot \opt_{\eta}\left(\hDD\right)-\varepsilon.$$

When there is only a single-bidder, we can apply Lemma~\ref{lem:eps-IC to IC} to transform $\hM$ to an IC and IR mechanism, whose revenue satisfies the guarantee in the statement.

\notshow{
	We take $N=O\left({m^3H\over \sigma^3}\cdot(\log 1/\delta+\log m)\right)$ samples from $\round{\hDD_i}_{0,\sigma/m}$, and let $\DD_i$ be the empirical distribution over the samples. By Lemma~\ref{lem:discretized product TV}, $\norm{\DD_i-\round{\hDD_i}_{0,\sigma/m}}_{TV} \leq \sigma$ with probability $1-\delta$. We use $\hlDD$ to denote $\bigtimes_{i=1}^n \round{\hDD_i}_{0,\sigma/m}$ and  choose $\sigma = \frac{\tau^2}{n^3m^2L^2H^2}$. By Theorem~\ref{thm:how close do we need to learn}, we can construct a mechanism $M_1$ based on $M$ that is $\alpha=O(\frac{\tau^2}{nmLH})$-BIC and IR for $\hlDD$. Since $M$ is 
	c-approximate to $\text{OPT}(\DD)$, we also have revenue bound $$\revT(M_1,\hlDD ) \geq c\cdot\opt_{\alpha}(\hlDD) - \tau .$$
	
Next, we construct $\hM$ based Lemma~\ref{lem:discrete to continuous}. Since $M_1$ is $\alpha$-BIC and IR with respect to $\hlDD$, $\hM$ that is $(\alpha+3\L{\sigma})$-BIC and IR with respect to $\hDD$ and 
$$\revT(\hM,\hDD) \geq \revT(M_1,\hlDD)-n\L{\sigma} \geq c \cdot \opt_\alpha(\hlDD) - \tau-n\L{\sigma}.$$

Let $M^*$ be the optimal BIC and IR mechanism w.r.t. $\hDD$. According to Lemma~\ref{lem:original to rounded}, 
there exists a $O(\L \sigma)$-BIC and IR mechanism $M'$ such that $$\revT(M',\hlDD)\geq \rev(M^*,\hDD)-n\L\sigma=\opt(\hDD)-n\L\sigma.$$ Note that $\alpha>\L\sigma$, so $\revT(M',\hlDD)\leq \opt_\alpha(\hlDD)$. Hence, $$\revT(\hM,\hDD) \geq c \cdot \opt_\alpha(\hlDD) - \tau-n\L{\sigma}\geq  c \cdot \opt(\hDD) - \tau-(c+1)n\L{\sigma}.$$ 

Finally, $\opt(\hDD)\geq \opt_{\alpha+3\L\sigma}(\hDD)-2n\sqrt{m\L H(\alpha+3\L\sigma)}$ by Lemma~\ref{lem:eps-BIC to BIC}. Let $\eta=\alpha+3\L\sigma$. To sum up, $\hM$ is a $\eta$-BIC mechanism and 
$$\revT(\hM,\hDD)\geq c\cdot \opt_{\eta}(\hDD)-\tau-(c+1)n\L{\sigma}-2n\sqrt{m\L H\eta}.$$

Choose $\tau$ appropriately so that $\varepsilon=\tau+(c+1)n\L{\sigma}+2n\sqrt{m\L H\eta}$. In particular, $\eta=\frac{\varepsilon^2}{\poly(n,m,\L,H)}$ and $N=O(\frac{\poly(n,m,\L,H,\log(1/\delta))}{\varepsilon^6})$.
}
\end{prevproof}

\section{Missing Proofs from Section~\ref{sec:new sampling}}\label{appx:sample}

\subsection{Proof of Theorem~\ref{thm:sample complexity MRF finite alphabet}} \label{sec:MRFs}

\begin{prevproof}{Theorem}{thm:sample complexity MRF finite alphabet}
For the purposes of this proof we take $n=|V|$. We first prove the finite alphabet case, we then extend the result to the infinite alphabet case, and finally we discuss how to accommodate latent variables.

\paragraph{Finite alphabet $\Sigma$:}We will prove our first sample complexity bound by constructing an $\varepsilon$-cover, in total variation distance, of the set $\cal P$ of all MRFs with hyperedges of size at most $d$. We can assume that all $p \in {\cal P}$ satisfy the following:
\begin{itemize}

\item[$(A1):$] $p$ is defined on the hypergraph $G=(V,E)$, whose edge set is $E={V \choose d}$, and all its node potential functions are constant and equal  $1$.
\end{itemize}

\noindent The reason we can assume $(A1)$ for all $p\in {\cal P}$ is that potentials of nodes and smaller-size hyperedges can always be incorporated into the potentials of some size-$d$ hyperedge that contains them, and the potentials of size-$d$ hyperedges that are not present can always be taken to be constant $1$ functions.\\ 

Moreover, we can assume the following property for all MRFs $p \in {\cal P}$:
\begin{itemize}
\item[$(A2)$:] $\max_{\sigma \in \Sigma^e}{\psi_e(\sigma)}=1, \forall e\in E$.
\end{itemize}
The reason we can assume $(A2)$ for all $p\in {\cal P}$ is that the density of an MRF is invariant to multiplying any single potential function by some scalar.

\bigskip Now, given some MRF $p \in {\cal P}$, satisfying $(A1)$ and $(A2)$, which we can assume without loss of generality, we will make a sequence of transformations to arrive at some MRF $p'' \in {\cal P}$ such that $\norm{p-p''}_{TV} \le\varepsilon$ and~$p''$ can be described using $B={\rm poly}\left(|E|, |\Sigma|^d, \log({1\over \varepsilon})\right)$ bits. This, in turn, will imply that there exists an $\varepsilon$-cover ${\cal P}' \subset {\cal P}$ that has size $2^B$, and the existence of an $\varepsilon$-cover of this size implies that $O(B/\varepsilon^2)$-many samples from any $p \in {\cal P}$ suffice to learn some $q \in {\cal P}$ such that $\norm{p-q}_{TV} \le O(\varepsilon)$, using a tournament-style  density estimation algorithm; see~e.g.~\cite{devroye2012combinatorial,DaskalakisK14,AcharyaJOS14} and their references. 

\medskip Here are the steps to transform an arbitrary $p \in {\cal P}$ into some $p'' \in {\cal P}$ of low bit complexity:
\begin{itemize}

\item ({\bf Notation}:) From now on we will use $\hat{p}$ to denote unnormalized densities. I.e. if $p$ is defined in terms of potential functions $(\psi^p_e(\cdot))_{e\in E}$, then $\hat{p}(x)= \prod_{e\in E}\psi^p_e(x_e), \forall x \in \Sigma^V$.

\item ({\bf Step 1:}) Given some arbitrary $p \in {\cal P}$, we  construct some $p' \in {\cal P}$ such that $\norm{p-p'}_{TV}\le \varepsilon$, $p'$  satisfies $(A1)$ and $(A2)$ and, moreover, the unnormalized density of $p'$ satisfies that, for all $x \in \Sigma^V$, $\hat{p}'(x) = \left(1+{\varepsilon \over 2 n^d}\right)^{i_x}$, for some integer $i_x$. The existence of such $p'$ follows from the invariance of MRFs with respect to multiplying their potential functions by scalars, and the following.

\begin{claim} \label{claim:can round potentials}
Suppose $p, p' \in {\cal P}$ satisfy $(A1)$ and are defined in terms of potential functions $(\psi^p_e)_e$ and  $(\psi^{p'}_e)_e$ respectively. Moreover, suppose that $\forall e, \sigma \in \Sigma^e:$ $$\psi^{p'}_e(\sigma) \le \psi^{p}_e(\sigma) \le \left(1+{\varepsilon\over 2 n^d}\right)\psi^{p'}_e(\sigma).$$ Then $\norm{p-p'}_{TV}\le \varepsilon$.
\end{claim}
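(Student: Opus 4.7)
\medskip
\noindent \textbf{Proof proposal.} The plan is to show that the pointwise multiplicative closeness of the potential functions transfers, via only a mild blow-up factor, to the unnormalized densities, and then to the normalized densities, at which point the total variation bound is immediate. Let $\hat{p}(x) = \prod_{e\in E}\psi^p_e(x_e)$ and $\hat{p}'(x) = \prod_{e\in E}\psi^{p'}_e(x_e)$ denote the unnormalized densities, and let $Z=\sum_x \hat{p}(x)$ and $Z'=\sum_x \hat{p}'(x)$ be the partition functions, so that $p(x) = \hat{p}(x)/Z$ and $p'(x)=\hat{p}'(x)/Z'$.

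First I would observe that, because the hypergraph in $(A1)$ has $|E|=\binom{n}{d}\le n^d$ hyperedges and each potential ratio $\psi^p_e(\sigma)/\psi^{p'}_e(\sigma)$ lies in $[1,1+\varepsilon/(2n^d)]$ by hypothesis, multiplying across all hyperedges gives, for every $x\in\Sigma^V$ with $\hat{p}'(x)>0$,
\[
1 \le \frac{\hat{p}(x)}{\hat{p}'(x)} \le \left(1+\frac{\varepsilon}{2n^d}\right)^{|E|} \le \left(1+\frac{\varepsilon}{2n^d}\right)^{n^d} \le e^{\varepsilon/2}.
\]
(If $\hat{p}'(x)=0$ the lower inequality on potentials forces $\hat{p}(x)=0$ as well, so these $x$ contribute nothing to either distribution or the TV comparison.) Summing over $x$ yields the same two-sided bound on the partition functions, $1\le Z/Z' \le e^{\varepsilon/2}$.

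Combining the two displays, the ratio of normalized densities satisfies $p(x)/p'(x)=(\hat{p}(x)/\hat{p}'(x))\cdot(Z'/Z)\in[e^{-\varepsilon/2},e^{\varepsilon/2}]$ pointwise. For $\varepsilon\in[0,1]$ one has $e^{\varepsilon/2}-1\le \varepsilon$ and $1-e^{-\varepsilon/2}\le\varepsilon$, so $|p(x)-p'(x)|\le \varepsilon\, p'(x)$ for every $x$. Summing gives $\sum_x |p(x)-p'(x)|\le \varepsilon$, hence $\|p-p'\|_{TV}=\tfrac{1}{2}\sum_x |p(x)-p'(x)|\le \varepsilon/2\le \varepsilon$, as claimed. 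No step here is delicate: the main quantitative input is simply that $|E|\le n^d$, which is precisely why the hypothesis uses the slack factor $1+\varepsilon/(2n^d)$ so that the product across all hyperedges is still controlled by $e^{\varepsilon/2}$. The (very mild) obstacle to be careful about is the degenerate-zero case for potentials, handled as above by restricting attention to the common support.
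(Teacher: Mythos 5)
Your proof is correct and follows essentially the same route as the paper's: bound the ratio of the unnormalized densities by $\left(1+\frac{\varepsilon}{2n^d}\right)^{|E|}\le e^{\varepsilon/2}$, transfer this to the normalized densities via the partition functions, and convert the pointwise multiplicative closeness into the total variation bound (your version even yields the slightly sharper constant $\varepsilon/2$). One cosmetic remark: when $\hat{p}'(x)=0$ it is the \emph{upper} potential inequality $\psi^p_e(\sigma)\le\left(1+\frac{\varepsilon}{2n^d}\right)\psi^{p'}_e(\sigma)$, not the lower one, that forces $\hat{p}(x)=0$.
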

\begin{prevproof}{Claim}{claim:can round potentials} It follows from the condition in the statement of the claim that, for all $x \in \Sigma^V$:
$$\hat{p}'(x) \le \hat{p}(x) \le \left(1+{\varepsilon \over 2 n^d}\right)^{n \choose d} \hat{p}'(x) \le e^{\varepsilon/2}\hat{p}'(x) \le (1+\varepsilon) \hat{p}'(x).$$

Using the above, let us compare the normalized densities. For all $x\in \Sigma^V$:
$$p(x)={\hat{p}(x) \over \sum_{y} \hat{p}(y)} \le {\hat{p}'(x) (1+\varepsilon) \over \sum_{y} \hat{p}'(y)} \le p'(x) (1+\varepsilon).$$
Moreover,
$$p(x)={\hat{p}(x) \over \sum_{y} \hat{p}(y)} \ge {\hat{p}'(x) \over \sum_{y} \hat{p}'(y){ (1+\varepsilon)}} \ge p'(x) /(1+\varepsilon).$$
Using the above, let us bound the total variation distance between $p$ and $p'$:
\begin{align*}
\norm{p-p'}_{TV} &= {1 \over 2} \sum_x |p(x)-p'(x)|\\
 &={1 \over 2} \sum_{x: p(x) \ge p'(x)} (p(x)-p'(x))+ {1 \over 2} \sum_{x: p(x) < p'(x)} (p'(x)-p(x))\\
&\le{1 \over 2} \sum_{x: p(x) \ge p'(x)} \varepsilon p'(x) + {1 \over 2} \sum_{x: p(x) < p'(x)} \varepsilon p(x) \le \varepsilon.
\end{align*}
\end{prevproof}
\item ({\bf New Notation:}) We introduce some further notation. Let $\left(\psi_e^{p'}\right)_e$ be the potential functions defining distribution $p' \in {\cal P}$ from Step 1. We reparametrize these potential functions as follows:
$$\forall e, x \in \Sigma^e: \xi^{p'}_e(x) \equiv \log\left(\psi^{p'}_e(x)\right)/\log\left(1+{\varepsilon \over 2 n^d}\right).$$
Given the definition of $p'$ in Step 1, our new potential functions satisfy the following linear equations:
\begin{align}
\forall x\in \Sigma^V: \sum_{e \in E}\xi^{p'}_e(x_e) = i_x, \label{eq: LP1}
\end{align}
where, because of Assumption $(A2)$, satisfied by $p'$, the integers $i_x \le 0$, for all $x$. 

\item ({\bf Step 2:}) {We define $p''$ by setting up a linear program with variables $\xi_e^{p''}(x_e), \forall e, x_e\in\Sigma^E$. In particular, the number of variables of the linear program we are about to write is $L = |E|\cdot |\Sigma|^d$. To define our linear program, we first define $x^*=\argmax_{x} i_x$, and partition $\Sigma^V$ into two sets $\Sigma^V = {\cal G} \sqcup {\cal B}$, by taking ${\cal G}=\{x~|~i_x \ge i_{x^*}-T\}$, and $\cal B$ the complement of $\cal G$, for $T= {4 n^{d} \over \varepsilon}(n \log|\Sigma|+\log({1 \over \varepsilon}))$. In particular, all configurations in $\cal B$ have probability $p'(x) \le \varepsilon/|\Sigma|^n$. Our goal is to exhibit that there exists $p'' \in {\cal P}$ that (i) satisfies properties $(A1)$ and $(A2)$; (ii) can be described with ${\rm poly}\left(|E|, |\Sigma|^d, \log({1\over \varepsilon})\right)$ bits; 
 and (iii) satisfies $\sum_{x\in \cal B} p
''(x)\leq \varepsilon$ and $p''(x)=p'(x)\cdot (1+\delta)~ \forall x \in {\cal G}$, where $ \delta \in \left[-\varepsilon, {\varepsilon\over 1-\varepsilon}\right]$. We note that (iii) implies that $\norm{p'-p''}_{TV} \le \varepsilon$, as either $p''(x)\geq p'(x)$ for all $x\in {\cal G}$ simultaneously or $p''(x)< p'(x)$ for all $x\in {\cal G}$ simultaneously, and the total mass in $\cal B$ under both $p'$ and $p''$ are at most $\varepsilon$. Combining (iii) and Claim~\ref{claim:can round potentials}, we have (iv) $\norm{p-p''}_{TV} \le 2\varepsilon$. To exhibit the existence of $p''$ we write the following linear program:
\begin{align}
\forall x\in {\cal G}\setminus \{ x^* \}: &\sum_{e \in E}\xi^{p''}_e(x_e) - \sum_{e \in E}\xi^{p''}_e(x^*_e)= i_x-i_{x^*} \label{eq: LP2}\\
\forall x\in {\cal B}: &\sum_{e \in E}\xi^{p''}_e(x_e) - \sum_{e \in E}\xi^{p''}_e(x^*_e) \le -T \notag
\end{align}}
Note that, because LP~\eqref{eq: LP1} is feasible, it follows that LP~\eqref{eq: LP2} is feasible as well. Moreover, the coefficients and constants of LP~\eqref{eq: LP2} have {absolute value less than $T$ and} bit complexity polynomial in $d$, $\log n$, $\log({1 \over \varepsilon})$ and $\log \log |\Sigma|$, and the number of variables of this LP is $L=|E|\cdot |\Sigma|^d$. From the theory of linear programming it follows that there exists a solution to  LP~\eqref{eq: LP2} of bit complexity polynomial in $|E|$, $|\Sigma|^d$, $\log n$, and $\log({1 \over \varepsilon})$. {Why is (iii) true? It is not hard to see that for any $x\in {\cal B}$, $p''(x)\leq \varepsilon / |\Sigma|^n$ due to the second type of constraints in LP~\eqref{eq: LP2}. For any $x\in {\cal G}\setminus \{ x^* \}$, ${p''(x)\over p''(x^*)} = {p'(x)\over p'(x^*)} $ due to the first type of constraints in LP~\eqref{eq: LP2}, so $p''(x)=p'(x)\cdot (1+\delta)~ \forall x \in {\cal G}$ for some constant $\delta$. Since both $\sum_{x\in{\cal G}} p'(x)$ and $\sum_{x\in{\cal G}} p''(x)$ lie in $[1-\varepsilon,1]$,   $ \delta$ lies in $\left[-\varepsilon, {\varepsilon\over 1-\varepsilon}\right]$.}
\end{itemize}

To summarize the above (setting $\varepsilon \leftarrow \varepsilon/2$ in the above derivation), given an arbitrary $p \in {\cal P}$ we can construct $p'' \in {\cal P}$ such that: $p''$ can be described using $B={\rm poly}\left(|E|, |\Sigma|^d, \log({1\over \varepsilon})\right)$ bits---by specifying the low complexity solution $\left(\xi^{p''}_e\right)_e$ to LP~\eqref{eq: LP2}, and $p''$ satisfies $\norm{p-p''}_{TV} \le \varepsilon$. As we have noted above, the existence of such $p''$ for every $p \in {\cal P}$ implies the existence of an $\varepsilon$-cover, in total variation distance, of ${\cal P}$ that has size $2^B$, and tournament-style arguments imply then that any $p \in {\cal P}$ can be learned to within $O(\varepsilon)$ in total variation distance from $O({B\over \varepsilon^2})$-many samples, i.e.~from ${{\rm poly}\left(|E|, |\Sigma|^d, \log({1\over \varepsilon})\right) \over \varepsilon^2}$-many samples.

We now prove the second part of the statement. If the hypergraph $(V,E_p)$ with respect to which $p$ is defined is known, we redo the above argument, except we take ${\cal P}$ to be all MRFs defined on the graph $G=(V,E)$, where $E$ is the union of $E_p$ and all singleton sets corresponding to the nodes $V$.

For the third part of the statement, we note that an arbitrary distribution $p$ on $d$ variables, each taking values in $\Sigma$, can be expressed as a MRF with maximum hyperedge-size $d$. As such, it is folklore (see e.g.~\cite{devroye2012combinatorial}) that $\Omega(|\Sigma|^d/\varepsilon^2)$ samples are necessary to learn $p$ to within $\varepsilon$ in total variation distance. This completes the proof for the finite alphabet case.

Next, we show how to extend our sample complexity to the case where the alphabet $\Sigma=[0,H]$.

\paragraph{Alphabet $\Sigma=[0,H]$:} Let $\delta= {\varepsilon\over 8d\CC (n+1)^d}$, and $\Sigma_\delta$ be the set of all multiples of $\delta$ between $0$ and $H$.\footnote{We further assume that $H$ is a multiple of $\delta$. If not, let $k$ be the integer such that $\delta \in \left [{H\over 2^{k}}, {H\over 2^{k-1}}\right]$, and change $\delta$ to be ${H\over 2^{k}}$.} We first define distribution $\tilde{p}$ to be the rounded version of $p$ using the following coupling. For any sample $x$ drawn from $p$, create a sample $\tilde{x}$ drawn from $\tilde{p}$ such that $\tilde{x}_v=\left\lfloor{x_v\over \delta }\right\rfloor \cdot \delta$ for every $v\in V$. Note that (i) this coupling makes sure that the two samples from $p$ and $\tilde{p}$ are always within $\varepsilon$ of each other in $\ell_1$-distance. Our plan is to show that we can (ii) learn an MRF $q$ with polynomially many samples from distribution $\tilde{p}$ such that  $\norm{q-\tilde{p}}_{TV}= O(\varepsilon)$. Why does this imply our statement? First, we can generate a sample from $\tilde{p}$ using a sample from $p$ due to the coupling between the two distributions. Second, $\norm{q-\tilde{p}}_{TV}= O(\varepsilon)$ means that we can couple $q$ and $\tilde{p}$ in a way that the two samples are the same with probability at least $1-O(\varepsilon)$. Composing this coupling with the coupling between $\tilde{p}$ and $p$, we have a coupling between $p$ and $q$ so that the two samples are within $\varepsilon$ of each other in $\ell_1$-distance with probability at least $1-O(\varepsilon)$. According to Theorem~\ref{thm:prokhorov characterization}, $\norm{p-q}_P= O(\varepsilon)$. Now, we focus on proving (ii).

	We separate the proof into two steps. In the first step, we show that for any $\tilde{p}$, there is a discretized MRF $q'$ supported on $\Sigma_\delta^V$ with hyperedges of size at most $d$ such that $\norm{\tilde{p}-q'}_{TV}\leq \varepsilon$ and $q'$ can be described with $B={\rm poly}\left(|E|, |\Sigma_\delta|^d, \log({1\over \varepsilon})\right)$ bits. In other words, there is a $2^B$-sized $\varepsilon$-cover over all  possible distributions~$\tilde{p}$. In the second step, we show how to learn an MRF $q$ with $O(B/\varepsilon^2)$ samples from $\tilde{p}$ using a tournament-style density estimation algorithm; see~e.g.~\cite{devroye2012combinatorial,DaskalakisK14,AcharyaJOS14} and their references. Before we present the two steps of our proof, and in order to simplify our notation and avoid carrying around node potentials, let us introduce into the edge set $E$ of our hypergraph a singleton edge for every node $v$, and take the potential of every such edge $e=\{v\}$ to equal the node potential of node $v$.
	
	\begin{itemize}
		\item \textbf{(Step 1:)} We first define a discrete MRF $p'$ on the same graph $G= (V,E)$ as $p$ with alphabet $\Sigma_\delta$.  Distribution $p'$ is defined by choosing its log-potential $\phi_e^{p'}(x_e)$ to be exactly $\phi_e^p(x_e)$ for every hyperedge $e\in E$ and every possible value $x_e\in \Sigma_\delta^e$. 
		Next, we show that (iii) $\norm{p'-\tilde{p}}_{TV}\leq \varepsilon/2$. 
	
We use $A_x$ to denote the $n$-dimensional cube $\bigtimes_{v\in V}[x_v,x_v+\delta)$ for any $x\in \Sigma_\delta^V$. Note that $$\tilde{p}(x)={\int_{A_x} \exp\left ({\sum_e \phi^p_e(y_e)}\right) dy \over \int_{[0,H]^n} \exp\left({\sum_e \phi^p_e(y_e)}\right) dy}\leq {\delta^n \exp\left ({\sum_e \phi^p_e(x_e)}\right)\cdot \exp(d|E| \CC\delta)  \over \delta^n  \sum_{y\in \Sigma_\delta^V}\exp\left ({\sum_e \phi^p_e(y_e)}\right)\cdot \exp(-d|E| \CC\delta)}\leq p'(x) (1+\varepsilon/2).$$ The first inequality is due the $\CC$-Lipschitzness of the log potential functions and the second inequality is due to the definition of $\delta$.
	Similarly,
	 $$\tilde{p}(x)={\int_{A_x} \exp\left ({\sum_e \phi^p_e(y_e)}\right) dy \over \int_{[0,H]^n} \exp\left({\sum_e \phi^p_e(y_e)}\right) dy}\geq {\delta^n \exp\left ({\sum_e \phi^p_e(x_e)}\right)\cdot \exp(-d|E| \CC \delta)  \over \delta^n  \sum_{y\in \Sigma_\delta^V}\exp\left ({\sum_e \phi^p_e(y_e)}\right)\cdot \exp(d|E| \CC\delta)}\geq {p'(x) \over 1+\varepsilon/2}.$$
	We complete the proof of (iii) by combining the two inequalities.
	\begin{align*}
\norm{\tilde{p}-p'}_{TV} &= {1 \over 2} \sum_{x\in\Sigma_\delta^V} |\tilde{p}(x)-p'(x)|\\
 &={1 \over 2} \sum_{x: \tilde{p}(x) \ge p'(x)} (\tilde{p}(x)-p'(x))+ {1 \over 2} \sum_{x: \tilde{p}(x) < p'(x)} (p'(x)-\tilde{p}(x))\\
&\le{1 \over 2} \sum_{x: \tilde{p}(x) \ge p'(x)} {\varepsilon\over 2} p'(x) + {1 \over 2} \sum_{x: \tilde{p}(x) < p'(x)} {\varepsilon\over 2} \tilde{p}(x) \le \varepsilon/2.
\end{align*}
  Let $\cal P$ be the set of all MRFs  with hyperedges of size at most $d$ and alphabet $\Sigma_\delta$. By redoing Step 1 and 2 of the proof for the finite alphabet case, we can show that (iv) for any $\hat{p}\in {\cal P}$, there exists another $\hat{p}' \in {\cal P}$ describable with $B={\rm poly}\left(|E|, |\Sigma_\delta|^d, \log({1\over \varepsilon})\right)$ bits such that $\norm{\hat{p}-\hat{p}'}_{TV}\leq \varepsilon/2$. Since $p'\in {\cal P}$, there exists a $q'\in {\cal P}$ describable with $B$ bits such that $\norm{p'-q'}_{TV}\leq \varepsilon/2$. Combining this inequality with (iii), we have $\norm{\tilde{p}-q'}_{TV}\leq \varepsilon$.

 \item \textbf{(Step 2:)} Let ${\cal P'}\subset {\cal P}$ be the set of all MRFs in $\cal P$ with bit complexity at most $B$ from Step 1. Since $\min_{\tilde{q} \in {\cal P'}} \norm{\tilde{q}-\tilde{p}}_{TV}\leq \varepsilon$, we can learn an MRF $q \in \cal P'$ such that $\norm{q-\tilde{p}}_{TV}\leq O(\varepsilon)$ with $O(B/\varepsilon^2)$ samples from $\tilde{p}$ using a tournament-style density estimation algorithm~\cite{devroye2012combinatorial,DaskalakisK14,AcharyaJOS14}.	\end{itemize}
To sum up, we can learn an MRF $q$ such that $\norm{q-{p}}_{P}\leq \varepsilon$ with ${{\rm poly}\left(|V|^{d^2}, {\left ( H\over \varepsilon\right)}^d,\CC^d \right) }$ many samples from~$p$. If the graph $G$ on which $p$ is defined is known, we can choose $\delta$ to be $O\left( {\varepsilon\over 8d\CC |E|}\right)$ and improve the sample complexity to ${{\rm poly}\left(|V|, |E|^{d}, {\left ( H\over \varepsilon\right)}^d,\CC^d \right)}$.

\paragraph{Latent Variable Models:} Finally, we consider the case where only $k$ out of the $n$ variables of the MRF are observable. Let $S$ be the set of observable variables, and  use $p_S$ to denote the marginal  of $p$ on these variables. We will first consider the finite alphabet case. Consider the $\varepsilon$-cover we constructed earlier. We argued that for any MRF $p$ there exists an MRF $q$ in the cover  such that $\norm{p-q}_{TV}\leq \varepsilon$. For that $q$ we clearly also have $\norm{p_S-q_S}_{TV}\leq \varepsilon$. The issue is that we do not know for a given $q$ in the cover which subset of its variables set $S$ might correspond to. But this is not a big deal. We can  use our cover to generate an $\varepsilon$-cover  of all possible marginals  $p_S$ of all possible MRFs $p$ as follows. Indeed, for any $q'$ in the original $\varepsilon$-cover, we include in the new cover the marginal distribution $q'_{S'}$ of every possible subset $S'$ of its variables of size $k$. This increases the size of our original cover by a multiplicative factor of at most $n^k$. As a result, the number of samples required for the tournament-style density estimation algorithm to learn a good distribution  increases by a multiplicative factor of $k\log n$. For the infinite alphabet case, our statement follows from applying the same modification to the $\varepsilon$-cover of $\tilde{p}$.
\end{prevproof}

\notshow{
\begin{theorem}[Learnability of Continuous MRFs in Prokhorov Distance] \label{thm:sample complexity MRF continuous}
Suppose we are given sample access to an MRF $p$, as in Definition~\ref{def:MRF}, defined on an unknown graph with hyperedges of size at most $d$. Suppose that $\Sigma=[0,H]$ and the log potential $\phi_e^p(\cdot) \equiv \log \left(\psi_e^p(\cdot) \right)$ for every edge $e$ is $\CC$-Lipschitz in the $\ell_1$-norm. Given ${{\rm poly}\left(|V|^{d^2}, {\left ( H\over \varepsilon\right)}^d,\CC^d \right) \over \varepsilon^2}$  samples from $p$ we can learn some MRF~$q$ whose hyper-edges also have size at most $d$ such that $\norm{p-q}_{P} \le \varepsilon$. If the graph on which $p$ is defined is known, then${{\rm poly}\left(|V|, |E|^{d}, {\left ( H\over \varepsilon\right)}^d,\CC^d \right) \over \varepsilon^2}$-many samples suffice. 
\end{theorem}

\begin{prevproof}{Theorem}{thm:sample complexity MRF continuous}
	Let $n=|V|$, $\delta= {\varepsilon\over 8d\CC n^d}$, and $\Sigma_\delta$ be the set of all multiples of $\delta$ between $0$ and $H$.~\footnote{We further assume that $H$ is a multiple of $\delta$. If not, let $k$ be the integer such that $\delta \in \left [{H\over 2^{k}}, {H\over 2^{k-1}}\right]$, and change $\delta$ to be ${H\over 2^{k}}$.} We first define distribution $\tilde{p}$ to be the rounded version of $p$ using the following coupling. For any sample $x$ drawn from $p$, create a sample $\tilde{x}$ drawn from $\tilde{p}$ such that $\tilde{x}_v=\left\lfloor{x_v\over \delta }\right\rfloor \cdot \delta$ for every $v\in V$. Note that (i) this coupling makes sure that the two samples from $p$ and $\tilde{p}$ are always within $\varepsilon$ of each other in $\ell_1$-distance. Our plan is to show that we can (ii) learn an MRF $q$ with polynomially many samples from distribution $\tilde{p}$ such that  $\norm{q-\tilde{p}}_{TV}= O(\varepsilon)$. Why does this imply our claim? First, we can generate a sample from $\tilde{p}$ using a sample from $p$ due to the coupling between the two distributions. Second, $\norm{q-\tilde{p}}_{TV}= O(\varepsilon)$ means that we can couple $q$ and $\tilde{p}$ in a way that the two samples are the same with probability at least $1-O(\varepsilon)$. Composing this coupling with the coupling between $\tilde{p}$ and $p$, we have a coupling between $p$ and $q$ so that the two samples are at most $\varepsilon$ away from each other in$\ell_1$-distance with probability at least $1-O(\varepsilon)$. According to Theorem~\ref{thm:prokhorov characterization}, $\norm{p-q}_P= O(\varepsilon)$. Now, we focus on proving (ii).

	We separate the proof into two steps. In the first step, we show that for any $\tilde{p}$, there is a discretized MRF $q'$ supported on $\Sigma_\delta^V$ with hyperedges of size at most $d$ such that $\norm{\tilde{p}-q'}_{TV}\leq \varepsilon$ and $q'$ can be described with $B={\rm poly}\left(|E|, |\Sigma_\delta|^d, \log({1\over \varepsilon})\right)$ bits. In other words, there is a size $2^B$ $\varepsilon$-cover over all the possible distribution $\tilde{p}$. In the second step, we show how to learn an MRF $q$ with $O(B/\varepsilon^2)$ samples from $\tilde{p}$ using a tournament-style density estimation algorithm; see~e.g.~\cite{devroye2012combinatorial,DaskalakisK14,AcharyaJOS14} and their references. 
	
	\begin{itemize}
		\item \textbf{(Step 1:)} We first define a discrete MRF $p'$ on the same graph $G= (V,E)$ as $p$ with alphabet $\Sigma_\delta$. More specifically, we define the potential $\phi_e^{p'}(x_e)$ to be exactly $\phi_e^p(x_e)$ for every hyperedge $e\in E$ and every possible value $x_e\in \Sigma_\delta^e$. 
		Next, we show that (iii) $\norm{p'-\tilde{p}}_{TV}\leq \varepsilon/2$. 
	
We use $A_x$ to denote the $n$-dimensional cube $\bigtimes_{v\in V}[x_v,x_v+\delta)$ for any $x\in \Sigma_\delta^V$. Note that $$\tilde{p}(x)={\int_{A_x} \exp\left ({\sum_e \phi^p_e(y_e)}\right) dy \over \int_{[0,H]^n} \exp\left({\sum_e \phi^p_e(y_e)}\right) dy}\leq {\delta^n \exp\left ({\sum_e \phi^p_e(x_e)}\right)\cdot \exp(d|E| \CC\delta)  \over \delta^n  \sum_{y\in \Sigma_\delta^V}\exp\left ({\sum_e \phi^p_e(y_e)}\right)\cdot \exp(-d|E| \CC\delta)}\leq p'(x) (1+\varepsilon/2).$$ The first inequality is due the $\CC$-Lipschitzness of the potential functions and the second inequality is due to the definition of $\delta$.
	Similarly,
	 $$\tilde{p}(x)={\int_{A_x} \exp\left ({\sum_e \phi^p_e(y_e)}\right) dy \over \int_{[0,H]^n} \exp\left({\sum_e \phi^p_e(y_e)}\right) dy}\geq {\delta^n \exp\left ({\sum_e \phi^p_e(x_e)}\right)\cdot \exp(-d|E| \CC \delta)  \over \delta^n  \sum_{y\in \Sigma_\delta^V}\exp\left ({\sum_e \phi^p_e(y_e)}\right)\cdot \exp(d|E| \CC\delta)}\geq p'(x) (1-\varepsilon/2).$$
	We complete the proof of (ii) by combining the two inequalities.
	\begin{align*}
\norm{\tilde{p}-p'}_{TV} &= {1 \over 2} \sum_{x\in\Sigma_\delta^V} |\tilde{p}(x)-p'(x)|\\
 &={1 \over 2} \sum_{x: \tilde{p}(x) \ge p'(x)} (\tilde{p}(x)-p'(x))+ {1 \over 2} \sum_{x: \tilde{p}(x) < p'(x)} (p'(x)-\tilde{p}(x))\\
&\le{1 \over 2} \sum_{x: \tilde{p}(x) \ge p'(x)} {\varepsilon\over 2} p'(x) + {1 \over 2} \sum_{x: \tilde{p}(x) < p'(x)} {\varepsilon\over 2} p(x) \le \varepsilon/2.
\end{align*}
  Let $\cal P$ be the set of all MRFs of with hyperedges of size at most $d$ and alphabet $\Sigma_\delta$. By redoing step 1 and 2 of the proof for Theorem~\ref{thm:sample complexity MRF finite alphabet}, we can show that (iv) for any $\hat{p}\in {\cal P}$, there exists another $\hat{p}' \in {\cal P}$ describable with $B={\rm poly}\left(|E|, |\Sigma_\delta|^d, \log({1\over \varepsilon})\right)$ bits such that $\norm{\hat{p}-\hat{p}'}_{TV}\leq \varepsilon/2$. Since $p'\in {\cal P}$, there exists a $q'\in {\cal P}$ describable with $B$ bits such that $\norm{p'-q'}_{TV}\leq \varepsilon/2$. Combining this inequality with (iii), we have $\norm{\tilde{p}-q'}_{TV}\leq \varepsilon$.

 \item \textbf{(Step 2:)} Let ${\cal P'}\subset {\cal P}$ be the set of all MRFs in $\cal P$ with bit complexity at most $B$. Since $\min_{\tilde{q} \in {\cal P'}} \norm{\tilde{q}-\tilde{p}}_{TV}\leq \varepsilon$, we can learn an MRF $q \in \cal P'$ such that $\norm{q-\tilde{p}}_{TV}\leq O(\varepsilon)$ with $O(B/\varepsilon^2)$ samples from $\tilde{p}$ using a tournament-style density estimation algorithm; see~e.g.~\cite{devroye2012combinatorial,DaskalakisK14,AcharyaJOS14} and their references. 
	\end{itemize}
To sum up, we can learn an MRF $q$ such that $\norm{q-{p}}_{P}\leq \varepsilon$ with ${{\rm poly}\left(|V|^{d^2}, {\left ( H\over \varepsilon\right)}^d,\CC^d \right) \over \varepsilon^2}$ many samples from $p$. If the graph $G$ on which $p$ is defined is known, the sample complexity can be improved to ${{\rm poly}\left(|V|, |E|^{d}, {\left ( H\over \varepsilon\right)}^d,\CC^d \right) \over \varepsilon^2}$.
\end{prevproof}
}

\subsection{Proof of Theorem~\ref{thm:sample complexity Bayesnets finite alphabet}} \label{sec:Bayesnets}

\begin{prevproof}{Theorem}{thm:sample complexity Bayesnets finite alphabet}
We first prove the theorem statement for the finite alphabet case, we then extend it to the infinite alphabet case, and finally show how we can accommodate latent variables as well.
\paragraph{Finite alphabet $\Sigma$:} We prove the claims in the theorem statement in reverse order.

\medskip For the third part of the statement, we note that an arbitrary distribution $p$ on $d+1$ variables, each taking values in $\Sigma$, can be expressed as a Bayesnet with maximum indegree $d$. As such, it is folklore (see e.g.~\cite{devroye2012combinatorial}) that $\Omega(|\Sigma|^{d+1}/\varepsilon^2)$ samples are necessary to learn $p$ to within $\varepsilon$ in total variation distance.

\medskip To prove the second part of the statement, we show that there is an $\varepsilon$-cover, in total variation distance, of all Bayesnets ${\cal P}$ on a given DAG $G$ of indegree at most $d$, which has size $B=\left({n |\Sigma| \over \varepsilon}\right)^{n|\Sigma|^{d+1}}$, where $n =|V|$. The existence of an $\varepsilon$-cover of this size implies that $O(\log(B)/\varepsilon^2)$-many samples from any $p \in {\cal P}$ suffice to learn some $q \in {\cal P}$ such that $\norm{p-q}_{TV} \le O(\varepsilon)$, using a tournament-style  density estimation algorithm; see~e.g.~\cite{devroye2012combinatorial,DaskalakisK14,AcharyaJOS14} and their references. Thus, to prove the second part of the theorem statement it suffices to argue that an $\varepsilon$-cover of size $B$ exists. We prove the existence of this cover by exploiting the following lemma.


\begin{lemma} \label{lemma:discretizing a Bayesnet}
Suppose $p$ and $q$ are Bayesenets on the same DAG $G=(V,E)$. Suppose that, for all $v \in V$, for all $\sigma \in \Sigma^{\Pi(v)}$, where $\Pi(v)$ are the parents of $v$ in $G$ (using the same notation as in Definition~\ref{def:Bayesnet}), it holds that $$\norm{p_{X_v | X_{\Pi_v}=\sigma} - q_{X_v | X_{\Pi_v}=\sigma}}_{TV} \le {\varepsilon \over |V|}.$$
Then $\norm{p-q}_{TV} \le \varepsilon$.
\end{lemma}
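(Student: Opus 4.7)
The plan is a standard hybrid argument over a topological ordering of $G$. Let $v_1, v_2, \ldots, v_n$ be a topological ordering of $V$ (so every parent of $v_i$ appears among $v_1, \ldots, v_{i-1}$). For each $k \in \{0, 1, \ldots, n\}$, I would define an intermediate distribution $r^{(k)}$ on $\Sigma^V$ by
\[
r^{(k)}(x) \;=\; \prod_{i \le k} q_{X_{v_i}|X_{\Pi_{v_i}}}(x_{v_i}\,|\,x_{\Pi_{v_i}}) \;\cdot\; \prod_{i > k} p_{X_{v_i}|X_{\Pi_{v_i}}}(x_{v_i}\,|\,x_{\Pi_{v_i}}).
\]
Each $r^{(k)}$ is itself a Bayesnet on $G$ (with conditionals from $q$ at $v_1,\ldots,v_k$ and from $p$ elsewhere), and by construction $r^{(0)} = p$ and $r^{(n)} = q$. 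The triangle inequality then gives $\norm{p-q}_{TV} \le \sum_{k=1}^{n}\norm{r^{(k-1)} - r^{(k)}}_{TV}$, so it suffices to show $\norm{r^{(k-1)} - r^{(k)}}_{TV} \le \varepsilon/n$ for every $k$.

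The key step is this per-hybrid bound, which I would prove by an explicit coupling that exploits topological order. Since $r^{(k-1)}$ and $r^{(k)}$ use identical conditionals at every node except $v_k$, and $\Pi_{v_k} \subseteq \{v_1,\ldots,v_{k-1}\}$, I would sample both distributions jointly by drawing $(x_{v_1},\ldots,x_{v_{k-1}})$ once from the common conditionals (the $q$-conditionals), then draw $x_{v_k}$ from $p_{X_{v_k}|X_{\Pi_{v_k}}=x_{\Pi_{v_k}}}$ in $r^{(k-1)}$ and from $q_{X_{v_k}|X_{\Pi_{v_k}}=x_{\Pi_{v_k}}}$ in $r^{(k)}$, using the optimal (maximal) coupling of these two conditionals, and finally draw $(x_{v_{k+1}},\ldots,x_{v_n})$ from the common downstream $p$-conditionals, coupled identically whenever $x_{v_k}$ agreed. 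By the hypothesis of the lemma, the two conditionals at $v_k$ differ in TV by at most $\varepsilon/n$ for every realization of $x_{\Pi_{v_k}}$, so the coupled samples disagree on $v_k$ with probability at most $\varepsilon/n$; conditional on agreement at $v_k$, the remaining coordinates are coupled identically, so the full samples agree. The total variation distance between $r^{(k-1)}$ and $r^{(k)}$ is therefore bounded by the probability that the coupling fails, which is at most $\varepsilon/n$.

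Combining the hybrid inequalities via triangle inequality then yields $\norm{p-q}_{TV} \le n\cdot \varepsilon/n = \varepsilon$, completing the proof. No step is a serious obstacle here; the only thing to be careful about is verifying that the common marginal over $x_{\Pi_{v_k}}$ in $r^{(k-1)}$ and $r^{(k)}$ is well-defined (it is, because $\Pi_{v_k}$ is contained in $\{v_1,\ldots,v_{k-1}\}$ and both hybrids use the same conditionals there), which is precisely what makes the coupling legitimate and the per-hybrid TV bound tight.
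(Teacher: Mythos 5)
Your proof is correct and takes essentially the same route as the paper: a hybrid argument over a topological ordering of $G$, swapping one node's conditional at a time and bounding each consecutive pair of hybrids by $\varepsilon/|V|$ before summing via the triangle inequality. The only cosmetic difference is that you establish the per-step bound with a maximal coupling of the two conditionals at the swapped node (using that its parents precede it and that downstream conditionals coincide), whereas the paper marginalizes out the downstream variables explicitly and bounds the resulting $\ell_1$ term directly; both give the same $\varepsilon/n$ per-step estimate.
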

\begin{prevproof}{Lemma}{lemma:discretizing a Bayesnet}
We employ a hybrid argument. First, let us denote $n=|V|$ and label the nodes in $V$ with labels $1,\ldots,n$ according to some topological sorting of $G$. In particular, the parents (if any) of any node $i$ have indices $<i$. Now, for our hybrid argument we construct the following auxiliary distributions, for $i=0,\ldots,n$:
$$h^{i}(x)  = \prod_{v=1}^{i} p_{X_v | X_{\Pi_v}} (x_v | x_{\Pi_v}) \prod_{v=i+1}^{n} q_{X_v | X_{\Pi_v}} (x_v | x_{\Pi_v}), \text{for all }x \in \Sigma^V.$$
In particular, $h^0 \equiv q$ and $h^n \equiv p$, and the rest are fictional distributions. By triangle inequality, we have that:
$$\norm{p-q}_{TV} \le \sum_{i=1}^n \norm{h^i-h^{i-1}}_{TV}.$$ 
We will bound each term on the RHS by $\varepsilon/n$ to conclude the proof of the lemma. Indeed,
\begin{align*}
&\norm{h^i-h^{i-1}}_{TV}\\
 =& \sum_{x} ~\vline\prod_{v=1}^{i} p_{X_v | X_{\Pi_v}} (x_v | x_{\Pi_v}) \cdot\prod_{v=i+1}^{n} q_{X_v | X_{\Pi_v}} (x_v | x_{\Pi_v}) -\prod_{v=1}^{i-1} p_{X_v | X_{\Pi_v}} (x_v | x_{\Pi_v}) \cdot \prod_{v=i}^{n} q_{X_v | X_{\Pi_v}} (x_v | x_{\Pi_v})~\vline\\
=& \sum_{x} ~\vline \prod_{v=1}^{i-1} p_{X_v | X_{\Pi_v}}(x_v | x_{\Pi_v}) \cdot \left(p_{X_i | X_{\Pi_i}}(x_i | x_{\Pi_i}) -q_{X_i | X_{\Pi_i}}(x_i | x_{\Pi_i}) \right )  \cdot \prod_{v=i+1}^{n} q_{X_v | X_{\Pi_v}} (x_v | x_{\Pi_v})~\vline\\
=& \sum_{x}  \prod_{v=1}^{i-1} p_{X_v | X_{\Pi_v}}(x_v | x_{\Pi_v}) \cdot ~\vline~p_{X_i | X_{\Pi_i}}(x_i | x_{\Pi_i}) -q_{X_i | X_{\Pi_i}}(x_i | x_{\Pi_i}) ~\vline  \cdot \prod_{v=i+1}^{n} q_{X_v | X_{\Pi_v}} (x_v | x_{\Pi_v})\\
=& \sum_{x_{1\ldots i-1}} \left( \prod_{v=1}^{i-1} p_{X_v | X_{\Pi_v}}(x_v | x_{\Pi_v}) \cdot ~\sum_{x_i}~\left(~\vline~p_{X_i | X_{\Pi_i}}(x_i | x_{\Pi_i}) -q_{X_i | X_{\Pi_i}}(x_i | x_{\Pi_i}) ~\vline  \cdot \sum_{x_{i+1\ldots n}} \left(\prod_{v=i+1}^{n} q_{X_v | X_{\Pi_v}} (x_v | x_{\Pi_v}) \right)\right)\right)\\
=& \sum_{x_{1\ldots i-1}} \left( \prod_{v=1}^{i-1} p_{X_v | X_{\Pi_v}}(x_v | x_{\Pi_v}) \cdot ~\sum_{x_i}~\left(~\vline~p_{X_i | X_{\Pi_i}}(x_i | x_{\Pi_i}) -q_{X_i | X_{\Pi_i}}(x_i | x_{\Pi_i}) ~\vline  ~ \right)\right)\\
\le & \sum_{x_{1\ldots i-1}} \left( \prod_{v=1}^{i-1} p_{X_v | X_{\Pi_v}}(x_v | x_{\Pi_v}) \cdot ~\varepsilon/n\right)\\
= & \varepsilon/n,
\end{align*}
where for the inequality we used the hypothesis in the statement of the lemma.\end{prevproof}

Now suppose $p\in {\cal P}$ is an arbitrary Bayesnet defined on $G$. It follows from Lemma~\ref{lemma:discretizing a Bayesnet} that $p$ lies $\varepsilon$-close in total variation distance to a Bayesnet $q$ such that, for all $v\in V$, and all $\sigma \in \Sigma^{\Pi_v}$, the conditional distribution $q_{X_v | X_{\Pi_v}=\sigma}$ is a discretized version of $p_{X_v | X_{\Pi_v}=\sigma}$ that is ${\varepsilon \over n}$-close in total variation distance. Note that $p_{X_v | X_{\Pi_v}=\sigma}$ is an element of the simplex over $|\Sigma|$ elements,  and it is easy to see that this simplex can be ${\varepsilon \over n}$-covered, in total variation distance, using a discrete set of at most $\left({n |\Sigma| \over \varepsilon}\right)^{|\Sigma|}$-many distributions. As there are at most $n \cdot |\Sigma|^d$ conditional distributions to discretize, a total number of  
$$B=\left({n |\Sigma| \over \varepsilon}\right)^{n|\Sigma|^{d+1}}$$
  discretized distributions suffice to cover all ${\cal P}$.
	
	\medskip To prove the first part of the theorem statement, we proceed in the same way, except that now that we do not know the DAG our cover will be larger. Since there are at most $n^{dn}$  DAGs of indegree at most $d$ on $n$ labeled vertices, and for each DAG there is a cover of all Bayesnets defined on that DAG of size at most $B$, as above, it follows that there is an $\varepsilon$-cover, in total variation distance, of all Bayesnets of indegree at most $d$ of size:
	$$n^{dn}\cdot B.$$
Given the bound on the cover size, the proof concludes by appealing to tournament-style density estimation algorithms, as we did earlier. This completes our proof for the finite alphabet case.


\paragraph{\textbf{Alphabet $\Sigma=[0,H]$:} }Let $\delta= {\varepsilon\over d\CC n}$, and $\Sigma_\delta$ be the set of all multiples of $\delta$ between $0$ and $H$.\footnote{We further assume that $H$ is a multiple of $\delta$. If not, let $k$ be the integer such that $\delta \in \left [{H\over 2^{k}}, {H\over 2^{k-1}}\right]$, and change $\delta$ to be ${H\over 2^{k}}$.} For any set of nodes $S$ and $x=(x_v)_{v\in S}$, we use $\round{x}_\delta$ to denote the corresponding rounded vector $\left(\lfloor {x_v\over \delta}\rfloor\cdot \delta\right)_{v\in S}$. We first define distribution $\tilde{p}$ to be the rounded version of $p$ using the following coupling. For any sample $x$ drawn from $p$, create a sample $\tilde{x}=\round{x}_\delta$ drawn from $\tilde{p}$. Note that (i) this coupling makes sure that the two samples from $p$ and $\tilde{p}$ are always within $\varepsilon$ of each other in $\ell_1$-distance. Our plan is to show that we can (ii) learn a Bayesnet $q$ with in-degree at most $d$ using polynomially many samples from distribution $\tilde{p}$ such that  $\norm{q-\tilde{p}}_{TV}= O(\varepsilon)$. Why does this imply our claim? First, we can generate a sample from $\tilde{p}$ using a sample from $p$ due to the coupling between the two distributions. Second, $\norm{q-\tilde{p}}_{TV}= O(\varepsilon)$ means that we can couple $q$ and $\tilde{p}$ in a way that the two samples are the same with probability at least $1-O(\varepsilon)$. Composing this coupling with the coupling between $\tilde{p}$ and $p$, we have a coupling between $p$ and $q$ such that the two samples are at most $\varepsilon$ away from each other in $\ell_1$-distance with probability at least $1-O(\varepsilon)$. This implies, according to Theorem~\ref{thm:prokhorov characterization}, that $\norm{p-q}_P= O(\varepsilon)$. Now, we focus on proving (ii) and separate the proof into three steps.

\begin{itemize}
	\item \textbf{(Step 1:)} We first prove that there is a Bayesnet $p''$ with in-degree at most $d$ and alphabet $\Sigma_\delta$ such that $\norm{\tilde{p}-p''}_{TV}\leq \varepsilon$. 
We first construct a Bayesnet $p'$ on the same DAG as $p$, where the conditional probability distribution for every node $v$, and $\sigma\in \Sigma^{\Pi_v}$ is defined as 
$$p'_{X_v|X_{\Pi(v)=\sigma}}\equiv p_{X_v|X_{\Pi(v)=\lfloor \sigma\rfloor_\delta}}.$$ Clearly, for any node $v$, and $\sigma\in \Sigma^{\Pi_v}$, 
$$\norm{p_{X_v|X_{\Pi(v)=\sigma}}-p'_{X_v|X_{\Pi(v)=\sigma}}}_{TV}=\norm{p_{X_v|X_{\Pi(v)=\sigma}}-p_{X_v|X_{\Pi(v)=\lfloor \sigma\rfloor_\delta }}}_{TV}\leq \CC\cdot\norm{\sigma-\round{\sigma}_\delta}_1\leq {\CC d\delta \leq {\varepsilon\over |V|}}.$$

Hence, Lemma~\ref{lemma:discretizing a Bayesnet} implies that:  (iii) $\norm{p-p'}_{TV}\leq \varepsilon$.\footnote{Even though Lemma~\ref{lemma:discretizing a Bayesnet} was only proved earlier for a finite alphabet, the same proof extends to when the alphabet is infinite.}

Next, we construct the rounded distribution $p''$ of $p'$ via the following coupling. For any sample $x'$ drawn from $p'$, create a sample $x''=\round{x'}_\delta$ from $p''$. It is not hard to verify that $p''$ can also be captured by a Bayesnet defined on the same DAG as $p$ and $p'$. In particular, for every node $v$, every $x_v\in \Sigma_\delta$, and $x_{\Pi_v}\in \Sigma_\delta^{\Pi_v}$, the conditional probability is $$p''_{X_v|X_{\Pi_v}}\left(x_v|x_{\Pi_v}\right )=\int_{x_v}^{x_v+\delta} p'_{X_v|X_{\Pi_v}}\left(z|x_{\Pi_v}\right ) dz.$$ As $p''$ is the rounded distribution of $p'$, $\tilde{p}$ is the rounded distribution of $p$, and $\norm{p-p'}_{TV}\leq \varepsilon$, it must be the case  that $\norm{p''-\tilde{p}}_{TV}\leq \varepsilon$.

\item \textbf{(Step 2:)} Let $\cal P$ be the set of all Bayesnets defined on a DAG with $n$ nodes and  in-degree at most $d$, and which 	have alphabet $\Sigma_\delta$. We argue that there is a size $A=n^{dn}\cdot\left({n |\Sigma_\delta| \over \varepsilon}\right)^{n|\Sigma_\delta|^{d+1}}$ $\varepsilon$-cover  $\cal P'$, in total variation distance, of $\cal P$, and $\cal P'\subset \cal P$. This follows from the same argument we did in the proof for the finite alphabet case. First, there are $n^{dn}$ different DAGs with $n$ nodes and in-degree at most $d$. Second, for each DAG there are at most $n \cdot |\Sigma_\delta|^d$ conditional distributions. Finally, it suffices to ${\varepsilon \over n}$-cover each conditional distribution, in total variation distance, 
 which can be accomplished by a discrete set of at most $\left({n |\Sigma_\delta| \over \varepsilon}\right)^{|\Sigma_\delta|}$-many distributions. Since $p''\in {\cal P}$ and $\norm{p''-\tilde{p}}_{TV}\leq \varepsilon$, there exists a Bayesnet $\tilde{q}$ from the $\varepsilon$-cover $\cal P'$ such that $\norm{\tilde{q}-\tilde{p}}_{TV}\leq 2\varepsilon$.

\item \textbf{(Step 3:)} Since $\min_{q'\in \cal P'} \norm{q'-\tilde{p}}_{TV} \leq 2\varepsilon$, we can use a tournament-style  density estimation algorithm (see~e.g.~\cite{devroye2012combinatorial,DaskalakisK14,AcharyaJOS14} and their references) to learn a Bayesnet $q\in \cal P'$ such that $\norm{q-\tilde{p}}_{TV}=O(\varepsilon)$ given $O\left( {\log A \over \varepsilon^2}\right)$ samples from $\tilde{p}$ . 
\end{itemize}

\noindent To sum up, we can learn a Bayesnet $q$ defined on a DAG with in-degree at most $d$ using $$O\left({ d |V|\log |V|  + |V| \cdot \left({H|V|d\CC\over \varepsilon}\right)^{d+1} \log\left({|V|{Hd\CC} \over \varepsilon}\right) \over \varepsilon^2}\right)$$ samples from $p$ such that $\norm{q-p}_P\leq \varepsilon$. If the DAG that $p$ is defined on is known, the sample complexity improves to $O\left({ |V| \cdot \left({H|V|d\CC\over \varepsilon}\right)^{d+1} \log\left({|V|{Hd\CC} \over \varepsilon}\right)\over \varepsilon^2}\right)$.

\paragraph{Latent Variable Model:} Finally, we consider the case where only $k$ out of the $n$ variables of the Bayesnet $p$ are observable. Let $S$ be the set of observable variables, and  use $p_S$ to denote the marginal  of $p$ on these variables. We will first consider the finite alphabet case. Consider the $\varepsilon$-cover we constructed earlier. We argued that for any Bayesnet $p$ there exists an Bayesnet $q$ in the cover  such that $\norm{p-q}_{TV}\leq \varepsilon$. For that $q$ we clearly also have $\norm{p_S-q_S}_{TV}\leq \varepsilon$. The issue is that we do not know for a given $q$ in the cover which subset of its variables set $S$ might correspond to. But this is not a big deal. We can  use our cover to generate an $\varepsilon$-cover  of all possible marginals  $p_S$ of all possible Bayesnets $p$ as follows. Indeed, for any $q'$ in the original $\varepsilon$-cover, we include in the new cover the marginal distribution $q'_{S'}$ of every possible subset $S'$ of its variables of size $k$. This increases the size of our original cover by a multiplicative factor of at most $n^k$. As a result, the number of samples required for the tournament-style density estimation algorithm to learn a good distribution  increases by a multiplicative factor of $k\log n$. For the infinite alphabet case, our statement follows from applying the same modification to the $\varepsilon$-cover of $\tilde{p}$.
\end{prevproof}

\notshow{\begin{theorem}[Learnability of Continuous Bayesnets in Prokhorov Distance] \label{thm:sample complexity Bayesnets infinite alphabet}
Suppose we are given sample access to a Bayesnet $p$, as in Definition~\ref{def:Bayesnet}, defined on an unknown DAG with in-degree at most $d$. Suppose also that $\Sigma=[0,H]$ and the conditional probability for every node $v$ is $\CC$-Lipschitz in the $\ell_1$-norm, that is, $\norm{p_{X_v | X_{\Pi_v}=\sigma} - p_{X_v | X_{\Pi_v}=\sigma'}}_{TV} \le\CC\cdot\norm{\sigma-\sigma'}_1$ for any $v$ and $\sigma,\sigma'\in \Sigma^{\Pi_v}$. Given $O\left({ d |V|\log |V| + |V| \cdot \left({H|V|d\CC\over \varepsilon}\right)^{d+1} \log\left({|V|{H|V|d\CC} \over \varepsilon^2}\right)\over \varepsilon^2}\right)$ -many samples from $p$ we can learn some Bayesnet~$q$ defined on a DAG whose in-degree is also bounded by $d$ such that $\norm{p-q}_{P} \le \varepsilon$. If the graph on which $p$ is defined is known, then $O\left({|V| \cdot \left({H|V|d\CC\over \varepsilon}\right)^{d+1} \log\left({|V|{H|V|d\CC} \over \varepsilon^2}\right)\over \varepsilon^2}\right)$ -many samples suffice. 
\end{theorem}

\begin{prevproof}{Theorem}{thm:sample complexity Bayesnets infinite alphabet}
	Let $n=|V|$, $\delta= {\varepsilon\over d\CC n}$, and $\Sigma_\delta$ be the set of all multiples of $\delta$ between $0$ and $H$. For any set of nodes $S$ and $x=(x_v)_{v\in S}$, we use $\round{x}_\delta$ to denote the corresponding rounded vector $\left(\lfloor {x_v\over \delta}\rfloor\cdot \delta\right)_{v\in S}$. We first define distribution $\tilde{p}$ to be the rounded version of $p$ using the following coupling. For any sample $x$ drawn from $p$, create a sample $\tilde{x}=\round{x}_\delta$ drawn from $\tilde{p}$. Note that (i) this coupling makes sure that the two samples from $p$ and $\tilde{p}$ are always within $\varepsilon$ of each other in $\ell_1$-distance. Our plan is to show that we can (ii) learn an Bayesnet $q$ with in-degree at most $d$ using polynomially many samples from distribution $\tilde{p}$ such that  $\norm{q-\tilde{p}}_{TV}= O(\varepsilon)$. Why does this imply our claim? First, we can generate a sample from $\tilde{p}$ using a sample from $p$ due to the coupling between the two distributions. Second, $\norm{q-\tilde{p}}_{TV}= O(\varepsilon)$ means that we can couple $q$ and $\tilde{p}$ in a way that the two samples are the same with probability at least $1-O(\varepsilon)$. Composing this coupling with the coupling between $\tilde{p}$ and $p$, we have a coupling between $p$ and $q$ such that the two samples are at most $\varepsilon$ away from each other in $\ell_1$-distance with probability at least $1-O(\varepsilon)$. According to Theorem~\ref{thm:prokhorov characterization}, $\norm{p-q}_P= O(\varepsilon)$. Now, we focus on proving (ii) and separate the proof into three steps.

\begin{itemize}
	\item \textbf{(Step 1:)} Prove that there is a Bayesnet $p''$ with in-degree at most $d$ and alphabet $\Sigma_\delta$ such that $\norm{p-p'}_{TV}\leq \varepsilon$. 
We first construct a Bayesnet $p'$ on the same DAG as $p$, where the conditional probability distribution for every node $v$, and $\sigma\in \Sigma^{\Pi_v}$ is defined as 
$$p'_{X_v|X_{\Pi(v)=\sigma}}\equiv p_{X_v|X_{\Pi(v)=\lfloor \sigma\rfloor_\delta}}.$$ Clearly, for any node $v$, and $\sigma\in \Sigma^{\Pi_v}$, 
$$\norm{p_{X_v|X_{\Pi(v)=\sigma}}-p'_{X_v|X_{\Pi(v)=\sigma}}}_{TV}=\norm{p_{X_v|X_{\Pi(v)=\sigma}}-p_{X_v|X_{\Pi(v)=\lfloor \sigma\rfloor_\delta }}}_{TV}\leq \CC\cdot\norm{\sigma-\round{\sigma}_\delta}_1\leq \yangnote{\CC d\delta \leq {\varepsilon\over |V|}}.$$

Hence, Lemma~\ref{lemma:discretizing a Bayesnet}~\footnote{Even though Lemma~\ref{lemma:discretizing a Bayesnet} is only proved for finite alphabet, the same proof holds when the alphabet is infinite.} implies that  (iii) $\norm{p-p'}_{TV}\leq \varepsilon$.

Next, we construct the rounded distribution $p''$ of $p'$ by the following coupling. For any sample $x'$ drawn from $p'$, create a sample $x''=\round{x'}_\delta$ from $p''$. It is not hard to verify that $p''$ can also be captured by a Bayesnet defined on the same DAG as $p$ and $p'$. In particular, for every node $v$, every $x_v\in \Sigma_\delta$, and $x_{\Pi_v}\in \Sigma_\delta^{\Pi_v}$, the conditional probability is $$p''_{X_v|X_{\Pi_v}}\left(x_v|x_{\Pi_v}\right )=\int_{x_v}^{x_v+\delta} p'_{X_v|X_{\Pi_v}}\left(z|x_{\Pi_v}\right ) dz.$$ As $p''$ is the rounded distribution of $p'$, $\tilde{p}$ is the rounded distribution of $p$, and $\norm{p-p'}_{TV}\leq \varepsilon$, it must be the case  that $\norm{p''-\tilde{p}}_{TV}\leq \varepsilon$.

\item \textbf{(Step 2:)} Let $\cal P$ be the set of all Bayesnets defined on a DAG with $n$ nodes and has in-degree at most $d$, and have alphabet $\Sigma_\delta$. We argue that there is a size $B=n^{dn}\cdot\left({n |\Sigma_\delta| \over \varepsilon}\right)^{n|\Sigma_\delta|^{d+1}}$ $\varepsilon$-cover  $\cal P'$, in total variation distance, over $\cal P$, and $\cal P'\subset \cal P$. This follows from the same argument in the proof of Theorem~\ref{thm:sample complexity Bayesnets finite alphabet}. First, there are $n^{dn}$ different DAGs with $n$ nodes and in-degree at most $d$. Second, for each DAG there are at most $n \cdot |\Sigma_\delta|^d$ conditional distributions. Finally, it suffices to ${\varepsilon \over n}$-cover each conditional distribution, in total variation distance, according to Lemma~\ref{lemma:discretizing a Bayesnet}. This can be accomplished by a discrete set of at most $\left({n |\Sigma_\delta| \over \varepsilon}\right)^{|\Sigma_\delta|}$-many distributions. Since $p''\in {\cal P}$ and $\norm{p''-\tilde{p}}_{TV}\leq \varepsilon$, there exists a Bayesnet $\tilde{q}$ from the $\varepsilon$-cover for $\cal P$ such that $\norm{\tilde{q}-\tilde{p}}_{TV}\leq 2\varepsilon$.

\item \textbf{(Step 3:)} Since $\min_{q'\in \cal P'} \norm{q'-\tilde{p}}_{TV} \leq 2\varepsilon$. Use a tournament-style  density estimation algorithm; see~e.g.~\cite{devroye2012combinatorial,DaskalakisK14,AcharyaJOS14} and their references, we can learn a Bayesnet $q$ from the $\cal P'$ with $O\left( {\log B \over \varepsilon^2}\right)$ samples from $\tilde{p}$ such that $\norm{q-\tilde{p}}_{TV}=O(\varepsilon)$. 
\end{itemize}

To sum up, we can learn a Bayesnet $q$ defined on a DAG with in-degree at most $d$ using $O\left({ d |V|\log |V| + |V| \cdot \left({H|V|d\CC\over \varepsilon}\right)^{d+1} \log\left({|V|{H|V|d\CC} \over \varepsilon^2}\right)\over \varepsilon^2}\right)$ samples from $p$ such that $\norm{q-p}_P\leq \varepsilon$. If the DAG where $p$ is defined on is known, the sample complexity improves to $O\left({ |V| \cdot \left({H|V|d\CC\over \varepsilon}\right)^{d+1} \log\left({|V|{H|V|d\CC} \over \varepsilon^2}\right)\over \varepsilon^2}\right)$.
\end{prevproof}}

\end{document}